\newcommand{\technicalreport}[1]{\ignore{#1}}
\theoremstyle{plain}  
\newtheorem{property}{Property}
\long\def\ignore#1{\relax}
\newcommand{\modifref}[1]{#1}
\newcommand{\modifrefb}[1]{#1}
\newcommand{\compF}[2]{\mathtt{#1}\left(#2\right)}
\newcommand{\cmin}[2]{#1\setminus\!\!\!\setminus\, #2}
\newcommand{\ty}[2]{{#1} \ftype {#2}}
\newcommand{\tyj}[3]{{#2} \vdash #1:#3}
\newcommand{\tyjj}[3]{{#2} \Vdash #1:#3}
\newcommand{\tingD}[2]{{#1} \tri {#2}}
\newcommand{\lam}[1]{\lambda #1}
\newcommand{\lx}{\lambda x}
\newcommand{\fv}[1]{\mathtt{fv}(#1)}
\newcommand{\fn}[1]{\mathtt{fn}(#1)}
\def\l{\lambda}
\def\Gam{\Gamma}
\def\Sig{\Sigma}
\def\Del{\Delta}
\def\Lam{\Lambda}
\def\sig{\sigma}
\def\del{\delta}
\def\al{\alpha}
\def\gam{\gamma}
\newcommand{\sep}{\hspace*{0.5cm}}
\newcommand{\msep}{\hspace*{0.2cm}}
\newcommand{\Rew}[1]{\rightarrow_{#1}}
\newcommand{\Rewplus}[1]{\rightarrow^{+}_{#1}}
\newcommand{\Rewn}[2][*]{\rightarrow^{#1}_{#2}}
\newcommand{\SN}[1]{\mathcal{SN}(#1)}
\newcommand{\SNSP}{\mathcal{SN}(\lmu)}
\newcommand{\WN}[1]{\mathcal{WN}(#1)}
\newcommand{\HN}[1]{\mathcal{HN}(#1)}
\newcommand{\HNO}{\mathtt{HO}}
\newcommand{\HNT}{\mathtt{HT}}
\newcommand{\HNCom}{\mathtt{HC}}
\newcommand{\isubs}[1]{\{ #1 \}}
\newcommand{\es}{ \emptyset}
\newcommand{\A}{\mathcal{A}}
\newcommand{\R}{\mathcal{R}}
\newcommand{\mult}[1]{[#1]}
\newcommand{\umult}[1]{\langle #1 \rangle}
\newcommand{\cset}[1]{ \{ #1 \} }
\newcommand{\deft}[1]{{\bf #1}}
\newcommand{\ih}{{\it i.h.}}
\newcommand{\pair}[2]{\langle #1, #2 \rangle}
\def\l{\lambda}
\newcommand{\commands}[1]{\mathcal{C}_{#1}}
\newcommand{\terms}[1]{\mathcal{T}_{#1}}
\newcommand{\objects}[1]{\mathcal{O}_{#1}}
\newcommand{\proj}[1]{\mathtt{P}(#1)}
\newcommand{\ie}{{\it i.e.}}
\newcommand{\cf}{{\it cf.}}
\newcommand{\eg}{{\it e.g.}}
\newcommand{\slist}{\mathtt{L}}
\newcommand{\cwc}[1]{{[} \! {[} #1 {]}  \! {]}}
\newcommand{\B}{\mathtt{B}}
\newcommand{\Mu}{\mathtt{M}}
\newcommand{\Gc}{\mathtt{w}}
\newcommand{\Gcs}{\mathtt{w}_{v}}
\newcommand{\Gcr}{\mathtt{w}_{n}}
\newcommand{\cntrs}{\mathtt{c}_{v}}
\newcommand{\cntrr}{\mathtt{c}_{n}}
\newcommand{\ders}{\mathtt{d}_{v}}
\newcommand{\derr}{\mathtt{d}_{n}}
\newcommand{\sm}{\setminus}
\newcommand{\rrule}[1]{\mapsto_{#1}}
\newcommand{\tri}{\triangleright}
\newcommand{\dom}[1]{\mathtt{dom}(#1)}
\newcommand{\kK}{{k \in K}}
\newcommand{\kL}{{k \in L}}
\newcommand{\lL}{{\ell \in L}}
\newcommand{\sz}[1]{\compF{sz}{#1}}
\newcommand{\ax}{\texttt{ax}}
\newcommand{\axw}{\texttt{axw}}
\newcommand{\introarrow}{\Rightarrow_{\mathtt{i}}}
\newcommand{\appu}{\Rightarrow_{{\mathtt{e}_1}}}
\newcommand{\appd}{\Rightarrow_{{\mathtt{e}_2}}}
\newcommand{\app}{\Rightarrow_{{\mathtt{e}}}}
\newcommand{\appet}{\Rightarrow_{\mathtt{e}*}}
\newcommand{\appempty}{\Rightarrow_{{\mathtt{empty}}}}
\newcommand{\subs}{\mathtt{s}}
\newcommand{\repl}{\mathtt{r}}
\newcommand{\Id}{\mathtt{I}}
\newcommand{\combK}{\mathtt{K}}
\renewcommand{\al}{\alpha}
\newcommand{\ga}{\gamma}
\newcommand{\co}[1]{[#1]}
\newcommand{\coal}{\co{\al}}
\newcommand{\ire}[2]{\{ #1 /\! \! /  #2\} }
\newcommand{\bta}{a}
\newcommand{\btb}{b}
\newcommand{\btc}{c}
\newcommand{\muu}{\#_{\mathtt{i}}}
\newcommand{\mud}{\#_{\mathtt{e}}}
\newcommand{\mudempty}{\#_{\mathtt{empty}}}
\newcommand{\rew}{\rightarrow}                  
\newcommand{\sigk}{\sigma_k}                   
\newcommand{\msigk}{\mult{\sig_k}_{\kK}}
\newcommand{\taul}{\tau_\ell}
\newcommand{\mtau}{\mult{\tau}}
\newcommand{\Gaml}{\Gamma_{\ell}}
\newcommand{\Dell}{\Delta_{\ell}}
\newcommand{\Phit}{\Phi_t}
\newcommand{\Phiul}{\Phi_u^{\ell}}
\newcommand{\Phiuk}{\Phi_u^{k}}
\newcommand{\ar}[1]{\mathtt{ar}(#1)}
\newcommand{\calC}{\mathcal{C}}
\newcommand{\calR}{\mathcal{R}}
\newcommand{\IM}{\mathcal{I}}
\newcommand{\IMl}{\mathcal{I}_{\ell}}
\newcommand{\IMu}{\IM_u}
\newcommand{\UM}{\mathcal{U}}
\newcommand{\VM}{\mathcal{V}}
\newcommand{\VMC}{\mathcal{V}_{\Com}}
\newcommand{\VMal}{\VM_{\al}}
\newcommand{\VMalp}{\VM_{\al'}}
\newcommand{\UMi}{{\UM}_i}
\newcommand{\VMl}{{\VM}_{\ell}}
\newcommand{\inter}{\wedge}
\newcommand{\union}{\vee}
\newcommand{\Com}{c}
\renewcommand{\bot}{\umult{\,}}
\newcommand{\emul}{\mult{\, }} 
\newcommand{\eumul}{\umult{\,}}
\renewcommand{\emph}[1]{{\it  #1}}
\newcommand{\IMk}{\IM_k}      
\newcommand{\UMk}{\UM_k}
\newcommand{\VMk}{\VM_k}
\newcommand{\UMl}{\UM_{\ell}}
\newcommand{\uVMk}{\union_{\kK} \VMk}
\newcommand{\uIVMk}{ \umult{\IMk \rew \VMk}_{\kK}}
\newcommand{\Om}{\Omega}
\newcommand{\Gamk}{\Gam_k}            
\newcommand{\Gamstt}{\Gam_{\cxtt}}
\newcommand{\Gamscc}{\Gam_{\cxcc}}
\newcommand{\Delk}{\Del_k}
\newcommand{\Delstt}{\Del_{\cxtt}}
\newcommand{\Delscc}{\Del_{\cxcc}}
\newcommand{\Phik}{\Phi_k}
\newcommand{\Theuu}{\Theta^1_u}                 
\newcommand{\Theud}{\Theta^2_u}                 
\newcommand{\Theu}{\Theta_u}
\newcommand{\TypCom}{\#}
\newcommand{\Empty}{\mathcal{E}}
\newcommand{\e}{{\it e}}
\newcommand{\AM}{\mathcal{A}}
\newcommand{\arob}{\symbol{64}}
\newcommand{\muju}[3]{#1 \vdash #2  \mid  #3} 
\newcommand{\muJu}[3]{#1 \Vdash #2  \mid  #3}
\newcommand{\JM}{\mathcal{J}}
\newcommand{\JMk}{\mathcal{J}_k}
\newcommand{\JMl}{\JM_{\ell}}
\newcommand{\many}{\inter}
\newcommand{\Delo}{\Del_0}
\newcommand{\Delu}{\Del_u}
\newcommand{\Gamu}{\Gam_u}
\newcommand{\Gamo}{\Gam_0}
\newcommand{\Gamt}{\Gam_t}
\newcommand{\Delt}{\Del_t}
\newcommand{\Phio}{\Phi_0}
\newcommand{\VMo}{\VM_0}
\newcommand{\alu}{\ire{\al}{u}}
\newcommand{\subxu}{\isubs{x/u}}
\newcommand{\rempl}[3]{\langle #1 / \!\! / #2 . #3 \rangle} 
\newcommand{\ctx}{\mathtt{T}}
\newcommand{\otx}{\mathtt{C}}
\newcommand{\cxtt}{\mathtt{TT}}
\newcommand{\cxtc}{\mathtt{TC}}
\newcommand{\cxct}{\mathtt{CT}}
\newcommand{\cxcc}{\mathtt{CC}}
\newcommand{\cxot}{\mathtt{OT}}
\newcommand{\cxoc}{\mathtt{OC}}
\newcommand{\cxterm}{\mathtt{T}}
\newcommand{\cxcommand}{\mathtt{C}}
\newcommand{\cxobject}{\mathtt{O}}
\newcommand{\lmuex}{\lambda\mu_{\mathtt{s}}}  
\newcommand{\nonelmuex}{\lambda {\overline{\mu\mathtt{s}}}}
\newcommand{\lmu}{\lambda {\mu}}
\newcommand{\choice}[1]{#1^*}
\newcommand{\Any}{\mathcal{A}}
\newcommand{\ComTyp}{\mathcal{C}}
\newcommand{\mrlsym}{\eta} 
\newcommand{\mrl}[1]{\mrlsym(#1)} 
\newcommand{\lexnl}[1]{  \langle \mrl{#1}, |#1| \rangle  }
\renewcommand{\vec}[1]{\overline{#1}}
\newcommand{\X}{\mathcal{X}}
\newcommand{\Phiu}{\Phi_u}
\newcommand{\Phiv}{\Phi_v}
\newcommand{\PhiCom}{\Phi_\Com}
\newcommand{\Gamv}{\Gam_v}
\newcommand{\GamCom}{\Gam_\Com}
\newcommand{\Delv}{\Del_v}
\newcommand{\DelCom}{\Del_\Com}
\newcommand{\IMt}{\IM_t}
\newcommand{\IMv}{\IM_v}
\newcommand{\Thetu}{\Theta^t_u}
\newcommand{\Thevu}{\Theta^v_u}
\newcommand{\Thealu}{\Theta^{\al}_u}
\newcommand{\Gamtu}{\Gam^t_u}
\newcommand{\Gamvu}{\Gam^v_u}
\newcommand{\Gamalu}{\Gam^{\al}_u}
\newcommand{\Gamuu}{\Gam^{1}_u}
\newcommand{\Gamud}{\Gam^{2}_u}
\newcommand{\Deluu}{\Del^{1}_u}
\newcommand{\Delud}{\Del^{2}_u}
\newcommand{\Deltu}{\Del^t_u}
\newcommand{\Delvu}{\Del^v_u}
\newcommand{\Delalu}{\Del^{\al}_u}
\newcommand{\orc}{\mathtt{OR}}
\newcommand{\andc}{\mathtt{AND}}
\newcommand{\Ua}{\mathcal{A}}
\newcommand{\Ub}{\mathcal{B}}
\newcommand{\Uy}{\UM_{y}}
\newcommand{\calH}{\mathcal{H}}
\newcommand{\calS}{\mathcal{S}}
\newcommand{\Hl}{\calH_\l}
\newcommand{\Sl}{\calS_\l}
\newcommand{\Hpl}{\calH'_\l}
\newcommand{\Spl}{\calS'_\l}
\newcommand{\Hlmu}{\mathcal{H}_{\lmu}}
\newcommand{\Slmu}{\mathcal{S}_{\lmu}}
\newcommand{\Slmuex}{\mathcal{S}_{\lmuex}}
\newcommand{\iIMsk}{\inter_{\kK} \choice{\IMk} }
\newcommand{\uVMkt}{\union_{\kK_t} \VMk}
\newcommand{\uVMkal}{\union_{\kK_{\al}} \VMk}
\newcommand{\uIVMkt}{\umult{\IMk \rew \VMk}_{\kK_t}}
\newcommand{\uIVMkal}{\umult{\IMk \rew \VMk}_{\kK_{\al}}}
\newcounter{example}[section]
\newenvironment{example}[1][]{\refstepcounter{example}\par\medskip
   \noindent \textbf{Example~\theexample. #1} \rmfamily}{\medskip}
\newcommand{\sset}{\mathcal{S}}
\newcommand{\phdot}{\phantom{.}}
\newcommand{\ftype}{\Rightarrow}
\newcommand{\Blind}{\xi }
\begin{document}

\title[Non-Idempotent Types for Classical Calculi]{Non-Idempotent Types for Classical Calculi in Natural Deduction  Style}

\author[D. Kesner]{Delia Kesner\rsuper{a}}
\author[P. Vial]{Pierre Vial\rsuper{b}}

\address{\lsuper{a}Universit\'e de Paris and Institut Universitaire de France (IUF), France}
\email{kesner@irif.fr}
\address{\lsuper{b}Inria (LS2N CNRS), France}
\email{pierre.vial@inria.fr}
\keywords{lambda-mu-calculus, classical logic, intersection types, normalization}

\begin{abstract}
In the first part of this paper, we define two resource aware typing systems for the $\lmu$-calculus based
on non-idempotent {\it intersection} and {\it union} types.  The
non-idempotent approach provides very simple combinatorial arguments
--based on decreasing measures of type derivations-- to characterize
head and strongly normalizing terms.  
Moreover, typability provides
upper bounds for the lengths of 
the head-reduction  and  the maximal reduction sequences to  normal-form.  

In the second part of this paper, the $\lmu$-calculus is refined 
to  a small-step calculus called $\lmuex$, which is inspired by  the {\it substitution at a distance} paradigm.  The  $\lmuex$-calculus turns out to
  be compatible with a natural extension of the non-idempotent
  interpretations of $\lmu$, \ie\ $\lmuex$-reduction preserves and
  decreases typing derivations in an extended appropriate typing system.  We
  thus derive a simple arithmetical characterization of strongly
   $\lmuex$-normalizing terms by means of typing.  
 \end{abstract}


\maketitle





\section{Introduction}

The Curry-Howard Isomorphism is
the well-known relationship between programming languages and logical
systems.  While Curry first introduced the analogy between
Hilbert-style deductions and combinatory logic, Howard highlighted the
one between simply typed lambda calculus and natural deduction. Both
examples use {\it intuitionistic} logic. The extension of the
Curry-Howard Isomorphism to {\it classical} logic took more than two
decades, when Griffin~\cite{Griffin} observed that Felleisen's $\calC$ operator can be typed with the double-negation elimination. A
major step in this field was done by Parigot~\cite{Parigot92}, who
proposed the $\lmu$-calculus as a simple term notation for classical
natural deduction proofs.  The $\lmu$-calculus is an extension of the
simply typed $\l$-calculus that encodes usual {\it control operators}
as the Felleisen's {$\calC$} operator mentioned so far. Other calculi
were proposed since then, as for example Curien-Herbelin's
$\overline{\lambda}\mu\tilde{\mu}$-calculus~\cite{CH00}
based on classical sequent calculus.

The Curry-Howard correspondence has already contributed to the
understanding of many aspects of programming languages by establishing
a rich connection between logic and computation. However, there are
still some crucial aspects of computation, like the use of
resources (\eg\  time and
space), that still need to be logically understood in the classical setting.  
Establishing the
foundations of resource consumption is nowadays 
a big challenge facing the programming language community. It would lead to a new generation of programming
languages and proof assistants, with a clean type-theoretic account of
resource capabilities. \\

{\bf From qualitative \dots } Several notions of type assignment systems
for $\lambda$-calculus have been defined since its creation, including
among others {\it simple types} and {\it polymorphic types}.  However,
even if polymorphic types are powerful and convenient in programming
practice, they have several drawbacks.  \modifref{For example, it is not
possible to assign a type to a term of the form $(\l z.  \l y. y(z\,\Id)(z\,\combK))(\l x. x\,x)$, where $\Id = \l w. w$ and
$\combK = \l x. \l y.  x$, which can be understood as
a meaningful program specified by a terminating term}.  {\it
  Intersection} types, pioneered by Coppo and
Dezani~\cite{CDC78,CDC80}, introduce a new constructor $\cap $ for
types, allowing the assignment of a type of the form $( (\sigma \ftype
\sigma)\cap\sigma ) \ftype \sigma$ to the term $\lambda x. xx$. The
intuition behind a term $t$ of type $\tau_1 \cap \tau_2$ is that $t$
has both types $\tau_1$ and $\tau_2$. The symbol $\cap$ is to be
understood as a mathematical intersection, so in principle,
intersection type theory was developed by using {\it idempotent}
($\sigma \cap \sigma = \sigma$), {\it commutative} $(\sigma \cap \tau
= \tau \cap \sigma)$, and {\it associative} $((\sigma \cap \tau) \cap
\delta = \sigma \cap (\tau \cap \delta))$ laws.

Intersection types have been used as a {\it behavioural} tool to
reason about several operational and semantic properties of
programming languages.  For example, a $\l$-term/program $t$ is
strongly normalizing/terminating if and only if $t$ can be assigned a
type in an appropriate intersection type assignment system.
Similarly, intersection types are able to describe and analyze models
of $\lambda$-calculus~\cite{BarendregtCoppoDezani83}, characterize
{\it solvability}~\cite{CDV81}, {\it head normalization}~\cite{CDV81},
{\it linear-head normalization}~\cite{KV14}, and {\it
weak-normalization}~\cite{CDV81,Krivine93} among other properties.\\

{\bf \dots to quantitative Intersection types:} This technology turns
out to be a powerful tool to reason about {\it qualitative} properties
of programs, but not about {\it quantitative} ones.  Indeed, for
example, there is a type system characterizing head normalization
(\ie\ $t$ is typable in this system if and only if $t$ is head
normalizing) and which gives simultaneously a proof that $t$ is
head-normalizing if and only if the head reduction strategy terminates
on $t$. But the type system gives no information about the number of
head-reduction steps that are necessary to obtain a head normal form.  Here is where {\it non-idempotent} types come into
play, thus making a clear distinction between $\sigma \cap \sigma$ and
$\sigma$, because intuitively, using the resource $\sigma$ twice or
once is not the same from the quantitative point of view.  This change
of perspective can be related to the
essential spirit of Linear Logic~\cite{Girard87}, which removes the
contraction and weakening structural rules in order to provide an
explicit control of the use of logical resources, \ie\ to give a full
account of the number of times that a given proposition is used to
derive a conclusion.\\

\modifrefb{\textbf{The case of the $\l$-calculus:}
Non-idempotent   types  were  pioneered by 
 Philippa Gardner~\cite{Gardner}, Assaf Kfoury~\cite{kfoury96}.   
 But is Daniel  de Carvalho~\cite{Carvalho07} who
first established
in his PhD  thesis a relation between the size  of a typing derivation
in a  non-idempotent intersection type system  for the lambda-calculus
and     the      head/weak-normalization     execution     time     of
head/weak-normalizing  lambda-terms, respectively. 
Relational models of $\lambda$-calculi based on  non-idempotent types have been investigated by de Carvalho and Ehrhard in~\cite{Carvalho07,Carvalho18,Ehrhard12}. The results of de Carvalho are distilled in~\cite{Carvalho18}.
}
Non-idempotency  is  used  to   reason  about  the  longest  reduction
sequence   of   strongly  normalizing   terms   in  both   the
lambda-calculus~\cite{bernadetleng11,DebeneRonchiITRS12,Bernadet-Lengrand2013} and    in
different            lambda-calculi            with           explicit
substitutions~\cite{Bernadet-Lengrand2013,KV14}.   Non-idempotent  types also
appear in  linearization of the  lambda-calculus~\cite{kfoury96}, type
inference and inhabitation~\cite{KfouryWells2004,NeergaardM04,BKRDR14}, different characterizations     of     solvability~\cite{PaganiRonchi10}, 
verification  of higher-order programs~\cite{OngRamsay11}.\\

{\bf The case of the $\lmu$-calculus:}
 It is essential to go beyond the  $\l$-calculus to focus on the challenges posed by the \emph{advanced features} of modern higher-order programming languages and proof assistants. We want in particular to associate quantitative
information to languages being able to express control operators, as
they allow to enrich declarative programming languages with imperative
features.\\

{\bf Related works:} 
The non-idempotent intersection and union types  for
 $\lmu$-calculus that we present in this article can be seen as a quantitative
refinement of Girard's translation of classical logic into linear logic.
Different qualitative and/or  quantitative models for classical calculi were
proposed in~\cite{Selinger01,BakelBarbaneradeLiguoro11,Vaux07,AEtlca15}, thus
limiting the characterization of operational properties to
head-normalization. Intersection and
union types were also studied in the framework of classical
logic~\cite{Lau04,vB11,KikuchiSakurai14,DGL08}, but  no work
addresses the problem from a quantitative perspective.
Type-theoretical characterization of strong-normalization for
classical calculi were provided both for
$\lmu$~\cite{BakelBarbaneradeLiguoro13} and
\modifref{$\bar\lambda\mu\tilde{\mu}$-calculus~\cite{DGL08}}, but the (idempotent)
typing systems do not allow to construct decreasing measures for
reduction, thus a resource aware semantics cannot be extracted from
those interpretations.   Combinatorial strong normalization proofs
for the $\lmu$-calculus were proposed for example in~\cite{DavidNour2003}, but 
they do not provide any  explicit decreasing measure, 
and their use of 
structural induction on simple types  does not work anymore with 
intersection types, which are more powerful than simple types as
they do not only ensure termination but also characterize it.  
Upper bounds for the $\lmu$-calculus are studied in~\cite{BN18}
  by passing through standard reduction and the non erasing $\lambda \mu I$-calculus.  Different small step semantics for classical calculi
were developed in the framework of
neededness~\cite{AriolaHS11,PedrotSaurin16}, without resorting to any
resource aware semantical argument. 
\\

{\bf Contributions:} Our first
contribution is the definition of a resource aware type system for the
$\lmu$-calculus based on non-idempotent {\it intersection} and {\it
  union} types.  The non-idempotent approach provides very simple
combinatorial arguments, only based on a decreasing {\it measure}, to
characterize head and strongly normalizing terms by means of typability.
Indeed, we show that for every typable term $t$ with type
  derivation $\Pi$, if $t$ reduces to $t'$, then $t'$ is typable with
  a type derivation $\Pi'$ such that the measure of $\Pi$ is strictly
  greater than that of $\Pi'$.  In the well-known case of the
$\lambda$-calculus, such a  measure is simply
based on the structure of type tree derivations and it is given by the 
number of its nodes, which strictly
decreases along reduction.
 However, in the $\lmu$-calculus, the creation of nested
applications during $\mu$-reduction may increase the number of nodes of the
  corresponding type derivations, so that such a simple 
definition of  measure is not decreasing 
anymore. We then need to also take into account the structure ({\it multiplicity}) of certain types appearing
  in the type derivations, thus ensuring an overall decreasing of
  the measure during reduction.   This 
first result has been previously presented in~\cite{KV17}.

The second contribution of this paper is the \modifref{definition of a
  new small-step operational semantics} for $\lmu$, called $\lmuex$,
inspired from the {\it substitution at a distance
  paradigm}~\cite{AK10}, which is compatible with the non-idempotent
typing system characterizing strong normalization for $\lmu$, \modifrefb{ in that the latter extends to $\lmuex$}. \modifref{The operational semantics
  of $\lmuex$ is linear, \ie\ a single reduction step only implements
  substitution/replacement on one (free) occurrence of some
  variable/name at a time.}  We then extend the typing system for $\lmu$ \modifref{characterizing strong normalization}, so that the small-step reduction calculus $\lmuex$ preserves (and decreases the size of) typing derivations. We
generalize the type-theoretical characterization of strong
normalization to this explicit classical calculus, thus particularly
simplifying existing proofs of strong normalization for small-step
operational semantics of classical calculi~\cite{PhDPolonoski}.


\section{The $\lmu$-Calculus}
\label{s:calculus}

This section gives the syntax (Section~\ref{s:syntax}) and the
operational semantics (Section~\ref{s:operational}) of the
$\lmu-$calculus~\cite{Parigot92}. But before this we first introduce
some preliminary general notions of rewriting that will be used all
along the paper, and that are applicable to any system $\R$. We denote
by $\Rew{\R}$ the (one-step) reduction relation associated to system
$\R$.  We write $\Rewn{\R}$  for the
reflexive-transitive  closure of $\Rew{\R}$, and
$\Rew{\R}^n$ for the composition of $n$-steps of $\Rew{\R}$, thus $t
\Rew{\R}^n u$ denotes a finite $\R$-reduction sequence of length $n$
from $t$ to $u$.  A term $t$ is in $\R$-normal form, written $t \in
\R$-nf, if there is no $t'$ s.t. $t \Rew{\R} t'$; and $t$ has an
$\R$-normal form iff there is $t' \in \R$-nf such that $t \Rewn{\R}
t'$.  A term $t$ is said to be strongly $\R$-normalizing, written $t
\in \SN{\R}$, iff there is no infinite $\R$-sequence starting at $t$.
\modifrefb{When $\R$ is finitely branching and strongly $\R$-normalizing,
$\eta_{\R}(t)$ denotes the maximal length of an $\R$-reduction
sequence starting at $t$, we simply write $\mrl{t}$ if $\R$ is clear from the context. }

\subsection{Syntax}
\label{s:syntax}

We consider a countable infinite set of \deft{variables} $x, y, z, \ldots$
 and  \deft{continuation names} $\al, \beta, \ga, \ldots$. The set of 
\deft{objects} ($\objects{\lmu}$), \deft{terms} ($\terms{\lmu}$) and 
\deft{commands} ($\commands{\lmu}$) of the $\lmu$-calculus
are  given by the following grammars
\begin{center}
$ \begin{array}{llll}
   ({\bf objects})\    & o & ::= & t \mid \Com \\
   ({\bf terms})\      & t,u,v & ::= & x \mid \l x. t \mid tu  \mid \mu \al. \Com \\
   ({\bf commands})\   & \Com & ::= & \co{\al} t \\
   \end{array} $
\end{center} 

We write $\terms{\l}$ for the the set of $\l$-terms, which is a subset of $\terms{\lmu}$.
  We abbreviate $(\ldots ((t u_1)
  u_2) \ldots u_n)$ as $t u_1 \ldots u_n$ or $t \vec{u}$ when $n$ is
  clear from the context.  The grammar extends $\l$-terms
  with two new constructors: commands $\co{\al} t$ and
  $\mu$-abstractions $\mu \al. \Com$.  \technicalreport{The \deft{size of an object}
  $o$ is denoted by $|o|$.}  \deft{Free} and \deft{bound variables} of
  objects are defined as expected, in particular $\fv{\mu \al. \Com}
  := \fv{\Com }$ and $\fv{\co{\al} t} := \fv{t}$.  
  \deft{Free names}
  of objects are defined as expected, in particular $\fn{\mu \al. \Com }  :=  \fn{\Com } \sm \cset{\al}$ and $ \fn{\co{\al} t}  :=  \fn{t} \cup  \cset{\al}$.
\deft{Bound names} are defined accordingly.

We work with the standard notion of \deft{$\alpha$-conversion}
\ie\ renaming of bound variables and names, thus for example
$\co{\del}(\mu\al.\co{\al}(\l x. x))z \equiv
\co{\del}(\mu\beta.\co{\beta}(\l y. y))z$.  \deft{Substitutions} are
(finite) functions from variables to terms specified by the
  notation $\isubs{x_1/u_1, \ldots, x_n/u_n}$ $(n \geq 0)$.
\deft{Application} of the \deft{substitution} $\sigma$ to the object
$o$, written $o \sig$, may require $\alpha$-conversion in order to
avoid capture of free variables/names, and it is defined as expected.
\deft{Replacements} are (finite) functions from names to terms
specified by the notation $\{\al_1/\! \! / u_1, \ldots, \al_n /\! \! / u_n\}$ $(n
\geq 0)$.  Intuitively, the operation $\ire{\al}{u}$ passes the term
$u$ as an argument to any command of the form $\co{\al} t$, so that
  it \emph{replaces} every occurrence of $\co{\al} t$ in a
  term by $\co{\al} tu$.  Formally,
the \deft{application} of the \deft{replacement} $\Sigma$ to the object
$o$, written $o \Sig$, may require $\alpha$-conversion in order to
avoid the capture of free variables/names, and is defined as follows:
\begin{center}
$\begin{array}{rl@{\hspace{1cm}}rl}
   x\ire{\al}{u}          := & x & 
    (\l z. t)\ire{\al}{u}  := & \l z. t\ire{\al}{u}\\
   (\co{\al} t)\ire{\al}{u}   := & \co{\al} (t \ire{\al}{u}) u &  
   (tv)\ire{\al}{u}  := & t\ire{\al}{u} v\ire{\al}{u}\\
   (\co{\ga} t)\ire{\al}{u}   := & \co{\ga} t\ire{\al}{u} \quad (\gamma \not= \alpha) &
       (\mu \ga. \Com)\ire{\al}{u} := & \mu \ga. \Com\ire{\al}{u} \\ 
      \end{array}$
\end{center}
For example,  if $\Id = \l z. z$, then 
$
(x (\mu\al\co{\al} y) (\l z. zx))\isubs{x/\Id} = 
\Id  (\mu\al\co{\al} y) (\l z. z \Id)
$, and\\ 
\modifref{$(\co{\al}x (\mu\beta.\co{\al} y)) \ire{\al}{\Id} =
 \co{\al}(x \mu\beta.\co{\al} y\Id)) \Id 
$}. 

\technicalreport{   
Substitution and replacement enjoy the following well-known interaction properties. 
\begin{lem} \mbox{}
\label{l:properties-substituton-replacement}
\begin{enumerate}
\item If $x\notin\fv{v}$ and $x\neq y$ then
$
o\isubs{x/u}\isubs{y/v}=o\isubs{y/v}\isubs{x/u\isubs{y/v}}
$.
\item If $\al\notin\fn{v}$ and $\al\neq\beta$ then
$
o\ire{\al}{u}\ire{\beta}{v}=o\ire{\beta}{v}\ire{\al}{u\ire{\beta}{v}}
$.
\item \label{l:un} If $x\notin\fv{v}$, then 
  $o\isubs{x/u}\ire{\al}{v}=o\ire{\al}{v}\isubs{x/u\ire{\al}{v}}$.
  \item \label{l:deux} If $\al\notin\fn{v}$, then
   $o\ire{\al}{u}\isubs{x/v}=o\isubs{x/v}\ire{\al}{u\isubs{x/v}}$.
\end{enumerate}
\end{lem}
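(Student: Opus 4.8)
The plan is to establish each of the four identities \emph{independently}, by structural induction on the object $o$; since objects, terms and commands are given by mutually recursive grammars, a single structural induction on $o$ traverses all of them. Throughout I would invoke the usual variable/name convention, so that the variables and names bound inside $o$ may be assumed fresh with respect to $x$, $y$, $\al$, $\beta$ and to all free variables and names of $u$ and $v$; this is precisely what lets the substitution and replacement operations commute freely with the binders. With that convention in force, the cases $o = \l z.t$, $o = \mu\ga.\Com$ and $o = tv'$ are entirely routine: on both sides the outer operations are pushed inside the immediate subobjects by definition, and the two sides are then equated by the induction hypothesis. Hence the only cases carrying content are the base cases (where $o$ is a variable) and the command case $o = \co{\ga}t$.

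For item (1) --- the classical substitution lemma extended to $\lmu$-objects --- I would just check $o = x$ (both sides reduce to $u\isubs{y/v}$, using $x \neq y$ on the right), $o = y$ (both sides reduce to $v$, using $x \notin \fv{v}$ on the right), and $o = z \notin \{x,y\}$ (both sides equal $z$). For items (2), (3), (4) the decisive case is the command $o = \co{\al}t$ --- and, additionally for item (2), the command $o = \co{\beta}t$ --- because the clause defining $\ire{\al}{-}$ on a command whose head name is $\al$ is the only place where a replacement does something non-trivial (it recurses \emph{and} appends an argument). In each such case I would unfold both sides one layer using the definitions, then observe that the extra copy of $v$ produced by the replacement simplifies away by exactly the stated hypothesis: $v\ire{\al}{u\ire{\beta}{v}} = v$ because $\al \notin \fn{v}$ in item (2); $v\isubs{x/u\ire{\al}{v}} = v$ because $x \notin \fv{v}$ in item (3); $v\ire{\al}{u\isubs{x/v}} = v$ because $\al \notin \fn{v}$ in item (4) (this last one already at the base case $o = x$, which is the only place the hypothesis is actually needed there). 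With that simplification in hand, the induction hypothesis on the body $t$ equates the two sides. Items (3) and (4) are mirror images of one another, so one argument can be transcribed from the other by swapping the roles of substitution and replacement.

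I do not anticipate a genuine mathematical obstacle here: the whole proof is bookkeeping, and the only thing that truly requires care is the handling of $\alpha$-conversion in the binder cases. Rather than tracking capture-avoiding renamings explicitly, I would rely --- as the paper already does in the very definitions of $o\sig$ and $o\Sig$ --- on the convention that bound names and variables are kept fresh, so that equalities such as $(\mu\ga.\Com)\ire{\al}{u} = \mu\ga.(\Com\ire{\al}{u})$ hold on the nose and no renaming is triggered. The secondary point, which the list above is meant to pin down, is simply to verify that in each command/base case the hypothesis invoked is precisely the one that kills the spurious duplicate of $v$; once that is checked, the routine binder and application cases together with the induction hypothesis finish the argument.
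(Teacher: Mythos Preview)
Your proposal is correct and follows the standard route: structural induction on $o$ under the Barendregt convention, with the only non-routine cases being the variable base cases and the command cases $\co{\al}t$ (and $\co{\beta}t$ for item~(2)), where the side conditions are exactly what is needed to kill the spurious copy of $v$. The paper itself does not supply a proof of this lemma --- it is stated as a ``well-known interaction property'' inside a \texttt{technicalreport} block that is suppressed in the published version --- so there is nothing to compare against beyond noting that your argument is the expected one.
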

}

\subsection{Operational Semantics}
\label{s:operational}

We consider the following set of contexts:
\begin{center}
$\begin{array}{llll}
  \textbf{(term contexts)} &\cxterm  & := &   \Box  \mid \cxterm\,t \mid
                                   t\,\cxterm  \mid  \modifref{\l x. \cxterm} \mid \mu \al.\cxcommand \\
\textbf{(command contexts)}&  \cxcommand   & : = & \co{\al}\cxterm  \\
\textbf{(contexts)} & \cxobject & : =& \cxterm \mid \cxcommand
  \end{array}$
\end{center}
 The hole $\Box$ can be replaced by a term:  indeed, $\cxterm[t]$ and $ \cxcommand[t]$ denote 
 the replacement of 
 $\Box$ in the context by the term $t$.

The $\l\mu$-calculus is given by the set of objects introduced in
Section~\ref{s:syntax} and the \deft{reduction relation} $\Rew{\lmu}$, sometimes simply written $\Rew{}$, 
which is the closure by  all contexts of the following rewriting rules:
\begin{center}
$ \begin{array}{lll}
   (\l x. t) u & \rrule{\beta} & t \isubs{x/u} \\
   (\mu \al. \Com)u & \rrule{\mu} & \mu \al. \Com  \ire{\al}{u} \\
   \end{array} $
\end{center} 
defined by means of the substitution and replacement application notions 
given in Section~\ref{s:syntax}.  
A \deft{redex} is a term of the form $(\l x. t)u$ or $(\mu \al. \Com)u$.

An alternative specification of the $\mu$-rule~\cite{Andou03} is given
by $(\mu \al. \Com)u \rrule{\mu} \mu \gamma. \Com
\ire{\al}{\gamma.u}$, where $\ire{\al}{\gamma.u}$ denotes the
\deft{fresh replacement} meta-operation assigning
$\co{\gamma}(t\ire{\al}{\gamma.u}) u$ to $\co{\al}t$ (thus changing
the name of the command), in contrast to the standard
replacement operation $\ire{\al}{u}$ introduced in
Section~\ref{s:syntax}.  We remark however that the resulting
terms $\mu \al. \Com \ire{\al}{u}$ and $ \mu \gamma. \Com
\ire{\al}{\gamma.u}$ are $\alpha$-equivalent; thus \eg\ $\mu
\al. (\co{\al} x )\ire{\al}{u} = \mu \al. \co{\al} x u \equiv \mu
\gamma. \co{\gamma} x u = \mu \gamma. (\co{\al} x )
\ire{\al}{\gamma.u}$. We will come back to this alternative definition
of $\mu$-reduction in Section~\ref{s:lmuex}.

\modifref{
   A simple example  is given by
    the following $\lmu$-reduction sequence: 
\[ \begin{array}{lll}
  (\l x.\mu \al.\co{\al} x (\l y.\mu \delta. \co{\al} y))\ \Id\ \Id & \Rew{\beta} \\
     (\mu
     \al. \co{\al}(\Id (\l y.\mu \delta.\co{\al}y)))
      \Id   & \Rew{\beta} \\
     (\mu \al. \co{\al}(\l y.\mu \delta.\co{\al}y))
      \Id  & \Rew{\mu} \\
     \mu \al. \co{\al}(\l y.\mu \delta.\co{\al}y  \Id)  \Id  & \Rew{\beta} \\
      \mu \al. \co{\al}( \mu \delta. \co{\al} \Id  \Id )  & \Rew{\beta} \\
      \mu \al. \co{\al}( \mu \delta. \co{\al} \Id  )     &  \\
\end{array} \]

Another typical example, given by Parigot~\cite{Parigot92}, which
illustrates the expressivity of the $\lmu$-calculus is the control
operator {\bf call-cc}~\cite{Griffin}, coming from \texttt{Scheme} and
enabling backtracking, specified in the $\lmu$-calculus by the term
$ \l x.\mu \al.\co{\al} x (\l y.\mu \delta. \co{\al} y)$.  The term
{\bf call-cc} is assigned $((A\rew B)\rew A)\rew A$ (Peirce's Law) in
the simply typed $\lmu$-calculus. } \ignore{By letting $\cxcommand$ be
the command context $\co{\gamma} \Box u_1\dots u_n$, the following
example shows how the execution of ${\bf call\mbox{-}cc}\ t$ in the
environment $\cxcommand$ allows $t$ to capture the environment
$\cxcommand$: }

A reduction step $o \Rew{} o'$ is said to be \deft{erasing} iff
\modifref{$o =\cxobject[(\l x. u) v]$ and
$x \notin \fv{u}$ and $o'=\cxobject[u\isubs{x/v}] = \cxobject[u]$, or
$o= \cxobject[(\mu \al. \Com )u]$ and $\al \notin
\fn{\Com}$ and $o'=\cxobject[\mu \al. \Com \ire{\al}{u}]=\cxobject[\mu \al. \Com]$}. Thus \eg\ $(\l x. z)y \Rew{\beta} z$ and $(\mu
\al. \co{\beta}x)\Id \Rew{\mu} \mu \al. \co{\beta}x$ are erasing
steps.  A reduction step $o \Rew{} o'$ which is not erasing is called
\deft{non-erasing}.  Note that reduction is stable by substitution and
replacement.

\ignore{More precisely,
\begin{lem}\mbox{}
  \label{l:stability}
  \begin{itemize}
  \item If $o \Rew{} o'$, then $o \isubs{x/u}
    \Rew{} o' \isubs{x/u}$ and $o \ire{\al}{u} {\Rew{}} o' \ire{\al}{u}$.
  \item If $u \Rew{} u'$, then $o \isubs{x/u} =  o \isubs{x/u'}$
    if $x \notin \fv{o}$ and $o \isubs{x/u} \Rewplus{}  o \isubs{x/u'}$
    if $x \in \fv{o}$.
  \item If $u \Rew{} u'$, then $o \ire{\al}{u} = o' \ire{\al}{u}$
    if $\al \notin \fn{o}$ and $o\ire{\al}{u} {\Rewplus{}} o' \ire{\al}{u}$
    if $\al \in \fn{o}$.
  \end{itemize}
\end{lem}

These stability properties give the following corollary.
\begin{cor}
\label{c:o-subs-sn-implique-o-sn}
If $o\isubs{x/u} \in \SNSP$ (resp. $o\ire{\al}{u}\in \SNSP$) , then $o \in \SNSP$.
\end{cor}
}

A \deft{head-context} is a context defined by the following grammar:
\begin{center}
$\begin{array}{llll}
   \HNO  & ::= & \HNT \mid \HNCom \\ 
   \HNT & ::= & \Box\, t_1\ldots t_n~ (n\geqslant 0) \mid \l x. \HNT \mid  \mu \al. \HNCom \\
   \HNCom & ::= & \co{\al} \HNT  \\
\end{array} $
\end{center}
A  \deft{head-normal form} is an object of the form $\HNO[x]$, where $x$
is any variable replacing the constant $\Box$. Thus for example $\mu \al. \co{\beta} \l y. x (\l z. z)$ is a head-normal form.
An object $o \in \objects{\lmu}$ is said to 
be \deft{head-normalizing}, written $o \in\HN{\lmu}$, if $o \Rewn{\lmu} o'$,
for some head-normal form $o'$. 
Remark that $o \in\HN{\lmu}$ does not imply $o \in\SN{\lmu}$ while the
converse necessarily holds. We write $\HN{\l}$ and $\SN{\l}$ when $t$ is
restricted to be a $\l$-term and the reduction system is restricted to
the $\beta$-reduction rule.

 A redex $r$ in an object of the form $o= \HNO[r]$ is called the
 \deft{head-redex} of $t$.  The reduction step $o \Rew{\lmu} o'$
 contracting the head-redex of $o$ is called a \deft{head-reduction
   step}.
 The \deft{head reduction strategy}  is a  deterministic
  strategy on the set $\objects{\lmu}$
 performing always head-steps\modifref{, so that the head reduction strategy only stops on head normal forms}.
  If the head-strategy starting at $o$
 terminates, then $o \in \HN{\lmu}$, while the converse direction is
 not straightforward (\cf\ Theorem~\ref{t:hn-lmu}).


\section{Quantitative Type Systems for the $\l$-Calculus}
\label{s:types-lambda}

As mentioned before, our results rely on typability of $\lmu$-terms in
suitable systems with non-idempotent types. Since the $\lmu$-calculus
embeds the $\l$-calculus, we start by recalling the
well-known~\cite{Gardner,Carvalho07,BKRDR14} quantitative type systems
for $\l$-calculus, called here $\Hl$ and $\Sl$. \modifref{We then reformulate
them, using a different syntactical view,} resulting in the
typing systems $\Hpl$ and $\Spl$, that are subsumed by the formalisms we adopt in Section~\ref{s:types-lambda-mu} for $\lmu$.

We start by fixing a countable set of \deft{base types} $\bta, \btb, \btc \ldots$,
then we introduce two different categories of types specified by the following grammars:
$$\begin{array}{llll}
(\mbox{{\bf Intersection Types}})\ & \IM       & :: = & \mult{\sigk}_{\kK}  \\
(\mbox{{\bf Types}})\           & \sig, \tau & ::= &  \bta \mid \IM \ftype  \sig  \\
\end{array}$$
An intersection type $\mult{\sigk}_{k \in \cset{1..n}}$ is a {\it
  multiset} that can be understood as a type $\sig_1 \cap \ldots \cap
\sig_n$, where $\cap$ is associative and commutative, but {\it
  non-idempotent}. Thus, $\emul$ is the empty intersection type.
\modifrefb{\label{text:non-deter-rel-for-inter} The \deft{
    non-deterministic} \deft{choice} \emph{relation} $\calR^*$ between
  intersection types is defined by $\msigk \calR^* \msigk$ when $K\neq
  \es$ and $\msigk \calR^* \mtau$ (for any type $\tau$) when $K=\es$.
  By making a slight abuse of notation, this choice relation is going
  to be  used as a non-deterministic \emph{operation}
  $\choice{\_}$ (rather than a relation $\calR^*$) as follows:
    \begin{center}
     $\choice{\mult{\sigk}_{\kK}} := \left \{ \begin{array}{ll}
                                            [\tau] & \mbox{ if } K = \es \mbox{ and $\tau$ is any arbitrary type} \\
                                              \mult{\sigk}_{\kK} & \mbox{  if } K \neq \es 
                                            \end{array} \right. $
\end{center}
  }

     \modifref{\deft{Variable assignments} (written $\Gam$) are total functions
     from variables to intersection types. We may write $\emptyset$ to
     denote the variable assignment that associates the empty
     intersection type $\emul$ to every variable.  The \deft{domain of
       $\Gam$} is given by
     $\dom{\Gam} := \cset{x \mid \Gam(x) \neq \emul}$, so that when
     $x\notin\dom{\Gam}$, then $\Gam(x)$ stands for $\mult{\,
     }$. We write $x:\IM$  (even when $\IM=\emul$)
       for the assignment mapping $x$ to $\IM$ and all $y\neq x$ to
       $\emul$.  We write $\Gam \inter \Gam'$ for
     $x\mapsto \Gam(x) + \Gam'(x)$, where $+$ is multiset union, so
     that $\dom{\Gam \inter \Gam'}=\dom{\Gam}\cup\dom{\Gam'}$.
     When $\dom{\Gam}\cap \dom{\Gam'}$, then $\Gam\inter \Gam'$
     may also
       be written as  $\Gam;\Gam'$.  We write $\cmin{\Gam}{x}$ for the
     assignment defined by $(\cmin{\Gam}{x})(x) = \mult{\,}$ and
     $(\cmin{\Gam}{x})(y) = \Gam(y)$ if $ y \neq x$.}

To  present/discuss different typing systems, 
we consider the following derivability notions.
A \deft{type judgment} is a triple
$\Gam \vdash t:\sig$, where $\Gam$ is a variable assignment,
$t$ a term and $\sig$ a type.  
A \deft{(type) derivation} in system $\X$ is a tree obtained by
applying the (inductive) rules of the type system $\X$, \modifref{where each node corresponds to the application of some typing rule}.  We write 
$\Phi \tri_\X \Gam \vdash t: \sig$ if $\Phi$ is a type derivation 
concluding with the type judgment $ \Gam \vdash t:\sig$,  and just $\tri_\X\ \Gam \vdash t: \sig$ if there exists  $\Phi$ such that $\Phi \tri_\X \Gam \vdash t: \sig$.  A term $t$
is \deft{$\X$-typable} iff there is a derivation in $\X$ typing $t$, \ie\ if
there is $\Phi$ such that $\Phi \tri_\X \Gam \vdash t: \sig$. We may
omit the index $\X$ if the name of the system is clear from the
context.

\subsection{Characterizing Head $\beta$-Normalizing $\l$-Terms}

We discuss in this section typing systems being able to characterize
head $\beta$-normalizing $\l$-terms. 
We first consider 
system $\Hl$ in Figure~\ref{fig:lambda-head}, 
first appearing in~\cite{Gardner}, then in~\cite{Carvalho07}.

\begin{figure}[h]
\begin{framed}
{\small
\[ \begin{array}{c} 
\infer[ (\ax)]{ \phantom{} }{\tyj{x}{x:\mult{\tau}}{\tau}} \msep 
\infer[ (\introarrow) ]{ \tyj{t}{\Gam}{\tau}  }{ \tyj{\lambda{x}.t}{\cmin{\Gam}{x}}{\ty{\Gam(x)}{\tau}}}
     \msep
\infer[(\app)]{\tyj{t}{\Gam}{\ty{\mult{\sigk}_{\kK}}{\tau}}  \quad 
       (\tyj{u}{\Gamk}{\sigk})_{\kK}}
       {\tyj{t\,u}{\Gam \modifref{\inter \inter_{\kK}} \Gamk }{\tau}}
\end{array} \]
\caption{System $\Hl$}
\label{fig:lambda-head}
\[ \begin{array}{c}
\mbox{ Rule } (\ax)  \quad 
\mbox{ Rule } (\introarrow) \qquad 
\infer[(\many)]{(\Gamk \vdash t: \sigk )_{\kK}}
          {\inter_{\kK}\Gamk \Vdash t: \mult{\sigk}_{\kK} }  \qquad 
\infer[ (\app)  ]{ \Gam \vdash t: \IM \ftype  \sig  \quad 
       \Gam'  \Vdash u :  \IM  } 
       {\Gam \inter \Gam' \vdash t\,u:\sig } 
  \end{array} \]  
}
\caption{System $\Hpl$}
\label{fig:intermediate-lambda-head}
\end{framed}
\end{figure}

Notice that $K=\es$ in rule $(\app)$  allows to type an application
$t\,u$ without necessarily typing the subterm $u$. Thus, 
if $\Omega = (\l x.xx)(\l x.xx)$, then 
from the judgment $x:\mult{\sig}\vdash x:\sig$ we can derive  for example
$x:\mult{\sig} \vdash (\l y. x) \Omega: \sig$.

System $\Hl$ characterizes head $\beta$-normalization:

\begin{thm}
\label{l:typable-hn-lambda}
Let $t \!  \in \terms{\lambda}$.  Then
$t$ is $\Hl$-typable iff $t \! \in \HN{\l}$ iff the head-strategy
terminates on $t$. 
\end{thm}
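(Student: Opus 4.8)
The plan is to prove the two implications of the triple equivalence separately, using the standard technique for non-idempotent intersection types. I would organize the proof around three arrows: (1) if the head strategy terminates on $t$, then $t \in \HN{\l}$; (2) if $t \in \HN{\l}$, then $t$ is $\Hl$-typable; (3) if $t$ is $\Hl$-typable, then the head strategy terminates on $t$. Arrow (1) is immediate from the definitions, since a terminating head-reduction sequence ends in a head-normal form by construction of the head strategy.

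For arrow (2), I would first show that every head-normal form $o = \HNO[x]$ is $\Hl$-typable. This goes by induction on the structure of the head-context $\HNO$: the base case $x\, t_1 \ldots t_n$ is typed by starting from the axiom $x : \mult{\tau} \vdash x : \tau$ with $\tau$ an appropriate arrow type $\emul \ftype \cdots \ftype \emul \ftype \tau'$, and applying $(\app)$ with $K = \es$ for each argument $t_i$ (so the $t_i$ need not be typed); the cases $\l x. \HNT$ and $\mu$-abstraction—wait, in the pure $\l$-fragment only $\l x. \HNT$—use $(\introarrow)$. Then I would prove a \textbf{subject expansion} lemma: if $t \Rew{\beta} t'$ by a head step and $t'$ is $\Hl$-typable, then $t$ is $\Hl$-typable. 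This requires the usual anti-substitution lemma stating that if $\tri t\isubs{x/u} : \sigma$ then there is a multiset $\IM$ with $x : \IM \vdash t : \sigma$ and $\tri u : \IM$ (via rule $(\many)$), decomposing the derivation according to the occurrences of $x$. Combining subject expansion along the finite head-reduction sequence with typability of its head-normal-form endpoint gives typability of $t$.

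For arrow (3), the crux, I would define a \textbf{measure} $\mathtt{sz}(\Phi)$ on $\Hl$-derivations—essentially the number of nodes, or more precisely counting only the "relevant" nodes (e.g. axiom and $(\app)$/$(\introarrow)$ nodes reachable without passing through an untyped argument)—and prove \textbf{weighted subject reduction for the head step}: if $\Phi \tri t : \sigma$ and $t \Rew{\beta} t'$ is the head step, then there is $\Phi' \tri t' : \sigma$ with $\mathtt{sz}(\Phi') < \mathtt{sz}(\Phi)$. This relies on a substitution lemma: given $x : \IM \vdash t : \sigma$ and a family of derivations for $u$ against $\IM$, one builds a derivation of $t\isubs{x/u}:\sigma$ whose measure is the sum of the pieces, and crucially the $\beta$-step consumes one $(\introarrow)$/$(\app)$ pair so the measure strictly drops. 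Since the measure is a natural number that strictly decreases at each head step, the head strategy must terminate on any typable $t$.

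The main obstacle is getting the measure and the substitution lemma to fit together so that strict decrease holds \emph{even for erasing head steps}, i.e. when $x \notin \fv{t}$ and the argument $u$ is discarded. In the non-idempotent setting this is exactly where the design pays off: when $x \notin \fv{t}$ the typing of $t$ forces $\IM = \emul$, so $u$ is typed by an empty family (rule $(\many)$ with $K=\es$, contributing $0$ to the measure) and is simply erased, while the redex still removes the abstraction/application nodes—hence strict decrease is preserved. I would make sure the statement of the substitution lemma tracks the measure additively and that the measure assigns positive weight to the nodes destroyed by $\beta$, which forces a careful but routine bookkeeping of which nodes count. Everything else (the anti-substitution lemma for arrow (2), typability of head-normal forms, closure of the constructions under the various typing rules) is standard structural induction.
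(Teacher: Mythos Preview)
Your proposal is correct and follows exactly the approach the paper describes (and carries out in full for the $\lmu$ case): typability of head-normal forms plus subject expansion for the backward direction, and a size measure on derivations that strictly decreases under head reduction for the forward direction. The paper does not actually give a proof of this theorem, citing it as a known result from Gardner and de Carvalho, so your sketch is in fact more detailed than what appears in the paper.
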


Moreover, the implication {\it typability implies termination of the head-strategy} can be shown
by simple arithmetical arguments  provided by the {\it quantitative}
flavour of the typing system $\Hl$, in contrast to classical 
reducibility arguments usually invoked in other cases~\cite{GiraFont,Krivine93}. 
Actually, the arithmetical arguments 
give the  following quantitative property:

\begin{thm}
    If $t$ is $\Hl$-typable
with tree derivation $\Pi$, then the \textit{size} (number of nodes) of $\Pi$ 
gives an upper bound to the length of the head-reduction strategy starting at $t$. 
\end{thm}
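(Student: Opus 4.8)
The plan is to prove the bound by induction on the structure of the type derivation $\Pi$, showing that each step of the head-reduction strategy strictly decreases the size of the derivation, so that the size of $\Pi$ majorizes the number of such steps. The key quantitative lemma is a \emph{weighted subject reduction} statement: if $\Phi \tri_{\Hl} \Gam \vdash t : \sigma$ and $t \Rew{\beta} t'$ is a \emph{head}-reduction step, then there is a derivation $\Phi' \tri_{\Hl} \Gam \vdash t' : \sigma$ with $\mathtt{sz}(\Phi') < \mathtt{sz}(\Phi)$, where $\mathtt{sz}$ counts the number of nodes (rule instances). Granting this, the theorem follows immediately: a head-reduction sequence $t = t_0 \Rew{\beta} t_1 \Rew{\beta} \cdots$ induces a strictly decreasing sequence of natural numbers $\mathtt{sz}(\Pi) = \mathtt{sz}(\Phi_0) > \mathtt{sz}(\Phi_1) > \cdots$, which must terminate in at most $\mathtt{sz}(\Pi)$ steps; combined with Theorem~\ref{l:typable-hn-lambda} this also re-establishes that the head strategy terminates, but the point here is the explicit arithmetical bound.

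First I would establish the standard \emph{substitution lemma} in its quantitative form: if $\Phi \tri_{\Hl} \Gam; x : \IM \vdash t : \sigma$ and, writing $\IM = \mult{\sigk}_{\kK}$, we have derivations $\Psi_k \tri_{\Hl} \Delk \vdash u : \sigk$ for each $\kK$, then there is a derivation $\Phi\isubs{x/u} \tri_{\Hl} \Gam \inter \inter_{\kK}\Delk \vdash t\isubs{x/u} : \sigma$ whose size is $\mathtt{sz}(\Phi) + \suml{\kK}\mathtt{sz}(\Psi_k) - |\IM|$, the correction term $-|\IM| = -|K|$ accounting for the fact that each of the $|K|$ axiom leaves $x : \mult{\sigk} \vdash x : \sigk$ consuming an occurrence of $x$ is replaced by the corresponding derivation $\Psi_k$, so one node disappears per occurrence. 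This is proved by a routine induction on $\Phi$; the non-idempotent discipline is exactly what makes the multiplicities line up (each free occurrence of $x$ in $t$ corresponds to exactly one element of $\IM$). The same care is needed for the case $K = \es$, where $x \notin \fv{t}$ and no $\Psi_k$ is needed.

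Then I would prove the weighted subject reduction lemma. Consider the head-redex, which by definition of head-context is of the form $(\l x. s) v$ sitting at the head position $\HNO[\Box]$. A derivation of $(\l x. s) v$ ends in an $(\app)$ rule whose left premise types $\l x. s$ with $\ty{\IM}{\sigma}$, hence (inverting $(\introarrow)$) comes from a derivation of $s$ with $\sigma$ under $\Gam; x : \IM$, and whose right premise is a $(\many)$ node gathering derivations $\Psi_k$ of $v$ with $\sigk$ for $\kK$ where $\IM = \mult{\sigk}_{\kK}$. Applying the substitution lemma yields a derivation of $s\isubs{x/v}$ of size $(\mathtt{sz}\text{ of the body derivation}) + \suml{\kK}\mathtt{sz}(\Psi_k) - |K|$. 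Comparing with the size of the original derivation of the redex — which is $1$ (the $(\app)$ node) $+ 1$ (the $(\introarrow)$ node) $+$ (body derivation size) $+ 1$ (the $(\many)$ node) $+ \suml{\kK}\mathtt{sz}(\Psi_k)$ — we see the new size is smaller by $3 + |K| \geq 3 > 0$. Since the head-redex occurs inside a head-context $\HNO$, and the typing rules for the constructors appearing in $\HNO$ (application on the left, $\l$-abstraction, and — vacuously here, as we are in the $\l$-calculus — nothing else) are \emph{linear} in the sense that the subderivation typing the hole occurs as a subderivation and its size contributes additively and monotonically, replacing the redex's subderivation by the strictly smaller contractum's subderivation strictly decreases the total; the surrounding assignment and type are preserved. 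Here the one subtlety is the $(\app)$ constructor with $K = \es$: if the head-redex were erased this way it could not be the head-redex (the head-redex is always typed), so this case does not arise for the head position, though it must be tracked in the congruence argument to know the hole's subderivation is genuinely present.

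The main obstacle I anticipate is not conceptual but bookkeeping: getting the size arithmetic exactly right in the substitution lemma, in particular the $-|\IM|$ correction and its interaction with the $K = \es$ case of $(\app)$, and then verifying that every context constructor appearing in a head-context behaves linearly with respect to $\mathtt{sz}$ so that a strict decrease deep inside propagates to a strict decrease at the root. In the plain $\l$-calculus this is clean precisely because (unlike the $\lmu$-case discussed in the introduction) no constructor can duplicate a subderivation, so the naive node-count is already a valid decreasing measure; the delicate multiplicity-aware measure is only needed later for $\lmu$.
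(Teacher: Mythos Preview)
Your proposal is correct and follows exactly the approach the paper uses (and later spells out in detail for the $\lmu$ extensions): a quantitative substitution lemma with the $-|\IM|$ correction, followed by weighted subject reduction showing a strict size decrease on head steps, from which the bound is immediate. One minor bookkeeping slip: the theorem is stated for system $\Hl$, whose $(\app)$ rule takes the family $(\tyj{u}{\Gamk}{\sigk})_{\kK}$ directly as premises---there is no separate $(\many)$ node (that rule belongs to $\Hpl$)---so the decrease at the redex is $2+|K|$ rather than $3+|K|$; this does not affect the argument.
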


To reformulate system $\Hl$ in a different way, we now
distinguish two sorts of judgments: \deft{regular judgments} of the
form $\Gam \vdash t:\sig$ assigning {\it types} to terms, and
\deft{auxiliary judgments} of the form $\Gam \Vdash t:\IM$ assigning
     {\it intersection types} to terms.

An equivalent formulation of system $\Hl$, called $\Hpl$, is given in
  Figure~\ref{fig:intermediate-lambda-head}, (where we always use the name $(\app)$ for the rule typing the application term, even if the rule is different from that in system $\Hl$).  There are two inherited forms of type derivations:  \deft{regular} (resp. \deft{auxiliary}) \deft{derivations} are those that conclude with regular (resp. auxiliary) judgments.  
\modifref{Notice that $K=\es$ in rule $(\many)$ gives $\es \Vdash u: \emul$ for {\it any  term $u$, 
\eg\ $\es \Vdash \Omega:\emul$, so
that one can also derive $x:\mult{\tau} \vdash (\l y.x)\Omega:\tau$ in this system.} Notice also  that systems $\Hl$ and $\Hpl$ are {\it relevant}, 
\ie\ they lack weakening.} Equivalence between $\Hl$ and  $\Hpl$ gives the following result: 
\begin{cor}
\label{l:typable-hn-lambda-bis}
Let $t \!  \in \terms{\lambda}$.  Then
$t$ is $\Hpl$-typable iff $t \! \in \HN{\l}$
iff the head-strategy
terminates on $t$. 
\end{cor}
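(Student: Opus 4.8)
The plan is to obtain this corollary by combining Theorem~\ref{l:typable-hn-lambda} with the equivalence of systems $\Hl$ and $\Hpl$ mentioned just above its statement. Concretely, it suffices to establish the following equivalence lemma: for every $t\in\terms{\lambda}$, every variable assignment $\Gam$ and every type $\sig$, one has $\tri_{\Hl}\ \Gam\vdash t:\sig$ if and only if $\tri_{\Hpl}\ \Gam\vdash t:\sig$. Given this, $t$ is $\Hpl$-typable iff $t$ is $\Hl$-typable (quantifying the lemma over $\Gam$ and $\sig$), and the latter is equivalent to $t\in\HN{\l}$ and to termination of the head-strategy on $t$ by Theorem~\ref{l:typable-hn-lambda}, which yields the statement.

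For the left-to-right direction of the equivalence lemma I would argue by structural induction on the $\Hl$-derivation. The rules $(\ax)$ and $(\introarrow)$ are shared by both systems, so the only case requiring attention is $(\app)$: if the derivation ends with $(\app)$ applied to premises $\Gamo\vdash t:\msigk\ftype\tau$ and $(\Gamk\vdash u:\sigk)_{\kK}$ with conclusion $\Gamo\inter(\inter_{\kK}\Gamk)\vdash tu:\tau$, then the induction hypothesis provides $\Hpl$-derivations of $\Gamo\vdash t:\msigk\ftype\tau$ and of each $\Gamk\vdash u:\sigk$; one packages the $k$-indexed family into the auxiliary judgment $\inter_{\kK}\Gamk\Vdash u:\msigk$ using rule $(\many)$, and then applies the $\Hpl$ rule $(\app)$ to conclude. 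The degenerate case $K=\es$ is covered: the empty instance of $(\many)$ produces $\es\Vdash u:\emul$, which matches the $\Hl$ side where $(\app)$ with $K=\es$ does not type $u$ at all.

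For the right-to-left direction, since $\Hpl$ uses two sorts of judgments I would strengthen the claim to a simultaneous statement proved by induction on $\Hpl$-derivations: (i) if $\Phi'\tri_{\Hpl}\Gam\vdash t:\sig$ (regular) then $\tri_{\Hl}\Gam\vdash t:\sig$; and (ii) if $\Phi'\tri_{\Hpl}\Gam\Vdash t:\msigk$ (auxiliary) then there exist assignments $(\Gamk)_{\kK}$ with $\Gam=\inter_{\kK}\Gamk$ and $\tri_{\Hl}\Gamk\vdash t:\sigk$ for every $\kK$. Statement (ii) is essentially forced, since $(\many)$ is the unique rule producing an auxiliary judgment, so its premises form precisely a family $(\Gamk\vdash t:\sigk)_{\kK}$ to which hypothesis (i) applies. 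For (i) the only non-immediate case is the $\Hpl$ rule $(\app)$: from premises $\Gamo\vdash t:\IM\ftype\sig$ and $\Gam'\Vdash u:\IM$ with $\IM=\msigk$, hypothesis (i) gives $\tri_{\Hl}\Gamo\vdash t:\msigk\ftype\sig$, hypothesis (ii) gives the decomposition $\Gam'=\inter_{\kK}\Gamk$ with $\tri_{\Hl}\Gamk\vdash u:\sigk$, and the $\Hl$ rule $(\app)$ then derives $\Gamo\inter(\inter_{\kK}\Gamk)\vdash tu:\sig$, as required.

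I do not anticipate a genuine obstacle here: the correspondence is purely syntactic, matching each $\Hl$-application step with one $(\many)$-step followed by one $\Hpl$-application step, and conversely. The only points demanding a little care are the bookkeeping of the two judgment sorts (dealt with by the strengthening to (i)--(ii)) and the empty-intersection case $K=\es$, where one checks that ``not typing $u$'' on the $\Hl$ side corresponds exactly to the zero-premise instance of $(\many)$ yielding $\es\Vdash u:\emul$ on the $\Hpl$ side; the relevance (absence of weakening) of both systems is what makes the contexts match on the nose.
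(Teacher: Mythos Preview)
Your proposal is correct and follows exactly the approach the paper intends: the paper simply states ``Equivalence between $\Hl$ and $\Hpl$ gives the following result'' without spelling out the equivalence proof, and you have filled in precisely that syntactic correspondence (matching each $\Hl$ $(\app)$ with a $(\many)$ followed by an $\Hpl$ $(\app)$, handling the two judgment sorts via the simultaneous strengthening, and checking the $K=\es$ case). There is nothing to add.
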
 
Auxiliary judgments turn out to substantially lighten the notations and 
to make the statements (and their proofs) more readable.

\subsection{Characterizing \modifref{Strongly $\beta$-Normalizing} $\l$-Terms}
\label{s:strong-lambda}

We now discuss typing systems being able to characterize \modifref{strongly $\beta$-normalizing} $\l$-terms. We first consider system $\Sl$ in
Figure~\ref{fig:lambda-strong}, which appears
in~\cite{BucciarelliKesnerVentura} (slight variants appear
in~\cite{DebeneRonchiITRS12,Bernadet-Lengrand2013,KV14}).  Rule
$(\appu)$ forces the {\it erasable argument} (the subterm $u$) to be
typed, even if the type of $u$ (\ie\ $\sig$) is not being used in the
conclusion of the judgment.  Thus, in contrast to system $\Hl$, every
subterm of a typed term is now typed.
\begin{figure}[h]
\begin{framed}
{\small 
\[  \begin{array}{c}
\infer[(\ax)]{ \phantom{} }{\tyj{x}{x:\mult{\tau}}{\tau}} \quad 
\infer[(\introarrow) ]{ \tyj{t}{\Gam}{\tau}  }{
  \tyj{\lambda{x}.t}{\cmin{\Gam}{x}}{\ty{\Gam(x)}{\tau}}}\quad 
\infer[(\appu)]{\tyj{t}{\Gam}{\ty{\mult{\;}}{\tau}}  \quad  
        \tyj{u}{\Del}{\sig} } 
      {\tyj{t\,u }{\Gam  \inter   \Del}{\tau}} \\[6mm] 
\infer[(\appd)]{\tyj{t}{\Gam}{\ty{\mult{\sigk}_{\kK}}{\tau}}  \quad 
       (\tyj{u}{\Delk}{\sigk})_{\kK} \quad 
       K  \neq \es}
      {\tyj{t\,u}{\Gam  \modifref{\inter  \inter_{\kK}} \Delk }{\tau}}
\end{array}\]
\caption{System $\Sl$}
\label{fig:lambda-strong}
\[ \begin{array}{c}
\mbox{ Rule } (\ax) \quad 
\mbox{ Rule } (\introarrow)  \quad 
\infer[(\many)]{(\Gamk \vdash t: \sigk)_{\kK} 
                                    }
          {\inter_{\kK}\Gamk \Vdash t: \mult{\sigk}_{\kK} } \msep 
\infer[(\appet)]{\tyj{t}{\Gam}{\ty{\IM}{\tau}}  \msep 
       \Del \Vdash u: \choice{\IM}  } 
      {\tyj{tu}{\Gam \inter \Del }{\tau}}
   \end{array}  \]    
}
\caption{System $\Spl$}
\label{fig:intermediate-lambda-strong}
\end{framed}
\end{figure}
System $\Sl$  characterizes strong $\beta$-normalization: 
\begin{lem}
\label{l:typable-sn-lambda}
Let $t \in \terms{\lambda}$.  Then
$t$ is $\Sl$-typable iff $t \in\SN{\l}$. 
\end{lem}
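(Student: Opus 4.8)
The plan is to prove the two directions of the equivalence separately, following the standard pattern for non-idempotent intersection type characterizations, but now taking care of the erasing-application rule $(\appu)$ that distinguishes $\Sl$ from $\Hl$.

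\textbf{Soundness ($t$ is $\Sl$-typable $\Rightarrow t\in\SN{\l}$).} First I would set up an appropriate \emph{measure} on $\Sl$-derivations — the most convenient choice being the number of nodes of the derivation tree $\Pi$, possibly refined by the sizes of the types occurring in it, as is done in \cite{BucciarelliKesnerVentura}. The key lemma is a \emph{weighted subject reduction} statement: if $\Phi\tri_{\Sl}\Gam\vdash t:\sig$ and $t\Rew{\beta}t'$, then there is $\Phi'\tri_{\Sl}\Gam\vdash t':\sig$ with $\maxi(\Phi')<\maxi(\Phi)$. The crux of this lemma is a \emph{substitution lemma}: if $\Phi_t\tri_{\Sl}\Gam;x:\IM\vdash t:\sig$ and $\Phi_u\tri_{\Sl}\Del\Vdash u:\IM$ (an auxiliary derivation, best phrased in system $\Spl$), then there is a derivation of $\Gam\inter\Del\vdash t\isubs{x/u}:\sig$ whose measure is (roughly) $\maxi(\Phi_t)+\maxi(\Phi_u)$ minus the contribution of the consumed axioms. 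Because types are non-idempotent, the multiset $\IM$ records \emph{exactly} how many times $x$ is typed in $\Phi_t$, so the $\Phi_u$-derivations line up perfectly with the occurrences of $x$ — this is where non-idempotency does the real work and makes the count exact. For a $\beta$-step $(\l x.t)u\Rew{\beta}t\isubs{x/u}$, the redex is typed either via $(\appd)$ (so $u$ is typed with nonempty $\IM$ matching the occurrences of $x$) or via $(\appu)$ with $x\notin\fv{t}$ and $\IM=\emul$; in the latter case the subderivation for $u$ is discarded but its positive contribution to the measure is lost, so the measure still strictly decreases. Since every single $\beta$-step strictly decreases a natural-number measure, there can be no infinite reduction from $t$, i.e. $t\in\SN{\l}$.

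\textbf{Completeness ($t\in\SN{\l}\Rightarrow t$ is $\Sl$-typable).} Here I would argue by induction on the pair $(\eta_\beta(t),|t|)$ ordered lexicographically, i.e. on the maximal length of a reduction from $t$ together with the size of $t$ as a tie-breaker. If $t$ is a $\beta$-normal form, I build a derivation directly by structural induction on normal forms (normal forms are of the shape $\l x_1\ldots\l x_k.\, y\,t_1\ldots t_n$ with each $t_i$ normal, and one types the head variable with a long-enough arrow type, typing each $t_i$ by the inner induction hypothesis — note that system $\Sl$ can always type a $\beta$-normal form). If $t$ is not normal, pick a redex and its contractum $t'$; since any step preserves or decreases $\eta_\beta$ and $t\Rew{\beta}t'$ strictly decreases it, the induction hypothesis gives a derivation $\Phi'$ of $t'$. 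The heart of this direction is then a \emph{subject expansion lemma}: if $t\Rew{\beta}t'$ (with $t$ still strongly normalizing, so that in the erasing case the discarded argument is itself in $\SN{\l}$ and hence typable by the inner induction on size) and $t'$ is $\Sl$-typable, then $t$ is $\Sl$-typable. This uses an \emph{anti-substitution lemma} (the converse of the substitution lemma): from a derivation of $t\isubs{x/u}$ one reconstructs a derivation of $t$ together with auxiliary derivations of $u$, collecting for each of the (possibly zero) occurrences of $x$ in $t$ the corresponding sub-derivation of $u$ that was plugged in.

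\textbf{Main obstacle.} The delicate point is the interplay between the erasing-application rule $(\appu)$ and the measure/expansion arguments. In soundness, an erasing $\beta$-step throws away a whole sub-derivation (the one for $u$), and I must make sure the measure still strictly decreases rather than merely not increasing — this forces the measure to genuinely count the nodes of \emph{every} sub-derivation, including erasable ones, and forces $(\appu)$ to demand that $u$ be typed (which is exactly why $\Sl$ differs from $\Hl$). In completeness, the symmetric difficulty is that when expanding an erasing step $\cxobject[u\isubs{x/v}]$ with $x\notin\fv u$ back to $\cxobject[(\l x.u)v]$, the derivation of the contractum contains \emph{no} information about $v$, so I must separately type $v$; this is legitimate precisely because $t\in\SN{\l}$ implies $v\in\SN{\l}$, and $v$ is a strict subterm, so the (inner, size-based) induction hypothesis applies. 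Getting these two bookkeeping issues right — and phrasing the substitution/anti-substitution lemmas in the two-judgment style of $\Spl$ so that auxiliary derivations track multiplicities cleanly — is the real content; everything else is a routine case analysis on the typing rules.
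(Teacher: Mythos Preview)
Your strategy is the standard one and matches what the paper carries out in detail for $\lmu$ (Lemmas~\ref{l:typable-isn} and~\ref{l:isn-typable}, which specialize back to $\Sl$). One imprecision is worth flagging: your weighted subject reduction as stated --- ``there is $\Phi'\tri_{\Sl}\Gam\vdash t':\sig$'' with the \emph{same} $\Gam$ --- is false for erasing steps in a relevant system. When $(\l x.r)s \Rew{\beta} r$ with $x\notin\fv{r}$, the redex is typed in context $\Gam_r\inter\Del_s$ (rule $(\appu)$ forces $s$ to be typed), whereas the only available derivation of the reduct $r$ has context $\Gam_r$; the free variables of $s$ simply vanish from the assignment (cf.\ the discussion after Property~\ref{l:sr} and Section~\ref{ss:SN-vs-HN}). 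You clearly see this dynamic (you say the subderivation for $u$ is ``discarded''), but your formal lemma contradicts it. The fix is either to weaken the conclusion --- the reduct is typable with some sub-assignment of $\Gam$ and strictly smaller measure, which is all soundness needs --- or, as the paper does, to state weighted SR and SE only for \emph{non-erasing} steps and handle erasing steps by direct construction inside the main induction on $\sz{\Phi}$ (resp.\ on $\lexnl{t}$). Your completeness direction already does exactly this for the erasing case, so the repair is local to the soundness lemma.
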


As before, the implication {\it typability implies normalization} can be show
by simple arithmetical arguments provided by the {\it quantitative}
flavour of the typing system $\Sl$. 
Actually, the \textit{proof} of Lemma~\ref{l:typable-sn-lambda} gives the following bound:
\begin{thm}
    If $t$ is $\Sl$-typable
with tree derivation $\Pi$, then the \textit{size} (number of nodes) of $\Pi$ 
gives an upper bound to the length of any reduction path starting at $t$. 
\end{thm}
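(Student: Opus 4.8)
The plan is to prove a \emph{weighted subject reduction} property, which is in fact the quantitative core of the proof of Lemma~\ref{l:typable-sn-lambda}: for every $\Phi \tri_{\Sl} \Gam \vdash t : \tau$ and every step $t \Rew{\beta} t'$, the term $t'$ is again $\Sl$-typable with some derivation $\Phi'$ such that $\sz{\Phi'} < \sz{\Phi}$, where $\sz{\cdot}$ denotes the number of nodes of a derivation. The context and the type may shrink along the way, since an erasure \emph{inside} $t$ may remove free occurrences of a variable, so $\Phi'$ need not conclude with the same judgment as $\Phi$. Granting this, the theorem follows immediately: since $\sz{\cdot} \geq 1$ always, a reduction sequence of length $n$ issued from $t$ produces a derivation of size at most $\sz{\Pi} - n$, hence $n < \sz{\Pi}$; in particular $\sz{\Pi}$ bounds the length of any reduction path starting at $t$ (this re-establishes, incidentally, the \emph{typability} $\Rightarrow$ \emph{strong normalization} direction of Lemma~\ref{l:typable-sn-lambda}).

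To prove weighted subject reduction I would first establish a \emph{substitution lemma} with exact bookkeeping: given $\Phi \tri_{\Sl} \Gam ; x{:}\msigk \vdash u : \tau$ and a family $(\Psi_k \tri_{\Sl} \Del_k \vdash v : \sigk)_{\kK}$ whose types match the multiset decorating $x$, one builds $\Phi' \tri_{\Sl} \Gam \inter \inter_{\kK}\Del_k \vdash u\isubs{x/v} : \tau$ with $\sz{\Phi'} = \sz{\Phi} - |K| + \suml{\kK}\sz{\Psi_k}$, because exactly the $|K|$ axiom leaves typing $x$ (one node each) are replaced by the sub-derivations $\Psi_k$. The weighted subject reduction lemma is then proved by induction on $t$, according to the position of the contracted redex. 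In the base case $t = (\lx. u)v \Rew{\beta} u\isubs{x/v}$ one splits on whether the step is erasing: if $x \in \fv{u}$, the outer application must use rule $(\appd)$ (since $\Gam(x) = \msigk \neq \emul$ forces $K \neq \es$), and combining the $(\introarrow)$-premise typing $u$ with the premises $(\Phi_{v,k})_{\kK}$ via the substitution lemma yields $\sz{\Phi} - \sz{\Phi'} = 2 + |K| \geq 3$; if $x \notin \fv{u}$, the outer application must use rule $(\appu)$, whose premise $\Phi_v$ typing the (now erased) argument $v$ is simply discarded, giving $\sz{\Phi} - \sz{\Phi'} = 2 + \sz{\Phi_v} \geq 3$. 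The inductive (contextual) cases $t = \lx. s$ and $t = s_1 s_2$ are routine: apply the induction hypothesis to the sub-derivation(s) typing the reduced subterm --- in the $(\appd)$-case every premise typing the reduced argument strictly shrinks and none grows --- and rebuild the surrounding rule, so $\sz{\cdot}$ strictly decreases overall.

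The step I expect to be the main obstacle is guaranteeing the \emph{strict} decrease in the erasing case: this is precisely where system $\Sl$ differs from the head system $\Hl$, since rule $(\appu)$ forces the erasable argument to be typed, so its sub-derivation contributes at least one node that disappears upon erasure; without this, an erasing step inside an untyped subterm would leave $\sz{\cdot}$ unchanged and the argument would collapse. The other mild subtlety is phrasing the substitution lemma and the contextual cases so that the shrinking of contexts and types propagates cleanly --- this is handled by stating weighted subject reduction in the weak form ``$t'$ is typable by some $\Phi'$ with $\sz{\Phi'} < \sz{\Phi}$'' rather than as a judgment-preserving statement.
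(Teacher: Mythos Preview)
Your proposal is correct and is exactly the standard weighted-subject-reduction argument that the paper is alluding to when it says the bound ``follows from the proof of Lemma~\ref{l:typable-sn-lambda}''. One small sharpening: in system $\Sl$ the \emph{type} is in fact preserved under any $\beta$-step---only the context may shrink in the erasing case---so your weaker formulation (``some $\Phi'$ with smaller size'') is more cautious than necessary, but of course still sufficient.
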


An equivalent formulation of system $\Sl$,
called $\Spl$, is given in
Figure~\ref{fig:intermediate-lambda-strong}. As before, we use 
regular as well as auxiliary judgments.  Notice that \modifref{$K=\es$ in rule
$(\many)$} is still possible, but derivations of the form $\Vdash t:
\emul$, representing untyped terms, will never be used.  The 
choice operation $\choice{\_}$ (defined at the beginning of Section~\ref{s:types-lambda}) in
 rule $(\appet)$ is used to impose arbitrary types for erasable terms,
\ie\ when $t$ has type $\emul \ftype  \tau$, then $u$ needs to be typed
with an arbitrary type $\mult{\sig}$, thus the auxiliary judgment typing $u$ on the right premise of $(\appet)$ can never assign $\emul$ to $u$. This should be understood as a form of \emph{restricted} weakening on the argument $u$, 
, which is only
  performed \emph{before} the application of rule $(\appet)$ and only
  when the functional type is of the form $\emul \rew \UM$. 
  \begin{example}
  \label{ex:derivation}
Here is an example of type derivation in system $\Spl$.
  $$ \infer[(\appet)]{\infer[(\introarrow)]{\infer[(\ax)]{ \phantom{} }
                      {x:\mult{\sig}\vdash x:\sig}}
               {x:\mult{\sig} \vdash \l y. x: \emul \ftype  \sig} \quad 
                \infer[(\many)]{\infer[(\ax)]{\phantom{} }
                             {z: \mult{\tau} \vdash z: \tau }}
                      {z: \mult{\tau} \Vdash z: \mult{\tau} }}
  {x:\mult{\sig}, z: \mult{\tau}  \vdash (\l y. x) z: \sig}$$
\label{pageofexample}
\end{example}

\noindent Since $\Sl$ and $\Spl$ are equivalent, we also have: 
\begin{cor}
\label{l:typable-sn-lambda-bis}
Let $t \!  \in \terms{\lambda}$.  Then
$t$ is $\Spl$-typable iff $t \! \in \SN{\l}$. 
\end{cor}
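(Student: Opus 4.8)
The plan is to obtain the statement as an immediate consequence of Lemma~\ref{l:typable-sn-lambda} once we have established that, for every $t\in\terms{\lambda}$, the term $t$ is $\Sl$-typable if and only if it is $\Spl$-typable. So the real work is to prove this equivalence of systems, which I would do by giving two structure-preserving transformations between $\Sl$-derivations and $\Spl$-derivations, each defined by induction on the derivation. Rules $(\ax)$ and $(\introarrow)$ are literally identical in both systems, so the only interesting cases are the ones dealing with applications: in $\Sl$ these are $(\appu)$ and $(\appd)$ (see Figure~\ref{fig:lambda-strong}), while in $\Spl$ there is the single rule $(\appet)$ together with the auxiliary rule $(\many)$ (see Figure~\ref{fig:intermediate-lambda-strong}).

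\textbf{From $\Sl$ to $\Spl$.} Suppose the last rule is $(\appu)$, with premises $\Gam\vdash t:\emul\ftype\tau$ and $\Del\vdash u:\sig$. Translating both premises by the \ih, I apply rule $(\many)$ with a singleton index set to the (translated) right premise to obtain $\Del\Vdash u:\mult{\sig}$, and since $\choice{\emul}$ may be instantiated to $\mult{\sig}$, rule $(\appet)$ yields $\Gam\inter\Del\vdash tu:\tau$. If the last rule is $(\appd)$, with premises $\Gam\vdash t:\msigk\ftype\tau$ and $(\Delk\vdash u:\sigk)_\kK$ with $K\neq\es$, I translate the premises, collect the right ones with rule $(\many)$ into $\inter_\kK\Delk\Vdash u:\msigk$, and conclude with $(\appet)$ using $\choice{\msigk}=\msigk$.

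\textbf{From $\Spl$ to $\Sl$.} Here I would first record the invariant that in any $\Spl$-derivation an auxiliary judgment occurs only as the right premise of an $(\appet)$ step and is derived by a single instance of $(\many)$, and that its subject type $\choice{\IM}$ is never $\emul$. Given a derivation whose last rule is $(\appet)$, with left premise $\Gam\vdash t:\IM\ftype\tau$ and right premise $\Del\Vdash u:\choice{\IM}$ coming from $(\many)$ on a family $(\Delk\vdash u:\sigk)_\kK$: if $\IM=\emul$ then $\choice{\IM}$ is a singleton, $K$ has one element, and after translating the premises by the \ih\ I apply $(\appu)$; if $\IM\neq\emul$ then $\choice{\IM}=\IM=\msigk$ with $K\neq\es$, and I apply $(\appd)$ to the translated premises. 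This produces an $\Sl$-derivation of the same conclusion.

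The step I expect to require the most care is the bookkeeping around the choice operation $\choice{\_}$ and erasable arguments: in $\Sl$ rule $(\appu)$ types the erasable argument $u$ with an arbitrary type, whereas in $\Spl$ this obligation is relocated into the auxiliary premise $\Del\Vdash u:\choice{\emul}$, so one must match the two choices exactly, and in the reverse direction one must rely on the invariant that no auxiliary judgment carries the empty type in order to make the split into $(\appu)$/$(\appd)$ exhaustive. Modulo this, the two translations are mutually inverse on typability, hence $\Sl$- and $\Spl$-typability coincide, and combining this with Lemma~\ref{l:typable-sn-lambda} gives $t$ is $\Spl$-typable iff $t\in\SN{\l}$, as claimed.
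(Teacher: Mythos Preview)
Your proposal is correct and follows exactly the route the paper takes: the paper simply states ``Since $\Sl$ and $\Spl$ are equivalent'' and then records the corollary, without spelling out the equivalence, while you provide the (straightforward) derivation-to-derivation translations that justify it. Your handling of the only non-trivial point---matching $(\appu)/(\appd)$ against $(\appet)$ via $(\many)$ and the choice operator $\choice{\_}$---is accurate and in line with the paper's remark that auxiliary judgments of shape $\Vdash t:\emul$ are never used in $\Spl$.
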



\section{Quantitative Type Systems for the $\lmu$-Calculus}
\label{s:types-lambda-mu}

We present in this section two quantitative systems for the
$\lmu$-calculus, systems $\Hlmu$ (Section~\ref{s:head-lambdamu}) and
$\Slmu$ (Section~\ref{s:strong-lambdamu}), characterizing, respectively,
head and strong $\lmu$-normalizing objects. Since $\l$-calculus is
embedded in the $\lmu$-calculus, then the starting points to design
$\Hlmu$ and $\Slmu$ are, respectively, systems $\Hpl$ and $\Spl$, introduced in Section~\ref{s:types-lambda}.

\subsection{Types}
\label{s:types-lmu}

We consider a countable set of \deft{base types} $\bta, \btb, \btc \ldots$
and the following categories of types: 
$$\begin{array}{llll}
(\mbox{{\bf Object  Types}})          & \Any & : = & \ComTyp \mid \UM \\
(\mbox{{\bf Command Type}})       & \ComTyp & := & \TypCom \\
(\mbox{{\bf Union  Types}})\        & \UM, \VM & ::= &  \umult{\sigk}_{\kK}\\ 
(\mbox{{\bf Intersection Types}})\ & \IM, \JM       & :: = & \mult{\UMk}_{\kK}  \\
    (\mbox{{\bf Types}})\           & \sig, \tau  & ::= &  
                                                         \bta \mid \IM \ftype  \UM  \\
(\mbox{{\bf Blind Types}})\           & \Blind & ::=  & \bta  \mid \mult{ \, } \ftype   \umult{\Blind}   \\
\end{array}$$
The constant $\TypCom$ is used to type commands, \ignore{Unknown types
  $\Empty$ are a sort of minimal types, having only functional
  information (the arrows).} union types to type terms, and
intersection types to type variables (thus left-hand sides of
arrows). \modifrefb{We assume $K$ to be any finite set.} Both $\mult{\sigk}_{k \in
  \cset{1..n}}$ and $\umult{\sigk}_{k \in \cset{1 .. n}}$ can be seen
as {\it multisets}, representing, respectively, $\sig_1 \cap \ldots
\cap \sig_n$ and $\sig_1 \cup \ldots \cup \sig_n$, where $\cap$ and
$\cup$ are both associative, commutative, but {\it non-idempotent}. We
may omit the indices in the simplest case: thus $\mult{\UM}$ and
$\umult{\sig}$ denote singleton multisets.  We define the operator
$\inter$ (resp. $\union$) on intersection (resp. union) multiset types
by : $ \mult{\UMk}_{\kK} \inter \mult{\VMl}_{\lL}:= \mult{\UMk}_{\kK}
+ \mult{\VMl}_{\lL}$ and $ \umult{\sigk}_{\kK} \union
\umult{\taul}_{\lL}:= \umult{\sigk}_{\kK} + \umult{\taul}_{\lL}$,
where $+$ always means multiset union.  The \deft{non-deterministic}
\deft{choice} \modifrefb{operation (resulting from a non-deterministic choice relation, as in Section~\ref{text:non-deter-rel-for-inter})} is now not only defined on intersection but also on   union types:
\begin{center} $\begin{array}{lll}
           \choice{\mult{\UMk}_{\kK}} & := & \left
           \{ \begin{array}{ll} \mult{\UM} & \mbox{ if } K = \es
             \mbox{ and $\UM\neq \eumul$ is any arbitrary non-empty
               union type} \\ \mult{\UMk}_{\kK} & \mbox{ if } K \neq
             \es
                                            \end{array} \right. \\
  \choice{\umult{\sigk}_{\kK}} & := & \left \{ \begin{array}{ll}
                                            \umult{\xi} & \mbox{ if } K = \es \mbox{ and $\xi$ is any arbitrary  blind type} \\
                                             \umult{\sigk}_{\kK} & \mbox{  if } K \neq \es 
                                            \end{array} \right.
\end{array}  $
       \end{center}       
  The choice operation for union type is defined so that (1) the empty
union cannot be assigned to $\mu$-abstractions (2) subject reduction
is guaranteed in system $\Hlmu$ for erasing steps $(\mu \al.c)u\Rew{\mu}
\mu \al.c$, where $\al \notin \fn{c}$.  We present concrete examples
in Section~\ref{sec:discussion} which illustrates the need of
non-empty union types and blind types to guarantee subject reduction.

The \deft{arity} of types and union multiset types is defined by
induction: for types $\sig$, if $\sig = \IM \ftype  \UM$, then
$\ar{\sig}:=\ar{\UM}+1$, otherwise, $\ar{\sig} := 0$; for  union
multiset types, $\ar{\umult{\sigk}_{\kK}}:= \Sigma_{\kK}\ \ar{\sigk}$.
Thus, the arity of a type  counts the number of its \textit{top-level} arrows. 
The \deft{cardinality of multisets} is defined by 
$| \mult{\UMk}_{\kK}| = | \umult{\sigk}_{\kK}|  := |K|$.

\deft{Variable assignments} (written $\Gam$), are, as before, total
functions from variables to intersection multiset types. Similarly,
\deft{name assignments} (written $\Del$), are total functions from
names to union multiset types.  \modifrefb{The \deft{domain of $\Del$} is given
by $\dom{\Del} := \cset{ \al \mid \Del(\al) \neq \eumul }$, where
$\eumul$ is the empty union multiset, so that when
$\al\notin\dom{\Del}$, $\Del(\al)$ stands for $\eumul$.}  We may write
$\emptyset$ to denote the name assignment that associates the empty
union type $\eumul$ to every name.
 We write $\Del \vee \Del'$ for $\alpha \mapsto \modifrefb{\Del(\alpha)\union \Del'(\alpha)}$, so that $\dom{\Del \union  \Del'}=\dom{\Del}\cup\dom{\Del'}$. \ignore{ We use the \deft{choice} operator $\choice{\Delta}{\al}$ 
which   denotes $\Del(\al)$ when $\al \in \dom{\Del}$,  and \emph{any} singleton
  union type of the form $\umult{\e}$, with $\e \in \Empty$, when  $\al
  \notin \dom{\Del}$. } 

 When $\dom{\Del}\cap \dom{\Del'}=\es$, we may write $\Del;\Del'$ for $\Del\union \Del'$. We write $\al:\UM$ (even when $\al=\eumul$) for the name assignment mapping $\al$ to $\UM$ and all $\beta\neq \al$ to $\eumul$.  
 We write $\cmin{\Del}{\al}$ for
the name assignment defined by $(\cmin{\Del}{\al})(\al) = \eumul$ and
$(\cmin{\Del}{\beta})(\al) = \Del(\beta)$ if $ \beta \neq \al$.

We now present our typing systems $\Hlmu$ and $\Slmu$,
both having   \deft{regular} (resp. \deft{auxiliary})
judgments  of the form $\Gam \vdash t:\UM \mid \Del$ 
(resp. $\Gam \Vdash t:\IM \mid \Del$), together with their
respective notions of regular and auxiliary derivations.
An important syntactical property they enjoy is that
both are \deft{syntax directed}, \ie\
for each (regular/auxiliary) typing judgment $j$ there is a {\it unique} 
typing rule whose conclusion matches the judgment $j$. This 
makes our proofs much simpler than those
arising with idempotent types, which are  
based on long generation lemmas (\eg\ ~\cite{Bernadet-Lengrand2013,vB11}).

\subsection{System $\Hlmu$ }
\label{s:head-lambdamu}

In this section we present a quantitative typing system for $\lmu$,
called $\Hlmu$, characterizing head $\lmu$-normalization. It can be
seen as a first intuitive step to understand the typing system
$\Slmu$, introduced later in Section~\ref{s:strong-lambdamu}, and
characterizing strong $\lmu$-normalization.  However, to avoid
  redundancy,  the properties of the two systems
are not described in the same way:
\begin{itemize}
\item For $\Hlmu$, we provide  informal discussions
  to explain  the main
  requirements needed to capture quantitative information in the presence of
  classical feature (names, $\mu$-redexes). We particularly 
  focus on the necessity of banning empty union types. We do not give the proofs of the
  properties of $\Hlmu$, because they are simpler than those of system
  $\Slmu$.
  \item For $\Slmu$, we provide a more compact presentation,
    since the main technical  key choices used for $\Hlmu$ are 
    still
    valid. However, full statements and proofs of the properties of
    $\Slmu$ are detailed.
\end{itemize}

The (syntax directed)  rules of the typing system $\Hlmu$ are presented in
Figure~\ref{fig:head-lambdamu}. 
\begin{figure}[h]
\begin{framed}
\begin{center}
{\footnotesize 
\[ \begin{array}{c}
\infer[(\ax)]{ \UM \neq \eumul  } 
      {x: \mult{\UM} \vdash  x: \UM \mid \emptyset }  \sep 
\infer[(\introarrow) ]{ \Gam \vdash t : \UM \mid \Del}  
      { \cmin{\Gam}{x} \vdash  \l x.t: \umult{\Gamma(x)\ftype  \UM} \mid \Del}\quad 
\infer[ (\muu)]{\Gam \vdash t: \UM \mid \Del 
      }
      {\Gam \vdash \co{\al} t: \TypCom \mid \Del \vee \{\al: \UM \}} \\ \\ 
\infer[ (\mud)]{\Gam \vdash \Com: \TypCom \mid \Del }
      {\Gam \vdash \mu \al. \Com: \choice{\Del(\al)} \mid \cmin{\Del}{\al} } \quad 
\infer[ (\many)]{(\Gamk \vdash t: \UMk \mid \Delk)_{\kK}}
          {\inter_{\kK}\Gamk \Vdash t: \mult{\UMk}_{\kK} \mid \union_{\kK} \Delk} \\ \\
\infer[(\app)]{ \Gamt \vdash t: \umult{\IMk \ftype  \UMk}_{\kK}  \mid \Delt \quad  
       \Gamu  \Vdash u :  \inter_{\kK}  \IMk  \mid \Delu } 
       {\Gamt \inter \Gamu \vdash tu: \vee_{\kK} \UMk \mid \Delt 
       \union \Delu}  \\ \\
  \end{array} \]    
}
\end{center}
\caption{System $\Hlmu$}
\label{fig:head-lambdamu}
\end{framed}
\end{figure}
Rule
   $(\app)$ is to be understood as a {\it logical admissible} rule: if
   union (resp. intersection) is interpreted as the $\orc$
   (resp. $\andc$) logical connective, then $ \orc_{\kK}\ (\IMk
   \ftype \UMk)$ and $(\andc_{\kK}\ \IMk)$ implies
   $(\orc_{\kK}\ \UMk)$.   As in the simply typed
   $\lmu$-calculus~\cite{Parigot92}, the $(\muu)$ rule saves a type $\UM$
   for the name $\al$, however, in our system, the corresponding name
   assignment $\Del \vee \cset{\al: \UM}$, specified  by means of 
   $\vee$, collects {\it all} the types that $\al$ has been
   assigned during the derivation. Notice that the $(\mud)$-rule is not
   deterministic since $\choice{\Del(\al)}$ denotes an arbitrary
   union type  when $\Del(\al)$ is $\eumul$, a
   technical requirement which   is discussed at the end
of the section. 

In the simply typed $\lmu$-calculus, $\mbox{\bf call-cc}= \l y.\mu
\al.\co{\al} y (\l x.\mu \beta. \co{\al} x)$ would be typed with
$((a\ftype b)\ftype a) \ftype a$ (Peirce's Law), so that the fact that
$\alpha$ is {\it used} twice in the type derivation would not be
explicitly materialized with simple types (same comment applies to
{\it idempotent} intersection/union types).  This makes a strong
contrast with the
derivation in Figure~\ref{fig:typing-callcc}, where $\Ua := \umult{\bta}$,
$\Ub := \umult{\btb}$,  $\Uy :=\umult{\mult{\umult{\mult{\Ua}\ftype  \Ub}} \ftype  \Ua}$, and
$\mbox{\bf call-cc}$ is typed  with the union type
$\umult{\mult{  \umult{\mult{\umult{\mult{\Ua}\ftype  \Ub}} \ftype  \Ua} } \ftype  (\Ua \union  \Ua)}$.
  
\begin{figure}[ht]
\begin{framed}
  \begin{center}
   $
\infer[(\introarrow)]{\infer[(\mud)]{\infer[(\muu)]{\infer[(\app)]{ \infer[(\ax)]{}{\muju{y:\mult{\Uy}}{y:\Uy}{\es}}  \sep
       \infer[(\many)]{\infer[(\introarrow)]{\infer[(\mud)]{\infer[(\muu)]{\infer[(\ax)]{} {\muju{x:\mult{\Ua}}{x:\Ua}{\es}}}
                                                { \muju{x:\mult{\Ua}}{\co{\al} x:\TypCom}{\al:\Ua} }}
                                         { \muju{x:\mult{\Ua}}{\mu \beta. \co{\al} x:\Ub}{\al:\Ua} }}
                                  { \muju{\phantom{.}}{\l x.\mu \beta. \co{\al} x:\umult{\mult{\Ua}\ftype  \Ub}}{\al:\Ua} }}
      {\muJu{}{\l x.\mu \beta. \co{\al} x:\mult{\umult{\mult{\Ua}\ftype \Ub}}}{\al:\Ua} }}
      {\muju{y:\mult{\Uy }}{y(\l x.\mu \beta. \co{\al} x):\Ua}{\al:\Ua} }}
      {\muju{y:\mult{\Uy }}{\co{\al} y(\l x.\mu \beta. \co{\al} x):\TypCom}{\al: \Ua \union \Ua} }}
      {\muju{y:\mult{\Uy }}{\mu \al. \co{\al} y(\l x.\mu \beta. \co{\al} x): \Ua \union  \Ua}{\es} }}
      {\muju{}{\l y. \mu \al. \co{\al} y(\l x.\mu \beta. \co{\al} x):
\umult{\mult{  \umult{\mult{\umult{\mult{\Ua}\ftype  \Ub}} \ftype  \Ua} } \ftype (\Ua \union \Ua)}}{\es} }  $ 
  \end{center}
\caption{Typing {\bf call-cc}}
\label{fig:typing-callcc}
\end{framed}
\end{figure}


This example suggests to distinguish two different uses of names:
\begin{itemize}
\item The name $\al$ is saved twice
  by a $(\muu)$ rule : once for $x$ and once for $ y (\l x.\mu
    \beta. \co{\al} x)$, both times with type $\Ua$. After that, the
  abstraction $\mu \al.\co{\al} y (\l x.\mu \beta. \co{\al} x)$ {\bf
    restores}  the union of the two types that 
 were previously stored by  $\al$ (by means of the two $(\muu)$-rules). A similar phenomenon occurs with rule $(\introarrow)$, which restores the types of the abstracted variables.
\item The name $\beta$ is not free in $\co{\al}x$, so that 
a  new union type $\Ub$ is introduced to type 
the abstraction
  $\mu \beta.\co{\al}x$. 
  From a logical point of view, this \modifref{corresponds to a 
  {\it   weakening} on the right hand-side of the sequent, which is necessary, for instance, to derive the non-idempotent counterpart of Peirce's Law (\cf\ Fig.~\ref{fig:typing-callcc}).}  Consequently,   $\l$ and $\mu$-abstractions are not treated symmetrically: when $x$ is not free in $t$,
  then $\l x. t$ will be typed with $\umult{\emul \ftype \sig}$ (where $\sig$ is
  the type of $t$), and  no new arbitrary intersection type
  is introduced for the abstracted variable $x$.  
      \end{itemize}

An interesting observation 
is about the  restriction of system $\Hlmu$ to the pure $\l$-calculus: union
types, name assignments and rules $(\mud)$ and $(\muu)$ are ruled out, 
  so that every union multiset takes the single form $\umult{\tau}$,
  which can be simply identified with $\tau$. Thus,  the
  restricted typing system $\Hlmu$ becomes system $\Hpl$ in
  Figure~\ref{fig:intermediate-lambda-head}.  

  \modifrefb{ Another observation is about the property of
    \textit{relevance} of assignments. Although there is a
    \emph{restricted} form of weakening in system
    $\Hlmu$\ (allowing for example to derive the non-idempotent
      counterpart of Peirce's Law, see the top $(\mud)$-rule in
      Fig.~\ref{fig:typing-callcc}), if $\muju{\Gam}{o:\Any}{\Del}$
    is derivable, then any $x\in \dom{\Gam}$ (resp. $\al\in
    \dom{\Del}$) has at least one free occurrence in $o$.}  Formally,
\begin{lem}[{\bf Relevance for System $\Hlmu$}]
\label{l:relevane-lambda-mu-weak}
Let $o \in \objects{\lmu}$. If  $\Phi \tri \Gam \vdash o: \Any \mid \Del$,
then $\dom{\Gam}\subseteq \fv{o}$ and $\dom{\Del} \subseteq \fn{o}$.
\end{lem}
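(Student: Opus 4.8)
The plan is to proceed by structural induction on the type derivation $\Phi \tri \Gam \vdash o : \Any \mid \Del$, following the syntax-directed structure of system $\Hlmu$. Since the system is syntax directed, there is exactly one rule to consider for each shape of object $o$, so the case analysis splits cleanly along the grammar of objects, terms and commands. For each rule I would show that the premises' derivations satisfy the statement (induction hypothesis), and then verify that the contexts $\Gam$ and $\Del$ in the conclusion have domains contained in $\fv{o}$ and $\fn{o}$ respectively, using the definitions of free variables and free names recalled in Section~\ref{s:syntax} and the definitions of the operators $\inter$, $\union$, $\cmin{\cdot}{x}$, $\cmin{\cdot}{\al}$.

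First I would do the leaf case $(\ax)$: here $o = x$, $\Gam = x:\mult{\UM}$ and $\Del = \emptyset$, so $\dom{\Gam} = \{x\} = \fv{x}$ and $\dom{\Del} = \es \subseteq \fn{x}$. The inductive cases: for $(\introarrow)$, $o = \l x. t$ and the conclusion assignment is $\cmin{\Gam}{x}$ with name assignment $\Del$; by IH $\dom{\Gam}\subseteq \fv{t}$ and $\dom{\Del}\subseteq\fn{t}$, and since $\dom{\cmin{\Gam}{x}} = \dom{\Gam}\setminus\{x\}\subseteq \fv{t}\setminus\{x\} = \fv{\l x.t}$ and $\fn{\l x.t}=\fn{t}$, we are done. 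For $(\muu)$, $o = \co{\al} t$, the variable assignment is unchanged and the name assignment becomes $\Del \vee \{\al:\UM\}$; by IH $\dom{\Gam}\subseteq\fv{t}=\fv{\co{\al}t}$ and $\dom{\Del}\subseteq\fn{t}$, and $\dom{\Del\vee\{\al:\UM\}} = \dom{\Del}\cup\{\al\}\subseteq\fn{t}\cup\{\al\}=\fn{\co{\al}t}$. For $(\mud)$, $o = \mu\al.\Com$, and the conclusion has name assignment $\cmin{\Del}{\al}$; by IH $\dom{\Gam}\subseteq\fv{\Com}=\fv{\mu\al.\Com}$ and $\dom{\Del}\subseteq\fn{\Com}$, hence $\dom{\cmin{\Del}{\al}} = \dom{\Del}\setminus\{\al\}\subseteq\fn{\Com}\setminus\{\al\}=\fn{\mu\al.\Com}$. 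The delicate point to note here is that the non-determinism of $\choice{\Del(\al)}$ affects only the \emph{type} of $\mu\al.\Com$, not its variable/name assignments, so relevance is unaffected by that choice.

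For the $(\app)$ case, $o = tu$, with $\Gam = \Gamt\inter\Gamu$ and $\Del = \Delt\union\Delu$; the IH on the two premises (the left one a regular derivation for $t$, the right one an auxiliary derivation for $u$) gives $\dom{\Gamt}\subseteq\fv{t}$, $\dom{\Gamu}\subseteq\fv{u}$, $\dom{\Delt}\subseteq\fn{t}$, $\dom{\Delu}\subseteq\fn{u}$; using $\dom{\Gamt\inter\Gamu}=\dom{\Gamt}\cup\dom{\Gamu}$ and $\dom{\Delt\union\Delu}=\dom{\Delt}\cup\dom{\Delu}$ together with $\fv{tu}=\fv{t}\cup\fv{u}$ and $\fn{tu}=\fn{t}\cup\fn{u}$ the result follows. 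Finally the $(\many)$ rule concludes an auxiliary judgment $\inter_{\kK}\Gamk \Vdash t:\mult{\UMk}_{\kK}\mid\union_{\kK}\Delk$ from premises $(\Gamk\vdash t:\UMk\mid\Delk)_{\kK}$; applying the IH to each premise and taking unions of domains gives $\dom{\inter_{\kK}\Gamk}=\bigcup_{\kK}\dom{\Gamk}\subseteq\fv{t}$ and similarly for names. Here I should be mildly careful with the edge case $K=\es$, where the intersection and the union are empty, so both domains are empty and the inclusions hold trivially — this is precisely where the \emph{restricted} weakening lives, and it is exactly what keeps relevance intact (no free occurrence of any variable/name is demanded in that degenerate case, but none is asserted either). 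I expect no real obstacle in this proof: it is a routine induction, and the only thing to keep an eye on is making the mutual induction between regular and auxiliary judgments explicit, and handling the empty-index cases in $(\app)$ (via $\Gamu\Vdash u:\inter_{\kK}\IMk$ with $K=\es$) and $(\many)$ without slipping.
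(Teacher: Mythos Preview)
Your proposal is correct and follows exactly the paper's approach, which is simply ``By induction on $\Phi$''; you have just spelled out the case analysis that the paper leaves implicit. The one cosmetic imprecision is in the $(\muu)$ case where you write $\dom{\Del\vee\{\al:\UM\}} = \dom{\Del}\cup\{\al\}$ as an equality---strictly this needs $\UM\neq\eumul$ (which does hold in $\Hlmu$), but since only the inclusion $\subseteq$ is required the argument goes through regardless.
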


\begin{proof} By induction on $\Phi$. \end{proof}

\modifref{
We define now our notion of \deft{derivation size}}, which is a natural number  representing the amount  of  information in  tree derivations.  For any type  derivation $\Phi$,  $\sz{\Phi}$ is  inductively defined by the following rules, where we use an abbreviated, self-understanding notation for the premises. \label{def:size-Hlmu}
{
$$
\begin{array}{lll}
   \sz{\infer[ (\ax)]{ }{x: \mult{\UM} \vdash  x: \UM \mid \emptyset} }  & := & 1 \\ \\
  \sz{\infer[ (\introarrow)]{ \Phi_t \rhd t}  
      { \cmin{\Gam}{x} \vdash  \l x.t: \umult{\Gamma(x)\ftype  \UM} \mid \Del}}  & := & \sz{\Phi_t} +1 \\ \\
  \sz{\infer[ (\muu)]{\Phi_t\rhd t }
    {\Gam \vdash \co{\al} t: \TypCom \mid \Del \vee \{\al: \UM \}}}  & := & \sz{\Phi_t} + \ar{\UM } \\ \\
  \sz{\infer[ (\mud)]{\Phi_{\Com}\rhd \Com }
      {\Gam \vdash \mu \al. \Com: \choice{\Del(\al)} \mid \cmin{\Del}{\al} } }  & := & \sz{\Phi_{\Com}} +1 \\ \\
  \sz{\infer[ (\many)]{(\Phik \rhd t)_{\kK}  }
          {\inter_{\kK}\Gamk\Vdash t: \mult{\UMk}_{\kK} \mid \union_{\kK} \Delk} }  & := & \Sigma_{\kK}\ \sz{\Phik} \\\\
  \sz{\infer[]{\Phi_t \rhd   t:\umult{\IMk\rew \UMk}_{\kK} \quad  \Phi_u\rhd u} 
      {\Gam  \vdash 
       tu: \vee_{\kK} \VMk \mid \Del }\ (\app) }  & := &  \sz{\Phi_t} + \sz{\Phi_u} + |K|   
\end{array}
$$}

System  $\Hlmu$ behaves as expected, in particular,
typing is stable by reduction (Subject Reduction)
and anti-reduction (Subject Expansion):

\begin{pty}[{\bf Weighted Subject Reduction for $\Hlmu$}]
\label{l:sr-Hlmu}
Let $\Phi \tri \tyj{o}{\Gam}{\Any\mid\Del}$. If $o \Rew{} o'$, then
there exists a derivation  $\Phi' \tri \tyj{o'}{\Gam}{\Any\mid\Del}$
such that $\sz{\Phi'} \leq \sz{\Phi}$. 
Moreover, if the reduced redex is typed, then  $\sz{\Phi'}<\sz{\Phi}$. 
\end{pty}

An important remark is that, if the arity of the types were not taken
  into account in the size of the rules $(\muu)$, then we would only have
  $\sz{\Phi'}=\sz{\Phi}$ \modifref{(and not $\sz{\Phi'} <  \sz{\Phi}$)} for
  the $\mu$-reduction steps. Intuitively, the $\mu$-reduction $(\mu
  \al.\Com)u\Rew{\mu} \mu \al.\Com\ire{\al}{u}$ dispatches the
  $(\app)$-rule typing the root of the $\mu$-redex $(\mu
  \al.\Com)u$ into several created
  $(\app)$-rules in the reduct, but \modifref{ neither an
  increase nor a decrease of the measure is ensured}. The solution to recover this
  key feature (\ie\ the decrease) is 
    suggested by the effect of $\mu$-reduction on the $(\muu)$-rules
  associated to $\al$ (see Figure~\ref{fig:effect-mu-red-naming}): indeed,
  $\mu$-reduction replaces every named term $\co{\al}v$ by
  $\co{\al}vu$, where $u$ is the argument of the $\mu$-redex, so
  that the saved types are smaller under the created $(\app)$-rules than
  the ones in the original derivation.  
 \begin{figure}
 $
 \infer[(\muu)]{ \muju{\ldots}{v:\umult{\IMk\rew \UMk}_\kK}{ \ldots} }
               { \muju{\ldots}{\co{\al}v:\TypCom}{\ldots \union \cset{\al:\umult{\IMk\rew \UMk}_{\kK}}}} $\\[0.4cm]
 \hspace*{4cm}
 {\large becomes}\\[0.2cm]
 
$ \hspace*{\fill}
 \infer[(\muu)]{\infer[(\app)]{ \muju{\ldots}{v:\umult{\IMk\rew \UMk}_\kK}{\ldots} \qquad
                                \muJu{\ldots}{u:\inter_{\kK} \IMk}{ \ldots}}
                              { \muju{\ldots}{v\,u:\union_\kK \UMk}{\ldots}}}
                              {\muju{\ldots}{\co{\al}v\,u: \TypCom}{\ldots \union\cset{\al:\union_\kK \UMk}}  }
 $
 \caption{Effect of $\mu$-reduction on rule $(\muu)$}
 \label{fig:effect-mu-red-naming}
 \end{figure}

 \modifref{As expected from an intersection (and union)
type system, head normal forms are typable in system $\Hlmu$.} 
 Moreover,  subject expansion holds
for $\Hlmu$, meaning that typing is stable under
\textit{anti}-reduction. Note that we do not state a \textit{weighted}
subject expansion property (although this would be possible) only
because this is not necessary to prove the
  final characterization property of  system $\Hlmu$
  (\cf\ Theorem~\ref{t:hn-lmu}).

\begin{pty}[{\bf Subject Expansion for $\Hlmu$}]
\label{l:se}
Let $\Phi' \rhd \muju{\Gam'}{o':\A}{\Del'}$. If $o\Rew{} o'$, then there is $\Phi \rhd
\muju{\Gam'}{o:\A}{\Del'}$.
\end{pty}

Note in particular that the head strategy only reduces typed redexes 
(the head redex of a \textit{head reducible} typed term is necessarily  typed),
so that finally,  we can  state our type-theoretic characterization of head normalization 
for the $\lmu$-calculus:

\begin{thm}
  \label{t:hn-lmu}
  Let $o \in \objects{\lmu}$. Then $o$ is $\Hlmu$-typable iff $o \in
  \HN{\lmu}$ iff the head-strategy terminates on $o$.  Moreover, if
  $o$ is $\Hlmu$-typable with tree derivation $\Pi$, then $\sz{\Pi}$
  gives an upper bound to the length of the head-reduction strategy
  starting at $o$.
\end{thm}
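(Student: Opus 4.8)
The plan is to close the cycle of implications ``$o$ is $\Hlmu$-typable $\Rightarrow$ the head strategy terminates on $o$ $\Rightarrow$ $o\in\HN{\lmu}$ $\Rightarrow$ $o$ is $\Hlmu$-typable'', reading off the quantitative bound from the first implication. The middle step is immediate since, by construction, the head strategy only halts on head-normal forms, so if it terminates on $o$ then $o\Rewn{\lmu}o'$ with $o'$ a head-normal form. The other two directions rely on two auxiliary facts about $\Hlmu$, described below, together with the already-established Properties~\ref{l:sr-Hlmu} and~\ref{l:se}.

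First I would prove that \emph{every head-normal form is $\Hlmu$-typable}, by induction on the grammar of head contexts $\HNO$. The cases $\l x.\HNT$, $\mu\al.\HNCom$ and $\co{\al}\HNT$ follow from the induction hypothesis by applying $(\introarrow)$, $(\mud)$ and $(\muu)$ respectively (for $(\mud)$ using that $\choice{\Del(\al)}$ is always a non-empty union type). The base case $x\,t_1\cdots t_n$ is where blind types and the $K=\es$ instance of $(\many)$ come into play: setting $\UM_0:=\umult{\bta}$ and $\UM_{j+1}:=\umult{\emul\ftype\UM_j}$ (each $\UM_j$ being a singleton union of a blind type), one derives $x:\mult{\UM_n}\vdash x:\UM_n$ by $(\ax)$ and then applies $(\app)$ $n$ times, each time using $\es\Vdash t_i:\emul$ on the right premise, to conclude $x:\mult{\UM_n}\vdash x\,t_1\cdots t_n:\umult{\bta}$ without ever typing an argument $t_i$. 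Granting this, if $o\in\HN{\lmu}$, say $o\Rewn{\lmu}o'$ with $o'$ a head-normal form, then from a derivation of $o'$ one obtains, by iterating Subject Expansion (Property~\ref{l:se}) along the chain, a derivation of $o$.

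For the implication ``$o$ typable $\Rightarrow$ head strategy terminates in at most $\sz{\Phi}$ steps'', the key observation is that \emph{if $\Phi\tri\Gam\vdash o:\Any\mid\Del$ and $o$ is not a head-normal form, then the head redex of $o$ is a typed subterm of $\Phi$}. This is proved by induction on the head context $\HNO$ with $o=\HNO[r]$, $r$ the head redex, exploiting that $\Hlmu$ is syntax-directed: at each constructor on the spine the derivation $\Phi$ must end with the matching rule ($(\app)$, $(\introarrow)$, $(\mud)$, $(\muu)$), and in every one of these rules the premise carrying the spine (the function premise of $(\app)$, the body premise of the others) is a genuine regular judgment; arguments of applications never lie on the spine and are thus irrelevant here. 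Hence the head reduction step $o\Rew{}o'$ contracts a \emph{typed} redex, and Weighted Subject Reduction (Property~\ref{l:sr-Hlmu}) yields $\Phi'\tri\Gam\vdash o':\Any\mid\Del$ with $\sz{\Phi'}<\sz{\Phi}$. Iterating along the head-reduction sequence from $o$ and using that $\sz{\cdot}$ takes values in $\mathbb{N}$, the sequence must reach a head-normal form within $\sz{\Phi}$ steps; this simultaneously proves that the head strategy terminates and that $\sz{\Pi}$ bounds its length. Assembling the three implications gives all the stated equivalences.

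I expect the main obstacle to be the auxiliary lemma on typability of head-normal forms, more precisely choosing the blind type of the head variable so that $x\,t_1\cdots t_n$ can be typed for \emph{arbitrary} $n$ with no constraint on the $t_i$ — this is exactly the role of the \emph{Blind Types} grammar and of the $K=\es$ clause of $(\many)$; everything else is routine bookkeeping on top of Properties~\ref{l:sr-Hlmu} and~\ref{l:se} and of syntax-directedness.
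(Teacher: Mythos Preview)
Your proposal is correct and matches the paper's approach: the paper does not give a detailed proof but explicitly relies on Properties~\ref{l:sr-Hlmu} and~\ref{l:se}, on the remark that ``head normal forms are typable in system $\Hlmu$'', and on the observation that ``the head strategy only reduces typed redexes'', which are exactly the three ingredients you spell out. Your construction of the blind type $\UM_n$ for the head variable in $x\,t_1\cdots t_n$ is the intended use of the empty instance of $(\many)$ and the Blind Types grammar.
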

We do not provide the proofs of these properties and the last theorem, because it uses special
cases of the more general technology that we are going to develop
later to deal with strong normalization.  Notice however that
Theorem~\ref{t:hn-lmu} ensures that the head-strategy is complete for 
head-normalization in $\lmu$.

 \subsection{Discussion}
\label{sec:discussion}
 Now that we have stated the main result of this section, 
based on the key (weighted) subject reduction property, 
let us come back to the design choices of our development, in
particular: 

\begin{itemize}
\item The choice $\choice{\_}$ operator in rule  $(\mud)$ requires the types to be blind, and
\item The  union types are non-empty. 
\end{itemize}

\modifref{Blind types raised in rule $(\mud)$ guarantee that terms do
  not have empty union types, the necessity of which is a delicate
  constraint discussed later. In the
  meantime, we start by justifying the particular form they adopt
  (arrows with empty domains), which is easier to justify. Thus,
  consider the following example:} let $t_1=\mu \beta.\co{\gamma} x$
with $\beta \neq \gamma$ and $x\neq y$, so that $t_1\,y \Rew{\mu}
t_1$. A typing derivation of $t_1$ necessarily concludes with
$\muju{x:\mult{\VM_x}}{t_1:\choice{\eumul}}{\gamma:\VM_x}$, for some
union type $\VM_x$. Let us assume, only temporarily, that a non-blind
type can be chosen by the non-deterministic operator in rule $(\mud)$
\eg\ $\choice{\eumul}=\umult{\mult{\UM}\ftype \UM}$ for some union
type $\UM$. If we then assign $\UM$ to $y$, the judgment
$\muju{y:\mult{\UM}; x:\mult{\VM_x}}{t_1\;y:\UM}{\gamma:\VM_x}$ is
derivable by rule $(\app)$. However, by relevance
(Lemma\;\ref{l:relevane-lambda-mu-weak}), the judgment
$\muju{y:\mult{\UM}; x:\mult{\VM_x}}{t_1:\UM}{\gamma:\VM_x}$ cannot be
derivable since $y\notin \fv{t_1}$. Thus, subject reduction simply
fails.  Note that if $\choice{\eumul}$ chooses a blind type, for
example $\umult{\emul \ftype \umult{\bta}}$, then $y$ is untyped in
the derivation of $t_1 y$,
\ie\ $\muju{x:\mult{\VM_x}}{t_1\;y:\UM}{\gamma:\VM_x}$, so that
subject reduction holds.  \\

  We now explain why union types are non-empty.
  In particular, 
there are two different typing rules  requiring union
  types to be non-empty: rule $(\ax)$ and rule $(\mud)$.  To illustrate the necessity of non-empty union
types\label{disc:u-types-ne} for $\mu$-abstractions, \ie\ rule $(\mud)$, let us
assume, again temporarily, that an empty union type is introduced by
the rule $(\mud)$ when $\alpha \notin \dom{\Del}$, and call such an
instance $(\mudempty)$ when this happens.  \modifref{Similarly, let us also call the instance of an application rule to be $(\appempty)$ when the union type of its
left-hand side is empty.}
\begin{center}
$\infer[ (\mudempty)]{\Gam \vdash \Com: \TypCom \mid \Del\ \sep \alpha \notin \dom{\Del} }
      {\Gam \vdash \mu \al. \Com: \eumul \mid \Del } \quad \quad
\infer[ (\appempty)]{ \Gam \vdash t: \eumul  \mid \Del \quad  
       \es   \Vdash u : \emul    \mid \es  } 
       {\Gam  \vdash t u: \eumul \mid \Del}$
\end{center}

 For instance, given $t_1:=\mu\beta.\co{\gamma} x$, where $\beta \neq \gamma$,  every derivation typing $t_1$
concludes with rule $(\mudempty)$ typing  a judgment of the form
$\muju{x:\mult{\VM_x}}{t_1:\eumul}{\gamma:\VM_x }$.

Assume now that a derivation $\Pi_\Com$ typing a command $\Com$
contains $k$ empty rules $(\mudempty)$ w.r.t. the name $\al$, and that
$\Pi_\Com$ is a subderivation of some other derivation typing a
$\mu$-redex $(\mu \al.\Com)u$ for some term $u$. Let us give an
example with $k=2$ by setting $\Com := \co{\al} t_2$, where  
$t_2:=\mu \beta'. \co{\al}t_1$ and $t_1:= \mu\beta.\co{\gamma} x$ as before. 
Then, the (empty union) types of the terms $t_1$
and $t_2$ are saved by $\al$, simply because $\beta$ (resp. $\beta'$)
does not occur free in the body of $t_1$ (resp. $t_2$). Thus, while
typing command $\Com$, the name $\al$ is necessarily typed with
$\eumul \union \eumul= \eumul$.

Formally, let $\VM_x$ be any (non-empty) union type. Then, 
\begin{center}
$\infer[(\muu)]
       {\infer[(\mudempty \mbox{ for } \beta')]
              {\infer[(\muu)]
                     {\infer[(\mudempty \mbox{ for } \beta)]
                            {\infer[(\muu)]{\vdots}
                                   {x:\mult{\VM_x} \vdash \co{\gamma} x: \TypCom  \mid \gamma: \VM_x}}
                            {x:\mult{\VM_x} \vdash t_1:  \eumul \mid \gamma: \VM_x}}
                     {x:\mult{\VM_x} \vdash \co{\al} t_1: \TypCom  \mid \gamma: \VM_x}}
              {x:\mult{\VM_x} \vdash \mu \beta'. \co{\al} t_1:\eumul  \mid \gamma: \VM_x}}
       {x:\mult{\VM_x} \vdash  \co{\al} (\mu \beta'. \co{\al} t_1):\TypCom   \mid \gamma: \VM_x}
         $
\end{center}

Let us now use  $\Com = \co{\al} t_2$ inside the 
$\mu$-reduction: 
$t=(\mu \al.\co{\al} t_2)u\Rew{\mu} \mu \al.\co{\al} (t_3 u) = t'$, 
where  $t_3:= \mu
\beta'. \co{\al} (t_1 u)$. Then  $u$ is necessarily typed as follows:
$$\infer[(\many)]{} {\es \vdash u: \emul \mid \es }$$ Let us call
$\Pi$ (resp. $\Pi'$) the type derivation of $t$ (resp $t'$). Note that
the $\mu$-reduction transforms each rule $(\muu)$ of $\Pi$ into a rule
$(\appempty)$ followed by $(\muu)$ in $\Pi'$ (see
Figure~\ref{fig:effect-mu-red-on-empty-types}).  Thus, if $\Pi_\Com$
contains $k$ rules $(\muu)$ introducing $\al$, then the
derivation obtained by subject reduction typing $\Com\ire{\al}{u}$
contains $k$ {\it new} rules $(\appempty)$, each followed by a rule
$(\muu)$.  Indeed, one may check in the example above that the
derivation $\Pi'$ contains {\it two} rules $(\appempty)$ and {\it two}
rules $(\muu)$ introducing $\alpha$, whereas
$\Pi$ contains {\it two} rules $(\muu)$ introducing $\alpha$, and just one rule $(\appempty)$ (the one typing the root of the redex).

In general, one can check that whatever the size definition is for
rules $(\app)$, $(\mudempty)$ and $(\appempty)$, the derivation size
cannot decrease for any choice of $k\geqslant 0$: indeed, if rule
$(\mudempty)$ is used, the (possibly empty) type of a term $\mu
\al.\Com$ holds no information capturing the number of free
occurrences of $\al$ in the command $\Com$, so that there is no local
way to know how many times the argument $u$ should be typed in the
whole derivation of the term $(\mu \al.\Com)u$ (compare
Figure~\ref{fig:effect-mu-red-on-empty-types} to
Figure~\ref{fig:effect-mu-red-naming}).

The reader will notice that the same kind of phenomenon occurs, when
in the previous example, the term $t_1$ is replaced with a variable
$x$ of type $\eumul$: then $t= (\mu \al. \co{\al}(\mu \beta'. \co{\al}
x))u \Rew{\mu} \mu \al. \co{\al}((\mu \beta'. \co{\al} (xu))u) = t'$,
the type derivation of $t'$ contains {\it two} rules $(\appempty)$
whereas the type derivation of $t$ contains just one rule
$(\appempty)$. This explains why one cannot assign the empty
union  type in the  rule $(\ax)$.

All these arguments to forbid the empty union type in our system are not only valid for system $\Hlmu$, but  also apply to the system $\Slmu$, introduced later in Section.~\ref{s:strong-lambdamu}.

\ignore{As we saw in Figure\;\ref{fig:sr-Hlmu}, during a  $\mu$-reduction step $(\mu \al.\Com)u\Rew{\mu} \mu \al.\Com\ire{\al}{u}$, the total size of the $\app$-rules is preserved (it does not decrease), because the $\app$-rule (and its weight) at the root of the redex is dispatched in created $\app$-rules in the reduct. Concerning ``empty rules'', note that $\mu$-reduction transforms an empty naming rule (w.r.t. $\al$) into an empty $\app$-rule followed by an empty naming rule (see Figure~\ref{fig:effect-mu-red-on-empty-types}).}

 \begin{figure}
 \begin{center}
 $
 \infer[(\muu)]{\muju{\ldots}{v:\eumul}{\ldots}}{\muju{\ldots}{\co{\al}v:\TypCom}{\ldots}} \hspace{1cm}\text{becomes}\hspace{1cm}
 \infer[(\muu)]{\infer[(\app)]{\muju{\ldots}{v:\eumul}{\ldots} \sep  \muJu{\ldots}{u:\emul}{\ldots}}{\muju{\ldots}{vu:\eumul}{\ldots}}}{\muju{\ldots}{\co{\al}v\,u:\TypCom}{\ldots}}
 $
 \caption{Effect of $\mu$-reduction on empty types}
 \label{fig:effect-mu-red-on-empty-types}
 \end{center}
 \end{figure}
 \ignore{Thus, if $\Pi_\Com$ contains $k$ empty naming rules
   w.r.t. $\al$, then the derivation obtained by subject reduction
   typing $\Com\ire{\al}{u}$ contains $k$ new empty $\app$-rules, each
   followed by an empty naming rule. Note then that, whatever \opierre{the} size
   definition for rules
   $(\muu)$ and $(\app)$ \pierre{in the empty case} \pierre{is specified}, this
   definition cannot ensure a decrease in measure for all $k\geqslant
   0$ (an definition ensuring a decrease of measure for $k\geqslant 2$
   will usually entail an increase of measure for the other cases and
   vice-versa).}

\ignore{ 

let us argue why it
 is possibly the \textit{least irrelevant} of all : assume
 $\muju{\Gam}{\Com:\TypCom}{\Del}$ yields $\muju{\Gam}{\mu
   \al.\Com:\Del(\al)}{\cmin{\Del}{\al}}$. Then, $\mu \al.\Com$ is
 typed $\bot$ whenever $\al \notin \fn{\Com}$ (which happens in the
 typing of many familiar terms). Since we intend to define a
 quantitative typing system for $\lmu$-calculus, we are interested to
 define a measure for derivations that decreases when we reduce a typed
 redex. Anticipating on \ref{} :
 \begin{itemize}
 \item Notice that the $\mu$-reduction step $(\mu \al.\Com)u\ftype  \mu \al.\Com \ire{\al}{u}$ splits the application at the root (which is destroyed) into several nested application (which are created). We must then associate to the $\app$-rule a multiplicity, so that the total size of applications is preserved during reduction (the measure globally decreases because the size resulting from naming ($\muu$-rule) will decrease).
 \item Let us say an $\app$-rule typing $t$ is empty if $t$ has been assigned $\bot$. Since all the empty-applications involve the same type, they should be assigned the same multiplicity.
  Simplifying the argument, if $\mu \al.\Com$ has been assigned $\bot$ whereas $\al$ has two free occurrences in $\Com$ ({\it e.g.} $\Com := \co{\al}(\mu \beta.\co{\delta}x(\mu \gamma.\co{\al}\mu \beta.\co{\gamma}x)$ since $\beta \notin \fn{\co{\gamma}x},\fn{\co{\delta}x(\mu \gamma.\co{\al}\mu \beta.\co{\gamma}x)}$), then $\Com\ire{\al}{u}$ will hold two empty applications. To grant that the total measure of application does not increase, the multiplicity of an empty application must be equal to zero.
  \item Going on with this example of reduction, since $\al$ saves only empty-types, we do not have any typing information that could decrease the total size of the naming rules $\muu$.
 \end{itemize}
 Thus, we must avoid at all costs using $\muu$ on subterms which have been assigned an empty type. Since $\co{\al}$ with $w=\mu \beta.\co{\gamma}x$ is a normal form and should be typed, the only solution left is to prevent that $\bot$ ever appears. Thus, our $\muu$-rule grants that a type should be created for $\mu \al.\Com$ when $\al \notin \fn{\Com}$ (weakening).

 }

\ignore{
Thus, $\mu$-abstractions have two uses: to restore saved types and to
create new types, which explains the 
fact that  empty union types are banned. Indeed, if $\tri \muju{\Gam}{t:\UM}{\Del}$, then $\UM \neq \eumul$.
%
%

Why union types cannot be empty? \label{disc:u-types-ne}
Let us suppose that empty union types may be introduced by the 
$(\mud)$ rule, at least when $\al \notin \fn{\Com}$, so that for example $t=
\mu \beta. \co{\al} x$ would be typed with $\umult{ \,}$ (this 
can be obtained by simply changing  $\choice{\Del(\al)}$ to $\Del(\al)$ in the 
$(\mud)$-rule).  Suppose also an object $o$ containing $2$ occurrences of the
subterm $\co{\gamma} t$, so that $\ga$ receives  the union
type $\eumul$  twice in the corresponding name assignment. Then, the term  $\mu
\gamma. o$ will be typed with $\eumul = \eumul \vee
\eumul$, which does not reflect the fact that $\gamma$ is used
twice, thus loosing the {\it quantitative} flavour of the system \opierre{(see also a formal argument just after Lemma~\ref{l:relevance})}. \pierre{je pense que cette discussion après le Lemme\;\ref{l:relevance} n'a pas grand chose à voir avec la quantitativite. Je ne sais plus pourquoi on avait du ajouter cette discussion parce que je n'ai rien retrouve en cherchant dans les reviews.}\\}

\ignore{
  \subsection{\modifref{Characterizing weak normalization in $\Hlmu$}}
\label{ss:weak-norm-lambda-mu}

It is folklore in (idempotent and non-idempotent) intersection type theory for the $\lam$-calculus that at a type system characterizing \textit{head} normalization almost automatically gives a characterization of \textit{weak} normalization by considering a restricted class of judgments. This restriction is based on a criterion pertaining to the sign of the occurrences of the empty type, that is for us, $\emul$. The characterization below is an adaptation of \eg \pierre{inclure référence, par exemple, Theorem 3.10 de la version électronique du Krivine}
}
\ignore{
\pierre{place ????} A $\lmu$-object $o$ is \textbf{Weakly Normalizing (WN)} when there is a reduction path from $o$ to a normal form $o'$, \ie an object without $\mu$- or $\beta$-redex.  The set of WN $\lmu$ object is denoted $\WN{\lmu}$. 
Leftmost-outermost reduction is defined as head-reduction, except it corresponds  to reducing the redex whose abstraction appears leftmost in any $\beta$- or $\mu$-reducible object. The leftmost-outermost reduction strategy is defined accordingly.

A fundamental observation on weak normalization is the following: if $t$ has a non-head normalizing subterm, \eg\ $\Om$, then $t$ \textit{may} be still WN if $\Om$ is erasable, \eg\ $t=(\lx.y)\Om$ is WN because $t\Rew{\beta} y$. However, $u=x\,\Om$ is a HNF but not WN, because $\Om$ is not erasable in $u$. Intuitively, system $\Hlmu$ characterizes HN (and not WN) partly because
\begin{itemize}
\item it \textit{prevents} typing erasable argument, \eg $\Om$ in $t$. Indeed,
  when $y$ is assigned any type $\UM$, $\lx.y$ has type $\umult{\emul \rew \UM}$, so that $t$ has automatically type $\UM$ without $\Om$ being typed.
\item it allows leaving the (non-erasable) argument of head variables untyped \eg $u$ has type $\UM$ when $x$ is assigned $\umult{\emul\rew \UM}$ for any type $\UM$. 
\end{itemize}
Clearly, a typing derivation ensures weak normalization only when the arguments of the normal forms are all typed, \eg if $t=x\,u$ and $x$ has been assigned $\umult{\emul\rew \UM}$ for some type $\UM$, so that $u$ is not typed and $t$ as type $\UM$, then $t$ is WN iff $u$ is WN, but the resulting derivation $\Pi$ tells us nothing on which case it is, since $u$ is not typed and $\Pi$ cannot even ensure that $u$ is head normalizing. The definition is enough to ensure that every subterm of a normal form is typed:

\pierre{inclure définition d'occ positives et négatives}

}

\ignore{
\begin{thm}
 \label{t:wn-lmu}
 Let $o \in  \objects{\lmu}$. Then $\tri_{\Hlmu} \muju{\Gam}{o:\Any}{\Del}$ where $\Gam$ has no negative occurrence of $\emul$ and $\Del$ and $\UM$ have no positive occurrence of $\emul$
  iff $o  \in  \WN{\lmu}$ iff the leftmost-outermost strategy terminates on $o$.
Moreover, if $o$ is $\Hlmu$-typable with a tree derivation $\Pi$ concluding with such a judgment, then $\sz{\Pi}$  gives an upper bound to the length of the leftmost-outermost reduction strategy starting at $o$. 
\end{thm}
}

\subsection{System $\Slmu$ }
\label{s:strong-lambdamu}

This section presents a quantitative typing system  
characterizing strongly normalizing $\lmu$-terms. 
The  (syntax directed)  typing rules of the  system  $\Slmu$ 
appear in  Figure~\ref{fig:strong-lambdamu}. 
\begin{figure}[h]
\begin{framed}
\begin{center}
{\footnotesize 
\[ \begin{array}{c}
\infer[(\ax)]{ \UM \neq \eumul } 
      {x: \mult{\UM} \vdash  x: \UM \mid \emptyset }  \sep 
\infer[(\introarrow)]{ \Gam \vdash t : \UM \mid \Del}  
      { \cmin{\Gam}{x} \vdash  \l x.t: \umult{\Gam(x)\ftype  \UM} \mid \Del}  \quad 
\infer[ (\muu)]{\Gam \vdash t: \UM \mid \Del 
      }
      {\Gam \vdash \co{\al} t: \TypCom \mid \Del \vee \{\al: \UM \}} \\ \\  
\infer[ (\mud)]{\Gam \vdash \Com: \TypCom \mid \Del }
      {\Gam \vdash \mu \al. \Com: \choice{\Del(\al)} \mid \cmin{\Del}{\al} }  \quad 
\infer[ (\many) ]{(\Gamk \vdash t: \UMk \mid \Delk)_{\kK} 
       }
      {\inter_{\kK}\Gamk \Vdash t: \mult{\UMk}_{\kK} \mid \union_{\kK} \Delk}\\ \\
\infer[(\appet)]{ \Gamt \vdash t: \umult{\IMk \ftype  \UMk}_{\kK}  \mid \Delt \sep  
       \Gamu  \Vdash u : \inter_{\kK}\ \choice{\IMk}  \mid \Delu} 
       {\Gamt \inter \Gamu \vdash tu: \union_{\kK} \UMk \mid \Delt  \union  \Delu}
  \end{array} \]  
}
\end{center}
\caption{System $\Slmu$}
\label{fig:strong-lambdamu}
\end{framed}
\end{figure}

 As in system $\Spl$, 
the  non-deterministic choice operator $\choice{\_}$ is used to choose arbitrary types for
erasable terms, so that no subterm
is untyped, thus ensuring strong $\lmu$-normalization.  The use of $\choice{\_}$ in  the $(\mud)$-rule (raising a non-empty \textit{union} type) can 
be seen as a {\it weakening} on the name $\al$ (on the right hand-side of
  the sequent) followed by a $\mu$-abstraction,
 while its use in rule  $(\appet)$ (raising a non-empty \textit{intersection} type) should be seen
{\it  weakening} on the left-hand side (domain) of an arrow type.   We still consider 
the definition of size given before, as 
the choice operator does not play any particular new role here. 

As in system $\Hlmu$, every term is typed with a non-empty union type:
\begin{lem}
\label{l:non-empty-union} 
     If $\tri \muju{\Gam}{t:\UM}{\Del}$, then $\UM \neq \eumul$.
\end{lem}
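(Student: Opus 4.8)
The plan is to argue by induction on the derivation $\Phi$ (equivalently, since $\Slmu$ is syntax directed, by induction on the term $t$). The statement concerns a regular judgment whose subject is a term and whose right-hand side is a union type, so the last rule of $\Phi$ can only be one of $(\ax)$, $(\introarrow)$, $(\mud)$ or $(\appet)$: rule $(\muu)$ types a command with the command type $\TypCom$, and rule $(\many)$ concludes with an auxiliary judgment ($\Vdash$) carrying an \emph{intersection} type, so neither of these can be the last rule of $\Phi$.

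Three of the four cases are immediate and need at most a glance at the corresponding rule. If $\Phi$ ends with $(\ax)$, then $\UM\neq\eumul$ is exactly the side condition of the rule. If $\Phi$ ends with $(\introarrow)$, then $\UM = \umult{\Gam(x)\ftype\VM}$ for the type $\VM$ of the body, which is a singleton multiset and hence non-empty (the induction hypothesis is not even needed here). If $\Phi$ ends with $(\mud)$, then $\UM = \choice{\Del'(\al)}$ where $\Del'$ is the name assignment of the premise; by the definition of the choice operation on union types, $\choice{\Del'(\al)}$ is the singleton $\umult{\xi}$ for some blind type $\xi$ when $\Del'(\al)=\eumul$, and it equals $\Del'(\al)\neq\eumul$ otherwise, so $\UM\neq\eumul$ in both subcases.

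The remaining case, $(\appet)$, is the only one that calls for an argument. The term is then an application, say $v\,u$, with $\UM = \union_{\kK}\UMk$, and the left premise of the rule is the regular judgment $\muju{\Gamt}{v : \umult{\IMk\ftype\UMk}_{\kK}}{\Delt}$. Applying the induction hypothesis to this premise gives $\umult{\IMk\ftype\UMk}_{\kK}\neq\eumul$, that is, $K\neq\es$. Since each $\IMk\ftype\UMk$ is a type, its codomain $\UMk$ is a non-empty union type (this is the structural invariant on types that the choice operator and the blind types are designed to preserve); hence $\union_{\kK}\UMk$, being a multiset union of $|K|\geq 1$ non-empty multisets, is itself non-empty, which is what we need. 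I expect this case to be the delicate point of the whole proof: it is precisely where one uses that the empty union is never produced as the type of the function part of an application — the reason the statement does not degenerate — and all the earlier design choices around $\choice{\_}$, blind types and the $(\ax)$ side condition are there to guarantee exactly this.
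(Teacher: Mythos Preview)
The paper states this lemma without proof, so I evaluate your argument on its own merits. Your induction on the derivation is the natural approach, and the cases $(\ax)$, $(\introarrow)$ and $(\mud)$ are handled correctly. The gap is in $(\appet)$.

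You correctly deduce $K\neq\es$ from the induction hypothesis applied to the left premise typing $v$. You then assert that each codomain $\UMk$ is a non-empty union type, calling this ``the structural invariant on types that the choice operator and the blind types are designed to preserve''. But the paper's grammar $\sig ::= a \mid \IM \ftype \UM$ places no non-emptiness constraint on the codomain $\UM$ of an arrow, and your induction hypothesis says nothing about the individual $\UMk$: they are syntactic pieces of the type of $v$, not types assigned to subterms in subderivations. So this step is unjustified.

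In fact, under the grammar as written, the step \emph{fails}. Take $\UM_0 = \umult{\emul \ftype \eumul}$: this is a singleton, hence non-empty, so $(\ax)$ yields $x:\mult{\UM_0}\vdash x:\UM_0\mid\es$. Now apply $(\appet)$ with $K=\{1\}$, $\IM_1=\emul$, $\UM_1=\eumul$: the right premise asks for $\Vdash u:\choice{\emul}=\mult{\VM}$ with $\VM\neq\eumul$, which any typable $u$ provides via $(\many)$; the conclusion then types $x\,u$ with $\union_{k\in\{1\}}\UM_k=\eumul$, contradicting the lemma outright.

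The repair is not in the proof but in the type grammar: one must require that in every arrow type $\IM\ftype\UM$ the codomain $\UM$ be non-empty. With that restriction (which the paper appears to assume tacitly, and which the choice operator on union types and the blind types indeed \emph{respect}), your $(\appet)$ case goes through immediately: $K\neq\es$ and each $\UMk\neq\eumul$ give $\union_{\kK}\UMk\neq\eumul$. Without it the lemma is simply false. You should state this restriction explicitly rather than invoking it as an unproved invariant.
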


As well as in the case of $\Hlmu$, system $\Slmu$
can be restricted to the pure $\l$-calculus.  By the
same observations made at the end of Section~\ref{s:head-lambdamu},  we
get 
the typing system $\Spl$ in
Figure~\ref{fig:intermediate-lambda-strong} that characterizes
\modifref{strong  $\beta$-normalization}.

Relevance in  system $\Slmu$ is  stated as follows:  
\begin{lem}[{\bf Relevance for System $\Slmu$}]
\label{l:relevance}
Let $o \in \objects{\lmu}$. If $\Phi \tri \Gam \vdash o: \Any \mid \Del$,
then $\dom{\Gam}=\fv{o}$ and $\dom{\Del}=\fn{o}$.
\end{lem}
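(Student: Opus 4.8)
The plan is to proceed by structural induction on the typing derivation $\Phi$. Since rule $(\appet)$ has an auxiliary judgment on its right premise, I would run the induction on a statement covering both sorts of derivations: for a regular derivation $\Phi \tri \Gam \vdash o : \Any \mid \Del$ one has $\dom{\Gam} = \fv{o}$ and $\dom{\Del} = \fn{o}$; for an auxiliary derivation $\Phi \tri \Gam \Vdash t : \IM \mid \Del$ one has $\dom{\Gam} = \fv{t}$ and $\dom{\Del} = \fn{t}$ whenever $\IM \neq \emul$, and $\Gam = \Del = \emptyset$ when $\IM = \emul$ (this last, degenerate situation is precisely what rule $(\many)$ with empty index set yields, and it never feeds into $(\appet)$, whose right premise always carries a non-empty intersection type).

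For the routine cases: for $(\ax)$, the conclusion $x : \mult{\UM} \vdash x : \UM \mid \emptyset$ carries the side condition $\UM \neq \eumul$, so $\dom{x:\mult{\UM}} = \{x\} = \fv{x}$ and $\dom{\emptyset} = \es = \fn{x}$. For $(\introarrow)$ I would combine the induction hypothesis with $\dom{\cmin{\Gam}{x}} = \dom{\Gam}\setminus\{x\}$, $\fv{\l x.t} = \fv{t}\setminus\{x\}$ and $\fn{\l x.t} = \fn{t}$; for $(\mud)$, with $\dom{\cmin{\Del}{\al}} = \dom{\Del}\setminus\{\al\}$, $\fn{\mu\al.\Com} = \fn{\Com}\setminus\{\al\}$ and $\fv{\mu\al.\Com} = \fv{\Com}$ --- notice that neither case requires knowing whether $x$, resp. $\al$, actually lies in the relevant domain. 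For $(\many)$ with index set $K\neq\es$, the induction hypothesis gives $\dom{\Gamk} = \fv{t}$ and $\dom{\Delk} = \fn{t}$ for every $k\in K$, hence $\dom{\inter_{\kK}\Gamk} = \bigcup_{\kK}\dom{\Gamk} = \fv{t}$ and likewise $\dom{\union_{\kK}\Delk} = \fn{t}$; for $K=\es$ one reads off $\emptyset \Vdash t : \emul \mid \emptyset$.

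The two cases that actually exploit the ban on empty union types --- i.e. Lemma~\ref{l:non-empty-union}, which I may assume --- are $(\muu)$ and $(\appet)$. In $(\muu)$, from a premise $\Gam \vdash t : \UM \mid \Del$ one concludes $\Gam \vdash \co{\al} t : \TypCom \mid \Del \vee \{\al:\UM\}$; the variable part is immediate since $\fv{\co{\al} t} = \fv{t}$, and for names one needs $\dom{\Del \vee \{\al:\UM\}} = \dom{\Del}\cup\{\al\}$ in order to match $\fn{\co{\al} t} = \fn{t}\cup\{\al\}$, which holds exactly because $\UM \neq \eumul$ by Lemma~\ref{l:non-empty-union}, so that $\dom{\{\al:\UM\}} = \{\al\}$. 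In $(\appet)$, the left premise types $t$ with $\umult{\IMk \ftype \UMk}_{\kK}$, which is non-empty by Lemma~\ref{l:non-empty-union}, hence $K\neq\es$; therefore the intersection type $\inter_{\kK}\choice{\IMk}$ on the right premise is itself non-empty (each $\choice{\IMk}$ is non-empty by definition of $\choice{\_}$, and $K\neq\es$), so both premises fall under the equality clauses of the induction hypothesis. One then assembles $\dom{\Gamt} = \fv{t}$, $\dom{\Gamu} = \fv{u}$, $\dom{\Delt} = \fn{t}$, $\dom{\Delu} = \fn{u}$ into $\dom{\Gamt \inter \Gamu} = \fv{t}\cup\fv{u} = \fv{tu}$ and $\dom{\Delt \union \Delu} = \fn{t}\cup\fn{u} = \fn{tu}$.

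I do not expect any genuine obstacle here: the argument is pure bookkeeping. The one conceptual point --- and the reason why we obtain genuine equalities, in contrast with Lemma~\ref{l:relevane-lambda-mu-weak} for $\Hlmu$ where only inclusions hold --- is that the choice operator $\choice{\_}$ used in rules $(\mud)$ and $(\appet)$ forces every erasable subterm to be typed with a non-empty union, resp. intersection, type, so that no free variable or name can escape the two assignments; the technical manifestation of this is precisely the appeal to Lemma~\ref{l:non-empty-union} in the $(\muu)$ and $(\appet)$ cases.
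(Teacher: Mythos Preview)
Your proof is correct and follows the same approach as the paper, which simply says ``By induction on $\Phi$.'' You correctly spell out the details the paper leaves implicit: the extension of the induction hypothesis to auxiliary judgments (needed to handle the right premise of $(\appet)$) and the appeal to Lemma~\ref{l:non-empty-union} in the $(\muu)$ and $(\appet)$ cases, which is precisely what upgrades the inclusions of Lemma~\ref{l:relevane-lambda-mu-weak} to equalities.
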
 

\begin{proof}
By induction on $\Phi$.
\end{proof}

Note the difference between Lemma~\ref{l:relevane-lambda-mu-weak}
  (inclusion) and Lemma~\ref{l:relevance} (equality): this is because in
  $\Slmu$ we use a choice operator in the $(\appet)$-rule to prevent
  any subterm of a typed term to be left untyped.

  The definition of $\sz{}$ is extended to system $\Slmu$ as expected. 
\modifref{Hence,   
 $\sz{\Phi} \geq 1$ holds}
for any {\it regular} derivation $\Phi$, whereas, by definition, the
derivation of the empty auxiliary judgment $\es \Vdash t:
\emul \mid \es$ has size $0$.


\section{Typing Properties}
\label{s:properties}

This section shows two fundamental properties of reduction (\ie\ 
{\it forward}) and anti-reduction (\ie\ {\it backward}) of
system $\Slmu$. In
Section~\ref{s:forward}, we analyze the {\it subject reduction (SR)} property, and we
prove that reduction preserves typing and decreases the
size of type derivations (that is why we call it weighted SR). The
proof of this property makes use of two fundamental properties
(Lemmas~\ref{l:substitution} and~\ref{l:replacement}) guaranteeing
well-typedness of the meta-operations of substitution and replacement.
Section~\ref{s:backward} is devoted to {\it subject expansion (SE)}, which
states that  non-erasing anti-reduction preserves types. The proof 
uses the fact that reverse substitution (Lemma~\ref{l:reverse-substitution})
and reverse replacement (Lemma~\ref{l:reverse-replacement}) preserve types.

We start by stating  an interesting property, to be used in our forthcoming lemmas,
 that allows us to split and merge auxiliary derivations typing the same term: 
  \begin{lem}
    \label{l:decomposition}
  Let $\IM = \inter_{\kK} \IMk $.
    Then $\Phi \tri \Gam \Vdash t:\IM \mid \Del$ iff 
    there exist $(\Gamk)_{\kK}, (\Delk)_{\kK}$ s.t. 
    $(\Phik \tri \Gamk \Vdash t:\IMk \mid \Delk)_{\kK}$,
    $\Gam = \inter_{\kK} \Gamk $ and  $\Del = \union_{\kK} \Delk$.
    Moreover, $\sz{\Phi} = \Sigma_{\kK} \sz{\Phik}$.
  \end{lem}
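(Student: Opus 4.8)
The plan is to prove both directions by a straightforward induction on the structure of the auxiliary derivation $\Phi$, using the fact that the only rule concluding an auxiliary judgment is $(\many)$, which means the system is syntax directed at the level of auxiliary judgments. First I would observe that, since auxiliary judgments $\Gam \Vdash t : \IM \mid \Del$ can only be the conclusion of an instance of $(\many)$, the derivation $\Phi$ necessarily has the shape: a single $(\many)$-node with premises $(\Phi_j \tri \Gam_j \vdash t : \UM_j \mid \Del_j)_{j \in J}$, where $\IM = \mult{\UM_j}_{\jJ}$, $\Gam = \inter_{\jJ} \Gam_j$, and $\Del = \union_{\jJ} \Del_j$. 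So the whole content of the lemma is essentially a regrouping statement about multisets and about the multiset union operators $\inter$ and $\union$.

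For the left-to-right direction: given $\Phi$ as above and a partition $\IM = \inter_{\kK} \IMk$, each $\IMk$ is itself a sub-multiset of $\mult{\UM_j}_{\jJ}$, so $\IMk = \mult{\UM_j}_{j \in J_k}$ for a partition $(J_k)_{\kK}$ of the index (multi)set $J$. I would then define $\Phi_k$ to be the $(\many)$-derivation whose premises are exactly $(\Phi_j)_{j \in J_k}$, which concludes $\Gamk \Vdash t : \IMk \mid \Delk$ with $\Gamk := \inter_{j \in J_k} \Gam_j$ and $\Delk := \union_{j \in J_k} \Del_j$. Associativity and commutativity of multiset union give $\inter_{\kK} \Gamk = \inter_{\jJ}\Gam_j = \Gam$ and $\union_{\kK}\Delk = \union_{\jJ}\Del_j = \Del$. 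The size equation $\sz{\Phi} = \sum_{\kK}\sz{\Phi_k}$ follows from the definition $\sz{(\many)\text{-node}} = \sum \sz{\text{premises}}$ applied to both sides, since $\sz{\Phi} = \sum_{\jJ}\sz{\Phi_j} = \sum_{\kK}\sum_{j \in J_k}\sz{\Phi_j} = \sum_{\kK}\sz{\Phi_k}$.

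For the right-to-left direction: given $(\Phi_k \tri \Gamk \Vdash t : \IMk \mid \Delk)_{\kK}$, each $\Phi_k$ is again a single $(\many)$-node over premises $(\Phi_{k,j})_{j \in J_k}$ with $\IMk = \mult{\UM_{k,j}}_{j \in J_k}$. Collecting all the regular premises $(\Phi_{k,j})_{\kK, j \in J_k}$ under a single $(\many)$-node yields a derivation $\Phi$ of $\Gam \Vdash t : \IM \mid \Del$, and the required identities for $\Gam$, $\Del$, $\IM$ and the size again reduce to associativity/commutativity of $+$ on multisets. I would also handle the degenerate cases cleanly: if $K = \es$ then $\IM = \emul$, $\Gam = \emptyset$, $\Del = \emptyset$, $\Phi$ is the empty auxiliary derivation of size $0$, and the statement holds vacuously; similarly some of the $\IMk$ may themselves be empty, corresponding to an empty family of premises in $\Phi_k$.

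I do not expect any real obstacle here; the only thing to be careful about is bookkeeping with multisets (so that indices are treated with multiplicity, not as sets) and making explicit that $\inter$ and $\union$ are just multiset union under different names, so that all the regrouping is literally associativity and commutativity of $+$. The mild subtlety worth a sentence is that a partition of $\IM$ as $\inter_{\kK}\IMk$ does induce a partition of the underlying index multiset $J$, which is what licenses splitting the premises of the single $(\many)$-node — this is where syntax-directedness of auxiliary judgments is doing the work.
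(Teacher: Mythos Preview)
Your proposal is correct and is exactly the natural argument; the paper itself states this lemma without proof, so there is nothing to compare against. One small remark: you announce an ``induction on the structure of the auxiliary derivation $\Phi$'', but your actual argument (correctly) uses no induction at all---syntax-directedness forces $\Phi$ to be a single $(\many)$-node over regular premises, and everything is then just regrouping of the index multiset $J$ along the partition $(J_k)_{\kK}$, so you may simply drop the word ``induction''.
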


\subsection{Forward Properties}
\label{s:forward}

We first state the substitution lemma, which guarantees that typing is stable
by substitution. The lemma also establishes the size of the derivation tree
of a substituted object from the sizes of the derivations trees of its components.

\begin{lem}[{\bf Substitution}]
\label{l:substitution}
Let 
$\Theu \tri \Gamu  \Vdash u: \IM  \mid \Delu$.
If $\Phi_o \tri \Gam; x:\IM\vdash o:\Any \mid  \Del$, then 
   there is $\Phi_{o\isubs{x/u}}$ such that 
\begin{itemize}
\item  $\Phi_{o\isubs{x/u}}\rhd\Gam \inter \Gamu \vdash o\isubs{x/u}: \Any \mid \Del\union  \Delu $.
\item  $\sz{\Phi_{o\isubs{x/u}}}=\sz{\Phi_o} +   \sz{\Theu} - |\IM|$.
\end{itemize}
\end{lem}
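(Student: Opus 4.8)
The plan is to prove the statement by induction on $\Phi_o$. Since $\Slmu$ is syntax directed, the last rule of $\Phi_o$ is dictated by the shape of $o$, so the induction is effectively a case analysis on $o$. It is convenient to establish simultaneously the analogous statement for \emph{auxiliary} derivations, with $\vdash$ replaced by $\Vdash$ and $\Any$ replaced by an intersection type (the size equation being unchanged); this variant is needed for the right premise of rule $(\appet)$, and it reduces to the regular statement by decomposing the auxiliary derivation into regular ones via Lemma~\ref{l:decomposition} and reassembling. Throughout, I would use $\alpha$-conversion to assume that the variables and names bound in $o$ do not occur free in $u$, and use relevance (Lemma~\ref{l:relevance}, together with its auxiliary form obtained through Lemma~\ref{l:decomposition}) to commute those bound variables and names harmlessly through $\Gamu$ and $\Delu$.

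\emph{Base cases.} If $o=x$, then $\Phi_o$ is an $(\ax)$-instance, which forces $\IM=\mult{\UM}$, $\Any=\UM$, $\Gam=\es$, $\Del=\es$, $\sz{\Phi_o}=1$ and $|\IM|=1$, while $o\isubs{x/u}=u$. Since the only rule producing $\Vdash$-judgments is $(\many)$, the derivation $\Theu$ ends with a singleton instance of $(\many)$ whose unique premise is a regular derivation $\Phi_u\rhd\Gamu\vdash u:\UM\mid\Delu$ with $\sz{\Phi_u}=\sz{\Theu}$; this $\Phi_u$ is the required derivation, and $\sz{\Phi_u}=\sz{\Theu}=\sz{\Phi_o}+\sz{\Theu}-|\IM|$. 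If $o=y\neq x$, then the $(\ax)$-rule forces $\IM=\emul$, hence (by syntax-directedness) $\Theu$ is the empty auxiliary derivation, $\Gamu=\es$, $\Delu=\es$, $\sz{\Theu}=0$; since $o\isubs{x/u}=o$, take $\Phi_{o\isubs{x/u}}=\Phi_o$.

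\emph{Inductive cases.} All follow the same pattern: peel off the last rule of $\Phi_o$; distribute the declaration $x:\IM$ among its premises, writing $\IM=\inter_{i}\IM_i$ with $\IM_i$ the portion used by the $i$-th premise; split $\Theu$ accordingly via Lemma~\ref{l:decomposition}, so that the pieces type $u$ with the respective $\IM_i$, their variable assignments combine by $\inter$ into $\Gamu$, their name assignments combine by $\union$ into $\Delu$, their sizes sum to $\sz{\Theu}$, and the cardinalities $|\IM_i|$ sum to $|\IM|$; apply the induction hypothesis to each premise together with the corresponding piece of $\Theu$; and reapply the same rule. This last step is legitimate because substitution commutes with the term constructors (using $x\neq y$ and $y\notin\fv{u}$ for $(\introarrow)$, and $\al\notin\fn{u}$ for $(\muu)$ and $(\mud)$), because the arbitrary types chosen by $\choice{\_}$ in $(\mud)$ (and, in the auxiliary variant, in $(\appet)$) may be taken identical to those in $\Phi_o$, and because the side operations on name assignments ($\Del\vee\{\al:\UM\}$ in $(\muu)$, $\cmin{\Del}{\al}$ in $(\mud)$) commute with $\union\,\Delu$. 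Summing the instances of the \ih\ and adding the constant contributed by the rule ($+1$ for $(\introarrow)$ and $(\mud)$, $+\ar{\UM}$ for $(\muu)$, $+|K|$ for $(\appet)$, and $0$ for $(\many)$) yields exactly $\sz{\Phi_{o\isubs{x/u}}}=\sz{\Phi_o}+\sz{\Theu}-|\IM|$. Concretely, the unary rules $(\introarrow),(\muu),(\mud)$ need no splitting of $\IM$; the binary rule $(\appet)$ splits $x:\IM$ between the regular premise typing the function and the auxiliary premise typing the argument; and rule $(\many)$ (which occurs only in the auxiliary statement) splits $x:\IM$ among finitely many regular premises, the empty case being immediate since then $\Phi_o$, $\Theu$ and $o\isubs{x/u}$ are all the corresponding ``empty'' objects.

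\emph{Main difficulty.} I do not expect any genuine obstacle: the lemma is routine and its proof fits the standard template for non-idempotent intersection/union type systems. The two points deserving care are (i) in the base case $o=x$, the passage from the auxiliary derivation $\Theu$ to a regular one, which is immediate by syntax-directedness; and (ii) the size bookkeeping in the multi-premise cases, where one must verify that the decomposition of $\IM$ furnished by Lemma~\ref{l:decomposition} distributes $\sz{\Theu}$ and $|\IM|$ in exactly the same way, so that the term $-|\IM|$ is subtracted once per free occurrence of $x$ in $o$ and no more.
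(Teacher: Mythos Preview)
Your proposal is correct and follows essentially the same route as the paper: induction on $\Phi_o$ with a simultaneous auxiliary version for $\Vdash$-judgments, the two $(\ax)$ base cases handled exactly as you describe, and the inductive cases treated by splitting $\IM$ and $\Theu$ via Lemma~\ref{l:decomposition}, invoking relevance (Lemma~\ref{l:relevance}) together with $\alpha$-conversion to push bound variables/names past $\Gamu,\Delu$, and reassembling. The paper spells out $(\ax)$, $(\introarrow)$, $(\many)$, and $(\appet)$ and declares the remaining cases straightforward, which matches your sketch.
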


\begin{proof}
By induction on $\Phi_o$ using Lemmas~\ref{l:relevance} and~\ref{l:decomposition}.
See the Appendix for details. 
\end{proof}


Typing is also stable by replacement. Moreover, we can specify the  exact size of the derivation tree of the replaced object from the sizes of its  components.

\begin{lem}[{\bf Replacement}]
\label{l:replacement}
Let $\Theu \tri \Gam_u \Vdash u : \inter_{\kK}\ (\choice{\IMk}) \mid \Del_u $ where 
$\al \notin \fn{u}$. If $\Phi_o \tri \tyj{o}{\Gam}{\Any \mid \al: \umult{ \IMk \ftype 
      \VMk}_{\kK} ; \Del}$, then there is $\Phi_{o\ire{\al}{u}}$ such that :
\begin{itemize}
\item $\Phi_{o\ire{\al}{u}} \tri \tyj{o\ire{\al}{u}}{\Gam \inter \Gamu}
                           {\Any \mid \al: \union_{\kK} \VMk; \Del \union \Delu }$. 
\item $\sz{\Phi_{o\ire{\al}{u}}} =  \sz{\Phi_o} + \sz{\Theu}$.  
\end{itemize}
\end{lem}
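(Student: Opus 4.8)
The plan is to prove the statement by induction on the type derivation $\Phi_o$, running the induction simultaneously over regular derivations (conclusion $\Gam \vdash o : \Any \mid \Del$) and auxiliary ones (conclusion $\Gam \Vdash o : \IM \mid \Del$); in the auxiliary case I would use Lemma~\ref{l:decomposition} to write the derivation as a family of regular subderivations $o : \UMk$, split $\al$'s type $\umult{\IMk \ftype \VMk}_{\kK}$ and the argument derivation $\Theu$ along the same partition of $K$, apply the regular i.h.\ componentwise, and recombine. The organising principle is that the union type $\umult{\IMk \ftype \VMk}_{\kK}$ carried by $\al$ is built bottom-up from the $(\muu)$-rules introducing $\co{\al}$ and then propagated through the $\union$-merges of rules $(\muu)$, $(\many)$ and $(\appet)$; hence each subderivation of $\Phi_o$ owns a determined sub-multiset of $K$, and at every merge I would use Lemma~\ref{l:decomposition} to split $\Theu$ (together with $\Gamu$ and $\Delu$) along the matching partition, recurse, and reassemble. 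Throughout I assume, by $\alpha$-conversion, that the bound names and variables of $o$ avoid $\al$, $\fv{u}$ and $\fn{u}$, so that replacement commutes with binders; and I use $\al \notin \fn{u}$ together with Relevance (Lemma~\ref{l:relevance}) to get $\al \notin \dom{\Delu}$, so that $\al$'s final type is contributed solely by the transformed derivation of $o$.

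Most inductive cases are routine and merely redistribute the partition of $K$. In case $(\ax)$ the object is a variable, so the name assignment is $\emptyset$, hence $K = \es$, $\Theu \tri \es \Vdash u : \emul \mid \es$ has size $0$, $o\ire{\al}{u} = o$, and both items hold trivially. Cases $(\introarrow)$, $(\mud)$ and $(\muu)$ on a name $\gamma \neq \al$ commute the replacement past the binder, do not split $K$, and close by the i.h.\ plus the fact that the relevant rule adds to the size either a constant or $\ar{\UM}$, which the replacement leaves unchanged. Case $(\many)$ splits $K$ according to the name assignments of the premises, applies the i.h.\ to each premise with the matching piece of $\Theu$, and recombines; the size equation follows from additivity of the size measure over the family and from Lemma~\ref{l:decomposition}. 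Case $(\appet)$ with $o = tv$ splits $K = K_t \uplus K_v$ according to the $\al$-assignments of $t$ and $v$, splits $\Theu$ into $\Theu^t \tri u : \inter_{k \in K_t}\choice{\IMk}$ and $\Theu^v \tri u : \inter_{k \in K_v}\choice{\IMk}$, applies the i.h.\ twice, and re-applies $(\appet)$ to type $(tv)\ire{\al}{u} = (t\ire{\al}{u})(v\ire{\al}{u})$; the size identity then comes from $\sz{\Phi_{tv}} = \sz{\Phi_t} + \sz{\Phi_v} + |J|$ (where $J$ indexes the arrow structure of $t$), the two i.h.\ instances, and $\sz{\Theu} = \sz{\Theu^t} + \sz{\Theu^v}$.

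The heart of the proof — and the step I expect to be the main obstacle — is case $(\muu)$ on $\al$, i.e.\ $o = \co{\al}t$, where the replacement actually fires: $o\ire{\al}{u} = \co{\al}(t\ire{\al}{u})\,u$. The premise types $t : \UM$ and the conclusion assigns $\al$ the type $\Del'(\al) \union \UM = \umult{\IMk \ftype \VMk}_{\kK}$; hence $\UM$ is itself a sub-multiset of arrows, $\UM = \umult{\IMk \ftype \VMk}_{k \in K''}$, and $\Del'(\al) = \umult{\IMk \ftype \VMk}_{k \in K'}$ with $K = K' \uplus K''$ and $K'' \neq \es$ (Lemma~\ref{l:non-empty-union}). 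I would split $\Theu$ via Lemma~\ref{l:decomposition} into $\Theu' \tri u : \inter_{k \in K'}\choice{\IMk}$ and $\Theu'' \tri u : \inter_{k \in K''}\choice{\IMk}$, apply the i.h.\ to the premise with $\Theu'$ to get $\Phi_{t\ire{\al}{u}} \tri t\ire{\al}{u} : \UM \mid \al : \union_{k \in K'}\VMk ; \ldots$ (the type $\UM$ of $t$ being preserved), then apply $(\appet)$ with $\Theu''$ — legal precisely because $\UM = \umult{\IMk \ftype \VMk}_{k \in K''}$ — obtaining $(t\ire{\al}{u})\,u : \union_{k \in K''}\VMk$, and finally $(\muu)$ on $\al$ saving $\union_{k \in K''}\VMk$, so that $\al$ receives $(\union_{k \in K'}\VMk) \union (\union_{k \in K''}\VMk) = \union_{\kK}\VMk$, as required. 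The delicate part is the size: the original derivation has $\sz{\Phi_{\co{\al}t}} = \sz{\Phi_t} + \ar{\UM} = \sz{\Phi_t} + |K''| + \Sigma_{k \in K''}\ar{\VMk}$, whereas the reconstruction has size $\sz{\Phi_{t\ire{\al}{u}}} + \sz{\Theu''} + |K''| + \ar{\union_{k \in K''}\VMk}$ (the $(\appet)$ layer, then the $(\muu)$ layer); using the i.h.\ $\sz{\Phi_{t\ire{\al}{u}}} = \sz{\Phi_t} + \sz{\Theu'}$, the identity $\sz{\Theu'} + \sz{\Theu''} = \sz{\Theu}$, and $\ar{\union_{k \in K''}\VMk} = \Sigma_{k \in K''}\ar{\VMk}$, this collapses to exactly $\sz{\Phi_{\co{\al}t}} + \sz{\Theu}$. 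Making this cancellation come out as an equality — the $\ar{\UM}$ summand of the dismantled $(\muu)$-rule absorbing the $|K''|$ contributed by the newly created $(\appet)$-rule, which is exactly the quantitative effect depicted in Figure~\ref{fig:effect-mu-red-naming} — is where the argument has to be carried out with care.
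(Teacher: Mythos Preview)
Your proposal is correct and follows essentially the same approach as the paper: induction on $\Phi_o$ simultaneously for regular and auxiliary judgments, splitting $\Theu$ via Lemma~\ref{l:decomposition} along the partition of $K$ induced by the premises at each $(\appet)$ and $(\muu)$ node, with the key $(\muu)$-on-$\al$ case handled exactly as you describe (your $K'',K'$ are the paper's $K_t,K_\al$). The size cancellation you spell out in that case is precisely the paper's computation, and the remaining cases are indeed routine.
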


\begin{proof}
By induction on $\Phi$ using Lemmas~\ref{l:relevance} and~\ref{l:decomposition}.
See the Appendix for details.
\end{proof}

Notice that the type of $\al$ in the conclusion of the 
derivation $\Phi_{o\ire{\al}{u} }$ (which is $\union_{\kK} \VMk$) is strictly  smaller than
that of the conclusion of the derivation $\Phi_o$ (which is $\umult{ \IMk \ftype 
      \VMk}_{\kK}$) if and only if $K \neq \emptyset$. 


Lemmas~\ref{l:substitution} and \ref{l:replacement} are used 
in the proof of the following key property.

\begin{pty}[{\bf Weighted Subject Reduction for $\Slmu$}]
\label{l:sr}
Let $\Phi \tri \tyj{o}{\Gam}{\Any\mid\Del}$. If $o \Rew{} o'$
is a non-erasing step, then
there exists a derivation  $\Phi' \tri \tyj{o'}{\Gam}{\Any\mid\Del}$ 
such that  $\sz{\Phi} > \sz{\Phi'}$. 
\end{pty}

\begin{proof}
By induction on $o \Rew{} o'$ using Lemmas~\ref{l:relevance}, ~\ref{l:substitution}
and~\ref{l:replacement}. 
See the Appendix for details.
\end{proof}

\technicalreport{
\begin{proof}
By induction on the  relation $\Rew{}$. We only show the main cases of 
reduction at the root, the other ones being straightforward. 

\begin{itemize}
\item If $o = (\lambda x.t)u$, then $o' = t\isubs{x/u}$ and $x\in \fv{t}$.
The derivation $\Phi$ has the following form:

$${\small   \Phi = \infer{
     \infer*{\Phi_t\rhd \Gam_t ;  x:\IM  \vdash t:\UM~|~ \Del_t}
            {\Gam_t \vdash \lambda x.t: \umult{\IM \ftype \UM}  \mid  \Del_t } \\
             {\Theu \rhd  \Gam_u \Vdash u: \choice{\IM } \mid    \Del_u}  }
   {\Gam \vdash o:\UM  \mid  \Del } }
$$  where $\Gam = \Gam_t \inter \Gam_u$, 
       $\Del=\Del_t\union  \Del_u$. Indeed,  $x \in \fv{t}$ implies by
   Lemma~\ref{l:relevance} that $\IM \neq \emul$ so that $\choice{\IM} =\IM = \mult{\UMk}_{\kK}$ 
for some  $K \neq \es$ and some $(\UMk)_{\kK}$.

   Lemma~\ref{l:substitution} yields a
derivation $\Phi'_{t\isubs{x/u}} \rhd \Gam_t \inter  \Gam_u  \vdash
t\isubs{x/u}: \UM \mid \Del_t \union \Del_u$ with
$\sz{\Phi'_{t\isubs{x/u}}}=\sz{\Phi_t}+  \sz{\Theu}-|K|$ ($| \IM |  = |K|$). We set $\Phi'
= \Phi'_{t\isubs{x/u}}$ so that $\sz{\Phi}
= \sz{\Phi_t} +1+ \sz{\Theu}+1 > \sz{\Phi'}$. \\


\item
If $o = (\mu \al. \Com)u$, then $o' = \mu \al . \Com\ire{\al}{u}$
and $\al \in \fn{\Com}$.

The derivation  $\Phi$ has the following form: 
$$
{\small   \Phi = 
  \infer{\infer*{\tingD{\PhiCom}
                  {\tyj{\Com}
                       {\GamCom}
                       {\TypCom\mid\al: \VMC; \DelCom}}}
                  {\tyj{\mu\al.\Com}
                       {\GamCom}
                       {\VMC \mid \DelCom}} \\
                  \tingD{\Theu }
                         {\Gamu \Vdash u: \IMu  \mid \Delu}}
         {\tyj{(\mu\al.\Com)u}{\GamCom \inter \Gamu}
                              {\UM \mid \DelCom \union \Delu}}}
$$ where $ \VMC= \umult{\IMk \ftype \VMk}_{\kK}$, 
$\IMu= \inter_{\kK}   \choice{\IMk}$, 
$\UM= \vee_{\kK} \VMk$,
$\Gam = \GamCom \inter \Gamu$ and $\Del = \DelCom \union \Delu$. 
Lemma~\ref{l:replacement} then gives the derivation
$\tingD{\Phi_{\Com\ire{\al}{u}}}
                        {\tyj{\Com\ire{\al}{u}}
                             {\GamCom \inter  \Gamu}
                             {\TypCom\mid\al:\union_{\kK} \VMk ;\DelCom  \union  \Delu }}$.
Since $\al \in \fn{\Com}$ by hypothesis, then $K \neq \es$ by Lemma~\ref{l:relevance} so that we construct the following derivation:
    $$
    \tingD{\phi'}
          {\infer{\Phi_{\Com\ire{\al}{u}}
                        }
                 {\tyj{\mu\al.\Com\ire{\al}{u}}
                      {\GamCom \inter \Gamu }
                      {\union_{\kK} \VMk 
                             \mid \DelCom  \union \Delu  }}}
    $$
    We conclude since 
    \[ \begin{array}{l} 
      \sz{\Phi'} = \sz{\phi_{\Com\ire{\al}{u}}} + 1\\
      =_{L.~\ref{l:replacement}} 
      \sz{\Phi_\Com} + \sz{\Theu}  + 1                  \\
  <  \sz{\Phi_{\Com}} + 1  + \sz{\Theu}+ |K|\\
  = \sz{\Phi_{\mu \al. \Com}} + \sz{\Theu}+ |K|
  = \sz{\Phi}
    \end{array} \] 

The step $<$ is justified by $K \neq \es$. \\

\end{itemize}
\end{proof}

\ignore{ 
\begin{proof}
By induction on the non erasing reduction relation $\Rew{}$. We only show the main cases of 
non-erasing reduction at the root, the other ones being straightforward. 

\begin{itemize}
\item If $o = (\lambda x.t)u$, then $o' = t\isubs{x/u}$ and $x\in \fv{t}$.
The  application is typed with the rule $(\app)$
and $\Phi$ has the following form:
\ignore{  $$\Phi = \infer{
     \infer*{\Phi_t\rhd \Gam_t ;  x:\IM  \vdash t:\UM~|~ \Del_t}
            {\Gam_t \vdash \lambda x.t: \umult{\IM \ftype \UM}  \mid  \Del_t } \\
             \infer*{(\Phi_u^k \rhd \Gam_u^k \vdash u:\UMi  \mid  \Del_u^k)_{\kK}}
                   {\Phi_u \rhd \inter_{\kK} \Gam_u^k \Vdash u: \choice{\IM } \mid  \union_{\kK} \Del_u^k}  }
   {\Gam \vdash o:\UM  \mid  \Del } 
$$  where $\Gam = \Gam_t \inter_{\kK} \Gam_u^k$, 
       $\Del=\Del_t\union_{\kK} \Del_u^k$. Indeed,  $x \in \fv{t}$ implies by
   Lemma~\ref{l:relevance} that $\IM \neq \emul$ so that $\choice{\IM} =\IM = \mult{\UMk}_{\kK}$ with $K \neq \es$. 

   Lemma~\ref{l:substitution} above yields a
derivation $\Phi'_{t\isubs{x/u}} \rhd \Gam_t \inter_{\kK} \Gam_u^k  \vdash
t\isubs{x/u}: \UM \mid \Del_t \union_{\kK} \Del_u^k$ with
$\sz{\Phi'_{t\isubs{x/u}}}=\sz{\Phi_t}+_{\kK} \sz{\Phi_u^k}-|K|$. We set $\Phi'
= \Phi'_{t\isubs{x/u}}$ so that $\sz{\Phi}
= \sz{\Phi_t} +1+_{\kK} \sz{\Phi_u^k}+1 > \sz{\Phi'}$. 
}
  $$\Phi = \infer{
     \infer*{\Phi_t\rhd \Gam_t ;  x:\IM  \vdash t:\UM~|~ \Del_t}
            {\Gam_t \vdash \lambda x.t: \umult{\IM \ftype \UM}  \mid  \Del_t } \\
             {\Theu \rhd  \Gam_u \Vdash u: \choice{\IM } \mid    \Del_u}  }
   {\Gam \vdash o:\UM  \mid  \Del }
$$  where $\Gam = \Gam_t \inter \Gam_u$, 
       $\Del=\Del_t\union  \Del_u$. Indeed,  $x \in \fv{t}$ implies by
   Lemma~\ref{l:relevance} that $\IM \neq \emul$ so that $\choice{\IM} =\IM = \mult{\UMk}_{\kK}$ 
for some  $K \neq \es$ and some $(\UMk)_{\kK}$.

   Lemma~\ref{l:substitution} yields a
derivation $\Phi'_{t\isubs{x/u}} \rhd \Gam_t \inter  \Gam_u  \vdash
t\isubs{x/u}: \UM \mid \Del_t \union \Del_u$ with
$\sz{\Phi'_{t\isubs{x/u}}}=\sz{\Phi_t}+  \sz{\Theu}-|K|$ ($|  \IM|  = |K|$). We set $\Phi'
= \Phi'_{t\isubs{x/u}}$ so that $\sz{\Phi}
= \sz{\Phi_t} +1+ \sz{\Theu}+1 > \sz{\Phi'}$.

\item If $o = (\mu \al. \Com)u$, then $o' = \mu \al . \Com\ire{\al}{u}$ and 
$\al \in \fn{\Com}$. \ignore{The  application is typed with the rule $(\app)$
and $\Phi$ has the following form, where $K \neq \es$:
$$
\tingD{\Phi}
  {\infer{\infer*{\tingD{\Phi_\Com}
                  {\tyj{\Com}
                       {\Gam'}
                       {\TypCom\mid\al:
                                \umult{\IMk \ftype \VMk}_{\kK};
                                \Del'}}}
                  {\tyj{\mu\al.\Com}
                       {\Gam'}
                       {\umult{\IMk \ftype \VMk}_{\kK}\mid\Del'}} \\
                  (\tingD{\Phi_u^{k}}
                         {\tyjj{u}{\Gamk}{\choice{\IMk} \mid\Delk}} )_{\kK } }
         {\tyj{(\mu\al.\Com)u}{\Gam' \inter_{\kK} \Gamk}
                              {\vee_{\kK} \VMk \mid \Del' \union_{\kK} \Delk}}
  }
$$

Moreover, since $\al \in \fn{\Com}$ by hypothesis, then $K \neq \es$ by Lemma~\ref{l:relevance}.
Lemma~\ref{l:replacement} then gives the following derivation

$\tingD{\Phi_{\Com\ire{\al}{u}}}
                        {\tyj{\Com\ire{\al}{u}}
                             {\Gam' \inter_{\kK} \Gamk}
                             {\TypCom\mid\al:\union_{\kK} \VMk ;\Del' \union_{\kK} \Delk }}$
so that we construct the following derivation:

    $$
    \tingD{\phi'}
          {\infer{\Phi_{\Com\ire{\al}{u}}
                        }
                 {\tyj{\mu\al.\Com\ire{\al}{u}}
                      {\Gam'\inter_{\kK} \Gamk }
                      {\union_{\kK} \VMk 
                             \mid\Del' \union_{\kK} \Delk  }}}
    $$
    We conclude since 
    \[ \begin{array}{l} 
      \sz{\Phi'} = \sz{\phi_{\Com\ire{\al}{u}}} + 1\\
      =_{L.~\ref{l:replacement}} 
      \sz{\Phi_\Com} +_{\kK}\sz{\Phi_u^k}  + 1                  \\
  <  \sz{\Phi_{\Com}} + 1  +_{\kK}\sz{\Phi_u^k}+ |I|\\
  = \sz{\Phi_{\mu \al. \Com}} +_{\kK}\sz{\Phi_u^k}+ |I|
  = \sz{\Phi}
    \end{array} \] 

The step $<$ is justified by $K \neq \es$. 
}

The  application is typed with the rule $(\app)$
and so $\Phi$ has the following form: 
$$
\tingD{\Phi}
  {\infer{\infer*{\tingD{\Phi_\Com}
                  {\tyj{\Com}
                       {\Gam'}
                       {\TypCom\mid\al:
                                \umult{\IMk \ftype \VMk}_{\kK};
                                \Del'}}}
                  {\tyj{\mu\al.\Com}
                       {\Gam'}
                       {\umult{\IMk \ftype \VMk}_{\kK}\mid\Del'}} \\
                  \tingD{\Theu }
                         {\Gamu \Vdash u: \inter_{\kK}   \choice{\IMk}  \mid \Delu}}
         {\tyj{(\mu\al.\Com)u}{\Gam' \inter \Gamu}
                              {\vee_{\kK} \VMk \mid \Del' \union \Delu}}
  }
$$
where $\Gam = \Gam' \inter \Gamu$ and $\Del = \Del' \union \Delu$. 
Lemma~\ref{l:replacement} then gives the following derivation
$\tingD{\Phi_{\Com\ire{\al}{u}}}
                        {\tyj{\Com\ire{\al}{u}}
                             {\Gam' \inter  \Gamu}
                             {\TypCom\mid\al:\union_{\kK} \VMk ;\Del' \union  \Delu }}$.
Since $\al \in \fn{\Com}$ by hypothesis, then $K \neq \es$ by Lemma~\ref{l:relevance} so that we construct the following derivation:

    $$
    \tingD{\Phi'}
          {\infer{\Phi_{\Com\ire{\al}{u}}
                        }
                 {\tyj{\mu\al.\Com\ire{\al}{u}}
                      {\Gam'\inter \Gamu }
                      {\union_{\kK} \VMk 
                             \mid\Del' \union \Delu  }}}
    $$
    We conclude since 
    \[ \begin{array}{l} 
      \sz{\Phi'} = \sz{\phi_{\Com\ire{\al}{u}}} + 1\\
      =_{L.~\ref{l:replacement}} 
      \sz{\Phi_\Com} + \sz{\Theu}  + 1                  \\
  <  \sz{\Phi_{\Com}} + 1  + \sz{\Theu}+ |K|\\
  = \sz{\Phi_{\mu \al. \Com}} + \sz{\Theu}+ |K|
  = \sz{\Phi}
    \end{array} \] 

The step $<$ is justified by $K \neq \es$. 
\end{itemize}
\end{proof}} 
} 

It is now worth discussing the erasing cases.  Note that variable and
name assignments are not necessarily preserved by erasing
reductions. \modifref{For example, consider} $t=(\l y. x)z \Rew{} x=t'$.
The term $t$ is typed with a variable assignment whose domain is
$\cset{x,z}$, while $t'$ can only be typed with an assignment whose
domain is $\cset{x}$. Concretely, starting from a derivation of
$x:\mult{\umult{a}}, z:\mult{\umult{b}} \vdash (\l y. x)z: \umult{a}$
\modifrefb{(see Example~\ref{ex:derivation} on page~\pageref{ex:derivation})}, we can derive
$x:\mult{\umult{a}} \vdash x: \umult{a}$ but not $x:\mult{\umult{a}},
z:\mult{\umult{b}} \vdash x: \umult{a}$, so that the type is preserved
while the variable assignment is not.

  \newcommand{\trans}[3]{ 
   \begin{scope}[xshift=#1cm,yshift=#2cm] #3 \end{scope}}

  \newcommand{\bigderiv}[3]{
  \draw (#1,#2) --++ (1,0.7) --++ (0.3,1.4) --++ (-2.6,0) --++ (0.3,-1.4) -- cycle;
\trans{#1}{#2}{  
  \draw (0,0.6) node {#3};
}
  }

    \newcommand{\smallderiv}[3]{
  \draw (#1,#2) --++ (0.5,0.5) --++ (0.2,1) --++ (-1.4,0) --++ (0.2,-1) -- cycle;
  \trans{#1}{#2}{
  \draw (0,0.6) node {#3};
}
    }

\begin{figure}    
  \begin{tikzpicture}

\bigderiv{-0.65}{1.8}{\Large$\Pi_r$}
\draw (-0.65,1.8) node [below left] {$r:\UM$};
\draw (-0.65,1.8) --++ (0,-0.55);
\draw (-0.65,1) node {$\lx$};
\draw (-0.7,1) node [below left] {$\lx.r:\umult{\emul \rew\UM}$};
\draw (-0.65,1) circle (0.25);
\draw (-0.48,0.82) -- (-0.135,0.395);
\draw (0,0.2) node {$\arob$};
\draw (0,0.2) circle (0.25);
\draw (0,0.) node [below] {$\muju{\Gam_r+\Gam_s}{(\lx.r)s:\UM}{\Del_r+\Del_s}$};

\draw (0.175,0.355) -- (1.5,1.2);
\smallderiv{1.5}{1.2}{\large $\Pi_s$}
\draw (1.5,1.2) node [below right] {$s:\VM$};

\draw [->,>=stealth,ultra thick] (2.4,1.4) --++ (2.7,0);
\draw (1.5,3.6) node {$x\notin \fv{r}$};

\bigderiv{6.5}{0.7}{\Large$\Pi_r$}
\draw (6.5,0.7) node [below ] {$\muju{\Gam_r}{r:\UM}{\Del_r}$};

  \end{tikzpicture}
\caption{Proof reduction (erasing $\beta$-redex)}
  \label{fig:erasing-proof-reduction-beta}
\end{figure}

Type systems lacking (full) subject reduction are unusual, but (1) our
restricted form of subject reduction, for non-erasing steps only, is
sufficient for our purpose (see how we deal with the erasing steps in
the proof of Lemma~\ref{l:typable-isn}), (2) strong normalization
differs from head normalization in that, whereas ``$(\lx.r)s$ is HN''
is equivalent to ``$r\isubs{x/r}$ is HN'', ``$(\lx.r)s$ is SN'' not
equivalent to ``$r\isubs{x/s}$ is SN'' (except in the non-erasing
cases). In other words, contrary to the HN case, erasing reductions lose
information about the fact that terms are or are not SN. This loss of information (occurring in the
erasing case) is what makes subject reduction  in
system $\Slmu$   fail in general. This  is illustrated in
Fig.~\ref{fig:erasing-proof-reduction-beta}: there are two derivations
$\Pi_r \tri \muju{\Gam_r}{r:\UM}$ and $\Pi_s\tri
\muju{\Gam_s}{s:\VM}{\Del_s}$ with $x\notin \fv{r}$, which are subderivations of
$\Pi$ typing the redex $(\lx.r)s$, which concludes with $\muju{\Gam_r+\Gam_s}{(\lx.s)r}{\Del_r+\Del_s}$ (this is represented on the left-hand side of
Fig.~\ref{fig:erasing-proof-reduction-beta}). The type $\VM$ can be
seen as an instance of the choice operator $\choice{\emul}$. The
derivation typing the reduct $r$ is just $\Pi_r$ (on the right-hand
side). All typing information pertaining to $s$ has disappeared (\ie\ has
been lost), which explains why the variable/name assignments are
modified ($\Gam_r/\Del_r$ instead of $\Gam_r+\Gam_s/\Del_r+\Del_s$).
We will come back to the semantics of strong normalization in Section~\ref{ss:SN-vs-HN}.


\subsection{Backward Properties}
\label{s:backward}

Subject expansion is based on two technical properties: the first one, called
reverse substitution, allows us to extract type information for an object $o$ and a term $u$
from the type derivation of $o\isubs{x/u}$; similarly, the second one, called
reverse replacement, gives type information for a command $\Com$ and a term $u$
from the type derivation of $\Com\ire{\al}{u}$. Both of them are proved 
by induction on derivations using Lemmas~\ref{l:relevance} and~\ref{l:decomposition}. Formally,


\begin{lem}[{\bf Reverse Substitution}]
\label{l:reverse-substitution}
Let $\Phi'\tri \muju{\Gam'}{ o\subxu:\A}{\Del'}$
Then there exist $\Gam, \Del, \IM,  \Gamu, \Delu$
       such that:
\begin{itemize}
     \item $\Gam' = \Gam \inter \Gamu$, 
     \item $\Del' = \Del \union \Delu$, 
     \item $\tri \muju{\Gam;x:\IM}{o:\A}{\Del}$
     \item $\tri \muJu{\Gamu}{u: \IM}{\Delu}$.
   \end{itemize}
\end{lem}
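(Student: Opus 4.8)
The plan is to argue by induction on the derivation $\Phi'$ which, since all typing rules of $\Slmu$ are syntax directed, is the same as a structural induction on the object $o$: the last rule of $\Phi'$ is determined by the shape of $o\subxu$, hence by that of $o$. To make the application case go through I would prove simultaneously the analogous statement for auxiliary judgments: from $\Phi'\tri\muJu{\Gam'}{o\subxu:\IM'}{\Del'}$ one extracts $\Gam,\Del,\IM,\Gamu,\Delu$ with $\Gam'=\Gam\inter\Gamu$, $\Del'=\Del\union\Delu$, $\tri\muJu{\Gam;x:\IM}{o:\IM'}{\Del}$ and $\tri\muJu{\Gamu}{u:\IM}{\Delu}$. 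This auxiliary version is in fact an immediate consequence of the regular one via Lemma~\ref{l:decomposition}: split $\IM'$ into singletons, apply the regular statement to each component, then reassemble on the $o$-side with rule $(\many)$ and on the $u$-side with the merge direction of Lemma~\ref{l:decomposition}.

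For the base cases: if $o=x$ then $o\subxu=u$ and $\Phi'\tri\muju{\Gam'}{u:\UM}{\Del'}$ for some union type $\UM$, which is non-empty by Lemma~\ref{l:non-empty-union}; I take $\IM:=\mult{\UM}$, $\Gam:=\es$, $\Del:=\es$, $\Gamu:=\Gam'$, $\Delu:=\Del'$, using $(\ax)$ for $\muju{x:\mult{\UM}}{x:\UM}{\es}$ (legal precisely because $\UM\neq\eumul$) and a one-premise instance of $(\many)$ whose premise is $\Phi'$ itself for $\muJu{\Gam'}{u:\mult{\UM}}{\Del'}$. If $o=y\neq x$ then $o\subxu=y$, $\Phi'$ is an $(\ax)$ rule, and I take $\IM:=\emul$, $\Gam:=\Gam'$, $\Del:=\Del'$, $\Gamu:=\es$, $\Delu:=\es$, the last judgment being the empty auxiliary one $\es\Vdash u:\emul\mid\es$.

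For the inductive cases I invert $\Phi'$ according to the shape of $o$, apply the induction hypothesis to the premise(s) and re-apply the same last rule. When $o=\l y.t$, $o=\co{\al}t$ or $o=\mu\al.\Com$ there is a single regular premise and the IH applies directly; the only point to check is that removing $y$ from the variable assignment (resp.\ $\al$ from the name assignment) commutes with the decomposition, which holds because, after $\alpha$-conversion ensuring $y\notin\fv{u}$ (resp.\ $\al\notin\fn{u}$), relevance (Lemma~\ref{l:relevance}) yields $y\notin\dom{\Gamu}$ (resp.\ $\al\notin\dom{\Delu}$); in the $(\mud)$ case one additionally resolves the non-deterministic operator $\choice{\_}$ by picking the same blind type as the one chosen in $\Phi'$. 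When $o=tv$, $\Phi'$ ends with $(\appet)$ having a regular premise for $t\subxu$ and an auxiliary premise for $v\subxu$; the (regular and auxiliary) IH yields intersection types $\IMt$ and $\IMv$ assigned to $x$ by the two subderivations, together with auxiliary derivations of $u$ against $\IMt$ and against $\IMv$. I then set $\IM:=\IMt\inter\IMv$, reapply $(\appet)$ on the $o$-side, and use the merge direction of Lemma~\ref{l:decomposition} to combine the two auxiliary derivations of $u$ into $\muJu{\Gamtu\inter\Gamvu}{u:\IMt\inter\IMv}{\Deltu\union\Delvu}$; the required identities on variable and name assignments then follow from associativity and commutativity of $\inter$ and $\union$.

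I expect the main obstacle to be exactly this application case: it is the only place where the intersection type $\IM$ assigned to $x$ must genuinely be assembled from contributions coming from two distinct subterms, which relies essentially on relevance (Lemma~\ref{l:relevance}) and on the split/merge flexibility of Lemma~\ref{l:decomposition}, and it is also the case that forces the simultaneous treatment of auxiliary judgments. All remaining cases are routine once relevance is used to discard the bound variables and names that cannot occur free in $u$.
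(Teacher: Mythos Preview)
Your proposal is correct and follows essentially the same approach as the paper: a simultaneous induction on $\Phi'$ for regular and auxiliary judgments, with the same base cases and the same treatment of the application case via Lemma~\ref{l:decomposition} to merge the two auxiliary derivations of $u$. Your remarks on the $(\mud)$ case (reusing the same blind type chosen by $\choice{\_}$) and on invoking Lemma~\ref{l:non-empty-union} in the $o=x$ base case are even slightly more explicit than the paper's presentation, which leaves these points implicit.
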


\begin{proof}
By induction on $\Phi'$ using Lemmas~\ref{l:relevance} and~\ref{l:decomposition}.
See the Appendix for details.
\end{proof}

\begin{lem}[{\bf Reverse Replacement}]
\label{l:reverse-replacement}
Let $\Phi'\rhd \muju{\Gam'}{o\alu:\A}{\al:\VM;\Del'} $, where
$\al \notin \fn{u}$. 
Then there exist $\Gam, \Del, \Gamu,  \Delu, (\IMk)_{\kK}, (\VMk)_{\kK}$
       such that:
\begin{itemize}
     \item $\Gam' = \Gam \inter \Gamu$, 
     \item $\Del' =\Del \union \Delu$, 
     \item $\VM = \uVMk$,
     \item $\tri \muju{\Gam}{o:\A}{\al:\uIVMk;\Del}$, and
     \item $\tri \muJu{\Gamu}{u:\iIMsk}{\Delu}$
   \end{itemize}
\end{lem}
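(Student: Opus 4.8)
The plan is to proceed by induction on the regular derivation $\Phi'$, which, the system being syntax directed, amounts to a case analysis on the shape of $o$. Throughout I would use $\alpha$-conversion to assume that every variable/name bound in $o$ avoids $\fv{u}\cup\fn{u}$, so that by Relevance (Lemma~\ref{l:relevance}) those symbols do not occur in $\Gamu$ or $\Delu$ and the entries attached to them can always be assigned to the ``$o$-side'' of the decompositions; and I would apply Lemma~\ref{l:decomposition} each time an auxiliary premise typing several copies of a subterm of $o$ has to be split into regular derivations (and the resulting data re-merged afterwards).

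The routine cases are as follows. If $o=x$ then $o\alu=x$, $\Phi'$ is an axiom, and one takes $K=\es$ (so $\uIVMk=\eumul$, $\iIMsk=\emul$, $\uVMk=\eumul$), $\Gam=\Gam'$, and $\Gamu=\Delu=\Del=\es$; note $\VM=\eumul$ is forced. If $o=\l z.t$, $o=\mu\gamma.\Com$, or $o=\co{\gamma}t$ with $\gamma\neq\al$, then $\Phi'$ ends with the unique rule introducing that constructor, whose sole premise types $t\alu$ (resp.\ $\Com\alu$, $t\alu$); one applies the \ih\ to it and re-applies the same rule, observing that by Relevance the entry of the bound variable/name stays on the $o$-side. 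Finally, if $o=t_1t_2$ then $o\alu=(t_1\alu)(t_2\alu)$, $\Phi'$ ends with $(\appet)$, and one applies the \ih\ to the left premise (typing $t_1\alu$) and, after splitting the right premise with Lemma~\ref{l:decomposition}, to each regular derivation of $t_2\alu$; re-merging with Lemma~\ref{l:decomposition} and re-applying $(\appet)$, the family $K$ for $o$ is the disjoint union of the families thus produced, the $\al$-types combine via $\union$, and the types of $u$ combine via $\inter$. In all these cases the required equalities ($\Gam'=\Gam\inter\Gamu$, $\Del'=\Del\union\Delu$, $\VM=\uVMk$, and the specifications of the $\al$-assignment of $o$ and of the type of $u$) follow from those supplied by the \ih\ by associativity and commutativity of $\inter$ and $\union$.

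The crucial case is $o=\co{\al}t$, where $o\alu=\co{\al}\bigl((t\alu)\,u\bigr)$. Here $\Phi'$ necessarily ends with $(\muu)$ on $\al$, whose premise types the application $(t\alu)\,u$; by syntax-directedness that premise ends with $(\appet)$, say with left premise $\Gam_1\vdash t\alu:\umult{\JM_j\ftype\WM_j}_{\jJ}\mid\al:\VM_1;\Del_1$ and right premise $\Gam_2\Vdash u:\inter_{\jJ}\choice{\JM_j}\mid\al:\VM_2;\Del_2$; the codomain $\umult{\JM_j\ftype\WM_j}_{\jJ}$ is non-empty by Lemma~\ref{l:non-empty-union} (so $J\neq\es$), and since $\al\notin\fn{u}$, Relevance forces $\VM_2=\eumul$ and $\al\notin\dom{\Del_2}$. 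Tracing through the two rules one reads off $\VM=\VM_1\union\union_{\jJ}\WM_j$, $\Gam'=\Gam_1\inter\Gam_2$, $\Del'=\Del_1\union\Del_2$. Applying the \ih\ to the left premise yields a family $(\IMk,\VMk)_{k\in K_1}$, a derivation $\tri\Gam\vdash t:\umult{\JM_j\ftype\WM_j}_{\jJ}\mid\al:\umult{\IMk\ftype\VMk}_{k\in K_1};\Del$, and an auxiliary derivation $\tri\Gamu_1\Vdash u:\inter_{k\in K_1}\choice{\IMk}\mid\Delu_1$, with $\Gam_1=\Gam\inter\Gamu_1$, $\Del_1=\Del\union\Delu_1$, $\VM_1=\union_{k\in K_1}\VMk$. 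Now apply $(\muu)$ to that derivation of $t$, saving the type $\umult{\JM_j\ftype\WM_j}_{\jJ}$ for $\al$; setting $K:=K_1\uplus J$ with $\IM_j:=\JM_j$ and $\VM_j:=\WM_j$ for $j\in J$, the $\al$-assignment of the conclusion is exactly $\uIVMk$, while $\uVMk=\union_{k\in K_1}\VMk\union\union_{\jJ}\WM_j=\VM$. It remains to produce the auxiliary derivation of $u$: merge the one obtained from the \ih\ with the right premise of the inner $(\appet)$ via Lemma~\ref{l:decomposition}, getting $\Gamu\Vdash u:\inter_{k\in K_1}\choice{\IMk}\inter\inter_{\jJ}\choice{\JM_j}=\iIMsk\mid\Delu$ with $\Gamu=\Gamu_1\inter\Gam_2$ and $\Delu=\Delu_1\union\Del_2$; then $\Gam'=\Gam\inter\Gamu$ and $\Del'=\Del\union\Delu$, as required.

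The main obstacle is precisely this last case: one must see through two stacked rules, $(\muu)$ followed by $(\appet)$, recognise that the arrow types newly saved by $(\muu)$ in the reduct are exactly the extra components indexed by $J$ of the family attached to $\al$ in $o$, and invoke Relevance at the right spot (on the argument $u$, using $\al\notin\fn{u}$) to discard the spurious $\al$-entry. Everything else is routine propagation of the induction hypothesis through syntax-directed rules, the bookkeeping being organised by Lemmas~\ref{l:relevance} and~\ref{l:decomposition}.
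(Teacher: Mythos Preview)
Your proof is correct and follows essentially the same approach as the paper: induction on $\Phi'$ with the key case $o=\co{\al}t$ handled by peeling off the two stacked rules $(\muu)$ and $(\appet)$, applying the \ih\ to the derivation of $t\alu$, and merging the resulting auxiliary derivation of $u$ with the right premise of $(\appet)$ via Lemma~\ref{l:decomposition}. The only presentational difference is that the paper explicitly strengthens the induction hypothesis to cover auxiliary derivations as well (so the \ih\ applies directly to the right premise of $(\appet)$ in the application case), whereas you decompose auxiliary premises into their regular constituents before applying the \ih; both variants work.
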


\begin{proof} By induction on $\Phi'$ using Lemmas~\ref{l:relevance} and~\ref{l:decomposition}. 
See the Appendix for details. 
\end{proof}

\noindent The following property will be used in Section~\ref{s:sn} to 
show that normalization implies typability.


\begin{pty}[{\bf Subject Expansion for $\Slmu$}]
\label{l:se-s}
Assume $\Phi' \rhd \muju{\Gam'}{o':\A}{\Del'}$. If $o\Rew{} o'$
is a non-erasing step, then there is $\Phi \rhd
\muju{\Gam'}{o:\A}{\Del'}$.
\end{pty}

\begin{proof} 
By induction on $\Rew{}$ using Lemma~\ref{l:relevance},
~\ref{l:reverse-substitution}
and~\ref{l:reverse-replacement}. See the Appendix for
  details. 
\end{proof}


\section{Strongly Normalizing $\lmu$-Objects}
\label{s:sn}

\newcommand{\nfof}[1]{\mathtt{nf}(#1)}

\subsection{Type-theoretic characterization, with quantitative bounds}
\label{ss:charac-sn}

In this section, we show the characterization of strongly-normalizing
terms of the $\l\mu$-calculus by means of the typing system introduced
in Section~\ref{s:types-lambda-mu}, \ie\ we show that an $\lmu$-object $o$ is
strongly-normalizing iff $t$ is $\Slmu$-typable. 

As defined in Section~\ref{s:calculus}, for any $\lmu$-object $o$, we
write $\mrl{o}$ for the length of the maximal reduction sequence
starting at $o$.  Remark that $\mrl{o}<\infty$ iff $o\in
\SN{\lmu}$. The following equations will play a key role in our proof
 of Lemma~\ref{l:typable-isn}. 

\begin{lem}
  \label{l:equalities}
\[ \begin{array}{llll}
  \mrl{x\,t_1 \ldots t_n} & = & +_{i=1 \ldots n} \mrl{t_i}\\
  \mrl{\l x.t } & = & \mrl{t}\\
  \mrl{\mu \al.\Com} & = &  \mrl{\Com}\\
  \mrl{\co{\al}t } & = & \mrl{t}\\
  \mrl{(\l x.t)u\, \vec{v}} & = & \mrl{t\isubs{x/u}\vec{v}} + 1 & \mbox{ if }  x \in \fv{t}\\
  \mrl{(\l x.t)u\, \vec{v}} & = & \mrl{u} + \mrl{t\vec{v}} + 1 & \mbox{ if }  x \notin \fv{t}\\
  \mrl{(\mu \al.\Com)u \vec{v} } & = & \mrl{(\mu \al. \Com\ire{\al}{u})\vec{v}} + 1 & \mbox{ if }  \al  \in \fv{\Com}\\
  \mrl{(\mu \al.\Com)u \vec{v} } & = &  \mrl{u} + \mrl{(\mu \al. \Com)\vec{v}} +
 1 & \mbox{ if }  \al  \notin \fv{\Com}\\
  \end{array} \]
\end{lem}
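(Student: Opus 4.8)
The plan is to prove each of the eight equations by considering the structure of the maximal reduction sequences on each side, using the fact (stated in Section~\ref{s:calculus}) that $\mrl{o}$ is well-defined precisely when $o \in \SN{\lmu}$, and the evident observations that (i) $\mrl{-}$ is monotone under putting a term in a context (a reduction in a subterm lifts to a reduction of the whole), and (ii) if $o \Rew{} o'$ then $\mrl{o} \geq \mrl{o'} + 1$, with equality whenever $o \Rew{} o'$ is a step of a maximal sequence from $o$. I would first dispatch the four ``structural'' equations. For $\mrl{\l x.t} = \mrl{t}$, $\mrl{\mu\al.\Com} = \mrl{\Com}$ and $\mrl{\co{\al}t} = \mrl{t}$: every redex in $\l x.t$ (resp.\ $\mu\al.\Com$, $\co{\al}t$) is a redex in $t$ (resp.\ $\Com$, $t$) and conversely, and the context $\l x.\Box$ etc.\ neither creates nor destroys redexes and does not interact with reduction, so reduction sequences correspond bijectively and have the same length; hence the maximal lengths agree. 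For $\mrl{x\,t_1\ldots t_n} = \sum_{i=1}^n \mrl{t_i}$: the head variable $x$ blocks any redex at the root or involving the spine, so every redex lies inside some $t_i$; reductions in distinct $t_i$'s are independent (they commute and do not affect one another), so a maximal sequence from $x\,t_1\ldots t_n$ is obtained by interleaving maximal sequences from each $t_i$, giving total length $\sum_i \mrl{t_i}$ — here one uses that each $t_i \in \SN{\lmu}$ iff the whole term is, which follows since a subterm of an SN term is SN and conversely an SN-in-each-argument head-variable application is SN.

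Next I would treat the four ``redex at the head'' equations, which have the same shape: $\mrl{(\l x.t)u\,\vec v}$ split by whether $x \in \fv{t}$, and the two analogous $\mu$-cases. For the non-erasing $\beta$-case ($x \in \fv{t}$): the term $(\l x.t)u\,\vec v$ has a head redex firing to $t\isubs{x/u}\,\vec v$, so $\mrl{(\l x.t)u\,\vec v} \geq \mrl{t\isubs{x/u}\,\vec v} + 1$. For the reverse inequality I would invoke the standard fact about $\lmu$ (provable by the usual analysis of how reduction in $(\l x.t)u\,\vec v$ can be permuted so as to fire the head redex first, a perpetuality/standardization-type argument) that the head step is part of some maximal reduction sequence; more precisely, any maximal sequence from $(\l x.t)u\,\vec v$ can be reorganized into one that first contracts the head redex, of the same length, so $\mrl{(\l x.t)u\,\vec v} = \mrl{t\isubs{x/u}\,\vec v} + 1$. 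The non-erasing $\mu$-case ($\al \in \fn{\Com}$) is identical with the head redex $(\mu\al.\Com)u\,\vec v \Rew{\mu} (\mu\al.\Com\ire{\al}{u})\,\vec v$. For the erasing $\beta$-case ($x \notin \fv{t}$): then $t\isubs{x/u} = t$, and reductions inside $u$ and reductions in $t\,\vec v$ (the latter including the head step $(\l x.t)u\,\vec v \to t\,\vec v$, which discards $u$) are independent until the head step is fired; a maximal sequence does all the $u$-reductions, all the $t\,\vec v$-reductions, plus the one head step, in some interleaving, for total $\mrl{u} + \mrl{t\,\vec v} + 1$. The erasing $\mu$-case ($\al \notin \fn{\Com}$, so $\Com\ire{\al}{u} = \Com$) is analogous, with the head step $(\mu\al.\Com)u\,\vec v \to (\mu\al.\Com)\,\vec v$ discarding $u$.

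The main obstacle is the reverse inequality in the non-erasing head-redex cases, i.e.\ showing that the head step can be taken first without loss — equivalently, that contracting the head redex of a non-erasing application never increases the maximal reduction length. This requires a perpetuality argument for $\lmu$: one shows that for any one-step reduct $o''$ of $(\l x.t)u\,\vec v$ not obtained by the head step, there is a reduct of $t\isubs{x/u}\,\vec v$ reachable from $o''$, compatibly with lengths. Since $\beta$ is non-erasing here and $\mu$-replacement duplicates the named subterm faithfully, no residual redexes are lost, so the count is preserved; one subtlety is that firing an inner redex may duplicate the head redex (if it lies inside $u$ and $x$ occurs several times in $t$), which is exactly why the hypothesis $x\in\fv{t}$ (resp.\ $\al\in\fn\Com$) — ensuring the substituted/replaced term is used at least once — is needed for the ``$+1$'' rather than a larger increment. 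I would either cite a standard reference for $\lmu$ perpetuality or give the short residual-counting argument inline.
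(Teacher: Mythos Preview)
Your approach is essentially the same as the paper's: the paper cites the Fundamental Lemma of Maximality for the $\lambda$-calculus (van Raamsdonk et al.) and only spells out the two $\mu$-cases, arguing exactly as you do that any reduction sequence to normal form must eventually fire the head redex and can be reorganized to fire it first (or, in the erasing case, to first normalize $u$ and then fire it), giving the required inequality.

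Two small inaccuracies to fix. First, in the non-erasing case you claim the reorganized sequence is ``of the same length''; in fact it is \emph{potentially longer} (if $x$ occurs several times in $t$, each reduction in $u$ performed before the head step becomes several reductions after it), and this is precisely what the paper says. This does not harm the argument --- a longer reorganized sequence still yields $\mrl{(\lambda x.t)u\,\vec v} \le 1 + \mrl{t\isubs{x/u}\,\vec v}$ --- but the equality claim is false as stated. Second, your closing remark that ``firing an inner redex may duplicate the head redex'' has the direction reversed: it is firing the \emph{head} redex that may duplicate redexes lying in $u$, not the other way around. The hypothesis $x \in \fv{t}$ is needed not to control this duplication but simply because when $x \notin \fv{t}$ the head step discards $u$ entirely, so its reductions must be counted separately (whence the other formula).
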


\begin{proof} The proof follows the same lines of that
  of the $\l$-calculus (see~\cite{RaamsdonkSSX99}, Fundamental Lemma of Maximality 3.18).  We only briefly discuss  the two last cases. 
  \begin{itemize}
  \item Let $o = (\mu \al.\Com)u\, \vec{v}$, where
    $\al \in \fv{\Com}$.  Take any reduction sequence $\rho$
    to normal
  form starting at $o$, it is straightforward to see that $\rho$ must
  reduce the head redex $ (\mu \al.\Com)u $, so that $\rho$ has
  necessarily the form $o \Rewn{\lmu} (\mu \al.\Com')u'\, \vec{v'}
  \Rew{\lmu} (\mu \al. \Com'\ire{\al}{u'})\vec{v'} \Rew{\lmu} \ldots$,
  where $\Com \Rewn{\lmu} \Com'$, $u \Rewn{\lmu} u$ and $\vec{v}
  \Rewn{\lmu} \vec{v'}$. Then $\rho$ can be transformed into another
  (potentially longer) reduction
    sequence $\rho'$ of the form $o \Rew{\lmu} (\mu
    \al. \Com\ire{\al}{u})\vec{v} \Rewn{\lmu}
    \mu  \al. \Com'\ire{\al}{u'})\vec{v'}\Rew{\lmu} \ldots$.
    Thus  $\mrl{(\mu \al.\Com)u\, \vec{v} }  \leq  \mrl{(\mu \al. \Com\ire{\al}{u})\vec{v}} + 1$  holds.  The converse inequality is easy.
  \item Let $o = (\mu \al.\Com)u\, \vec{v}$, where $\al \notin \fv{\Com}$.
    Again, take any reduction sequence $\rho$ to normal form starting at $o$,
    it is again straightforward to see that $\rho$ must reduce the head
    redex $ (\mu \al.\Com)u $, so that $\rho$ has necessarily the form
    $o \Rewn{\lmu} (\mu \al.\Com')u'\, \vec{v'} \Rew{\lmu} (\mu
    \al. \Com')\vec{v'} \Rew{\lmu} \ldots$, where $\Com
    \Rewn{\lmu} \Com'$, $u \Rewn{\lmu} u$ and $\vec{v} \Rewn{\lmu}
    \vec{v'}$.  Then $\rho$ can be transformed into another (potentially longer) reduction
    sequence $\rho'$ of the form $o \Rewn{\lmu} (\mu
    \al. \Com) \nfof{u}\, \vec{v} \Rew{\lmu}  (\mu
    \al. \Com)  \vec{v} \Rewn{\lmu}
   (\mu  \al. \Com')\vec{v'}\Rew{\lmu} \ldots$,
    where $\nfof{u}$ denotes the normal form of $u$.
    Thus,  $\mrl{(\mu \al.\Com)u\, \vec{v} } \leq \mrl{u}+ \mrl{(\mu \al. \Com)\vec{v}} + 1$. The converse inequality is easy.
  \end{itemize}
 \end{proof}

\noindent The proof of our main result (Theorem~\ref{t:final}) relies on the following two ingredients:
\begin{itemize}  
\item Every $\Slmu$-typable object is in $\SN{\lmu}$ (Lemma~\ref{l:typable-isn}).  
\item Every object in $\SN{\lmu}$ is $\Slmu$-typable (Lemma~\ref{l:isn-typable}).  
\end{itemize}

We first show that any typable object $o$ belongs to $\SN{\lmu}$. 

\begin{lem} 
\label{l:typable-isn}
If  $o$ is $\Slmu$-typable, \ie\ $\Phi \tri\Gam \vdash o:\Any \mid \Del$, then  $\mrl{o} \leq \sz{\Phi}$. Thus $o  \in  \SN{\lmu}$. 
\end{lem}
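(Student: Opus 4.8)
The plan is to prove the statement by induction on the pair $(\sz{\Phi},|o|)$ ordered lexicographically, with a case analysis on the shape of $o$. All the quantities $\mrl{\cdot}$ are read in $\mathbb{N}\cup\{\infty\}$, so that proving $\mrl{o}\leq\sz{\Phi}$ — a finite number — will at once give $o\in\SN{\lmu}$; the equations of Lemma~\ref{l:equalities} are what let me reduce $\mrl{o}$ to the maximal lengths of strictly smaller objects.

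First I would handle the cases where $o$ is not a redex applied to arguments. If $o$ is $\lx.t$, $\mu\al.\Com$ or $\co{\al}t'$, then $\Phi$ ends with $(\introarrow)$, $(\mud)$ or $(\muu)$ respectively, Lemma~\ref{l:equalities} gives $\mrl o=\mrl{t}$ (resp. $\mrl\Com$, $\mrl{t'}$), and the immediate subderivation has size $\leq\sz\Phi$ — strictly smaller for $(\introarrow)$ and $(\mud)$, possibly equal for $(\muu)$ but then $|o|$ drops — so the induction hypothesis applies. If $o=x\,t_1\cdots t_n$ is headed by a variable, I would peel off the $n$ uses of $(\appet)$ in $\Phi$: the head is axiom-typed with size $1$, and each $t_i$ gets an auxiliary premise $\Theta_i$. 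Since by Lemma~\ref{l:non-empty-union} the function side of every $(\appet)$ carries a non-empty union type, each intersection type $\inter_{\kK}\choice{\IMk}$ is non-empty, so by Lemma~\ref{l:decomposition} each $\Theta_i$ splits into at least one regular derivation $\Phi_{i,j}$ of $t_i$ with $\sz{\Phi_{i,j}}\leq\sz{\Theta_i}<\sz\Phi$; the induction hypothesis gives $\mrl{t_i}\leq\sz{\Theta_i}$, and Lemma~\ref{l:equalities} then yields $\mrl o=\sum_i\mrl{t_i}\leq\sum_i\sz{\Theta_i}<\sz\Phi$.

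The real content is in the cases $o=(\lx.t)u\,\vec v$ and $o=(\mu\al.\Com)u\,\vec v$. When the head-redex step $o\Rew{}o'$ is \emph{non-erasing} ($x\in\fv t$, resp. $\al\in\fn\Com$), Lemma~\ref{l:equalities} gives $\mrl o=\mrl{o'}+1$; weighted subject reduction (Property~\ref{l:sr}) provides $\Phi'\tri o'$ with $\sz{\Phi'}<\sz\Phi$, the induction hypothesis gives $\mrl{o'}\leq\sz{\Phi'}\leq\sz\Phi-1$, and hence $\mrl o\leq\sz\Phi$. The \emph{erasing} cases ($x\notin\fv t$, resp. $\al\notin\fn\Com$) are the crux of the proof, and also the place where full subject reduction is unavailable, so they have to be treated by hand. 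Here Lemma~\ref{l:equalities} gives $\mrl o=\mrl u+\mrl{t\,\vec v}+1$ (resp. $\mrl u+\mrl{(\mu\al.\Com)\vec v}+1$), so I must extract from $\Phi$ a derivation of $u$ \emph{and} a derivation of $t\,\vec v$ (resp. $(\mu\al.\Com)\vec v$). By syntax-directedness and Lemma~\ref{l:relevance}, since $x\notin\fv t$ (resp. $\al\notin\fn\Com$) the union type that $(\introarrow)$ assigns to $\lx.t$ (resp. that $(\mud)$ assigns to $\mu\al.\Com$) is a singleton whose leading arrow has empty domain $\emul$; so the left spine of $\Phi$ already contains a subderivation of $t$ (resp. a $(\mud)$-subderivation of $\mu\al.\Com$, whose blind type may be re-chosen with one fewer leading arrow, exploiting the non-determinism of $(\mud)$), from which, reusing unchanged the premises of $\Phi$ that type $\vec v$, I assemble a derivation $\Phi''$ of $t\,\vec v$ (resp. $(\mu\al.\Com)\vec v$). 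Moreover the $(\appet)$-node at the head of $o$ has a non-empty argument premise $\Theta_u$ — its intersection type is $\choice{\emul}$, hence non-empty — which splits by Lemma~\ref{l:decomposition} into a single regular derivation $\Phi_u$ of $u$ with $\sz{\Phi_u}=\sz{\Theta_u}$. A routine size computation then gives $\sz\Phi=\sz{\Phi''}+\sz{\Theta_u}+c$ with $c\geq1$, so both $\sz{\Phi''}$ and $\sz{\Phi_u}$ are $<\sz\Phi$; applying the induction hypothesis and Lemma~\ref{l:equalities}, $\mrl o=\mrl u+\mrl{t\,\vec v}+1\leq\sz{\Phi_u}+\sz{\Phi''}+1\leq\sz\Phi$. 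The hard part, as anticipated, is precisely this reconstruction: one has to see that the type the redex throws away is irrelevant to the rest of the spine, and accept — as the discussion around Fig.~\ref{fig:erasing-proof-reduction-beta} illustrates — that the variable/name assignments of $\Phi''$ may be strictly smaller than those of $\Phi$, which is harmless since the statement only fixes the type of $o$.
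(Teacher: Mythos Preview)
Your proposal is correct and follows essentially the same route as the paper: induction on the size of the derivation, case analysis on the head of $o$, weighted subject reduction (Property~\ref{l:sr}) for the non-erasing head step, and for the erasing cases the explicit reconstruction of a smaller derivation by re-choosing the blind type raised by $(\mud)$ (or dropping the $(\introarrow)$ node in the $\beta$-case) while reusing the argument premises unchanged. Your use of the lexicographic pair $(\sz{\Phi},|o|)$ instead of $\sz{\Phi}$ alone is a harmless refinement that makes the $(\muu)$ case---where $\ar{\UM}$ may be $0$---go through without having to look one rule further up.
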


\begin{proof} We show $\mrl{o} \leq \sz{\Phi}$
  by induction on $\sz{\Phi}$, where $\Phi \tri\Gam \vdash o:\Any \mid
  \Del$.  When $\Phi$ does not end with the rule $(\app)$ the proof
  holds straightforwardly by the \ih\ so we consider that $\Phi$ ends
  with $(\app)$, where $\Any = \UM$ and $o = x\,t_1 \ldots t_n$ or $o =
  (\mu \al. \Com)t_1 \ldots t_n$ or $o = (\l x.  u)t_1 \ldots t_n$,
  with  $n \geq 1$. \\

In all the three cases for $o$, there are subderivations
$(\Phi_{i})_{i \in \cset{1 \ldots n}}$ such that $\sz{\Phi_{i}} <
\sz{\Phi}$ so that the \ih\ gives 
$\mrl{t_i} \leq \sz{\Phi_{i}}$. Now, there are three different cases
to consider: \\

\begin{itemize}
\item[(1)]
 If $o=x\,t_1 \ldots t_n$, then there are non-empty subderivations $\Phi_1\ldots \Phi_n$ of $\Phi$ typing $t_1\ldots t_n$ respectively. Since $+_{i=1\ldots n} \sz{\Phi_i} + n+1 \leq \sz{\Phi}$, the \ih\ gives $\mrl{t_i}\leq \sz{\Phi_i}$ for $1\leq i \leq n$. We conclude since $ \mrl{x\,t_1 \ldots t_n}  =  +_{i=1 \ldots n} \mrl{t_i}$.    \\
  
\item[(2)] If $o = (\mu \al. \Com)t_1 \ldots t_n$, there are two cases: 
\newcommand{\mujudots}[1]{\muju{\ldots}{#1}{\ldots}}
  \begin{itemize}
 \item $\al \in \fn{\Com}$.  Using  Property~\ref{l:sr}, we get  $\Phi' \tri
   \tyj{(\mu \al. \Com\ire{\al}{t_1}) t_2 \ldots t_n }{\Gam}{\UM \mid
     \Del}$ and $\sz{\Phi'} < \sz{\Phi}$. Then the \ih\ gives $\mrl{(\mu \al. \Com\ire{\al}{t_1}) t_2 \ldots t_n } \leq \sz{\Phi'}$. We conclude since
   $\mrl{o} = \mrl{(\mu \al. \Com\ire{\al}{t_1}) t_2 \ldots t_n } + 1
   \leq   \sz{\Phi'} +1 \leq  \sz{\Phi}$.
   
 \item $\al  \notin \fn{\Com}$. 
Then $\Phi$ is of the form:
   $$
\infer{\infer{
  \infer{\Phi_{\mu\al.\Com}\tri \mujudots{\mu\al.\Com:\umult{\Blind_0}}\sep \infer{\Phi_{1}\tri \mujudots{t_1:\VM_1}}{ \mujudots{t_1:\mult{\VM_1}}}}{ \mujudots{(\mu \al.\Com)t_1:\umult{\Blind_1}}} }{\vdots}\sep \infer{\Phi_{n}\tri \mujudots{t_n:\VM_n}}{ \mujudots{t_n:\mult{\VM_n}}}}{ \mujudots{o:\umult{\Blind_n}}}
$$
with $\UM=\umult{\Blind_n}$.
The type $\umult{\Blind_0}$ is obtained by choice because Lemma~\ref{l:relevance} guarantees that no $\alpha$ is typed in the name assignments of the left derivations. Note that $\sz{\Phi}=+_{i=1\ldots n}\sz{\Phi_i}+\sz{\Phi_{\mu \al.c}}+n$ since the $\Blind_i$ are all blind.
We then build the following derivation:
 $$\Phi_{o'}=\infer{
   \infer{\Phi'_{\mu\al.\Com}\tri \mujudots{\mu\al.\Com:\umult{\Blind_1}}\sep \infer{\Phi_{2}\tri \mujudots{t_2:\VM_2}}{ \mujudots{t_2:\mult{\VM_2}}} }{\vdots}\sep \infer{\Phi_{n}\tri \mujudots{t_n:\VM_n}}{ \mujudots{t_n:\mult{\VM_n}} } }
 { \mujudots{o':\umult{\Blind_n}}}
 $$
 with $o'=(\mu\al.\Com)t_2\ldots t_n$ and $\Phi'_{\mu\al.\Com}$ the
 derivation $\Phi_{\mu\al.\Com}$ where the choice operator
 raises $\Blind_1$ instead of $\Blind_0$ (actually, any blind type of arity $\geq n-1$ would do). In particular,
 $\sz{\Phi'_{\mu\al.\Com}}=\sz{\Phi_{\mu\al.\Com}}$, so that
 $\sz{\Phi_{o'}}=+_{i=2\ldots
   n}\sz{\Phi_i}+\sz{\Phi_{\mu\al.\Com}}+n-1< \sz{\Phi}$.
 We also have $\sz{\Phi_1} < \sz{\Phi}$. The \ih\ then gives $\mrl{o'} \leq \sz{\Phi_{o'}}$ and  $\mrl{t_1} \leq \sz{\Phi_{1}}$.  We conclude since:
 $$
\begin{array}[t]{c@{\;}l@{\;}l}
  \mrl{(\mu\al.\Com)t_1\ldots t_n} & = & \mrl{(\mu\al.\Com)t_2\ldots t_n}+\mrl{t_1}+1 \\
  & \leq_{\ih} & \sz{\Phi_{o'}}+\sz{\Phi_{1}}+1\\
  & = &  +_{i=2\ldots n}\sz{\Phi_i}+\sz{\Phi_{\mu\al.\Com}}+n-1 +\sz{\Phi_{1}}+1 \\
  & = & \sz{\Phi}  
\end{array}$$

 \end{itemize}
\item[(3)] If $o=(\l x. u)t_1 \ldots t_n$, we reason similarly to the
  previous case. \qedhere
\end{itemize}
\end{proof}

\begin{lem} 
  \label{l:isn-typable}
  If $o\in \SN{\lmu}$, then $o$ is $\Slmu$-typable.
\end{lem}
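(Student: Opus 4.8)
The plan is to argue by induction on the pair $\langle \mrl{o}, |o| \rangle$ ordered lexicographically; this is well-founded because $o \in \SN{\lmu}$ gives $\mrl{o} < \infty$, and Lemma~\ref{l:equalities} is exactly what is needed to see the first component decrease in the interesting cases. I would case split on the shape of $o$, decomposing every application into a head $h$ (a variable, a $\l$-abstraction, or a $\mu$-abstraction) applied to a sequence of arguments. If $o = x$, type it by $(\ax)$, say $x:\mult{\umult{\bta}} \vdash x:\umult{\bta} \mid \emptyset$. If $o = \l x. t$, then $t \in \SN{\lmu}$ with $\mrl{t} = \mrl{o}$ but $|t| < |o|$, so the \ih\ gives a typing of $t$ and $(\introarrow)$ produces one for $o$ (using $\Gam(x) = \emul$ when $x \notin \fv{t}$, granted by relevance, Lemma~\ref{l:relevance}); the cases $o = \mu\al.\Com$ and $o = \co\al t$ are handled identically with $(\mud)$ and $(\muu)$, the choice operator in $(\mud)$ supplying an arbitrary blind type when $\al \notin \fn{\Com}$. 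If $o = x\,t_1 \cdots t_n$ with $n \geq 1$, then each $t_i \in \SN{\lmu}$ with $\mrl{t_i} \leq \sum_j \mrl{t_j} = \mrl{o}$ and $|t_i| < |o|$, so the \ih\ gives $\Gam_i \vdash t_i : \VM_i \mid \Del_i$ with $\VM_i \neq \eumul$ (Lemma~\ref{l:non-empty-union}); I would then assign to $x$ a type absorbing the $n$ arguments, $\umult{\mult{\VM_1} \ftype \umult{\mult{\VM_2} \ftype \cdots \ftype \umult{\mult{\VM_n} \ftype \umult{\bta}}}}$, and apply $(\appet)$ $n$ times, each time using $\choice{\mult{\VM_i}} = \mult{\VM_i}$ since $\mult{\VM_i} \neq \emul$.

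For a head redex, I would distinguish erasing from non-erasing. If $o = (\l x. t)u\,\vec{v}$ with $x \in \fv{t}$, or $o = (\mu\al.\Com)u\,\vec{v}$ with $\al \in \fn{\Com}$, the head step $o \Rew{} o'$ is non-erasing and $\mrl{o'} = \mrl{o} - 1 < \mrl{o}$ by Lemma~\ref{l:equalities}; the \ih\ gives a typing of $o'$, and Subject Expansion (Property~\ref{l:se-s}) lifts it to a typing of $o$. This case is routine.

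The hard part is the erasing head redex. Take $o = (\l x. t)u\,\vec{v}$ with $x \notin \fv{t}$ (the $\mu$-case is symmetric); by $\alpha$-conversion assume $x \notin \fv{v_i}$. The head step $o \Rew{} t\,\vec{v}$ is erasing, so Subject Expansion does not apply, but Lemma~\ref{l:equalities} gives $\mrl{o} = \mrl{u} + \mrl{t\,\vec{v}} + 1$, so both $u$ and $t\,\vec{v}$ have strictly smaller $\mrl{}$ and the \ih\ applies to both. It gives $\Gam_u \vdash u : \VM_u \mid \Del_u$ with $\VM_u \neq \eumul$, hence $\Gam_u \Vdash u : \mult{\VM_u} \mid \Del_u$ by $(\many)$, and a derivation $\Psi$ of $t\,\vec{v}$. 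Inside $\Psi$, the head subterm $t$ is typed by a single regular subderivation $\Psi_t \tri \Gam_t \vdash t : \VM_t \mid \Del_t$, and $\Gam_t(x) = \emul$ by relevance since $x \notin \fv{t}$. I would then rebuild a derivation of $o$: apply $(\introarrow)$ to $\Psi_t$ to get $\l x.t : \umult{\emul \ftype \VM_t}$; apply $(\appet)$ to it and $u$ — the only constraint on the argument is $\Gam_u \Vdash u : \choice{\emul}$, satisfied by $\mult{\VM_u}$ since $\VM_u \neq \eumul$ — obtaining a derivation of $(\l x.t)u$ with the same type $\VM_t$ and assignments $\Gam_t \inter \Gam_u$, $\Del_t \union \Del_u$; finally, replace $\Psi_t$ by this derivation inside $\Psi$ and re-attach the unchanged $(\appet)$-spine consuming $\vec{v}$, yielding a typing of $o = (\l x.t)u\,\vec{v}$ with $\Gam_u, \Del_u$ merged into the conclusion. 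For the $\mu$-variant $o = (\mu\al.\Com)u\,\vec{v}$ with $\al \notin \fn{\Com}$, the \ih\ types $(\mu\al.\Com)\,\vec{v}$; there $\mu\al.\Com$ is typed with a blind type $\umult{\Blind_0}$ (forced by $\al \notin \fn{\Com}$ and the choice operator in $(\mud)$, via Lemma~\ref{l:relevance}), and since $\umult{\emul \ftype \umult{\Blind_0}}$ is again blind, I can re-derive $\mu\al.\Com$ with that type and apply $(\appet)$ to $u$ to get $(\mu\al.\Com)u : \umult{\Blind_0}$; plugging this back and re-attaching the spine finishes as before.

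I expect the main obstacle to be exactly this last case: Subject Expansion is available only for non-erasing steps, so for an erasing head step the derivation of $o$ must be reconstructed by hand around the head subterm of the reduct's derivation, and the point that makes this work — re-typing the erased argument $u$ through the choice operator of $(\appet)$, and, in the $\mu$-case, re-typing $\mu\al.\Com$ with a larger blind type so the spine above it still type-checks — is precisely where the design of $\Slmu$ pays off; it is the mirror image, for anti-reduction, of the flexibility already exploited in the erasing case of the proof of Lemma~\ref{l:typable-isn}.
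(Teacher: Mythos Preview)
Your proposal is correct and follows essentially the same approach as the paper: induction on $\langle \mrl{o}, |o|\rangle$, Subject Expansion (Property~\ref{l:se-s}) for the non-erasing head redex, and a manual reconstruction via the choice operator for the erasing head redex. The only cosmetic difference is that the paper details the erasing $\mu$-case (re-typing $\mu\al.\Com$ with the larger blind type $\emul \ftype \umult{\Blind_1}$ so that $(\mu\al.\Com)t_1$ recovers the original blind type and the spine above is unchanged) and leaves the erasing $\beta$-case as analogous, whereas you detail the $\beta$-case first; the two reconstructions are exactly the mirror images you describe.
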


\begin{proof}
  By induction on $\lexnl{o}$. If $o$ is not an application $t\,u$, the inductive step is straightforward. We only detail the case when $o$ is an application.
  \begin{itemize}
\item If $o=x\,t_1\ldots t_n$, the property is straightforward.
  \item If $o = (\mu \al. \Com)t_1 \ldots t_n$, we set $o'=(\mu\al.\Com\ire{\al}{t_1})t_2\ldots t_n$. There are two cases: 
\begin{itemize}
 \item $\al \in \fn{\Com}$. Then $\mrl{o}=\mrl{o'}+1$.  By the \ih,  there is $\Phi_{o'}\tri \muju{\Gam'}{o':\UM}{\Del'}$. By Property~\ref{l:se-s}, there is $\Phi\tri \muju{\Gam'}{o:\UM}{\Del'}$ and we are done.
 \item  $\al \notin \fn{\Com}$. Then $o'=(\mu\al.\Com)t_2\ldots t_n $ and $\mrl{o}=\mrl{o'}+\mrl{t_1}+1$. By the \ih, there are typing derivations  $\Phi_{o'}\tri \muju{\Gam'}{o':\UM}{\Del'}$ and $\Phi_1\tri \muju{\Gam_1}{t_1:\VM_1}{\Del_1}$. Since $\al \notin \fn{\Com}$, Lemma~\ref{l:relevance} entails that $\Phi_{o'}$ is of the following form, where $\Blind_1$ is a blind type:
    $$
\infer{
  \infer{\Phi'_{\mu\al.\Com}\tri \mu\al.\Com:\umult{\Blind_1}\sep \infer{\Phi_2\tri t_2:\VM_2}{t_2:\mult{\VM_2}}
  }{\vdots}\sep \infer{\Phi_{n}\tri t_n:\VM_n}{ t_n:\mult{\VM_n}}}{\muju{\Gam'}{o:\umult{\Blind_n}}{\Del'}}
$$
\end{itemize}
We then set $\Phi$ as follows:
$$
\infer{
  \infer{\infer{\Phi_{\mu\al.\Com}\tri \mu\al.\Com:\umult{\Blind_0}\sep
\infer{\Phi_1\tri \muju{\Gam_1}{t_1:\VM_1}{\Del_1}}{t_1:\mult{\VM_1}}
    }{
(\mu\al.\Com)t_1:\umult{\Blind_1}
}      \sep \infer{\Phi_2\tri t_2:\VM_2}{t_2:\mult{\VM_2}}
  }{\vdots}\sep \infer{\Phi_{n}\tri t_n:\VM_n}{ t_n:\mult{\VM_n}}}{\muju{\Gam'+\Gam_1}{o:\umult{\Blind_n}}{\Del'+\Del_1}}
$$
where $\Phi_{\mu\al.\Com}$ is exactly as $\Phi'_{\mu\al.\Com}$ except
that we raise the blind type $\Blind_0=\emul \rew \umult{\Blind_1}$
instead of $\Blind_1$ (actually, any blind type of arity
$\geq n$ would do).

\item If $o=(\lx.t)u\,\vec{v}$, we reason similarly to the previous case. \qedhere

  \end{itemize}
\end{proof}

Lemmas~\ref{l:typable-isn} and~\ref{l:isn-typable}  allow us to conclude 
with the main result of this paper which is  the equivalence between typability and 
strong-normalization for the $\lmu$-calculus.  Notice that no 
reducibility argument was used in the  whole proof. 

\begin{thm}
  \label{t:final}
Let $o \in \objects{\lmu}$. Then $o$ is typable in system $\Slmu$ iff
$o \in \SN{\lmu}$.  Moreover, if $o$ is $\Slmu$-typable with tree
derivation $\Pi$, then $\sz{\Pi}$  gives an upper bound to the maximal
length of a reduction sequence starting at $o$. 
\end{thm}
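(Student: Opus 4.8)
The plan is to obtain Theorem~\ref{t:final} as an immediate consequence of the two characterization lemmas already established, Lemma~\ref{l:typable-isn} and Lemma~\ref{l:isn-typable}; no new reducibility argument or induction is needed at this point, which is precisely the payoff of the quantitative machinery developed so far.

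First I would treat the forward implication together with the quantitative bound in one go. Assume $o$ is $\Slmu$-typable, say $\Pi \tri \Gam \vdash o : \Any \mid \Del$. By Lemma~\ref{l:typable-isn} we get $\mrl{o} \leq \sz{\Pi}$. Since $\sz{\Pi}$ is a natural number, hence finite, this forces $\mrl{o} < \infty$, and by the remark in Section~\ref{ss:charac-sn} (namely $\mrl{o} < \infty$ iff $o \in \SN{\lmu}$) we conclude $o \in \SN{\lmu}$. The very same inequality $\mrl{o} \leq \sz{\Pi}$ is exactly the claimed upper bound on the length of any reduction sequence starting at $o$, because $\mrl{o}$ is by definition the length of the maximal such sequence.

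For the converse, assume $o \in \SN{\lmu}$. Then Lemma~\ref{l:isn-typable} directly yields a derivation $\Pi \tri \Gam \vdash o : \Any \mid \Del$ in $\Slmu$, so $o$ is $\Slmu$-typable. Combining the two directions gives the equivalence ``$o$ is $\Slmu$-typable iff $o \in \SN{\lmu}$'', and the forward direction additionally carries the quantitative bound; this completes the proof.

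It is worth stressing where the real work sits, since the present theorem is purely a packaging step. Lemma~\ref{l:typable-isn} is proved by induction on $\sz{\Phi}$ and crucially relies on the Weighted Subject Reduction property (Property~\ref{l:sr}) to strictly decrease the measure along non-erasing head steps, on the equalities of Lemma~\ref{l:equalities} to reorganize a maximal reduction sequence so that the head redex is contracted first (including the delicate $\mu$-cases where $\al \notin \fn{\Com}$, in which the blind-type choice raised by the $(\mud)$-rule must be re-tuned to an arity large enough for the remaining arguments), and on Relevance (Lemma~\ref{l:relevance}) to guarantee that no name is spuriously typed. Dually, Lemma~\ref{l:isn-typable} is proved by induction on $\lexnl{o}$ using Subject Expansion (Property~\ref{l:se-s}). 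Hence the only genuine obstacle — that full subject reduction fails on erasing redexes, so that the measure argument and the expansion argument both have to be confined to the non-erasing case and the erasing case handled separately via Lemma~\ref{l:equalities} — has already been overcome inside those two lemmas, and nothing of that difficulty resurfaces here.
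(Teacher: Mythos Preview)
Your proposal is correct and matches the paper's own approach: the theorem is obtained directly by combining Lemma~\ref{l:typable-isn} (typability implies $\mrl{o}\leq\sz{\Pi}$, hence $o\in\SN{\lmu}$ and the bound) with Lemma~\ref{l:isn-typable} (strong normalization implies typability), with no further argument required. The paper states this in a single sentence; your write-up simply expands the packaging and adds useful commentary on where the actual work was done.
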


\ignore{
\delia{CETTE PARTIE s'en va!!}
\modifref{ To prove the second part of the statement, \ie\ that
  $\sz{\Pi}$ bounds the length of every reduction sequence starting at
  $o$, it is sufficient to endow the system with \emph{non-relevant}
  axioms for variables and names, \ie\ to replace the rule $(\ax)$
  with the following weakened form:
$$
\infer{\UM \neq \eumul}{\muju{\Gam \inter x:\mult{\UM}}{x:\UM}{\Del}}\ (\mathtt{axw})  
$$ This extension of system $\Slmu$, called $\Slmu'$,  which does not satisfy subject expansion, is however sufficient to guarantee
\textit{weighted} subject reduction for erasing as well as for
non-erasing steps. The definition of $\sz{\_}$ in this new extended system
counts an $(\axw)$-rule for 1.  Thus, if $o$ is $\Slmu$-typable, let say by means of a typing derivation $\Pi$, then $o$ is also typable by means of the same derivation $\Pi$, with the same size $\sz{\Pi}$,  in the
extended system $\Slmu'$, and this is
  because rule $(\mathtt{axw})$ subsumes $(\ax)$.  By \textit{weighted} subject reduction in system
  $\Slmu'$, $\sz{\Pi}$ gives the expected upper bound.  
}
}

\modifrefb{
  \subsection{Discussion}
\label{ss:SN-vs-HN}
As we have observed in the end of Section~\ref{s:forward}, subject
reduction  for erasing steps fails in system $\Slmu$. The same is true
for subject expansion. This naturally rise the question: why subject
reduction and expansion should hold in system $\Hlmu$ (related to HN)
and not in $\Slmu$ (related to SN)? The difference of treatment lies
in the semantics of what these two systems are designed to capture:
  \begin{itemize}
  \item $\Hlmu$ captures head normalization (Theorem~\ref{t:hn-lmu}) \ie\ a $\lmu$-object $o$ reduces to a HNF iff $o$ is
      $\Hlmu$-typable.   The
      first crucial observation about these first results is that no
      "semantic information" with respect to head normalization is
      lost during (arbitrary) reduction. We make this notion more
      concrete below.  Indeed, if $o \rew o'$ (whether this step is
      erasing or not), then, by confluence, $o$ is HN iff $o'$ is HN,
      and this is why we want both subject reduction and subject
      expansion for $\Hlmu$ to hold. Moreover, $o$ and $o'$ have head
      normal forms of the same shape. Intuitively, it is no more
      difficult to prove that $o'$ is HN than to prove that $o$ is HN, 
      and vice-versa. Of course, the step $o\rew o'$ may erase some
      reduction paths that exist in $o$, but these (erased) reduction paths are irrelevant
      to the fact that $o$ is HN or not: no semantic information
      pertaining to HN has been lost. This situation is very different
    in system $\Slmu$. 
    
\item $\Slmu$ captures strong normalization (Theorem~\ref{t:final}), which is different from weak and head normalization because it is
  a property about the finiteness of \emph{all} reduction paths, and not about the existence of \emph{at least one} reduction path
  to a normal form. This difference is materialized by
  the following three  observations.
    \begin{itemize}
    \item If $o\rew o'$, then ``$o$ is SN'' and ``$o'$ is SN'' are not equivalent propositions, \eg $(\lx.y)\Om\rew y$, $(\lx.y)\Om$ is not SN whereas $y$ is. In particular, no type system characterizing SN satisfies subject expansion.
    
    \item If $o \rew o'$ and ``$o$ is SN'', then ``$o'$ is always
      SN'', but some important semantic information may be lost with
      respect to strong normalization!  For instance, if $u$ is SN but
      the normal form of $u$ cannot be reached in less than 1000
      reductions steps, then $(\lx.y)u$ is also SN, but the (erasing)
      reduction step $(\lx.y)u\rew y$ loses information regarding the
      reduction paths starting at $(\lx.y)u$: the reduction
      obliterates reduction paths in $u$ (intuitively, on may not know
      if $y$ originates from $(\lx.y)u$, or $(\lx.y)x$, or
      $(\lx.y)\Om$\ldots).  This is a loss, which explains why full
      subject reduction does not hold for system $\Slmu$.  Moreover,
      this also suggests that full subject reduction would arguably be
      less \emph{faithful} to the semantics of strong normalization.
      Note that (full) subject reduction could be obtained by just
      allowing weakening in the typing system, while
      preserving the characterization theorem. But
      weakening does not restore  subject expansion, since SN is not
      stable under expansion.
  \item If $o\rew o'$ and ``$o'$ is SN implies $o$ is SN'',
    then the reduction step is called \emph{perpetual}: perpetual
    strategies  are precisely
    those that are used to study strong normalization.
      \end{itemize} Again, it is interesting to note that, when $o \rew o'$
    is \emph{non-erasing}, then $o$ is SN iff $o'$ is SN, and
    any reduction path in $o$ has residuals in $o'$.  This explains why a typing system for strong normalization should satisfy
      subject reduction and subject expansion for
      \emph{non-erasing} steps (which is sufficient to prove the characterization theorem),  while this is not necessary for erasing steps.
      \end{itemize}
  These  observations summarize why our typing systems enjoy
  full subject reduction and subject expansion in one case ($\Hlmu$) and only
  subject reduction and subject expansion for non-erasing steps in
  the other ($\Slmu$). 
}

 
\section{The $\lmuex$-calculus}
\label{s:lmuex}

This section introduces the syntax (Section~\ref{s:syntax-ex}) and the
operational semantics (Section~\ref{s:operational-ex}) of the
$\lmuex$-calculus, a \modifref{small-step refinement of} $\lmu$, \modifrefb{ for which
      the typing system $\Slmu$ naturally extends}.  The restriction of the
$\lmuex$-calculus to intuitionistic logic is known as the
\textit{linear substitution calculus}~\cite{ABKL14}, \modifref{deeply
  studied in} rewriting theory and  complexity analysis.

\subsection{Syntax}
\label{s:syntax-ex}

The set of 
\deft{objects} ($\objects{\lmuex}$), \deft{terms} ($\terms{\lmuex}$) and 
\deft{commands} ($\commands{\lmuex}$) of the $\lmuex$-calculus
are  given by the following grammars
\[ \begin{array}{llll}
   (\textbf{objects})       & o & ::= & t \mid \Com \\
   (\textbf{terms})       & t,u & ::= & x \mid \l x. t \mid t u  \mid \mu \al. \Com \mid t[x/u]  \\
   (\textbf{commands})      & \Com & ::= & \co{\al} t \mid \Com \rempl{\al}{\beta}{u}  \\
   \end{array} \]
The construction $[x/u]$ (resp. $\rempl{\al}{\beta}{u}$) is called an
\deft{explicit substitution (ES)} (resp. \deft{explicit replacement
  (ER)}). Remark that ES do not apply to commands and ER do not apply
to terms.  An ES $[x/u]$ implements the \textit{meta-substitution} operator 
$\isubs{x/u}$ while an ER $\rempl{\al}{\beta}{u}$ implements the \textit{fresh}
 replacement meta-operator $\ire{\al}{\beta.u}$ introduced in Section~\ref{s:operational}, \ie\ the  small step computation of $\Com \rempl{\al}{\beta}{u}$
replaces  only one occurrence of 
$\co{\al} t$ inside $\Com$ by $\co{\beta} t\rempl{\al}{\beta}{u}u$.
As in Section~\ref{s:syntax}, the \deft{size of an object} $o$
is denoted by $|o|$.  

The notions of \deft{free} and \deft{bound  variables} and \deft{names} are extended as expected, in
particular $\fv{t[x/u]} := (\fv{t} \sm \cset{x}) \cup \fv{u}$
and $\fn{\Com \rempl{\al}{\beta}{u}} := (\fn{\Com} \sm \cset{\al}) \cup \cset{\beta} \cup \fn{u}$.
The derived  notion of $\alpha$-conversion (\ie\
renaming of bound variables and names) will be assumed in the rest of the paper.
 Thus \eg\ $(\co{\gamma} x[x/y])\rempl{\gamma}{\beta}{z} =_{\al} 
(\co{\gamma'} x'[x'/y])\rempl{\gamma'}{\beta}{z}$. 
The \deft{number of free occurrences} of the variable $x$ (resp. the name $\al$) in 
$o$ is denoted by $|o|_x$ (resp. $|o|_\al$). 

\deft{List} ($\slist$), \deft{term} ($\cxtt,\cxct,\cxot$), and \deft{command} ($\cxtc,\cxcc,\cxoc$) \deft{contexts} are respectively defined by the following grammars:
\[ \begin{array}{llll}
 \slist     ::=  \Box \mid \slist [x/u] \\
 \cxtt     ::=  \Box \mid  \l x. \cxtt \mid \cxtt\,  t  \mid t\, \cxtt 
                                            \mid \mu \al. \cxct \mid \cxtt [x/t] \mid  t[x/\cxtt] \\
 \cxct     ::=    \co{\al} \cxtt  \mid \cxct \rempl{\al}{\beta}{u}  \mid 
                                            \Com \rempl{\al}{\beta}{\cxtt} \\
 \cxot    ::= \cxtt \mid \cxct  \\
 \cxtc     ::=  \l x. \cxtc \mid \cxtc\,  t  \mid t\, \cxtc 
                                            \mid \mu \al. \cxcc \mid \cxtc [x/t] \mid  t[x/\cxtc] \\
 \cxcc     ::=  \boxdot \mid \co{\al} \cxtc  \mid \cxcc \rempl{\al}{\beta}{u}  \mid 
                                            \Com \rempl{\al}{\beta}{\cxtc} \\
 \cxoc   ::= \cxtc \mid \cxcc  \\
\end{array} \]
 The hole $\Box$ (resp. $\boxdot$) can be replaced by a term (resp. a
 command)\modifref{. Indeed}, $\slist[t]$ denotes the replacement of $\Box$ in
 $\slist$ by the term $t$ (similarly for $\cxtt[t]$, $\cxct[t]$ and
 $\cxot$), while $\cxcc[\Com]$ denotes the replacement of $\boxdot$ in
 $\cxcc$ by the command $\Com$ (similarly for $\cxtc$ and $\cxoc$).
   Every meta-expression
   $\mathtt{X}\mathtt{Y}$  with $\mathtt{X} \in
   \cset{\mathtt{T},\mathtt{C}}$ and
   $\mathtt{Y} \in
   \cset{\mathtt{T},\mathtt{C}}$ 
   must be interpreted as a context \modifref{taking an object $\mathtt{Y}$}
   and yielding an object $\mathtt{X}$: \eg\ $ \cxtc$
   denotes a context that takes a command ($\mathtt{C}$ on the right) and outputs a term ($\mathtt{T}$ on the left).
  
We write $\cxot^{\sset}$ for a term context $\cxot$
which does not capture the free variables and names in the set $\sset$, \ie\ there are no abstractions and substitutions in the context
that bind the symbols  in $\sset$. For instance $\cxtt  = \l
y. \Box$ can be specified as $\cxtt^x$ while $\cxtt  = \l x. \Box$
cannot. In order to emphasize this particular property we may write
$\cxtt^{\sset}\cwc{t}$ instead of $\cxtt^{\sset}[t]$,
and we may omit $\sset$ when it is clear from the context. Same concepts
apply to command contexts, \ie\ $\cxoc^{\sset}$ does not capture the variables and names in $\sset$ and the notation used for that is $\cxoc^{\sset}\cwc{\Com}$. 

\subsection{Operational Semantics}
\label{s:operational-ex}

The reduction rules of the $\lmuex$-calculus aim to give a
 \modifref{small-step semantics} to the $\lmu$-calculus, based on the {\it
  substitution/replacement at a distance} paradigm~\cite{AK10,ABKL14}.
 The reduction relation $\lmuex$ of the calculus is given by the
context closure  of the following rewriting rules.
\begin{center}
  $ \begin{array}{l*{4}{@{\hspace{.2cm}}l}}
   \slist[\l x. t]\, u
    & \rrule{\B} & \slist[t [x/u]]   \\
   \cxtt\cwc{x}[x/u] &  \rrule{\cntrs} &   \cxtt\cwc{u}[x/u] & \mbox{if } |\cxtt\cwc{x}|_x >1  \\
   \cxtt\cwc{x}[x/u] &  \rrule{\ders}&   \cxtt\cwc{u} & \mbox{if } |\cxtt\cwc{x}|_x =1 \\
   t[x/u]  &  \rrule{\Gcs}&   t & \mbox{if } x \notin \fv{t} \\ 
   \slist[\mu \al. \Com ]\, u 
     &   \rrule{\Mu}&  \slist[\mu \gamma. \Com\rempl{\al}{ \gamma}{u}] &  \mbox{if } \gamma \mbox{ is fresh } \\
\cxcc\cwc{\co{\al} t} \rempl{\al}{ \gamma}{u} &  \rrule{\cntrr} &   \cxcc\cwc{\co{\gamma}t u} \rempl{\al}{ \gamma}{u} & \modifrefb{\mbox{if } |
\cxcc\cwc{\co{\al}t]}|_\al > 1} \\
\cxcc\cwc{\co{\al}t} \rempl{\al}{ \gamma}{u} &  \rrule{\derr} &   \cxcc\cwc{\co{\gamma}t u} & \mbox{if } |\cxcc\cwc{\co{\al}t}|_\al =1  \\
\Com  \rempl{\al}{ \gamma}{u} &  \rrule{\Gcr} & \Com &  \mbox{if } \al  \notin \fn{\Com} \\
  \end{array}$
  \end{center}
where $\cxtt$ is to be understood as $\cxtt^x$ and
$\cxcc$ as $\cxcc^{\al,\gamma}$.

We use  $\Rew{\Gc}$ for the reduction relation generated by 
the set of rules $\{ \rrule{\Gcs}, \rrule{\Gcr}\}$ 
and  $\Rew{\nonelmuex}$ for  the \textbf{non-erasing reduction relation} $\Rew{\lmuex} \sm \Rew{\Gc}$. For instance, the big step reduction 
\begin{center}
  $(\mu
\al.\co{\al}x(\mu\beta.\co{\al}\l x.xx))u\rew_\mu \mu
\gamma.\co{\gamma}x(\mu\beta.\co{\gamma}(\l x. xx)u))u$
\end{center}
where $\al$ has
been alpha-renamed to  $\gamma$,  can be now emulated by 3 small steps :
$$\begin{array}{l}
(\mu \al.\co{\al}x(\mu\beta.\co{\al}\l x.xx))u \\
\rew_{\Mu} \mu \gamma.(\co{\al}x(\mu\beta.\co{\al}\l x.xx))\rempl{\al}{\gamma}{u} \\
\rew_{\cntrr} \mu \gamma.(\co{\al}x(\mu\beta.\co{\gamma}(\l x.xx)u))\rempl{\al}{\gamma}{u} \\
\rew_{\derr} \mu \gamma.\co{\gamma}x(\mu\beta.\co{\gamma}(\l x. xx)u))u
\end{array}$$
Notice that the occurrences of $\al$ are (arbitrarily) replaced by
$\gamma$ one  after another, thus  replacement is \textit{linearly} processed. When
there is just one occurrence of $\al$ left, the small reduction step
$\derr$ performs the last replacement  and erases the remaining ER $\rempl{\al}{\gamma}{u}$ to complete the operation.

More generally, not only the syntax of the  $\lmuex$-calculus can be seen as a refinement of the $\lmu$-calculus,
but also its operational semantics. Formally, \medskip

\begin{lem}
If $o \in \objects{\lmu}$, then $o \Rew{\lmu} o'$ implies $o \Rewplus{\lmuex} o'$.  
\end{lem}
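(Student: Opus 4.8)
The plan is to show, by induction on the context in which the reduced redex sits, that a single $\lmu$-step is simulated by a non-empty sequence of $\lmuex$-steps. Since $\Rew{\lmu}$ is the contextual closure of $\rrule{\beta}$ and $\rrule{\mu}$, and since every $\lmu$-context is in particular a $\lmuex$-context (the grammars defining $\cxtt$ and $\cxct$ extend those defining $\cxterm$ and $\cxcommand$ from Section~\ref{s:operational}), the non-trivial context cases follow at once from the induction hypothesis together with the stability of $\Rew{\lmuex}$ under $\lmuex$-contexts. It thus suffices to handle a redex at the root of an object $o\in\objects{\lmu}$; in particular $o$ contains no explicit substitution or replacement, so the outermost list context $\slist$ appearing in rules $\rrule{\B}$ and $\rrule{\Mu}$ is just $\Box$.

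For a $\beta$-step $o=(\l x.t)u\rrule{\beta}t\isubs{x/u}=o'$, first fire $\rrule{\B}$, yielding $o\Rew{\lmuex}t[x/u]$, and then unfold the explicit substitution by the following auxiliary claim: for pure $\lmu$-terms $t$ and $u$ with $x\notin\fv{u}$ (which one may assume by $\alpha$-renaming the variable bound in $\l x.t$), one has $t[x/u]\Rewn{\lmuex}t\isubs{x/u}$. The plan is to prove this by induction on the number of free occurrences of $x$ in $t$: if it is $0$, one $\rrule{\Gcs}$-step suffices; if it is $1$, one $\rrule{\ders}$-step; if it is at least $2$, one $\rrule{\cntrs}$-step copies $u$ onto a single occurrence, and because $x\notin\fv{u}$ this strictly decreases the number of free occurrences of $x$, so the induction hypothesis applies (the residual $[x/u]$ being discarded by the final $\ders$, or by $\Gcs$). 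Composing the $\B$-step with this sequence gives $o\Rewplus{\lmuex}o'$, the $\B$-step being a genuine step.

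For a $\mu$-step $o=(\mu\al.\Com)u\rrule{\mu}\mu\al.\Com\ire{\al}{u}=o'$, the plan is to use the equivalent presentation of $\mu$-reduction recalled in Section~\ref{s:operational}: $o'$ is $\alpha$-equivalent to $\mu\gamma.\Com\ire{\al}{\gamma.u}$ with $\gamma$ fresh. Fire $\rrule{\Mu}$, obtaining $o\Rew{\lmuex}\mu\gamma.\Com\rempl{\al}{\gamma}{u}$, and then reduce the explicit replacement by an auxiliary claim analogous to the previous one: for $\gamma$ fresh and, up to $\alpha$-renaming the name bound in $\mu\al.\Com$, $\al\notin\fn{u}$, one has $\Com\rempl{\al}{\gamma}{u}\Rewn{\lmuex}\Com\ire{\al}{\gamma.u}$. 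This would again be proved by induction on the number of free occurrences of $\al$ in $\Com$, using $\rrule{\Gcr}$, $\rrule{\derr}$ and $\rrule{\cntrr}$ respectively; the only extra bookkeeping is to check that, after a $\rrule{\cntrr}$-step followed by the induction hypothesis, the resulting command is exactly $\Com\ire{\al}{\gamma.u}$, which amounts to the routine identity $(\cxcc\cwc{\co{\gamma}t\,u})\ire{\al}{\gamma.u}=(\cxcc\cwc{\co{\al}t})\ire{\al}{\gamma.u}$, valid since $u\ire{\al}{\gamma.u}=u$. Composing with the $\Mu$-step yields $o\Rewplus{\lmuex}o'$.

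The routine ingredients are the context cases and the two unfolding claims; the single point that genuinely needs care is the handling of $\alpha$-conversion. For the occurrence-counting measures to strictly decrease along the $\rrule{\cntrs}$- and $\rrule{\cntrr}$-steps, one must assume — legitimately, by renaming the variable or name carried by the explicit substitution or replacement — that the copied term $u$ has no free occurrence of the relevant variable $x$ or name $\al$. Once this convention is in place, both inductions go through and the syntactic matching of the $\lmuex$-normal forms of $t[x/u]$ and $\Com\rempl{\al}{\gamma}{u}$ with the meta-level results $t\isubs{x/u}$ and $\Com\ire{\al}{\gamma.u}$ is immediate.
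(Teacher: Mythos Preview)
Your proof is correct and follows the same approach as the paper, namely induction on the reduction relation $\Rew{\lmu}$; the paper's proof consists of that single sentence, while you spell out the two root cases and the auxiliary unfolding claims $t[x/u]\Rewn{\lmuex}t\isubs{x/u}$ and $\Com\rempl{\al}{\gamma}{u}\Rewn{\lmuex}\Com\ire{\al}{\gamma.u}$ by induction on the number of free occurrences. Your explicit handling of $\alpha$-conversion (ensuring $x\notin\fv{u}$ and $\al\notin\fn{u}$ so that the occurrence count strictly decreases) is exactly the care needed and is left implicit in the paper.
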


\begin{proof} By induction on the reduction relation $\Rew{\lmu}$. \end{proof}

Moreover, we can project $\lmuex$-reduction sequences into $\lmu$-reduction sequences.
Indeed, consider the projection function $\proj{\_}$ computing all the explicit
substitutions and replacements of an object, thus  in particular
$\proj{t[x/u]}:= \proj{t}\isubs{x/\proj{u}}$ and 
$\proj{\Com  \rempl{\al}{ \al'}{u}} := \proj{\Com}  \ire{\al}{ \al'.\proj{u}}$. 
Then,  \medskip 

\begin{lem}
If $o \in \objects{\lmuex}$, then $o \Rew{\lmuex} o'$ implies $\proj{o} \Rewn{\lmu} \proj{o'}$.  
\end{lem}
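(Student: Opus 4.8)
The plan is to proceed \textbf{by induction on the derivation of $o \Rew{\lmuex} o'$}, treating first the reductions at the root (one case per rewriting rule) and then the contextual closure. The guiding observation is that the two \emph{multiplicative} rules $\B$ and $\Mu$ project onto exactly one $\lmu$-step, whereas the six rules $\cntrs,\ders,\Gcs,\cntrr,\derr,\Gcr$ that compute explicit substitutions and replacements are \emph{transparent} for $\proj{\_}$, i.e.\ they satisfy $\proj{o} = \proj{o'}$ (zero $\lmu$-steps); hence $\proj{o} \Rewn{\lmu} \proj{o'}$ in all root cases.

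For rule $\B$, using the interaction properties of substitution and $\alpha$-conversion (taking the variables bound in $\slist$ fresh for $x$ and $\fv{u}$), the term $\proj{\slist[\l x.t]}$ has the form $\l x.s$ for some term $s$, and then $\proj{\slist[t[x/u]]} = s\isubs{x/\proj{u}}$; hence $\proj{\slist[\l x.t]\,u} = (\l x.s)\,\proj{u} \Rew{\beta} s\isubs{x/\proj{u}} = \proj{\slist[t[x/u]]}$, a single $\beta$-step. Rule $\Mu$ is handled in the same way: $\proj{\slist[\mu\al.\Com]}$ has the form $\mu\al.c$ and, using the interaction properties of replacement, $\proj{\slist[\mu\gamma.\Com\rempl{\al}{\gamma}{u}]} = \mu\gamma.(c\ire{\al}{\gamma.\proj{u}})$, which by the remark in Section~\ref{s:operational} ($\mu\al.\Com\ire{\al}{u}$ and $\mu\gamma.\Com\ire{\al}{\gamma.u}$ are $\alpha$-equivalent) equals $\mu\al.(c\ire{\al}{\proj{u}})$; hence $\proj{\slist[\mu\al.\Com]\,u} = (\mu\al.c)\,\proj{u} \Rew{\mu} \mu\al.(c\ire{\al}{\proj{u}}) = \proj{\slist[\mu\gamma.\Com\rempl{\al}{\gamma}{u}]}$, a single $\mu$-step.

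For the transparent rules one checks $\proj{o} = \proj{o'}$. For $\Gcs$: $\proj{t[x/u]} = \proj{t}\isubs{x/\proj{u}} = \proj{t}$ because $\fv{\proj{t}} \subseteq \fv{t}$ and $x \notin \fv{t}$; symmetrically for $\Gcr$, since $\fn{\proj{\Com}} \subseteq \fn{\Com}$ and $\al \notin \fn{\Com}$. For $\cntrs$ and $\ders$ (rules $\cxtt\cwc{x}[x/u] \rightarrow \cxtt\cwc{u}[x/u]$ and $\cxtt\cwc{x}[x/u] \rightarrow \cxtt\cwc{u}$): after $\alpha$-renaming so that the variable bound by the substitution is fresh for $\fv{u}$ (so that the occurrence of $u$ placed into the hole is not captured), and using that $\cxtt = \cxtt^{x}$ applies no $x$-substitution to the hole material, the interaction properties of substitution give $\proj{\cxtt\cwc{x}}\isubs{x/\proj{u}} = \proj{\cxtt\cwc{u}[x/u]}$ (and $= \proj{\cxtt\cwc{u}}$ when $|\cxtt\cwc{x}|_{x} = 1$), that is $\proj{o} = \proj{o'}$. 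The rules $\cntrr, \derr$ are symmetric, using the interaction properties of replacement together with the hypothesis $\al \notin \fn{u}$ (legitimate up to $\alpha$, as in the $\Mu$ case), which gives $\proj{u}\ire{\al}{\gamma.\proj{u}} = \proj{u}$.

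The contextual closure is then obtained by a sub-induction on the context $O$ with $o \Rew{\lmuex} o'$ at the root and inductive hypothesis $\proj{o} \Rewn{\lmu} \proj{o'}$: closure of $\Rewn{\lmu}$ under $\l x.\_$, $\_\,t$, $t\,\_$, $\mu\al.\_$ and $\co{\al}\_$ is immediate since $\proj{\_}$ commutes with these constructors; for the explicit-substitution and explicit-replacement constructors one uses that $\lmu$-reduction is stable by substitution and by replacement (Section~\ref{s:operational}), e.g.\ $\proj{O'[o][x/t]} = \proj{O'[o]}\isubs{x/\proj{t}} \Rewn{\lmu} \proj{O'[o']}\isubs{x/\proj{t}}$, and $\proj{t[x/O'[o]]} = \proj{t}\isubs{x/\proj{O'[o]}} \Rewn{\lmu} \proj{t}\isubs{x/\proj{O'[o']}}$ by reducing each of the (possibly zero, possibly several) copies of the substituend, and likewise for $\_\rempl{\al}{\beta}{u}$ and $\Com\rempl{\al}{\beta}{\_}$. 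I expect the main obstacle to be precisely the transparency claim $\proj{o} = \proj{o'}$ for the contraction/dereliction rules: this is where the $\alpha$-renaming conventions and the substitution/replacement interaction lemmas must be combined with care, and it is the classical analogue of the well-known fact that linear-substitution steps preserve the projection (unfolding) in the linear substitution calculus.
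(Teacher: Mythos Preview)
Your proposal is correct and follows exactly the approach that the paper indicates: the paper's proof is the single line ``By induction on the reduction relation $\Rew{\lmuex}$'', and what you have written is a faithful and careful unfolding of that induction, correctly isolating the two multiplicative root cases ($\B$, $\Mu$) that project onto one $\lmu$-step, the six structural rules that are transparent for $\proj{\_}$, and the contextual closure via the stability of $\lmu$-reduction under substitution and replacement. Your identification of the $\alpha$-renaming conventions and the substitution/replacement interaction lemmas as the delicate point in the transparency argument is also accurate.
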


\begin{proof} By induction on the reduction relation $\Rew{\lmuex}$. \end{proof}

\subsection{Typing System}

In this section we extend the (quantitative) typing system $\Slmu$ in order to 
capture the $\lmuex$-calculus, the aim being to characterize the set
of strongly $\lmuex$-normalizing objects by using quantitative arguments. 

More precisely, system $\Slmu$ is  enriched with the two typing rules in Figure~\ref{fig:strong-lambdamuex}.
\begin{figure}[h]
\begin{framed}
\begin{center}
${
\begin{array}{c}
 \infer[ (\subs)]{\Gamt; x:  \IM \vdash t: \UM  \mid \Delt \sep 
       \Gamu  \Vdash u :  \choice{\IM}   \mid \Delu } 
       {\Gamt \inter \Gamu \vdash t[x/u]: \UM \mid \Delt 
         \union \Delu}  \\ \\
{ \infer[ (\repl)]{\GamCom \vdash \Com: \TypCom \mid \DelCom; \al: \umult{\IMk \rew \VMk}_{\kK }\!\!\!\!\sep 
       \Gamu  \Vdash u :  \choice{({\inter_{\kK} \choice{\IMk}})}   \mid \Delu } 
       {\GamCom \inter \Gamu \vdash  \Com\rempl{\al}{\al'}{u}: \TypCom  \mid \DelCom 
       \union \Delu \vee  \al': \union_{\kK} \VMk}}  \\
\end{array}}$
\end{center}
\caption{Additional Rules for System $\Slmuex$}
\label{fig:strong-lambdamuex}
\end{framed}
\end{figure}
Rule $(\subs)$ is inspired by the derivation tree typing the term $(\l
x. t)u$: indeed, any derivation $\tri \Gam \vdash (\l x.t)u:\VM \mid
\Del$ induces two derivations $\tri \Gamt, x:\IM \vdash t: \VM\mid
\Delt$ and $\tri \Gamu \Vdash u:\choice{\IM} \mid \Delu$, from which
we can type $t[x/u]$.  Likewise, the rule $(\repl)$ is motivated by
the derivation tree typing a $\mu$-redex.  In particular, when $K=\es$
(\ie\ when $\al \notin\fn{\Com}$), then $\choice{({\inter_{k\in \es}
    \choice{\IMk}})} = \choice{\emul}$, so that the outer star in
$\choice{({\inter_{\kK} \choice{\IMk}})} $ gives an arbitrary multiset
$\mult{\sig}$ ensuring the typing (and thus the $SN$ property) of the
replacement argument $u$. Notice that Lemma~\ref{l:non-empty-union}
still holds for $\Slmuex$ \ie\ if $\tri_{\Slmuex} \muju{\Gam}{t:\UM}{\Del}$, then $\UM\neq \eumul$.

As one may expect, system $\Slmuex$ encodes 
a non-idempotent and relevant system for intuitionistic logic 
with ES~\cite{KV14}. More precisely, restricting rule $(\subs)$ 
to $\l$-terms with ES gives the following rule:
$$
\infer{\Gam; x:\IM \vdash t:  \sig \sep 
       \Gam' \Vdash u: \choice{\IM} }
      {\Gam \inter \Gam' \vdash t[x/u]: \sig}$$  

Relevance also holds for $\lmuex$: \medskip
\begin{lem}[\textbf{Relevance}]
\label{l:relevance-bis}
Let $o \in \objects{\lmuex}$. If $\Phi\tri \Gam \vdash o: \Any \mid \Del$
(resp. $\Phi\tri \Gam \Vdash t: \IM \mid \Del$ with $\IM\neq \emul$),
then $\fv{o}=\dom{\Gam}$ and $\fn{o} = \dom{\Del}$ 
(resp. $\fv{t}=\dom{\Gam}$ and $\fn{t} = \dom{\Del}$).
\end{lem}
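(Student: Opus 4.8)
The plan is to prove Lemma~\ref{l:relevance-bis} by a routine induction on the typing derivation $\Phi$, treating regular and auxiliary judgments simultaneously, following exactly the pattern of the proof of Lemma~\ref{l:relevance} for system $\Slmu$. All the cases for rules that $\Slmuex$ inherits from $\Slmu$ are literally those of Lemma~\ref{l:relevance}, so the only genuinely new work concerns the two rules $(\subs)$ and $(\repl)$ of Figure~\ref{fig:strong-lambdamuex}. In the inherited cases, the base case $(\ax)$ is immediate from the side condition $\UM\neq\eumul$, which forces $\dom{x:\mult{\UM}}=\{x\}=\fv{x}$; in $(\introarrow)$ and $(\mud)$ one uses that the abstraction deletes the bound symbol from the free variables/names, matching $\cmin{\Gam}{x}$ resp. $\cmin{\Del}{\al}$; in $(\muu)$ one invokes Lemma~\ref{l:non-empty-union} (which the excerpt states still holds for $\Slmuex$) to guarantee that the saved union type is $\neq\eumul$, so that $\al$ is genuinely added to $\dom{\Del\vee\{\al:\UM\}}$. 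In $(\many)$ the hypothesis $\IM=\mult{\UMk}_{\kK}\neq\emul$ forces $K\neq\es$, and the $|K|\ge 1$ copies of the inductive hypothesis glue together through $\inter$ and $\union$ (which compute domains as plain set unions) to the common $\fv{t}$ and $\fn{t}$; and in $(\appet)$ Lemma~\ref{l:non-empty-union} again forces $K\neq\es$ in the type $\umult{\IMk\ftype\UMk}_{\kK}$ of $t$, so $\inter_{\kK}\choice{\IMk}$ is a non-empty multiset, the auxiliary hypothesis applies to $u$, and $\fv{tu}=\fv{t}\cup\fv{u}=\dom{\Gamt}\cup\dom{\Gamu}=\dom{\Gamt\inter\Gamu}$, symmetrically for names.

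For the new rule $(\subs)$: the inductive hypothesis on the left premise gives $\fv{t}=\dom{\Gamt;x:\IM}$, i.e. $x\in\fv{t}$ precisely when $\IM\neq\emul$, and $\fn{t}=\dom{\Delt}$; since $\choice{\IM}$ is always a non-empty multiset, the auxiliary hypothesis applies to $u$ and yields $\fv{u}=\dom{\Gamu}$, $\fn{u}=\dom{\Delu}$. One then simply computes, using $\fv{t[x/u]}=(\fv{t}\sm\{x\})\cup\fv{u}$, that $\fv{t[x/u]}=\dom{\Gamt}\cup\dom{\Gamu}=\dom{\Gamt\inter\Gamu}$ (checking both the $x\in\fv{t}$ and $x\notin\fv{t}$ subcases), and likewise $\fn{t[x/u]}=\fn{t}\cup\fn{u}=\dom{\Delt}\cup\dom{\Delu}=\dom{\Delt\union\Delu}$.

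For the new rule $(\repl)$: the inductive hypothesis on the command premise gives $\fn{\Com}=\dom{\DelCom;\al:\umult{\IMk\rew\VMk}_{\kK}}$, so $\al\notin\dom{\DelCom}$ and $\al\in\fn{\Com}$ iff $K\neq\es$; since $\choice{(\inter_{\kK}\choice{\IMk})}$ is always non-empty, the auxiliary hypothesis applies to $u$, giving $\fv{u}=\dom{\Gamu}$, $\fn{u}=\dom{\Delu}$. From $\fv{\Com\rempl{\al}{\al'}{u}}=\fv{\Com}\cup\fv{u}$ one gets the variable-side identity as in $(\subs)$, and from $\fn{\Com\rempl{\al}{\al'}{u}}=(\fn{\Com}\sm\{\al\})\cup\{\al'\}\cup\fn{u}$, together with $\fn{\Com}\sm\{\al\}=\dom{\DelCom}$, one matches against $\dom{\DelCom\union\Delu\vee\al':\union_{\kK}\VMk}$. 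The one point that needs a little care — and what I would flag as the main (if minor) obstacle — is precisely this name-side bookkeeping: one must check that $\al'$ genuinely lies in $\dom{\DelCom\union\Delu\vee\al':\union_{\kK}\VMk}$, reflecting the fact that $\al'\in\fn{\Com\rempl{\al}{\al'}{u}}$ unconditionally, which forces $\union_{\kK}\VMk$ to be a non-empty union type; this in turn traces back to the non-emptiness guarantees of the system (Lemma~\ref{l:non-empty-union}) applied to the codomains of the arrow types saved for $\al$. Apart from this, the argument is entirely routine and mirrors Lemma~\ref{l:relevance}.
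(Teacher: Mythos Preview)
Your induction on $\Phi$ is exactly the paper's (one-line) approach, and your handling of the inherited rules and of $(\subs)$ is fine. The gap is precisely where you flag it, in $(\repl)$, and your proposed fix does not go through. First, Lemma~\ref{l:non-empty-union} concerns the union type assigned to a \emph{term} in a regular judgment; it says nothing about the codomains $\VMk$ sitting inside arrow types in a name assignment. Second, and more fatally, nothing in rule $(\repl)$ forces $K\neq\es$: when $\al\notin\fn{\Com}$ the left premise carries $\al:\eumul$, i.e.\ $K=\es$, and then $\union_{\kK}\VMk=\eumul$ regardless of any non-emptiness of the individual $\VMk$. In that case the conclusion of $(\repl)$ has name assignment $\DelCom\union\Delu\vee\al'{:}\eumul=\DelCom\union\Delu$, whose domain need not contain $\al'$, whereas $\al'\in\fn{\Com\rempl{\al}{\al'}{u}}$ unconditionally by the definition of free names in Section~\ref{s:syntax-ex}.

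A concrete witness: take $\Com=\co{\beta}x$ with $\beta\neq\al$, let $u=y$, and pick $\al'$ fresh. One derives $x{:}\mult{\UM},\,y{:}\mult{\VM}\vdash(\co{\beta}x)\rempl{\al}{\al'}{y}:\TypCom\mid\beta{:}\UM$, yet $\al'$ is a free name of the subject. So the equality $\fn{o}=\dom{\Del}$ fails for this instance of $(\repl)$, and your argument cannot be completed as written. What does survive in this case is the inclusion $\dom{\Del}\subseteq\fn{o}$ (together with full equality on the variable side), which may well be all that the downstream uses of the lemma actually require.
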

\begin{proof} By induction on $\Phi$. \end{proof}

We now extend the function $\sz{\_}$ introduced in Section~\ref{s:strong-lambdamu} by adding the 
following cases: 
\begin{center}
$ 
{ \begin{array}{lll}
 \sz{\infer[ (\subs)]{\Phit \rhd   t \sep  \Phiu \rhd u }
      {\Gamt \inter  \Gamu   \vdash 
       t[x/u]:\UM  \mid \Del \union \Delu} }  & := &  \sz{\Phi_t} +  \sz{\Phi_u} \\ \\
 \sz{\infer[ (\repl)]{\Phi_\Com \rhd  \modifrefb{\muju{\Gam}{\Com:\TypCom}{\Del,\al:\umult{\IM_k \rew \UM}_{\kK}}} \sep  \Phi_u \rhd u }
      {\GamCom \inter  \Gamu   \vdash 
        \Com \rempl{\al}{\al'}{u}:\TypCom  \mid \DelCom \union \Delu} } & := &  \sz{\Phi_\Com} +  \sz{\Phi_u} +|K| - \frac{1}{2}\\
\end{array} } $
\end{center}
Notice that $\sz{\Phi} \geqslant 1$ still holds for any \textit{regular} derivation $\Phi$. \\


As explained in Section~\ref{s:forward},
  weighted subject reduction 
  holds for $\mu$-reduction steps like $t=(\mu \al.\Com)u \Rew{\mu} \mu
  \gamma.\Com \ire{\al}{\gamma.u}=t'$ 
   because $\gamma$ is typed in $t'$ with smaller
  arity than that of  $\al$ in $t$. The (big) step above is emulated
  in the  $\lmuex$-calculus by 
  the (small) steps $t \Rew{\Mu} \mu
  \gamma.\Com\rempl{\al}{\gamma}{u} \Rewplus{\cntrr, \derr, \Gcr} t'$, where  $\cntrr$ and $\derr$ perform 
  linear replacements, so they are also naturally expected to decrease
  the size of type derivations.  However, 
  for the first step $t=(\mu \al.\Com)u\Rew{\Mu} \mu
  \gamma.\Com\rempl{\al}{\gamma}{u} =t'$, even if no real replacement has taken place
  yet,  we should still have a quantifiable  decrease of the form  $\sz{\Phit} >\sz{\Phi_{t'}}$. 
  This is the reason we use  "$-\frac{1}{2}$" when defining the size of explicit replacements,
  which does not compromise the forthcoming weighted subject reduction property. 

  
One may naively think that the "$-\frac{1}{2}$" component in the size
definition of an ER can compromise the decrease of the size for a step
$t=\mu \gam.\Com\rempl{\al}{\gam}{u} \rew_{\derr} \mu \al.\Com
\ire{\al}{\gam.u}=t'$, when $\Com$ holds exactly one occurrence of
$\al$ : indeed, removing the ER $\rempl{\al}{\gam}{u}$ induces an \textit{increase} of
the measure equal to $\frac{1}{2}$. 
However, the arity contribution of (the unique occurrence of) $\al$ in $t$ is greater than that of the new occurrence of $\gam$ in $t'$: the replacement
operation then induces a decrease of the measure which is equal to some $k\geqslant 1$; and  thus the overall decrease of the measure is in the worst case
$k-\frac{1}{2}>0$, which still grants $\sz{\Phi} > \sz{\Phi'}$.  
The decrease of the measure for a  $\Gcr$-step is 
  more evident.  Last, but not least, the fact that
$\sz{\Phi}$ is a half-integer greater or equal to one ensures that the measure is still well-founded.


\section{Typing Properties}
\label{s:properties-ex}

As in the case of the $\lmu$-calculus, we show that the refined
$\lmuex$-calculus is well-behaved w.r.t. the extended typing system
$\Slmuex$. This is done by   means of forward (Section~\ref{s:forward-ex}) and backward
(Section~\ref{s:backward-ex}) properties.

\subsection{Forward Properties}
\label{s:forward-ex}

\technicalreport{Weighted Subject reduction for the $\lmuex$-calculus
(Lemma~\ref{l:psr}) is based on the fact that linear
substitution (Lemma~\ref{l:partial-substitution}) and linear replacement
 (Lemma~\ref{l:partial-replacement}) preserve types. }

Weighted Subject Reduction for the $\lmuex$-calculus
(Property~\ref{l:psr}) is based on two key properties, called
respectively the {\bf Linear Substitution} and the \textbf{Linear
  Replacement} Lemmas.  These properties may simply be understood as a
refinement of the Substitution Lemma~\ref{l:substitution} and the
Replacement Lemma~\ref{l:replacement} to the case of \modifref{the small-step $\lmuex$-calculus}.  Their
  precise statements and proofs can be found in the Appendix
(Lemmas~\ref{l:partial-substitution} and~\ref{l:partial-replacement}).

\technicalreport{
\begin{lemma}[\textbf{Linear  Substitution}]
\label{l:partial-substitution}
Let $\Theu \tri \Gamu  \Vdash u: \IM \mid \Delu$. If 
$\Phi_{\cxot\cwc{x}} \tri \Gam; x: \IM \vdash \cxot\cwc{x}: \Any \mid
\Del$, then \delia{there exist} $\IM_1, \IM_2, \Gamuu, \Gamud, \Deluu, \Delud$  s.t. 
\begin{itemize}
\item $\IM = \IM_1 \inter \IM_2$, where $\IM_1 \neq \emul$, 
\item $\Gamu = \Gamuu \inter \Gamud$ and $\Delu = \Deluu \union \Delud$, 
\item $\Theuu \tri \Gamuu \Vdash u: \IM_1 \mid \Deluu$,
\item $\Theud \tri \Gamud \Vdash u: \IM_2 \mid \Delud$,
\item $\Phi_{\ctx\cwc{u}} \tri \Gam \inter \Gamuu; x: \IM_2 \vdash
\cxot\cwc{u}: \Any \mid \Del \union \Deluu$, and 
\item $\sz{\Phi_{\cxot\cwc{u}}} = \sz{\Phi_{\cxot\cwc{x}}} + \sz{\Theuu}  - |  \IM_1|$.
\end{itemize}
\end{lemma}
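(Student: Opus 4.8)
The plan is to prove the statement by induction on the term-or-command context $\cxot$ (equivalently, on the structure of $\Phi_{\cxot\cwc{x}}$ along the branch leading to the hole $\Box$), using the Decomposition Lemma~\ref{l:decomposition} to split and recombine auxiliary derivations and the Relevance Lemma~\ref{l:relevance-bis} to keep track of the variable and name assignments. A useful preliminary remark, to be established first, is that in $\Slmuex$ no genuine subobject is typed by the empty auxiliary derivation: by Lemma~\ref{l:non-empty-union} no term is typed with $\eumul$, so the left premise of $(\appet)$ always has $K\neq\es$ and its right premise types a nonempty intersection; likewise the right premises of $(\subs)$ and $(\repl)$ are forced nonempty by $\choice{\_}$. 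Consequently the derivation $\Phi_{\cxot\cwc{x}}$ really "follows" $\cxot$ down to the hole and reaches the hole occurrence of $x$ through an $(\ax)$ leaf, which is what will guarantee $\IM_1\neq\emul$.

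The base case is $\cxot=\Box$, i.e. $\cxot\cwc{x}=x$, where $\Phi_{\cxot\cwc{x}}$ is an $(\ax)$ instance, forcing $\Gam=\es$, $\Del=\es$, $\IM=\mult{\UM}$, $\Any=\UM\neq\eumul$ and $\sz{\Phi_{\cxot\cwc{x}}}=1$. I take $\IM_1:=\IM$, $\IM_2:=\emul$, $\Gamuu:=\Gamu$, $\Gamud:=\es$, $\Deluu:=\Delu$, $\Delud:=\es$, $\Theuu:=\Theu$ and $\Theud$ the empty auxiliary derivation; Lemma~\ref{l:decomposition} turns the singleton auxiliary derivation $\Theu$ into a regular derivation $\Phi_{\cxot\cwc{u}}\tri\Gamu\vdash u:\UM\mid\Delu$ of equal size, which (using $x\notin\fv{u}$) has exactly the assignments $\Gam\inter\Gamuu;x:\IM_2$ and $\Del\union\Deluu$, and the size equation becomes $\sz{\Theu}=1+\sz{\Theu}-1$.

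For the inductive step, syntax-directedness pins down the last rule of $\Phi_{\cxot\cwc{x}}$ from the shape of $\cxot$, and $\cxot$ descends into exactly one immediate subobject. In the cases where that subobject sits in a \emph{regular} premise — $\cxtt=\l y.\cxtt'$ with $(\introarrow)$ (here $y\neq x$ and $y\notin\fv{u}$ by non-capture), $\cxtt=\cxtt'\,t$ and $\cxtt=\cxtt'[y/t]$ with $(\appet)$/$(\subs)$ on the left, $\cxct=\co{\al}\cxtt'$ with $(\muu)$, $\cxct=\mu\al.\cxct'$ with $(\mud)$, $\cxct=\cxct'\rempl{\al}{\beta}{t}$ with $(\repl)$ on the left, as well as the list contexts $\slist$ which are subsumed here — I use Relevance and Lemma~\ref{l:decomposition} to write $\IM=\IM^{(h)}\inter\IM^{(s)}$, where $\IM^{(h)}$ is the type $x$ receives in the hole-containing premise and $\IM^{(s)}$ the type it receives in its siblings, split $\Theu$ into $\Theu^{(h)}\Vdash u:\IM^{(h)}$ and $\Theu^{(s)}\Vdash u:\IM^{(s)}$ accordingly, apply the \ih\ to the hole-containing premise with $\Theu^{(h)}$, and then rebuild with the same last rule. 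The sought $\IM_1$ is the one produced by the \ih, $\IM_2:=\IM_2^{(h)}\inter\IM^{(s)}$, and $\Theuu,\Theud$ are recombined from the \ih's output together with $\Theu^{(s)}$ via Lemma~\ref{l:decomposition}; since every rule adds a fixed constant to the size (in particular the $|K|$ of $(\appet)$/$(\repl)$ and the $-\tfrac12$ of $(\repl)$ are unaffected) and the sibling subderivations are left untouched, the additive bookkeeping yields $\sz{\Phi_{\cxot\cwc{u}}}=\sz{\Phi_{\cxot\cwc{x}}}+\sz{\Theuu}-|\IM_1|$ and the assignment equalities.

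The delicate cases are those where the hole sits in an \emph{argument} position, namely $\cxtt=t\,\cxtt'$, $\cxtt=t[y/\cxtt']$ and $\cxct=\Com\rempl{\al}{\beta}{\cxtt'}$, where $\cxtt'\cwc{x}$ is typed by an auxiliary premise $\Gam'';x:\inter_{\kK}\IM''_k\Vdash\cxtt'\cwc{x}:\mult{\UMk}_{\kK}\mid\Del''$ with $K\neq\es$. Here I first apply Lemma~\ref{l:decomposition} to break this auxiliary derivation into $|K|$ regular derivations of $\cxtt'\cwc{x}$, split $\Theu$ (the relevant part of it) into pieces $\Theu_k\Vdash u:\IM''_k$, and apply the \ih\ to each of the $|K|$ regular derivations (each on the strictly smaller context $\cxtt'$), getting $\IM''_k=\IM_{1,k}\inter\IM_{2,k}$ and the associated data; then I set $\IM_1:=\inter_{\kK}\IM_{1,k}$ — nonempty since each $\IM_{1,k}\neq\emul$ and $K\neq\es$ — reassemble the $|K|$ modified regular derivations into an auxiliary one via Lemma~\ref{l:decomposition}, and re-apply the original rule. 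The main obstacle of the whole argument is exactly this reconciliation: the calculus substitutes a \emph{single syntactic} occurrence of $x$, whereas the derivation may contain $|K|$ logical copies of it, so one must check that the $|K|$ pieces of $\Theu$, of $\Theuu$ and of $\IM_1$ recombine so that $\sz{\Theuu}=\Sigma_{\kK}\sz{\Theuu_k}$ and $|\IM_1|=\Sigma_{\kK}|\IM_{1,k}|$, which makes the size equation and the assignment equalities go through unchanged. Once this point is settled, all remaining cases are routine instances of the two patterns above.
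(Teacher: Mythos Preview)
Your approach is correct and follows the same overall strategy as the paper: induction on the context $\cxot$, with the base case $\cxot=\Box$ handled exactly as you describe. The only real difference is in how you treat the argument-position cases ($t\,\cxtt'$, $t[y/\cxtt']$, $\Com\rempl{\al}{\beta}{\cxtt'}$): the paper simply \emph{strengthens the induction hypothesis} by simultaneously proving the analogous statement for non-empty auxiliary derivations $\Gam;x:\IM\Vdash\cxtt\cwc{x}:\JM\mid\Del$ (with $\JM\neq\emul$), so that the $(\many)$ node is just another inductive step. You instead decompose the auxiliary premise into its $|K|$ regular components via Lemma~\ref{l:decomposition}, apply the regular \ih\ to each of them on the strictly smaller context $\cxtt'$, and recombine. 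Both routes are sound; the paper's mutual-induction packaging is a bit shorter because it avoids repeating the decompose/recombine dance at every argument position, whereas your version has the merit of making the quantitative bookkeeping $|\IM_1|=\Sigma_{\kK}|\IM_{1,k}|$ and $\sz{\Theuu}=\Sigma_{\kK}\sz{\Theuu_k}$ fully explicit.
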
}
\ignore{
\begin{proof}
By induction on the context $\cxot$ using Lemma~\ref{l:relevance-bis}.
\end{proof}
}
\technicalreport{ 
  \begin{proof}
    The proof is by induction on the context $\cxot$
so we need to prove the statement of the lemma 
for regular derivations simultaneously with the following one
for  \textit{non-empty} auxiliary derivations:
if $\Phi_{\cxtt\cwc{x}} \tri \Gam; x: \IM \Vdash \cxtt\cwc{x}: \JM \mid
\Del$ and $\JM \neq \emul$, then \delia{there exist} $\IM_1, \IM_2, \Gamuu, \Gamud, \Deluu, \Delud$  s.t. 
\begin{itemize}
\item $\IM = \IM_1 \inter \IM_2$, where $\IM_1 \neq \emul$, 
\item $\Gamu = \Gamuu \inter \Gamud$ and $\Delu = \Deluu \union \Delud$, 
\item $\Theuu \tri \Gamuu \Vdash u: \IM_1 \mid \Deluu$,
\item $\Theud \tri \Gamud \Vdash u: \IM_2 \mid \Delud$,
\item $\Phi_{\cxtt\cwc{u}} \tri \Gam \inter \Gamuu; x: \IM_2 \Vdash
\cxtt\cwc{u}: \JM \mid \Del \union \Deluu$, and 
\item $\sz{\Phi_{\cxtt\cwc{u}}} = \sz{\Phi_{\cxtt \cwc{x}}} + \sz{\Theuu}  - | \IM_1|$.
\end{itemize}

Notice that 
$\IM\neq\emul$ by Lemma~\ref{l:relevance-bis}, since 
$x \in \fv{\cxot\cwc{x}} $ (resp. $x \in \fv{\cxtt\cwc{x}})$.
We only show the case $\cxot = \Box$ since all the other ones are straightforward. 
So assume $\cxot=\Box$. Then $\IM=\mult{\UM}$ for some $\UM$
and  the derivation $\Phi_{x}$ has the following form :
\[ \Phi_x = \infer{ }
   {\muju{x:\mult{\UM}}{x:\UM}{\es} } \]

   Thus, $\sz {\Phi_{x }} = 1$. We set then $\Theuu=\Theu$ and $\Theud=\infer{}{\muJu{\phdot }{u:\emul}{\phdot}}$.
   We have 
$\sz{\Phi_{u}} =    \sz{\Theuu} = \sz{\Phi_x} + \sz{\Theu} - 
| \IM_1|  $
since $|\IM_1|=1$.
  \end{proof}
}\medskip


\technicalreport{

\begin{lemma}[\textbf{Linear Replacement}]
\label{l:partial-replacement}
Let $\Theu \tri \muJu{\Gamu}{u:\inter_{\lL} \choice{\IMl}}{\Delu}$ s.t. $\al \notin \fv{u}$. 
If $\Phi_{\cxoc{\cwc{\coal t}}} \tri \muju{\Gam}{\cxoc\cwc{\co{\al}t}:\Any}{\al: \umult{ \IMl \rew \VMl}_{\lL};\Del}$, 
then \delia{there exist} $L_1, L_2, \Gamuu, \Gamud, \Deluu, \Delud, \Phi_{\cxoc\cwc{\co{\al'} tu}} $  s.t.
\begin{itemize}
\item $L=L_1\uplus L_2$, where $L_1\neq \es$.
\item $\Gamu = \Gamuu \inter \Gamud$ and $\Delu = \Deluu \union \Delud$, 
\item $\Theuu \tri \muJu{\Gamuu}{u: \inter_{\lL_1} \choice{\IMl}}{\Deluu}$,
\item $\Theud \tri \muJu{\Gamud}{u: \inter_{\lL_2} \choice{\IMl} }{\Delud}$,
\item $\Phi_{\cxoc\cwc{\co{\al'} tu}} \tri \muju{\Gam \inter \Gamuu}{\cxoc{\cwc{\co{\al'} tu}}:\Any}{\al: \umult{ \IMl \rew \VMl}_{\lL_2}; \al':\union_{\lL_1} \VMl \union \Del \union \Deluu}$, and 
\item $\sz{\Phi_{\cxoc\cwc{\co{\al'} tu}}} = \sz{\Phi_{\cxoc\cwc{\coal t}}} + \sz{\Theuu}$.
\end{itemize}
\end{lemma}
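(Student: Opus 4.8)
The plan is to proceed by induction on the structure of the context $\cxoc$. Since a context with a command hole may occur in an argument position, where it is typed through the $(\many)$ rule by several regular sub-derivations, this induction has to be run simultaneously with the analogous statement for \emph{non-empty auxiliary} derivations: if $\Phi_{\cxtc\cwc{\coal t}} \tri \muJu{\Gam}{\cxtc\cwc{\coal t}:\JM}{\al:\umult{\IMl\rew\VMl}_{\lL};\Del}$ with $\JM\neq\emul$, then the conclusions hold with $\Vdash$ in place of $\vdash$ and $\JM$ in place of $\Any$. As in the proofs of Lemmas~\ref{l:substitution} and~\ref{l:replacement}, the working tools are: syntax-directedness of $\Slmuex$, which forces the last rule of $\Phi_{\cxoc\cwc{\coal t}}$ to be the one attached to the head constructor of $\cxoc$; Lemma~\ref{l:decomposition}, used both to split $\Theu$ along a partition of $L$ and to re-merge the leftover pieces; Lemma~\ref{l:relevance-bis}, used to locate which premises the name $\al$ (and the variables/names bound by $\cxoc$) may occur in --- in particular it discards the degenerate case in which the head of $\cxoc$ is an explicit replacement consuming $\al$, since then $\al\notin\fn{\cxoc\cwc{\coal t}}$; and Lemma~\ref{l:non-empty-union}, which ensures that the union type saved for $\al$ at the hole is non-empty, thereby yielding $L_1\neq\es$. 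By $\alpha$-conversion we may assume that $\cxoc$ captures neither $\al$ nor the fresh name $\al'$.

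First, the base case $\cxoc=\boxdot$: the object is the command $\coal t$, which can only be typed by $(\muu)$ from a premise $\Phi_t\tri\Gam\vdash t:\UM\mid\Del^{\circ}$, where $\UM$ is the union type this occurrence saves for $\al$. By hypothesis the type of $\al$ in the conclusion is a multiset of arrows, hence so are $\UM$ and $\Del^{\circ}(\al)$; writing $\UM=\umult{\IMl\rew\VMl}_{\ell\in L_1}$ and $\Del^{\circ}(\al)=\umult{\IMl\rew\VMl}_{\ell\in L_2}$ yields the partition $L=L_1\uplus L_2$, with $L_1\neq\es$ because $\UM\neq\eumul$ by Lemma~\ref{l:non-empty-union}. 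Splitting $\Theu$ along $L_1$ and $L_2$ with Lemma~\ref{l:decomposition} gives $\Theuu\tri\Gamuu\Vdash u:\inter_{\ell\in L_1}\choice{\IMl}\mid\Deluu$ and $\Theud$ absorbing the rest, and I would then rebuild the command $\co{\al'}(tu)$ by an $(\appet)$ rule applied to $\Phi_t$ and $\Theuu$, followed by a $(\muu)$ rule saving $\union_{\ell\in L_1}\VMl$ for $\al'$. The size equation collapses to the identity $\ar{\umult{\IMl\rew\VMl}_{\ell\in L_1}}=|L_1|+\ar{\union_{\ell\in L_1}\VMl}$: the $|L_1|$ freshly charged by the new $(\appet)$ rule is exactly cancelled, leaving a net increase of $\sz{\Theuu}$, as required.

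For the inductive step I would peel off the head constructor of $\cxoc$, invoke the appropriate (regular or auxiliary) induction hypothesis on the immediate sub-context, and reassemble with the same typing rule. The cases whose head constructor does not mention names --- $\l x.\cxtc$, $\cxtc\,t$, $t\,\cxtc$, $\cxtc[x/t']$, $t'[x/\cxtc]$ and $\mu\al''.\cxcc$ --- are routine; in $t\,\cxtc$ and $t'[x/\cxtc]$ the sub-context lies in an auxiliary premise, so the auxiliary induction hypothesis is applied after a preliminary split of $\Theu$ matching the $(\many)$ decomposition inside that premise (the relevant intersection type being non-empty because every application of the choice operator yields a non-empty type, and because the left premise of $(\appet)$ forces $K\neq\es$ via Lemma~\ref{l:non-empty-union}). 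The genuinely delicate cases are $\cxoc=\co{\al}\cxtc$, where the outer $(\muu)$ itself saves a type for $\al$ --- one first splits $L$ into the part consumed by this outer naming and the part threaded into the sub-context, applies the induction hypothesis to the latter, and recombines the two partitions --- and the explicit-replacement cases $\cxoc=\cxcc\rempl{\beta}{\delta}{v}$ and $\cxoc=\Com\rempl{\beta}{\delta}{\cxtc}$, where rule $(\repl)$ performs its own union of name assignments and, in the second, places the sub-context in a (non-empty) auxiliary premise. The main obstacle I anticipate is exactly this bookkeeping: tracking how the single partition $L=L_1\uplus L_2$ threads through the rules $(\muu)$, $(\appet)$, $(\repl)$ and $(\mud)$, checking that the fresh name $\al'$ accumulates precisely $\union_{\ell\in L_1}\VMl$, and verifying in each case that the size equation reduces to $\sz{\Phi_{\cxoc\cwc{\co{\al'}tu}}}=\sz{\Phi_{\cxoc\cwc{\coal t}}}+\sz{\Theuu}$ --- the arithmetic being uniform once one notes that every rule other than $(\muu)$ and $(\appet)$ merely adds the sizes of its premises, the constant $-\frac{1}{2}$ of $(\repl)$ being identical in both derivations.
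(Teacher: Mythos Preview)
Your proposal is correct and follows essentially the same approach as the paper: induction on $\cxoc$ run simultaneously for regular and non-empty auxiliary derivations, with the base case $\cxoc=\boxdot$ handled by splitting $L$ into the part $L_1$ carried by the saved type of the inner $(\muu)$ (non-empty by Lemma~\ref{l:non-empty-union}) and the remainder $L_2$, splitting $\Theu$ accordingly via Lemma~\ref{l:decomposition}, and rebuilding $\co{\al'}tu$ by $(\appet)$ followed by $(\muu)$ with the size identity $\ar{\umult{\IMl\rew\VMl}_{\ell\in L_1}}=|L_1|+\ar{\union_{\ell\in L_1}\VMl}$. The paper simply declares the remaining inductive cases ``straightforward'' and omits them, whereas you sketch them in some detail; your discussion of the delicate cases ($\co{\al}\cxtc$ and the explicit-replacement heads) is sound and in fact more informative than what the paper provides.
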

}

\ignore{
\begin{proof} The proof is by induction on the context $\cxoc$
  using Lemmas~\ref{l:relevance-bis},~\ref{l:non-empty-union}
  and~\ref{l:decomposition}.  
\end{proof}
}

\technicalreport{

\begin{proof}
The proof is by induction on the context $\cxoc$ 
so we need to prove the statement of the lemma 
for regular derivations simultaneously with the following one
for  \textit{non-empty} auxiliary derivations:
 if $\Phi_{\cxtc{\cwc{\coal t}}} \tri \muJu{\Gam}{\cxtc\cwc{\co{\al}t}:\JM}{\al: \umult{ \IMl \rew \VMl}_{\lL};\Del}$
and $\JM \neq \emul$, then \delia{ there exist } $L_1, L_2, \Gamuu, \Gamud, \Deluu, \Delud, \Phi_{\cxtc\cwc{\co{al'} tu}} $  s.t.
\begin{itemize}
\item $L=L_1\uplus L_2$, where $L_1\neq \es$.
\item $\Gamu = \Gamuu \inter \Gamud$ and $\Delu = \Deluu \union \Delud$, 
\item $\Theuu \tri \muJu{\Gamuu}{u: \inter_{\lL_1} \choice{\IMl}}{\Deluu}$,
\item $\Theud \tri \muJu{\Gamud}{u: \inter_{\lL_2} \choice{\IMl} }{\Delud}$,
\item $\Phi_{\cxtc\cwc{\co{\al'} tu}} \tri \muJu{\Gam \inter \Gamuu}{\cxtc{\cwc{\co{\al'} tu}}:\JM}{\al: \umult{ \IMl \rew \VMl}_{\lL_2}; \al':\union_{\lL_1} \VMl \union \Del \union \Deluu}$, and 
\item $\sz{\Phi_{\cxtc\cwc{\co{\al'} tu}}} = \sz{\Phi_{\cxtc\cwc{\coal t}}} + \sz{\Theuu}$.
\end{itemize}
Notice that $L \neq \es$ by Lemma~\ref{l:relevance-bis}, since $\alpha \in \fn{\cxoc\cwc{\co{\al}t}} $ (resp. $\alpha \in \fn{\cxtc\cwc{\co{\al}t}}$). 
We only show the case $\cxoc = \boxdot$ since all the other ones are straightforward.

So assume $\cxoc=\boxdot$. Then the derivation $\Phi_{\co{\al}t}$ has the
following form, where $K \neq \es$ holds by Lemma~\ref{l:non-empty-union}: 
\[ \infer{\Phi_t \tri \tyj{t}{\Gam}{ \umult{ \IMk \rew \VMk}_{\kK} \mid \al: \umult{ \IMl \rew \VMl}_{\lL \sm K} ; \Del}}
   {\tyj{\co{\al}t}{\Gam}{\TypCom \mid \al: \umult{ \IMl \rew \VMl}_{\lL} ; \Del}}\]

Thus, $\sz {\Phi_{\co{\al}t }} =\sz{\Phi_t} + \ar{\umult{ \IMk \rew \VMk}_{\kK} } = \sz{\Phi_t} + |K| + \ar{\union_{\kK} \VMk}$.
We set $L_1=K$ and $L_2=L\setminus K$ and we write
$\inter_{\lL} \choice{\IMl}$ as $(\inter_{\lL_1} \choice{\IMl})\inter (\inter_{\lL_2} \choice{\IMl})$. Then by Lemma~\ref{l:decomposition}, 
 there are $\Theuu \tri \muJu{\Gamuu}{u:\inter_{\lL_1} \choice{\IMl}}{\Deluu},~ \Theud \tri \muJu{\Gamud}{u:\inter_{\lL_2} \choice{\IMl}}{\Delud}$ s.t. $\Gamuu\inter \Gamud =\Gamu,~ \Deluu \union \Delud =\Delu$. 
We set $\VM=\union_{\lL_1} \VMl$
and then construct the following derivation $\Phi_{\co{\al'}tu}$: 
\[ \infer{\infer{\Phi_t \\ \Theuu}
              { \muju{\Gam \inter \Gamuu}{ tu: \union_{\lL_1} \VMl}{\al: \umult{ \IMl \rew \VMl}_{\lL_2} ; \Del \union  \Deluu}}}
       { \muju{\Gam \inter \Gamuu}{\co{\al'}tu: \TypCom}{\al: \umult{ \IMl \rew \VMl}_{\lL_2}; \al': \VM \vee  \Del\union  \Deluu}}\]

We have: 
\[ \begin{array}{l}
   \sz{\Phi_{\co{\al'}tu}} =    \sz{\Phi_{tu}} + \ar{\union_{\kK} \VMk}  \\
   = \sz{\Phi_{t}} + \sz{\Theuu} + |K| + \ar{\union_{\kK} \VMk}    \\
   = \sz{\Phi_{t}} + \sz{\Theuu} +  \ar{ \umult{ \IMk \rew \VMk}_{\kK} }    \\
   =  \sz{\Phi_{\co{\al}t}} + \sz{\Theuu}     \\
   \end{array} \]
\end{proof}
}


\begin{property}[\textbf{Weighted Subject Reduction for $\lmuex$}]
\label{l:psr}
Let $\Phi \tri \tyj{o}{\Gam}{\Any\mid\Del}$. If $o \Rew{} o'$ is a non-erasing step, then $\Phi' \tri \tyj{o'}{\Gam}{\Any\mid\Del}$ and  $\sz{\Phi} > \sz{\Phi'}$. 
\end{property}

\begin{proof}
By induction on the relation $\rew$ using
Lemma~\ref{l:non-empty-union}, Lemma~\ref{l:relevance-bis} and the
Linear Substitution and Replacement Lemmas mentioned above.  See the
Appendix for details.
\end{proof}

\technicalreport{
\begin{proof}
By induction on the reduction relation $\rew$. We only show the main cases of reduction at the root, the other ones being straightforward.

\begin{itemize}
\item If $o = (\slist[(\lambda x.t)]) u \Rew{} \slist[t[x/u]]= o'$: we proceed by induction on $\slist$,
by detailing only the case $\slist = \Box$ as the other one is straightforward. 

The derivation $\Phi$ has the following form:
$$\Phi = \infer{ \infer*{\Phi_t\rhd \Gam_t ; x:\IM \vdash t:\UM~|~
    \Del_t} {\Gam_t \vdash \lambda x.t: \umult{\IM \rew \UM} \mid
    \Del_t } \\ \Theu \rhd \Gamu
    \Vdash u:  \choice{\IM} \mid  \Del}  {\Gam \vdash
  (\l x. t)u:\UM \mid \Del } $$ where $\Gam = \Gam_t \inter
\Gamu$, $\Del=\Del_t\union  \Delu$ and $\Any = \UM$. 
We then construct the following derivation $\Phi'$:
$$
\infer{\Phit\rhd \Gam_t ;  x:\IM  \vdash t:\UM \mid \Delt \\ 
       \Theu \rhd \Gam_u \Vdash u:\choice{\IM}   \mid  \Delu}
      {\Gamt \inter \Gamu  \vdash t[x/u]:\UM  \mid \Del_t\union \Delu  }$$
We  have: 
\[ \begin{array}{lll}
   \sz{\Phi} =    \sz{\Phi_{t}} +  \sz{\Theu} + 2  \\
   > \sz{\Phi_t} +  \sz{\Theu}  =    \sz{\Phi'}
   \end{array} \] 

\item If $o = (\slist [\mu \al. \Com]) u \Rew{}  \slist[\mu \al'. \Com\rempl{\al}{\al'}{u}] = o'$: we proceed by induction on $\slist$,
by detailing only the case $\slist = \Box$ as the other one is straightforward.
The  derivation $\Phi$ has the following form:
$${\small \Phi= 
 {\infer{ \infer*{\PhiCom \tri 
       \muju{\GamCom}{\Com:\TypCom}{\al:\VMC;\DelCom
     }}
                  {\muju{\GamCom}{\mu\al.\Com:\VMC}{\DelCom}} \\
                   \Theu \tri \muJu{\Gamu}{u:\IMu}{\Delu}
                          }
         { \muju{\GamCom\inter \Gamu}{(\mu\al.\Com)u:\UM}{\Delu}}
  }}
$$
 where $ \VMC= \umult{\IMl \rew \VMl}_{\lL}$, $\IMu= \inter_{\lL}   \choice{\IMl}$, $\UM= \vee_{\lL} \VMl$, $\Gam = \GamCom \inter \Gamu$ and $\Del = \DelCom \union \Delu$.
 
Moreover, Lemma~\ref{l:non-empty-union} gives $L \neq \es$,
so that $\inter_{\lL} \choice{\IMl} = \choice{(\inter_{\lL} \choice{\IMl})}$.

We then construct the following derivation $\Phi'$:
$$
\infer{\infer{\PhiCom \\ \Theu }
             {\Gam' \inter \Gamu \vdash \Com\rempl{\al}{\al'}{u}: \TypCom \mid \Del' \union  \Delu; \al': \UM }}
      {\Gam' \inter \Gamu  \vdash  \mu \al'. \Com\rempl{\al}{\al'}{u}:\UM  \mid \Del' \union  \Delu  }
$$
We conclude since $|L|\geq 1$ in the following equation:
\[ \begin{array}{l} 
   \sz{\Phi'}  
   = \sz{\Phi_{\Com\rempl{\al}{\al'}{u}} } + 1 \\
   = \sz{\Phi_{\Com}} + \sz{\Theu} +  |L| - \frac{1}{2} + 1  \\ 
   = \sz{\Phi_{\mu \al. \Com}} + \sz{\Theu} + |L| - \frac{1}{2}  \\
   < \sz{\Phi_{\mu \al. \Com}} + \sz{\Theu } + |L|    = \sz{\Phi}
         \end{array} \]

\item If $o = \cxtt\cwc{x}[x/u] \Rew{} \cxtt\cwc{u}[x/u] = o'$, with $|\cxtt\cwc{x}|_x>1$.
The derivation $\Phi$ has the following form:
{\small 
  $$\infer{\Phi'_{\cxtt\cwc{x}}
    \!\tri\! \Gam'; x:  \IM \vdash \cxtt\cwc{x}: \UM  \mid \Del'  \\ 
         \Theu  \! \tri \! \Gamu  \vdash u :  \choice{\IM}  \mid \Delu } 
        {\Gam' \inter \Gamu \vdash \cxtt\cwc{x}[x/u]: \UM\mid \Del' 
       \union  \Delu}$$}
Moreover, $|\cxtt\cwc{x}|_x>1$ so that 
Lemma~\ref{l:relevance-bis} applied to $\Phi_{\cxtt\cwc{x}}$ gives $\IM \neq \emul$ and thus   $\choice{\IM} = \IM$.
We can then apply Lemma~\ref{l:partial-substitution} which gives  a derivation
\[\Phi_{\cxtt\cwc{u}} \tri \Gam' \inter \Gamuu; x: \IM_2   \vdash
\cxtt\cwc{u}: \UM \mid \Del' \union \Deluu \] 
where $\IM = \IM_1 \inter \IM_2$ and $\IM_1 \neq \emul$
and $\Gamu = \Gamuu \inter \Gamud$ and $\Delu = \Deluu \union \Delud$. 
Moreover $\Theuu \tri \Gamuu \Vdash u: \IM_1 \mid \Deluu$,
$\Theud \tri \Gamud \Vdash u: \IM_2 \mid \Delud$, 
and $\sz{\Phi_{\cxtt\cwc{u}}} = \sz{\Phi_{\cxtt\cwc{x}}} + \sz{\Theuu}  - | \IM_1 | $.

The  hypothesis $|\ctx\cwc{x}|_x>1$ implies  $|\ctx\cwc{u}|_x>0$, then 
$ \IM_2 \neq \emul$ by Lemma~\ref{l:relevance-bis} applied to $\Phi_{\cxtt\cwc{u}}$ so that $\choice{\IM_2} = \IM_2 $.
We can then  construct the derivation $\Phi'$ as follows:
$$\infer{\Phi_{\cxtt\cwc{u}} \\
         \Theud}
        { \Gam' \inter \Gam \vdash \cxtt\cwc{u}[x/u]: \UM \mid \Del' \inter \Del }\ (\subs)
$$
We conclude since
$\sz{\Phi'} = \sz{\Phi_{\cxtt\cwc{u}}} +   \sz{\Theud} =_{Lemma~\ref{l:partial-substitution}}
\sz{\Phi_{\cxtt\cwc{x}}} + \sz{\Theuu}  - |  \IM_1 |  +   \sz{\Theud} = \sz{\Phi_{\cxtt\cwc{x}}} + 
\sz{\Theu} - | \IM_1 |  < \sz{\Phi}$. 

The step $<$ is justified by $\IM_1 \neq \emul$.
\item If $o = \ctx\cwc{x}[x/u] \Rew{} \ctx\cwc{u} = o'$, with $|\ctx\cwc{x}|_x = 1$.
The derivation $\Phi$ has the following form:
{\small
$$\infer{\Phi_{\cxtt\cwc{x}}\! \tri \!\Gam'; x:  \IM \vdash \cxtt\cwc{x}: \UM  \mid \Del'  \\ 
        \Theu  \!\tri\! \Gamu   \Vdash u : \choice{\IM}  \mid \Delu } 
        {\Gam' \inter \Gamu \vdash \cxtt\cwc{x}[x/u]: \UM\mid \Del'   \union \Delu}$$}
Lemma~\ref{l:relevance-bis} applied to $\Phi_{\cxtt\cwc{x}}$ gives $\IM \neq \emul$ and thus   $\choice{\IM} = \IM$.
We can then apply Lemma~\ref{l:partial-substitution} which gives  a derivation
\[\Phi_{\cxtt\cwc{u}} \tri \Gam' \inter \Gamuu; x: \IM_2   \vdash
\cxtt\cwc{u}: \UM \mid \Del' \union \Deluu \] 
where $\IM = \IM_1 \inter \IM_2$ and $\IM_1 \neq \emul$
and $\Gamu = \Gamuu \inter \Gamud$ and $\Delu = \Deluu \union \Delud$. 
Moreover $\Theuu \tri \Gamuu \Vdash u: \IM_1 \mid \Deluu$,
$\Theud \tri \Gamud \Vdash u: \IM_2 \mid \Delud$, 
and $\sz{\Phi_{\cxtt\cwc{u}}} = \sz{\Phi_{\cxtt\cwc{x}}} + \sz{\Theuu}  - |  \IM_1|$.
By hypothesis $|\cxtt\cwc{x}|_x=1$ so that $|\cxtt\cwc{u}|_x=0$, then 
$\IM_2= \es$ by Lemma~\ref{l:relevance-bis} applied to $\Phi_{\cxtt\cwc{u}}$. Thus $\IM = \IM_1$. 
We then set $\Phi' = \Phi_{\cxtt\cwc{u}}$ and conclude since
\[ \begin{array}{l}
   \sz{\Phi'} = \sz{\Phi_{\cxtt\cwc{u}}}  \\
   =_{Lemma~\ref{l:partial-substitution}} \sz{\Phi_{\cxtt\cwc{x}}} +  \sz{\Theuu}  - | \IM_1| \\
   = \sz{\Phi_{\cxtt\cwc{x}}} +  \sz{\Theu}    - |  \IM|   < \\
   = \sz{\Phi_{\cxtt\cwc{x}}} +  \sz{\Theu}      = \sz{\Phi}
   \end{array} \]

The step $<$ is justified by $\IM = \IM_1 \neq \emul$.

\item If $o = \cxcc\cwc{\co{\al}t} \rempl{\al}{ \al'}{u}   \Rew{} \cxcc\cwc{\co{\al'}t u} \rempl{\al}{ \al'}{u} = o'$, with $|\cxcc\cwc{\co{\al}t}_\al > 1$. Then $\Phi$ has the following form

{\small
$$
  \infer{\Phi_{\Com}
    \!\tri\! \GamCom \! \vdash \! \cxcc\cwc{\co{\al}t}: \TypCom \mid \! \DelCom; \al: \VM'\!\!\! \\ \Theu \!\tri\! \Gamu \! \Vdash \! u: \IMu \! \mid \!\Delu}    {\GamCom \inter \Gamu \vdash \cxcc\cwc{\co{\al}t} \rempl{\al}{ \al'}{u}: \TypCom \mid \DelCom \union \Delu \vee \al': \union_{\lL} \VMl}  $$}
where $\Com = \cxcc\cwc{\co{\al}t} $, $ \VM'= \umult{\IMl \rew
  \VMl}_{\lL}$, $\IMu= \choice{ (\inter_{\lL} \choice{\IMl})}$,
$\AM = \TypCom$, $\Gam = \GamCom \inter \Gamu$ and
$\Del = \DelCom \union \Delu\vee \al': \union_{\lL} \VMl$.
Since $|\cxcc\cwc{\co{\al}t}|_\al > 1$ implies $L \neq
\es$ by Lemma~\ref{l:relevance-bis}, we have that  $\IMu=\inter_{\lL}
\choice{\IMl}$.  By Lemma~\ref{l:partial-replacement} there are $L_1,\,
L_2,\ \Gamuu,\, \Gamud,\ \Deluu,\, \Delud,\ \Phi_{\cxcc\cwc{\co{\al'}
    tu}} $ s.t.
\begin{itemize}
\item $L=L_1\uplus L_2$, where $L_1\neq \es$.
\item $\Gamu = \Gamuu \inter \Gamud$ and $\Delu = \Deluu \union \Delud$, 
\item $\Theuu \tri \muJu{\Gamuu}{u: \inter_{\lL_1} \choice{\IMl}}{\Deluu}$,
\item $\Theud \tri \muJu{\Gamud}{u: \inter_{\lL_2} \choice{\IMl} }{\Delud}$,
\item $\Phi_{\cxcc\cwc{\co{\al'} tu}} \tri \muju{\GamCom \inter \Gamuu}{\cxcc{\cwc{\co{\al'} tu}}:\AM}{\al: \umult{ \IMl \rew \VMl}_{\lL_2}; \al':\union_{\lL_1} \VMl \union \DelCom \union \Deluu}$, and 
\item $\sz{\Phi_{\cxcc\cwc{\co{\al'} tu}}} = \sz{\Phi_{\cxcc\cwc{\coal t}}} + \sz{\Theuu}$.
\end{itemize}

   Moreover, $|\cxcc\cwc{\co{\al}t}|_\al > 1$ implies $|\cxcc\cwc{\co{\al'}tu}|_\al > 0$ so that
   $L_2 \neq \es$ holds by Lemma~\ref{l:relevance-bis} and thus
   $\inter_{\lL_2} \choice{\IMl}= \choice{(\inter_{\lL_2} \choice{\IMl})}$. Then we can build the
   following derivation $\Phi'$:

   $${ \infer{\Phi_{\cxcc\cwc{\co{\al'}tu} } \\
            \Theud }
     {\Gam' \vdash \cxcc\cwc{\co{\al'}tu} \rempl{\al}{ \al'}{u}: \TypCom \mid \Del' }
   }$$
   where $\Gam'=(\GamCom \inter \Gamuu)\inter \Gamud =\Gam$, $\Del'=(\al':\union_{\lL_1} \VMl \union \DelCom \union \Deluu) \union \Delud \union (\al':\union_{\lL_2} \VMl)=\Del$.

   We conclude since
 \[ \begin{array}{l}
       \sz{\Phi'} = \sz{\Phi_{\cxcc\cwc{\co{\al'}tu} }} + \sz{\Theud} + | L_2| - \frac{1}{2}  \\
       =_{Lemma~\ref{l:partial-replacement}}    \sz{\Phi_{\cxcc\cwc{\co{\al}t}}} + \sz{\Theuu} + \sz{\Theud} + |L_2| - \frac{1}{2}   \\
       = \sz{\Phi_{\cxcc\cwc{\co{\al}t}}} + \sz{\Theu} + |L_2| - \frac{1}{2}    \\
       < \sz{\Phi_{\cxcc\cwc{\co{\al}t}}} + \sz{\Theu} + | L | - \frac{1}{2}         = \sz{\Phi}      
 \end{array} \]

The step $<$ is justified because $L_1 \neq \es$ and thus $|L_2 | < |L|$.

\item If $o = \cxcc\cwc{\co{\al}t} \rempl{\al}{ \al'}{u}   \Rew{} 
 \cxcc\cwc{\co{\al'}t u}  = o'$, with 
$|\cxcc\cwc{\co{\al}t}|_\al =  1$. The derivation $\Phi$ has the following form

$${\small 
  \infer{\Phi_{\Com}
    \!\tri\! \GamCom \! \vdash \! \cxcc\cwc{\co{\al}t}: \TypCom \mid \! \DelCom; \al: \VM'\!\!\! \\ \Theu \!\tri\! \Gamu \! \Vdash \! u: \IMu \! \mid \!\Delu}    {\GamCom \inter \Gamu \vdash \cxcc\cwc{\co{\al}t} \rempl{\al}{ \al'}{u}: \TypCom \mid \DelCom \union \Delu \vee \al': \union_{\lL} \VMl}  }$$
 
 where $\Com = \cxcc\cwc{\co{\al}t} $, $ \VM'= \umult{\IMl \rew \VMl}_{\lL}$, $\IMu= \choice{ (\inter_{\lL}   \choice{\IMl})}$,
 $\AM = \TypCom$, $\Gam = \GamCom \inter \Gamu$ and $\Del = \DelCom \union \Delu\union \al': \union_{\lL} \VMl$. 
Since $|\cxcc\cwc{\co{\al}t}|_\al = 1$ implies $L \neq \es$ by Lemma~\ref{l:relevance-bis}, we have that $\IMu=\inter_{\lL}   \choice{\IMl}$.
By Lemma~\ref{l:partial-replacement} there are
$L_1,\, L_2,\ \Gamuu,\, \Gamud,\ \Deluu,\, \Delud,\ \Phi_{\cxcc\cwc{\co{\al'} tu}} $  s.t.
\begin{itemize}
\item $L=L_1\uplus L_2$, where $L_1\neq \es$.
\item $\Gamu = \Gamuu \inter \Gamud$ and $\Delu = \Deluu \union \Delud$, 
\item $\Theuu \tri \muJu{\Gamuu}{u: \inter_{\lL_1} \choice{\IMl}}{\Deluu}$,
\item $\Theud \tri \muJu{\Gamud}{u: \inter_{\lL_2} \choice{\IMl} }{\Delud}$,
\item $\Phi_{\cxcc\cwc{\co{\al'} tu}} \tri \muju{\GamCom \inter \Gamuu}{\cxcc{\cwc{\co{\al'} tu}}:\AM}{\al: \umult{ \IMl \rew \VMl}_{\lL_2}; \al':\union_{\lL_1} \VMl \union \DelCom \union \Deluu}$, and 
\item $\sz{\Phi_{\cxcc\cwc{\co{\al'} tu}}} = \sz{\Phi_{\cxcc\cwc{\coal t}}} + \sz{\Theuu}$.
\end{itemize}

   Moreover, $|\otx\cwc{\co{\al}t}|_\al = 1$ implies $|\otx\cwc{\co{\al'}tu}|_\al = 0$ so that
   $L_2 = \es$ and $L=L_1$ holds by Lemma~\ref{l:relevance-bis}. Thus, $\Theuu=\Theu$ and so on. We then set $\Phi' = \Phi_{\otx\cwc{\co{\al'}tu} }$
   and  conclude since

 \[ \begin{array}{l}
       \sz{\Phi'} 
       = \sz{\Phi_{\otx\cwc{\co{\al'}tu} }}  \\
       =_{Lemma~\ref{l:partial-replacement}}    \sz{\Phi_{\otx\cwc{\co{\al}t}}} + \sz{\Theu}   \\
       <  \sz{\Phi_{\otx\cwc{\co{\al}t}}} + \sz{\Theu}  + | L | - \frac{1}{2}   
       = \sz{\Phi} \\ 
   \end{array} \] 

 The step $<$ is justified because $L\neq \es$, so that $|L|\geqslant 1$
 implies  $|L|- \frac{1}{2}>0$.

\ignore{samedi 29 octobre: en ecrivant les cas effacants, j'ai l'impression que notre enonce est faux tout compte fait (pareil pour la subj red implicite) : dans le cas effacant, on peut passer au context seulement dans le cas ou on a $\Box \, t_1\ldots t_n$, pas quand $\Box$ est lui-meme en argument (en fait, des qu'on fait des abstractions, on peut avoir des problemes...). Je continue la reecriture sur partial-reverse...}

\ignore{
\item \pierre{If $o=t[x/u] \rew t=o'$ with $x\notin t$. 
  The derivation $\Phi$ has the following form
  $$
  \infer{\Phit \tri \muju{\Gamt}{t:\UM}{\Delt} \\  \Theu\tri \muJu{\Gamu}{u:\choice{\emul}}{\Delu}
  }{\muju{\Gamt\inter \Gamu}{t[x/u]:\UM}{\Delt \union \Delu}}
  $$
  where $x\notin \dom{\Gamt}$ by  Lemma~\ref{l:relevance-bis}. We set then $\Phi'=\Phit$. We conclude since $\sz{\Phi'}=\sz{\Phit}<\sz{\Phit}+\sz{\Theu}\odelia{+1}=\sz{\Phi}$.}
\item \pierre{If $o = \Com \rempl{\al}{ \al'}{u}   \rew \Com$, with 
$\al \notin \fn{\Com} $. The derivation $\Phi$ has the following form
 $$ \infer{\PhiCom\tri \muju{\GamCom}{\Com:\TypCom}{\DelCom} \\ \Theu\tri \muJu{\Gamu}{u:\choice{\emul}}{\Delu} }{
    \muju{\GamCom}{\Com \rempl{\al}{\al'}{u}:\TypCom}{\DelCom\union \Delu}}
  $$
  where $\al \notin \dom{\DelCom}$ by Lemma~\ref{l:relevance-bis}.
  We set then $\Phi'=\PhiCom$. We conclude since $\sz{\Phi'}=\sz{\PhiCom}<\sz{\PhiCom}+\sz{\Theu}-1/2$. Indeed, $\Theu$ is non empty, so that
  $\sz{\Theu}\geqslant 1$ implies $\sz{\Theu}-1/2>0$.}}
\end{itemize}
\end{proof}
}


\subsection{Backward Properties}
\label{s:backward-ex}

As in the implicit case (Section~\ref{s:backward}), subject expansion
for non-erasing $\lmuex$-step relies on \textbf{(Linear) Reverse
  Substitution} and \textbf{(Linear) Reverse Replacement} Lemmas: if
$\Phi'\tri \muju{\Gam}{o':\Any}{\Del}$ and $o'$ has been obtained from
$o$ by substituting one occurrence of $x$ by $u$ (or one subcommand
$\coal t$ by $\co{\al'}tu$), then, informally speaking, it is possible
to decompose $\Phi'$ into a regular derivation $\Phi_0$ typing $o$ and
an auxiliary derivation $\Theu$ typing $u$. The precise statements and proofs can be found in the Appendix (Lemma~\ref{l:reverse-partial-substitution}
and~\ref{l:reverse-partial-replacement}).

\technicalreport{
\begin{lemma}[\textbf{Reverse Partial Substitution}]
\label{l:reverse-partial-substitution}
Let $\Phi \tri \Gam \vdash \cxot\cwc{u}: \Any \mid \Del$, where $x\notin \fv{u}$.
Then, \delia{there exists} $\Gam_0, \Del_0, \IM_0\neq \emul, \Gamu, \Delu$ such that 
\begin{itemize}
   \item $\Gam = \Gam_0 \inter \Delu$, 
     \item $\Del = \Del_0 \union \Delu$, 
     \item $\Phi_{\cxot\cwc{x}} \tri \muju{\Gam_0 \inter x:\IM_0}{\cxot\cwc{x}:\Any }{\Del_0}$       
     \item $\tri \muJu{\Gamu}{u:\IM_0}{\Delu}$.
\end{itemize}
\end{lemma}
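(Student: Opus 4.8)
The plan is to prove the statement by induction on the structure of the term context $\cxot$; since system $\Slmuex$ is syntax directed, this is the same as an induction on $\Phi$ in which the last rule applied is determined by the head symbol of $\cxot$. The lemma is the backward counterpart of the Linear Substitution Lemma~\ref{l:partial-substitution}, and, exactly as there, the induction must be strengthened by carrying out simultaneously the analogous statement for \emph{non-empty} auxiliary derivations: from $\Phi \tri \Gam \Vdash \cxtt\cwc{u}: \JM \mid \Del$ with $\JM \neq \emul$ and $x \notin \fv{u}$, one extracts $\Gam_0, \Del_0, \IM_0 \neq \emul, \Gamu, \Delu$ with $\Gam = \Gam_0 \inter \Gamu$, $\Del = \Del_0 \union \Delu$, a derivation of $\Gam_0; x : \IM_0 \Vdash \cxtt\cwc{x}: \JM \mid \Del_0$, and a derivation of $\Gamu \Vdash u : \IM_0 \mid \Delu$. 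Throughout, Relevance (Lemma~\ref{l:relevance-bis}) is used to locate, in each split, the variables and names coming from the frame of $\cxot$, and the Decomposition Lemma~\ref{l:decomposition} is used whenever the hole occurs in a premise that is an auxiliary judgment.

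First I would treat the base case $\cxot = \Box$, so $\cxot\cwc{u} = u$ is a term, $\Any = \UM$ is a union type, and $\Phi \tri \Gam \vdash u : \UM \mid \Del$. By Lemma~\ref{l:non-empty-union} we have $\UM \neq \eumul$, so the axiom rule is applicable; I set $\IM_0 := \mult{\UM}$ (hence $\IM_0 \neq \emul$), $\Gam_0 := \emptyset$, $\Del_0 := \emptyset$, $\Gamu := \Gam$ and $\Delu := \Del$. Then $\Phi_{\cxot\cwc{x}}$ is the axiom $x : \mult{\UM} \vdash x : \UM \mid \emptyset$, and the auxiliary derivation $\Gamu \Vdash u : \mult{\UM} \mid \Delu$ is obtained from $\Phi$ by a single instance of rule $(\many)$ with a singleton index set.

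For the inductive step, the head symbol of $\cxot$ fixes the last rule of $\Phi$: $(\introarrow)$ when $\cxot$ has the form $\l y.\cxtt'$, $(\appet)$ for $\cxtt'\,t$ and $t\,\cxtt'$, $(\mud)$ for $\mu\al.\cxct'$, $(\subs)$ for $\cxtt'[y/t]$ and $t[y/\cxtt']$, $(\muu)$ for $\co{\al}\cxtt'$, and $(\repl)$ for $\cxct'\rempl{\al}{\beta}{t}$ and $\Com\rempl{\al}{\beta}{\cxtt'}$. In every case I would strip off this last rule, apply the induction hypothesis — in its regular form, or in the auxiliary form stated above when the premise carrying the hole is a $\Vdash$-judgment — to the sub-derivation typing the immediate sub-context, and re-apply the same rule to reconstruct $\Phi_{\cxot\cwc{x}}$; the derivation typing $u$ and the condition $\IM_0 \neq \emul$ are inherited verbatim. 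For the one-premise rules $(\introarrow)$, $(\mud)$, $(\muu)$ the split returned by the induction hypothesis is left unchanged (using $\alpha$-conversion to keep the bound variable/name fresh with respect to $u$); for the two-premise rules $(\appet)$, $(\subs)$, $(\repl)$ the premise not carrying the hole is left untouched and its assignment is merged with the split. In the cases where the hole lies in an argument premise — $t\,\cxtt'$, $t[x/\cxtt']$, $\Com\rempl{\al}{\beta}{\cxtt'}$ — that premise types $\cxtt'\cwc{u}$ with a multiset of the form $\inter_{\kK}\IMk$ (possibly under $\choice{\cdot}$), which is non-empty by Lemma~\ref{l:non-empty-union} (a term never has the empty union type, so the corresponding index set of the functional type is non-empty), so the auxiliary induction hypothesis applies, and Lemma~\ref{l:decomposition} is used to thread the intersection through the split.

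The main obstacle is administrative rather than conceptual: one has to keep the invariant $\IM_0 \neq \emul$ propagating correctly through the coupled regular/auxiliary induction, and, in the cases where the conclusion of $\Phi$ merges two assignments, verify that the decomposition $\Gam = \Gam_0 \inter \Gamu$, $\Del = \Del_0 \union \Delu$ provided by the induction hypothesis is precisely the one that makes the stripped rule reapplicable — Relevance (Lemma~\ref{l:relevance-bis}) is what guarantees that each variable and name lands in the right component. Since, unlike the Linear Substitution Lemma, this statement tracks no derivation size, there is no arithmetical identity to maintain, and the whole proof remains a close mirror of the already-established proof of Lemma~\ref{l:partial-substitution}.
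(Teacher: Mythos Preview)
Your proposal is correct and follows essentially the same approach as the paper: induction on the context $\cxot$, strengthened by the companion statement for non-empty auxiliary derivations, with the base case $\cxot=\Box$ handled exactly as you describe (take $\IM_0=\mult{\UM}$, $\Gam_0=\Del_0=\es$, $\Gamu=\Gam$, $\Delu=\Del$, axiom for $x$, one-premise $(\many)$ for $u$). The paper's proof in fact only spells out this base case and dismisses the remaining ones as ``straightforward'', so your treatment of the inductive cases is more detailed than the original while remaining faithful to it.
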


\begin{proof}  The proof is by induction on the context $\cxot$. 
For this induction to work, we need as usual to adapt the statement for auxiliary derivations.
We only  show the  case $\cxot =  \Box$ since  all the other  ones are
straightforward and rely on suitable partitions of the contexts in the
premises. So assume $\cxot=\Box$,  then $\Any=\UM$ for some $\UM$.  We
set $\Gam_0 = \Del_0=\es$,  $\IM_0=\mult{\UM}$, $\Gamu = \Gam$, $\Delu
= \Del$  (so that  $\tri \muJu{\Gamu}{u:\IM_0}{\Delu}$ holds  by using
the $(\many)$ rule), and
\[ \Phi_{x} = \infer{ }{\muju{x:\mult{\UM}}{x:\UM}{\es} } \] 
  
The claimed set and context equalities trivially hold.   
\end{proof}

}

\ignore{ 
\begin{lemma}[\textbf{Reverse Partial Substitution}]
\label{l:reverse-partial-substitution}
Let $\Phi \tri \Gam \vdash \ctx\cwc{u}: \UM \mid \Del$, where $x\notin \fv{u}$.
Then, \delia{there exist} $\Gam_0, \Del_0, K\neq \es, \UMk)_{\kK}, (\Gamk)_{\kK},(\Delk)_{\kK}$ such that 
\begin{itemize}
\item $\Gam = \Gam_0 \inter_{\kK} \Gamk$, 
     \item $\Del = \Del_0 \union_{\kK} \Delk$, 
     \item $\Phi_{\ctx\cwc{x}} \tri \tyj{\ctx\cwc{x}}{x:\mult{\UMk}_{\kK} \inter  \Gam_0}{\UM  \mid \Del_0}$,  and 
     \item $(\Phiuk \tri \Gamk \vdash u: \UMk  \mid \Delk)_{\kK}$.
\end{itemize}

\end{lemma}

\begin{proof}  The proof is by induction on the context $\ctx$. 
We only show the case $\ctx = \Box$ since all the other ones are
straightforward and rely on suitable partitions of the contexts in the
premises. So assume $\ctx=\Box$. We set $\Gam_0 = \Del_0=\es$,  $K=\cset{ k_1}$, $\UM_{k_1}=\UM$,
$\Gam_{k_1} = \Gam$, $\Del_{k_1} = \Del$, $ \Phi_u^{k_1}= \Phi$, and 
\[ \Phi_{x} = \infer{ }{\muju{x:\mult{\UM}}{x:\UM}{\es} } \] 
  
The claimed set and context equalities trivially hold.   
\end{proof}
}   

\technicalreport{
\begin{lemma} [\textbf{Reverse Partial Replacement}]
  \label{l:reverse-partial-replacement}
  Let $\muju{\Gam}{\cxoc\cwc{\co{\al'}tu}:\Any}{\al':\VM;\Del}$, where  $\al,\al' \notin \fn{u}$.  Then \delia{there exist} $\Gam_0, \Del_0,  \VM_0, K \neq \es, (\IMk)_{\kK}, (\VMk)_{\kK}, \Gamu, \Delu$ such that 
\begin{itemize}
\item $\Gam = \Gam_0 \inter \Gamu$,
\item $\Del=\Del_0 \union \Delu$,
\item $\VM = \VM_0 \union_{\kK} \VMk$, 
\item $\tri \muju{\Gam_0}{\cxoc\cwc{\co{\al}t}: \Any}{\al':\VM_0;\al:\umult{\IMk\rew \VMk}_{\kK} \union   \Del_0}$,  and 
\item     $\tri \muJu{\Gamu}{u :  \inter_{\kK}\choice{\IMk} }{\Delu}$
\end{itemize}
\end{lemma}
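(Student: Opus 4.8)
The plan is to proceed by induction on the command context $\cxoc$, exactly as in the proof of the Reverse Partial Substitution Lemma~\ref{l:reverse-partial-substitution}, reading the Linear Replacement Lemma~\ref{l:partial-replacement} backwards. As usual, for the induction to go through one must simultaneously establish the analogous statement for \emph{non-empty auxiliary derivations}: if $\Phi \tri \muJu{\Gam}{\cxtc\cwc{\co{\al'}tu}:\JM}{\al':\VM;\Del}$ with $\JM \neq \emul$ and $\al,\al'\notin\fn{u}$, then there are $\Gam_0,\Del_0,\VM_0,K\neq\es,(\IMk)_{\kK},(\VMk)_{\kK},\Gamu,\Delu$ with $\Gam = \Gam_0\inter\Gamu$, $\Del = \Del_0\union\Delu$, $\VM = \VM_0\union_{\kK}\VMk$, a derivation $\tri\muJu{\Gam_0}{\cxtc\cwc{\co{\al}t}:\JM}{\al':\VM_0;\al:\umult{\IMk\rew\VMk}_{\kK}\union\Del_0}$, and $\tri\muJu{\Gamu}{u:\inter_{\kK}\choice{\IMk}}{\Delu}$. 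Since every typing rule of $\Slmuex$ is syntax directed, at each step the rule at the root of the derivation is forced by the shape of the object, so the work is only to invert that rule, apply the induction hypothesis to the premise carrying the hole, and reassemble.

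First I would settle the base case $\cxoc = \boxdot$. Here $\cxoc\cwc{\co{\al'}tu} = \co{\al'}(tu)$ is a command, so $\Any = \TypCom$ and $\Phi$ ends with $(\muu)$ for $\al'$ over a premise typing $tu$, which is itself produced by $(\appet)$, giving $\muju{\Gamt}{t:\umult{\IMk\rew\VMk}_{\kK}}{\Delt}$ and $\muJu{\Gamu}{u:\inter_{\kK}\choice{\IMk}}{\Delu}$ with $\Gam = \Gamt\inter\Gamu$. By Lemma~\ref{l:non-empty-union} (which still holds for $\Slmuex$) the union type $\union_{\kK}\VMk$ of $tu$ is non-empty, hence $K\neq\es$, and by Lemma~\ref{l:relevance-bis} the hypothesis $\al,\al'\notin\fn{u}$ gives $\al,\al'\notin\dom{\Delu}$. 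Reading off the types saved by the outer $(\muu)$, one sets $\VM_0 := \Delt(\al')$, lets $\Del_0$ be $\Delt$ with $\al'$ erased, and puts $\Gam_0 := \Gamt$; then $\Gam = \Gam_0\inter\Gamu$, $\Del = \Del_0\union\Delu$ and $\VM = \VM_0\union_{\kK}\VMk$. Re-applying $(\muu)$, this time for $\al$, directly to the derivation of $t$ yields $\muju{\Gam_0}{\co{\al}t:\TypCom}{\al':\VM_0;\al:\umult{\IMk\rew\VMk}_{\kK}\union\Del_0}$, and the auxiliary derivation of $u$ is already available.

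Then come the inductive cases. Each production of $\cxtc$ and $\cxcc$ peels off one layer, typed by a unique rule: $\l x.\cxtc'$, $\cxtc'\,v$, $v\,\cxtc'$, $\cxtc'[x/v]$, $v[x/\cxtc']$ are handled by $(\introarrow)$, $(\appet)$, $(\appet)$, $(\subs)$, $(\subs)$; $\mu\al_1.\cxcc'$ by $(\mud)$; $\co{\al_1}\cxtc'$ by $(\muu)$; and $\cxcc'\rempl{\al_1}{\beta_1}{v}$, $\Com\rempl{\al_1}{\beta_1}{\cxtc'}$ by $(\repl)$ --- the non-capture condition on $\cxoc$ ensuring $\al_1,\beta_1\notin\{\al,\al'\}$ throughout. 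In every case I would invert the rule, apply the induction hypothesis to the premise containing the hole (the regular statement when the hole is in a term position, the auxiliary one when it lies in an argument position of $(\appet)$, $(\subs)$ or $(\repl)$), and re-apply the same rule to the reconstructed premise. When the hole is in an argument position the relevant premise is an auxiliary judgment built with $(\many)$, so Lemma~\ref{l:decomposition} is used to split it into and rebuild it from individual regular derivations; one then takes $K$ to be the disjoint union of the index sets obtained for the various $(\many)$-premises and forms the corresponding $\union$/$\inter$-combinations of the $\VM_0$'s, $\Gamu$'s and $\Delu$'s. Throughout, Lemma~\ref{l:relevance-bis} locates $\al$ and $\al'$ in the name assignments and certifies that neither occurs in the assignments typing $u$, so that the merges of name assignments behave as stated, while $K\neq\es$ is inherited from the base case (possibly enlarged, never emptied).

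The only delicate point --- and the one I expect to generate the bulk of the bookkeeping rather than any real obstacle --- is the handling of name assignments when $\al'$ already occurs freely in $t$ or in a sibling subterm, where the copy of $\al'$ coming from the original derivation must be kept apart from the fresh copy introduced by $(\muu)$ for $\al$; this is resolved entirely by Lemma~\ref{l:relevance-bis} together with the definition of $\union$ on name assignments, with no recourse to reducibility or any normalization argument.
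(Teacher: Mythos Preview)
Your proposal is correct and follows essentially the same approach as the paper: induction on the context $\cxoc$, with a simultaneous auxiliary statement for non-empty auxiliary derivations, inverting the syntax-directed rules and handling the base case $\cxoc=\boxdot$ by peeling off $(\muu)$ then $(\appet)$ and re-applying $(\muu)$ for $\al$. Your account is in fact more explicit than the paper's (which dismisses the inductive cases as ``straightforward''), and your remarks on using Lemma~\ref{l:decomposition} for argument positions and Lemma~\ref{l:relevance-bis} for the $\al,\al'$ bookkeeping are on target.
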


\begin{proof} 
  The proof is by induction on the context $\cxoc$. For this
    induction to work, we need as usual to adapt the statement for auxiliary
    derivations.  Notice that $\VM \neq \umult{ \,}$ by
  Lemma~\ref{l:relevance-bis}, since $\al' \in
  \fn{\cxoc\cwc{\co{\al'}tu}} $.  We only show the case $\cxoc =
  \boxdot$ since all the other ones are straightforward.  So assume
  $\cxoc=\boxdot$. Then the derivation of $\co{\al'}tu$ has the
  following form, where $K\neq \es$: {\small
$$
\infer{\infer{\Phit\!  \tri \! \Gam_0 \!\vdash t\!:\!\!\umult{\IMk\rew \VMk}_{\kK} \!\!\mid
         \!\al'\!\!:\!\VM_0; \Del_0\!\!
    \\
   \tri \Gamu \Vdash \! u: \inter_{\kK}\choice{\IMk}\! \mid \!\Delu  }
  {\Gam_0 \inter \Gamu \vdash tu: \union_{\kK} \VMk \mid \al':\VM_0; \Del_0 \union \Delu }}
      {\Gam_0 \inter \Gamu \vdash \co{\al'}tu: \TypCom \mid \al': \VM_0\union_{\kK}\VMk; \Del_0 \union \Delu}
      $$}
where $\Gam =\Gam_0 \inter \Gamu$,
and $\Del=\Del_0 \union \Delu$ and $\VM = \VM_0 \union_{\kK}\VMk $. 

We then construct the following derivation :
\[ \infer{\Phit }{
 \muju{\Gam}{\co{\al}t:\TypCom}{\al':\VM_0;\al:\umult{\IMk\rew \VMk}_{\kK}\union \Del_0}}
\]
Thus, we have all the claimed set and context equalities.
\end{proof}
}

\ignore{ 
\begin{lemma} [\textbf{Reverse Partial Replacement}]
  \label{l:reverse-partial-replacement}
  Let $\muju{\Gam}{\otx\cwc{\co{\al'}tu}:\TypCom}{\al':\VM;\Del}$, where  $\al,\al' \notin \fn{u}$.  Then \delia{there exist} $\Gam_0, \Del_0, \VM_0, K \neq \es, \IMk, \VMk, (\Gamk)_{\kK}, (\Delk)_{\kK}$ such that 
\begin{itemize}
\item $\Gam = \Gam_0 \inter_{\kK} \Gamk$,
\item $\Del=\Del_0 \union_{\kK} \Delk$,
\item $\VM = \VM_0 \union_{\kK} \VMk$, 
\item $\tri \muju{\Gam_0}{\otx\cwc{\co{\al}t}: \TypCom}{\al':\VM_0;\al:\umult{\IMk\rew \VMk}_{\kK} \union   \Del_0}$  and 
\item     $(\Phi^k_u\tri \Gamk \Vdash u : \choice{\IMk}  \mid \Delk)_{\kK}$
\end{itemize}
\end{lemma}
\pierre{Question : faut-il expliciter le typage de $\al$ dans l'enonce du lemme de Partial Reverse Replacement pour la preuve de la Explicit Subject Expansion ?}

\begin{proof} \delia{Pierre, Il faut changer la preuve pour que ca coincide avec
    le nouveau statement du lemme}.
  The proof is by induction on the context $\otx$. 
Notice that $\delia{\VM \neq \umult{ \,}}$ by Lemma~\ref{l:relevance}, since $\al' \in
\fn{\otx\cwc{\co{\al'}tu}} $.  We only show the case $\otx = \Box$
since all the other ones are straightforward.
So assume $\otx=\Box$. Then the derivation of $\co{\al'}tu$ has the following form, where $K\neq \es$:
\[ \infer{\infer{ \Phi_t\tri \muju{\Gam_0}{t:\umult{\IMk\rew \VMk}_{\kK}}
         {  \Del_0}
    \\
  (\Phiuk \tri \Gamk \Vdash u: \choice{\IMk} \mid \Delk)_{\kK}  }
  {\Gam_0 \inter_{\kK} \Gamk \vdash tu: \union_{\kK} \VMk \mid  \Del_0 \union_{\kK} \Delk}}
       {\Gam_0 \inter_{\kK} \Gamk \vdash \co{\al'}tu: \TypCom \mid \al': \union_{\kK}\VMk \union \Del_0 \union_{\kK} \Delk}\] 
where $\Gam =\Gam_0 \inter_{\kK} \Gamk$,
and $\Del=\Del_0 \union_{\kK} \Delk$ and $\VM =  \union_{\kK}\VMk $. 

We then construct the following derivation :
\[ \infer{\Phi_t }{
 \muju{\Gam_0}{\co{\al}t:\TypCom}{\al:\umult{\IMk\rew \VMk}_{\kK} \union \Del_0}}
\]
Thus, we have all the claimed set and context equalities.
\end{proof}
} 

\begin{property}[Subject Expansion for $\lmuex$]
\label{l:pse}
Let $\tingD{\Phi'}{\tyj{o'}{\Gam}{\Any\mid\Del}}$. If $o \Rew{\nonelmuex} o'$
(\ie\  a non-erasing $\lmuex$-step), 
then 
$\tingD{\Phi}{\tyj{o}{\Gam}{\Any\mid\Del}}$.
\end{property}

\begin{proof} By induction on $\Rew{\nonelmuex}$ 
using the Linear Reverse Lemmas mentioned above. See the Appendix for details.
\end{proof}

\technicalreport{
\begin{proof}
By induction on the non erasing reduction relation $\Rew{\nonelmuex}$. We only show the main cases of 
non-erasing reduction at the root, the other ones being straightforward. 

\begin{itemize}
\item If $o = (\slist [\lambda x.t]) u \Rew{} \slist[t[x/u]] = o'$, we proceed by induction on $\slist$, by detailing only the case $\slist = \Box$ as the other one is straightforward. 

The derivation $\Phi'$ has the following form :
$$\infer{\Phit \tri  \muju{\Gamt;x:\IM}{t:\UM}{\Delt}
   \\ \Theu \tri \muJu{\Gamu}{u:\choice{\IM}}{\Delu}   }{\muju{\Gam}{u:\UM}{\Del}}$$

We then construct the following derivation $\Phi$: 
$$\infer{ \infer*{\Phit \tri \muju{\Gamt;x:\IM}{t:\UM}{\Delt}}{
    \muju{\Gamt}{\l x. t:\IM\rew \UM}{\Delt}}
\\ \Theu \tri \muJu{\Gamu}{u:\choice{\IM}}{\Delu}
}{\muju{\Gam}{(\l x.t)u:\UM}{\Del}}
$$

\item If $o = (\slist[\mu \al. \Com]) u \Rew{} \slist[\mu
  \al'. \Com\rempl{\al}{\al'}{u}] = o'$, where $\al'$ is fresh,
  then we proceed by induction on $\slist$, by detailing only the
  case $\slist = \Box$ as the other one is straightforward.  Then
  $\Phi'$ has the following form :
$${\scriptsize
\infer{\infer{\PhiCom \tri \muju{\GamCom}{\Com:\TypCom }{\al:\VMal;\DelCom}\!\!\!\! \\ \Theu \tri \muJu{\Gamu}{u:\IMu}{\Delu}
    }{\muju{\GamCom \inter \Gamu}{\Com\rempl{\al}{\al'}{u}: \TypCom}{\DelCom \union \Delu; \al': \VMalp}
  }}{\muju{\GamCom \inter \Gamu}{\mu \al'. \Com\rempl{\al}{\al'}{u}:\choice{(\VMalp)}}{ \DelCom \union \Delu}}
}
$$
where $ \VMal= \umult{\IMl \rew \VMl}_{\lL}$, $\IMu= \choice{(\inter_{\lL}   \choice{\IMl})}$, $\VMalp= \vee_{\lL} \VMl$,
$\Any=\choice{(\VMalp)} = \choice{(\vee_{\lL} \VMl)}$, $\Gam = \GamCom \inter \Gamu$ and $\Del = \DelCom \union \Delu$.
Notice that the  name assignment of the judgment typing
$\Com\rempl{\al}{\al'}{u}$ has the form  $\DelCom \union \Delu; \al': \VMalp$
since $\al'$ is a fresh name  by hypothesis,  so that $\al' \notin \dom{\DelCom \union \Delu}$ 
holds by Lemma~\ref{l:relevance-bis}.
We now consider two cases:

If $L \neq \es$, then  $\choice{\umult{\IMl\rew \VMl}_{\lL}} = \umult{\IMl\rew \VMl}_{\lL}$, 
$\choice{(\inter_{\lL}   \choice{\IMl})} = \inter_{\lL}   \choice{\IMl}$, 
$\Any = \choice{(\vee_{\lL} \VMl)} = \vee_{\lL} \VMl$, so that 
we construct the following derivation $\Phi$:
 $$
  {\scriptsize   
   \infer{\infer*{\PhiCom}
                 {\muju{\Gam_\Com }
                       {\mu \al.\Com:\umult{\IMl\rew \VMl}_{\lL}}{\Del_\Com}  }\!\! \\
          \Theu \!\tri \muJu{\Gamu}{u:\inter_{\lL}   \choice{\IMl} }{\Delu}}
         {\muju{\GamCom \inter \Gamu}
               { (\mu \al.\Com)u : \vee_{\lL} \VMl}
               {\DelCom\union \Delu}}
  } $$

If $L = \es$, then let $\choice{(\inter_{\lL}   \choice{\IMl})}$
(resp. $ \choice{(\vee_{\lL} \VMl)} $) 
be of the form $\mult{\UM}$ (resp. $\umult{\sig}$) 
for some arbitrary $\UM$ (resp. $\sig$).
Then we choose  $\choice{\umult{\IMl\rew \VMl}_{\lL}}$ to be
$ \umult{\mult{\UM} \rew \umult{\sig}}$. We then construct the following derivation $\Phi$:
 $$
  {\small 
   \infer{\infer*{\PhiCom}
                 {\muju{\Gam_\Com }
                       {\mu \al.\Com:\umult{\mult{\UM} \rew \umult{\sig}}}{\Del_\Com}  }\!\! \\
          \Theu \!\tri \muJu{\Gamu}{u: \mult{\UM}}{\Delu}}
         {\muju{\GamCom \inter \Gamu}
               { (\mu \al.\Com)u : \umult{\sig}}
               {\DelCom\union \Delu}}
  } $$
We conclude since $\Any = \umult{\sig}$.

\item If $o = \cxot\cwc{x}[x/u] \Rew{} \cxot\cwc{u}[x/u] = o'$, with $|\cxot\cwc{x}|_x>1$.
The derivation $\Phi'$ has the following form:

$$
\infer{\Phi_{\cxot\cwc{u}}\tri \Gams;x:\IM \!\vdash\! \cxot\cwc{u}\!:\!\Any \!\mid \!\Dels\!\!\!\! \\
         \Theu\!\tri \!\Gamu\! \Vdash \!u:\!\choice{\IM}\! \!\mid \!\Delu}
        { \muju{\Gams \inter \Gamu}{\cxot\cwc{u}[x/u]: \Any}{\Dels \union \Delu} }
$$
where $x \in \fv{\cxot\cwc{u}}$ implies  $\IM\neq \emul$ by Lemma~\ref{l:relevance-bis}, so that $\choice{\IM}=\IM$.

        By Lemma~\ref{l:reverse-partial-substitution} applied to
        $\Phi_{\cxot\cwc{u}}$, we have
$\Gam'_0,\, \Del_0,\, \IM_0\neq \emul,\, \Gamu', \, \Delu'$ such that 
\begin{itemize}
   \item $\Gams;x:\IM = \Gam'_0 \inter \Gamu'$, 
     \item $\Dels = \Del_0 \union \Delu'$, 
     \item $\Phi_{\cxot\cwc{x}} \tri \muju{\Gam'_0 \inter x:\IM_0}{\cxot\cwc{x}:\Any }{\Del_0}$       
     \item $\tri \muJu{\Gamu'}{u:\IM_0}{\Delu'}$.
\end{itemize}
We set $\IM^+=\IM\inter \IM_0,\, \Gamu^+=\Gamu \inter \Gamu',\, \Delu^+=\Delu \union \Delu'$. Thus in particular $\choice{({\IM^+})} = \IM^+$.  By Lemma~\ref{l:relevance-bis}, $x\notin \dom{\Gamu'}$, so that $\Gamo'=\Gamo;x:\IM$ for some $\Gamo$
and thus  $\Gamo'\inter x:\IM_0=\Gamo;x:\IM^+$.
By Lemma~\ref{l:decomposition} there is a derivation
$\Theu^+\tri \muJu{\Gamu^+}{u:\IM^+}{\Delu^+}$
We then construct the following derivation $\Phi$ : 
$$\infer{\Phi_{\cxot\cwc{x}}   \\ 
         \Theu^+ \tri \Gamu^+  \vdash u : \IM^+  \mid \Delu^+ }
        {\Gam_0 \inter  \Gamu^+ \vdash \cxot\cwc{x}[x/u]: \Any\mid \Del_0
          \union \Delu^+}\ (\subs)$$
        We conclude since $\Gam_0\inter \Gamu^+=\Gam_0\inter \Gamu' \inter \Gamu=\Gams\inter \Gamu=\Gam$ and $\Del_0 \union \Delu^+=\Del_0\union \Delu' \union \Delu=\Dels\union \Delu=\Del$.

\item If $o = \cxot\cwc{x}[x/u] \Rew{} \cxot\cwc{u} = o'$, with $|\cxot\cwc{x}|_x = 1$.
The derivation $\Phi'$ ends with $\muju{\Gam}{\cxot\cwc{u}:\Any}{\Del}$ where  $x\notin \dom{\Gam}$ by Lemma~\ref{l:relevance-bis}.
        By Lemma~\ref{l:reverse-partial-substitution} applied to $\Phi'$, we have
$\Gam_0,\, \Del_0,\, \IM_0\neq \emul,\, \Gamu, \, \Delu$ such that 
\begin{itemize}
   \item $\Gam = \Gam_0 \inter \Gamu$, 
     \item $\Del = \Del_0 \union \Delu$, 
     \item $\Phi_{\cxot\cwc{x}} \tri \muju{\Gam_0 \inter x:\IM_0}{\cxot\cwc{x}:\Any }{\Del_0}$       
     \item $\tri \muJu{\Gamu}{u:\IM_0}{\Delu}$.
\end{itemize}
Thus in particular $\choice{\IM_0}=\IM_0$. Since $x\notin \dom{\Gam}$, $x\notin \dom{\Gam_0}$,
so that $\Gam_0 \inter x:\IM_0=\Gam_0; x:\IM_0$.
We then construct the following derivation $\Phi$ :
$$\infer{\Phi_{\cxot\cwc{x}}   \\ 
         \tri \muJu{\Gamu}{u:\IM_0}{\Delu} }
{ \muju{\Gamo\inter \Gamu}{\cxot\cwc{x}[x/u]: \UM}{\Del_0 \union \Delu}}$$
We conclude since $\Gam = \Gam_0 \inter \Gamu$ and $\Del = \Del_0 \union \Delu$.\\

\item If $o = \cxoc\cwc{\co{\al}t} \rempl{\al}{ \al'}{u}   \Rew{} 
 \cxoc\cwc{\co{\al'}t u} \rempl{\al}{ \al'}{u} = o'$, with 
 $|\cxoc\cwc{\co{\al}t]}_\al > 1$.
Then  $\Phi'$ has the following form :
   $${ \small \infer{
             \Phio'\tri \Gams\!\! \vdash \!\! \cxoc\cwc{\co{\al'}tu} \!\!:\!\Any \!\!\mid \!\!
                \Dels ;  \! \al'\!\!\!:\! \VM_{\!\al'}; \al\!:\!\!\VM_{\al} \\
              \Theu \!\tri \!\muJu{\Gamu\!}{\!u\!:\!\IMu \!}{\!\Delu}}
             {\muju{\Gams\! \inter\! \Gamu\!}
                   {\!\cxoc\cwc{\co{\al'}tu} \rempl{\al}{ \al'}{u}:\!\Any}
                   {\! (\Dels; \al' : \VM_{\!\al'} )\! \union \Delu\!\union\! \al'\!:\!\VM}}
 } $$
where $\VMal= \umult{\IMl \rew \VMl}_{\lL}$, $\IMu= \choice{(\inter_{\lL}   \choice{\IMl})}$,
$\VM=  \vee_{\lL} \VMl$,
$\Gam = \Gams \inter \Gamu$ and
$\Del = (\Dels; \al' : \VM_{\!\al'} ) \union \Delu \union \al':\VM =
(\Dels \union \Delu; \al' : \VM_{\!\al'} \union \VM)$
since $\al'\notin \fn{u}$ implies $\al' \notin \dom{\Delu}$. Since $\al\in \fn{\cxoc\cwc{\co{\al'}tu}}$, then $L\neq \es$ by Lemma~\ref{l:relevance-bis}, so that $\IMu= \inter_{\lL}   \choice{\IMl}$.

By Lemma~\ref{l:reverse-partial-replacement} applied to $\Phio'$, we have
$ \Gamo,\, \Delo'$,$\, \Phio,\, \VM_0,\, K \neq \es, \, (\IMk)_{\kK},\, (\VMk)_{\kK},\, \Gamu',\Delu'$, and $\Theu'$ such that 
\begin{itemize}
\item $\Gams = \Gam_0 \inter \Gamu'$,
\item $\Dels;\al:\VM_\al=\Delo' \union \Delu'$,
\item $\VMalp = \VM_0 \union_{\kK} \VMk$, 
\item $\Phio\tri \muju{\Gam_0}{\cxoc\cwc{\co{\al}t}: \Any}{\al':\VM_0;\al:\umult{\IMk\rew \VMk}_{\kK} \union  \Delo'}$  and 
\item     $\tri \muJu{\Gamu'}{u :  \inter_{\kK}\choice{\IMk} }{\Delu'}$
\end{itemize}
We set $L^+=L \uplus K,\, \Gamu^+=\Gamu \inter \Gamu',\, \Delu^+=\Delu \union \Delu'$ and $\IMu^+=\inter_{\lL^+} \choice{\IMl}$. By Lemma~\ref{l:relevance-bis}, $\al\notin \dom{\Delu'}$, so that $\Delo'=\Delo;\al:\VM_{\al}$ for some $\Delo$ and $\al':\VM_0;\al:\umult{\IMk\rew \VMk}_{\kK} \union   \Delo'=\al':\VM_0;\al:\umult{\IMl\rew \VMl}_{\lL^+};\Delo$ since $\al'\notin \dom{\Delo'}$.
By Lemma~\ref{l:decomposition}, there is $\Theu^+\tri \muJu{\Gamu^+}{u:\IMu^+}{\Delu^+}$.
We then construct the following derivation $\Phi$: 
$${\small 
 \infer{\Phio \\ \Theu^+ }{\muju{\Gamo \inter  \Gamu^+\!}
  {\cxoc\cwc{\co{\al}t}\rempl{\al}{ \al'}{u}:\!\Any }{
    \al'\!:\!\VM_0\union_{\lL^+}\!\VMl; \Delo\union \Delu^+}  }
}$$
We conclude since
$\Gamo \inter \Gamu^+=\Gamo \inter\Gamu'  \inter  \Gamu =\Gams \inter \Gamu =\Gam$,
$\Delo\union \Delu^+=\Delo \union \Delu' \union \Delu=\Dels \union \Delu$ and
$\VM_0\union_{\lL^+} \VMl=\VM_0 \union_{\kK} \VMk \union_{\lL} \VMl=
\VMalp \union_{\lL} \VMl = \VMalp \union \VM$.

\item If $o = \cxoc\cwc{\co{\al}t} \rempl{\al}{ \al'}{u}   \Rew{} 
 \cxoc\cwc{\co{\al'}t u}  = o'$, with 
$|\cxoc\cwc{\co{\al}t}|_\al =  1$, then the derivation $\Phi'$ necessarily ends 
with the  judgment
 $\muju{\Gam}{\cxoc\cwc{\co{\al'}tu}:\Any}{\Dels;\al':\VM}$, where $\Del=\Dels;\al':\VM$.

By Lemma~\ref{l:reverse-partial-replacement} applied to  $\Phi'$,  
 we have $ \Gamo,\, \Delo,\, \VM_0,\, K \neq \es, \, (\IMk)_{\kK},\, (\VMk)_{\kK},\, \Gamu,\, \Delu$, and $\Theu$ such that 
\begin{itemize}
\item $\Gam = \Gamo \inter \Gamu$,
\item $\Dels=\Delo \union \Delu$,
\item $\VM= \VMo \union_{\kK} \VMk$, 
\item $\Phio\tri \muju{\Gamo}{\cxoc\cwc{\co{\al}t}: \Any}{\al':\VM_0;\al:\umult{\IMk\rew \VMk}_{\kK} \union  \Delo} $  and 
\item $\Theu \tri \muJu{\Gamu}{u :  \inter_{\kK}\choice{\IMk} }{\Delu}$
\end{itemize}
Notice that  $K \neq \es$ implies $\choice{(  \inter_{\kK}\choice{\IMk})} =  \inter_{\kK}\choice{\IMk} $. 
Moreover, by Lemma~\ref{l:relevance-bis}, since $\al \notin  \fn{\cxoc\cwc{\co{\al'}t u}}$, then $\al \notin \dom{\Del}$, thus $\al \notin \dom{\Del_0}$
and $\al:\umult{\IMk\rew \VMk}_{\kK} \union \Delo =\al:\umult{\IMk\rew \VMk}_{\kK} ; \Delo$. 
We then construct $\Phi$ :
{\small   
$$\infer{
  \Phio\\ \Theu}{
  \muju{\Gam}{ \cxoc\cwc{\co{\al}t} \rempl{\al}{ \al'}{u}:\Any}{(\Delo;\al':\VMo)\!\union \Delu \union \al'\!:\!\union_{\kK}\VMk } }
$$
}

We conclude since $\al' \notin \fn{u}$ implies $\al' \notin \dom{\Delu}$
by Lemma~\ref{l:relevance-bis} so that  $(\Delo;\al':\VMo )\union \Delu\union \al': \union_{\kK} \VMk = \Delo\union \Delu; \al':\VMo\union_{\kK} \VMk=\Dels;\al':\VM=\Del$
  as desired.
\end{itemize}
\end{proof}
}

\ignore{ 
\begin{lemma}[\textbf{Subject Expansion for $\lmuex$}]
\label{l:pse}
Let $\tingD{\Phi'}{\tyj{o'}{\Gam}{\Any\mid\Del}}$. If $o \Rew{\nonelmuex} o'$, 
then 
  $\tingD{\Phi}{\tyj{o}{\Gam}{\Any\mid\Del}}$.
\end{lemma}

\begin{proof} By induction on the non erasing reduction relation $\Rew{\nonelmuex}$
using Lemma~\ref{l:reverse-partial-substitution} and Lemma~\ref{l:reverse-partial-replacement}. 
\end{proof}

{
\begin{proof}
By induction on the non erasing reduction relation $\Rew{\nonelmuex}$. We only show the main cases of 
non-erasing reduction at the root, the other ones being straightforward. 

\begin{itemize}
\item If $o = (\slist[\lambda x.t]) u \Rew{} \slist[t[x/u]] = o'$, with  $x\in \fv{t}$.
The  application is typed with the rule $\app$. We proceed by induction on $\slist$,
by detailing only the case $\slist = \Box$ as the other one is straightforward. 

The derivation $\Phi'$ has the following form :
$$
\infer{\Phi_t\rhd \Gam_t ;  x:\mult{\VMl}_{\lL}  \vdash t:\UM~|~ \Del_t \\ 
       (\Phi_u^{\ell} \rhd \Gam_u^{\ell} \vdash u:\VMl  \mid  \Del_u^{\ell})_{\lL}}
      {\Gam_t \inter_{\lL} \Gam_u^{\ell}  \vdash t[x/u]:\UM  \mid \Del_t\union_{\lL} \Del_u^{\ell}  }$$
with necessarily $L\neq \es$, so that $\IM:=\mult{\VMl}_{\lL}$ verifies $\choice{\IM} =\IM$. We have $\Gam = \Gam_t \inter_{\lL} \Gaml$ and $\Del = \Del_t \union_{\lL} \Dell$.
      
We then construct the following derivation $\Phi$: 
  $$ \infer{ \infer*{\Phi_t\rhd \Gam_t ; x:\IM \vdash t:\UM~|~
    \Del_t} {\Gam_t \vdash \lambda x.t: \umult{\IM \rew \UM} \mid
    \Del_t } \\ \infer*{(\Phi_u^{\ell} \rhd \Gam_u^{\ell} \vdash u:\VMl
    \mid \Del_u^{\ell})_{\lL}} {\Phi_u \rhd \inter_{\lL} \Gam_u^{\ell}
    \Vdash u:\choice{\IM }  \mid \union_{\lL} \Del_u^{\ell}} } {\Gam \vdash
  (\l x. t)u:\UM \mid \Del } $$\\

\item If $o = (\slist[\mu \al. \Com]) u \Rew{}  \slist[\mu \al'. \Com\rempl{\al}{\al'}{u}] = o'$ with  
$\al \in \fn{\Com}$. The  application is typed with the rule $\app$.

Then the derivation $\Phi'$ has the following form :  
$$
\infer{\infer{\Phi_\Com \tri \muju{\Gam_\Com}{\Com:\TypCom }{\al:\umult{\IMl\rew \VMl}_{\lL};\Del_\Com}
    \\ (\Phiul\tri \Gaml \Vdash u: \choice{\IMl} \mid \Dell  )_{\lL} }
             {\Gam_\Com \inter_{\lL} \Gaml \vdash \Com\rempl{\al}{\al'}{u}: \TypCom \mid \Del_\Com \union_{\lL} \Dell; \al': \union_{\lL} \VMl }}
      {\Gam_\Com \inter_{\lL} \Gaml  \vdash  \mu \al'. \Com\rempl{\al}{\al'}{u}:\union_{\lL} \VMl  \mid \Del_\Com \union_{\lL} \Dell  }
$$

  We then construct the following derivation $\Phi$:
  $$
  \infer{ \infer*{\Phi_\Com }{
      \muju{\Gam_\Com }{\mu \al.\Com:\umult{\IMl\rew \VMl}_{\lL}}{\Del_\Com}  } \\
      (\Phiul\tri \Gaml \Vdash u: \choice{\IMl} \mid \Dell  )_{\lL}
  }{ \muju{\Gam_\Com \inter_{\lL} \Gaml}{ (\mu \al.\Com)u :\union_{\lL}\VMl }{\Del_\Com\union_{\lL} \Dell} }
  $$\\

\item If $o = \ctx\cwc{x}[x/u] \Rew{} \ctx\cwc{u}[x/u] = o'$, with $|\ctx\cwc{x}|_x>1$.
The derivation $\Phi'$ has the following form:

$$\infer{\Phi_{\ctx\cwc{u}}\tri \muju{\Lam ;x:\mult{\UMl}_{\lL }}{\ctx\cwc{u}:\UM}{\Pi} \\
         (\Phiul\tri \muJu{\Gaml}{u:\UMl}{\Dell} )_{\lL}}
        { \muju{\Lam \inter_{\lL} \Gaml}{\ctx\cwc{u}[x/u]: \UM}{\Pi  \union_{\lL} \Dell} }\ (\subs_2)
$$

        By Lemma~\ref{l:reverse-partial-substitution} applied to
        $\Phi_{\ctx\cwc{u}}$, we have $K\neq \es$, a family of subderivations $(\Phiuk \tri
        \muju{\Gamk}{u:\UMk}{\Delk})_{\kK}$ 
        and a derivation $\Phi_{\ctx\cwc{x}} \tri \muju{\Gam_0 \inter  
          x:\mult{\UMk}_{\kK}}{\ctx\cwc{x}:\UM}{\Del_0}$
        s.t. $\Lam ;x:\mult{\UMl}_{\lL } = \Gam_0 \inter_{\kK} \Gamk$,
        $\Pi = \Del_0 \union_{\kK}\Delk$. Thus in particular, since $x \notin \fv{u}$, 
        $\Gam_0 = \Gam'_0; x:\mult{\UMl}_{\lL}$. 
        We then construct the following derivation $\Phi$ : 
$$\infer{\Phi_{\ctx\cwc{x}}   \\ 
         (\Phiul \tri \Gaml  \vdash u :  \UMl  \mid \Dell)_{\lL \uplus K} }
        {\Gam'_0 \inter_{\lL\uplus K} \Gaml \vdash \ctx\cwc{x}[x/u]: \UM\mid \Del_0 
       \union_{\lL\uplus K} \Dell}$$

        We conclude 
since $\Gam'_0 \inter_{\lL\uplus K} \Gaml = \Lam \inter_{\lL} \Gaml$
and $\Delo \union_{\lL} \Dell = \Pi  \inter_{\lL} \Dell$.\\

\item If $o = \ctx\cwc{x}[x/u] \Rew{} \ctx\cwc{u} = o'$, with $|\ctx\cwc{x}|_x = 1$.
The derivation $\Phi'$ ends with $\muju{\Gam}{\ctx\cwc{u}:\UM}{\Del}$ where  $x\notin \dom{\Gam}$.
        By Lemma~\ref{l:reverse-partial-substitution} applied to $\Phi'$, 
we have $K\neq \es$, a family of subderivations $(\Phiuk \tri \muju{\Gamk}{u:\UMk}{\Delk})_{\kK}$ 
and a derivation $\Phi_{\ctx\cwc{x}} \tri \muju{\Gam_0 \inter   x:\mult{\UMk}_{\kK}}{\ctx\cwc{x}:\UM}{\Del_0}$ s.t. $\Gam = \Gam_0 \inter_{\kK} \Gamk$ and $\Del = \Del_0\union_{\kK}\Delk$.
Thus in particular, since $x \notin \fv{u}$ and $x \notin \fv{\ctx\cwc{u}}$,
$x \notin \dom{\Gam_0}$. 
        We then construct the following derivation $\Phi$ : 
$$\infer{\Phi_{\ctx\cwc{x}}   \\ 
         (\Phiuk \tri \Gamk  \vdash u :  \UMk  \mid \Delk)_{\kL} }
        {\muju{\Gam_0 \inter_{\kK} \Gamk}{\ctx\cwc{x}[x/u]: \UM}{\Del_0 \union_{\kK} \Delk}}$$\\

     We conclude 
since $\Gam_0 \inter_{\kK} \Gamk = \Gam$
and $\Delo \union_{\kK} \Delk = \Del$.

\item If $o = \otx\cwc{\co{\al}t} \rempl{\al}{ \al'}{u}   \Rew{} 
 \otx\cwc{\co{\al'}t u} \rempl{\al}{ \al'}{u} = o'$, with 
 $|\otx\cwc{\co{\al}t]}_\al > 1$.

   The derivation $\Phi'$ has the following form
{\small
   $$
   \infer{\Phi'_0 \tri \muju{\Gam'_0}{\otx\cwc{\co{\al'}tu}:\TypCom}{\Pi; \al:\umult{\IMl\rew \UMl}_{\lL}}  \\
   (\Phiul\tri \muJu{\Gaml}{u: \choice{\IMl} }{\Dell})_{\lL}
   }{
\muju{\Gam'_0  \inter_{\lL} \Gaml}{ \otx\cwc{\co{\al'}tu} \rempl{\al}{ \al'}{u}:\TypCom }{\Pi \union_{\lL} \Dell }
     }
   $$}
where $\Gam = \Gam'_0  \inter_{\lL} \Gaml$ and $\Del= \Pi \union_{\lL} \Dell $.
By Lemma~\ref{l:relevance-bis} $\al'\in \dom{\Pi}$ so that
$\Pi = \Pi' \union \al': \VM$, for some $\VM$.  

By Lemma~\ref{l:reverse-partial-replacement} applied to $\Phio'$, there
is $K\neq \es$, subderivations $(\Phiuk \tri
\muJu{\Gamk}{u: \choice{\IMk} }{\Delk})_{\kK}$ and a derivation
$\Phio \tri
\muju{\Gamo}{\otx\cwc{\co{\al}t}:\TypCom}{\al:\umult{\IMk\rew
    \VMk}_{\kK} \union \Delo}$, 
where $\Gam'_0=\Gamo \inter_{\kK} \Gamk$ and 
$\Pi'; \al:\umult{\IMl\rew \UMl}_{\lL} = \Delo \union_{\kK} \Delk$
and $\VM = \union_{\kK} \VMk$. 
Thus, $\Delo = \Del'_0 ;  \al:\umult{\IMl\rew \UMl}_{\lL}$ and 
we then contruct the following derivation $\Phi$:
$$ \infer{\Phio \\ (\Phiul)_{\lL \uplus K} }{\muju{\Gamo \inter_{\lL \uplus K} \Gaml}
{\otx\cwc{\co{\al}t}\rempl{\al}{ \al'}{u} }{\al':\union_{\kK}\VMk \union \Del'_0\union_{\lL \uplus K}\Dell} }
$$
We conclude since we have
$\Gamo \inter_{\lL \uplus K} \Gaml = \Gam'_0 \inter_{\lL \uplus L} \Gaml = \Gam$
and $\al':\union_{\kK}\VMk \union \Del'_0 \union_{\lL \uplus}\Dell = 
\Pi \union_{\lL} \Dell = \Del$. 

\item If $o = \otx\cwc{\co{\al}t} \rempl{\al}{ \al'}{u}   \Rew{} 
 \otx\cwc{\co{\al'}t u}  = o'$, with 
$|\otx\cwc{\co{\al}t}|_\al =  1$, then the derivation $\Phi'$ necessarily ends 
with the  judgment
 $\muju{\Gam}{\otx\cwc{\co{\al'}tu}:\TypCom}{\Del\vee \al':\VM}$, for some $\VM$ (Lemma~\ref{l:relevance-bis}).

By Lemma~\ref{l:reverse-partial-replacement} applied to  $\Phi'$, there
is $K\neq \es$, subderivations $(\Phiuk \tri
\muJu{\Gamk}{u: \choice{\IMk} }{\Delk})_{\kK}$ and a derivation
$\Phio \tri
\muju{\Gamo}{\otx\cwc{\co{\al}t}:\TypCom}{\al:\umult{\IMk\rew
    \VMk}_{\kK} \union \Delo}$, 
where $\Gam=\Gamo \inter_{\kK} \Gamk$ and 
$\Del = \Delo \union_{\kK} \Delk$ and $\VM = \union_{\kK} \VMk$. 
Moreover, since $\al \notin  \fn{\otx\cwc{\co{\al'}t u}}$, 
then $\al \notin \dom{\Del}$, thus $\al \notin \dom{\Del_0}$
and $\al:\umult{\IMk\rew
    \VMk}_{\kK} \union \Delo = \al:\umult{\IMk\rew
    \VMk}_{\kK} ; \Delo$. 

We then construct $\Phi$ :
$$\infer{
  \Phio\\ (\Phiuk)_{\kK}}{
  \muju{\Gam}{ \otx\cwc{\co{\al}t} \rempl{\al}{ \al'}{u}} {\Del} }
$$
     
\end{itemize}
\end{proof}
}
}


\section{Strongly Normalizing $\lmuex$-Objects}
\label{s:sn-ex}

In this section we show a characterization of the set of strongly
$\lmuex$-normalizing terms by means of typability. The proof is done
in several steps.  The first key point is the characterization of the
set of strongly $\nonelmuex$-normalizing terms (instead of
\modifrefb{strongly normalizing $\lmuex$-terms}). For that, SR and SE
lemmas for the type system are used.
The second key point is the equivalence between strongly $\nonelmuex$
and $\lmuex$-normalizing terms.  While the inclusion $\SN{\lmuex}
\subseteq \SN{\nonelmuex}$ is straightforward, the fact that every
$\Gc$-reduction step can be \textit{postponed} w.r.t. any
$\nonelmuex$-step (Lemma~\ref{l:postponement}) turns out to be crucial
to show $\SN{\nonelmuex} \subseteq \SN{\lmuex}$.

These technical tools are now used to prove that
 $\SN{\nonelmuex}$ coincides  exactly  with the set of typable
terms.  To close the picture, \ie\ to show that also
$\SN{\lmuex}$ coincides with the set of typable terms, we
establish an equivalence between $\SN{\nonelmuex}$ and
$\SN{\lmuex}$.

As defined in Section~\ref{s:calculus}, for any $\lmuex$-object $o$, we
write now $\mrl{o}$ for the length of the maximal reduction sequence
starting at $o$.  The following equations will play a key role in our proof
of Theorem~\ref{th:typable-sn}.

\begin{lem}
\label{l:equalities-lmus}
\modifrefb{    \[ \begin{array}{llll}
      \mrl{x\,t_1 \ldots t_n} & = & +_{i=1 \ldots n} \mrl{t_i}\\
      \mrl{\l x.t } & = & \mrl{t}\\
  \mrl{\mu \al.\Com} & = &  \mrl{\Com}\\
  \mrl{\co{\al}t } & = & \mrl{t}\\
\mrl{(\l x. u)v \vec{t}  } & = & \mrl{u[x/v] \vec{t} } \\
\mrl{(\mu \al. \Com)v \vec{t}} & = & \mrl{(\mu \al'. \Com \rempl{\al}{ \al'}{v}) \vec{t}} \\
  \mrl{t[x/s]} & = & \mrl{t} + \mrl{s} +1 & \mbox{ if } |t|_x=0 \\
 \mrl{\Com \rempl{\al}{ \al'}{s}} & = &  \mrl{\Com} + \mrl{s} + 1  & \mbox{ if } |\Com|_\al=0\\
\mrl\cxtt \cwc{x}[x/u] {} & = & \mrl{\cxtt \cwc{u}} & \mbox{ if } |\cxtt \cwc{x}|_x = 1\\
\mrl{\cxcc  \cwc{[\al]t}\rempl{\al}{ \al'}{v}} & = & \mrl{\cxcc  \cwc{[\al']tv}} & \mbox{ if }|\cxcc  \cwc{[\al]t}|_\al = 1 \\
\mrl{\cxtt \cwc{x}[x/u]} & = & \mrl{\cxtt \cwc{u}[x/u]} & \mbox{ if } |\cxtt \cwc{x}|_x > 1\\
\mrl{\cxcc \cwc{[\al]t}\rempl{\al}{ \al'}{v}} & = & \mrl{\cxcc \cwc{[\al']tv}\rempl{\al}{ \al'}{v}} & \mbox{ if } |\cxcc \cwc{[\al]t}|_\al > 1\\
\mrl{(tu)[x/s]} & = & \mrl{t[x/s]u } & \mbox{ if } |u|_x =0\\
    \end{array} \]}
\end{lem}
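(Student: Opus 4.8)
The plan is to mirror the proof of Lemma~\ref{l:equalities}, which follows the \emph{Fundamental Lemma of Maximality} for the $\l$-calculus (\cite{RaamsdonkSSX99}, Lemma~3.18), adapting the argument to the small-step rules of $\lmuex$. All the equalities are read in $\mathbb{N} \cup \{\infty\}$ and each is proved by a double inequality. The $\geqslant$ direction is the routine one: from the left-hand object we display a reduction realising the right-hand quantity. For the structural shapes $x\,t_1\ldots t_n$, $\l x.t$, $\mu\al.\Com$, $\co{\al}t$ and $(tu)[x/s]$ with $|u|_x=0$ this just interleaves maximal reductions of the pairwise disjoint components; for the redex-firing shapes it suffices to perform the distinguished $\B$-, $\Mu$-, $\cntrs$-, $\cntrr$-, $\ders$-, $\derr$- or $\Gc$-step first and then continue maximally, so that $\geqslant$ is immediate from the definition of $\mrl{\cdot}$ as a supremum over reduction sequences. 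In the two erasing $\Gc$-cases the witnessing reduction first brings the erased argument $s$ to normal form and only then garbage-collects it, which accounts for the ``$\mrl{s}+1$'' summand.

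For the structural equalities the $\leqslant$ direction is equally direct. For $x\,t_1\ldots t_n$ one observes that no redex can ever appear at the application spine --- the head can never turn into an abstraction or a $\mu$-abstraction --- so every reduction decomposes uniquely into reductions internal to the disjoint $t_i$'s, whose lengths are bounded by $\mrl{t_i}$; the same observation handles $\l x.t$, $\mu\al.\Com$ and $\co{\al}t$. For $(tu)[x/s]$ with $|u|_x=0$ I would exhibit a position-preserving bijection between the redexes of $(tu)[x/s]$ and those of $t[x/s]\,u$: since every occurrence of $x$ lies in $t$, the $\cntrs$/$\ders$/$\Gcs$ machinery of the explicit substitution acts identically in the two objects, and an outer $\B$/$\Mu$ step on $t$ is mirrored by the same step performed underneath the explicit substitution; the two objects are thus bisimilar, hence have equal maximal reduction length.

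The heart of the proof is the $\leqslant$ direction of the redex-firing equalities: given an arbitrary maximal finite reduction $\rho$ starting at the left-hand object, one must transform it, without shortening it, into a reduction that contracts the distinguished redex first; the bound then follows since the tail of that reduction starts at the contractum. This rests on \emph{local} permutation/residual properties of $\lmuex$: a step internal to a subterm can be permuted with the head $\B$/$\Mu$ step and with the $\cntr$/$\der$/$\Gc$ steps, being possibly duplicated when its subterm is copied by $\cntrs$ along the occurrences of $x$ or by $\cntrr$ along the occurrences of $\co{\al}t$, relocated when it lands inside the freshly created explicit substitution/replacement or inside the copy produced by $\ders$/$\derr$, or discarded in the erasing cases. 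These permutations are proved by a finite case analysis on the relative positions of the two steps, under the usual capture-avoidance conventions, and none of them shortens the sequence, so iterating them pushes the distinguished contraction to the front while preserving maximality. The ``fresh-replacement'' reading of $\mu$-reduction recalled in Section~\ref{s:operational}, together with the $\lmuex$-analogues of stability by substitution and replacement, keeps the residual bookkeeping across an explicit replacement manageable.

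The step I expect to be the main obstacle is precisely this residual bookkeeping for the explicit constructs, which has no counterpart in the proof of Lemma~\ref{l:equalities}: a single $\mu$-redex is now resolved by a whole cascade of $\Mu$, $\cntrr$, $\derr$ and $\Gcr$ steps, through which the replacement argument travels occurrence by occurrence, and one must check that pushing an internal reduction past such a cascade produces exactly one copy per \emph{surviving} occurrence --- the counts $|\cxcc\cwc{\co{\al}t}|_{\al}$ and $|\cxtt\cwc{x}|_x$ deciding whether a step keeps the explicit construct (count $>1$) or consumes it (count $=1$) --- so that no length is lost and none is spuriously created. Once these permutation lemmas are in place, the two inequalities match line by line and the statement follows.
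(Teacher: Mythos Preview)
The paper does not actually give a proof for this lemma: it is stated and then the text moves directly on to the postponement property (Lemma~\ref{l:postponement}). The only hint the paper offers is the analogy with Lemma~\ref{l:equalities} for the big-step calculus, whose brief sketch invokes the Fundamental Lemma of Maximality from~\cite{RaamsdonkSSX99}. Your proposal follows precisely that pattern --- double inequality, the $\geqslant$ direction by exhibiting witnessing reductions, the $\leqslant$ direction by local permutation of an arbitrary maximal reduction so that the distinguished step is fired first --- and is therefore exactly what the paper implicitly expects. Your identification of the main new difficulty is also accurate: the single $\beta$/$\mu$ step of Lemma~\ref{l:equalities} is replaced here by a cascade $\B/\Mu$ followed by several $\cntr$/$\der$/$\Gc$ steps, and the residual bookkeeping must track copies of an internal step across that cascade occurrence by occurrence, governed by the counts $|\cxtt\cwc{x}|_x$ and $|\cxcc\cwc{\co{\al}t}|_\al$. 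The bisimulation argument you sketch for the last equality (moving the explicit substitution across an application whose argument is $x$-free, exploiting the at-a-distance form of the $\B$ rule) is the right idea and goes through.
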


In order to infer $\SN{\nonelmuex} \subseteq \SN{\lmuex}$, the following postponement property is crucial.

\begin{lem}[Postponement]
\label{l:postponement}
Let $o \in \objects{\lmuex}$. 
If $o {\Rewplus{\Gc}\Rew{\nonelmuex}} o'$
then $o {\Rew{\nonelmuex}\Rewplus{\Gc}} o'$.
\end{lem}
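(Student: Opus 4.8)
The plan is to isolate a one-step postponement statement and then lift it by a routine induction. Concretely, I would first prove the following claim: if $o \Rew{\Gc} o_1 \Rew{\nonelmuex} o'$, then there is $o_2$ with $o \Rew{\nonelmuex} o_2 \Rewplus{\Gc} o'$. Once this claim is available, the lemma follows by induction on the length $n \geqslant 1$ of the $\Gc$-prefix in $o = o_0 \Rew{\Gc} o_1 \Rew{\Gc} \cdots \Rew{\Gc} o_n \Rew{\nonelmuex} o'$: for $n=1$ this is exactly the claim; for $n>1$ I apply the claim to $o_{n-1} \Rew{\Gc} o_n \Rew{\nonelmuex} o'$, getting $p$ with $o_{n-1} \Rew{\nonelmuex} p \Rewplus{\Gc} o'$, then $o = o_0 \Rewplus{\Gc} o_{n-1} \Rew{\nonelmuex} p$ has a $\Gc$-prefix of length $n-1 \geqslant 1$, so the induction hypothesis gives $q$ with $o \Rew{\nonelmuex} q \Rewplus{\Gc} p$, and concatenating the two $\Gc$-sequences yields $o \Rew{\nonelmuex} q \Rewplus{\Gc} o'$.

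The substance is therefore entirely in the one-step claim, which I would prove by a case analysis on the relative positions of the $\Gc$-redex $R$ contracted in $o \Rew{\Gc} o_1$ and the $\nonelmuex$-redex $S$ contracted in $o_1 \Rew{\nonelmuex} o'$. The guiding observation is that a $\Gc$-step merely \emph{erases} one explicit substitution or explicit replacement node together with its argument, and never creates or destroys a $\B$-, $\cntrs$-, $\ders$-, $\Mu$-, $\cntrr$- or $\derr$-pattern; hence $S$ always has a counterpart $S'$ in $o$, possibly sitting inside a wider list context $\slist$ or context $\cxtt$/$\cxcc$ that contains the residual of $R$. The cases are then: (i) $S$ is disjoint from the residual of $R$ in $o_1$, and one simply swaps the two steps; (ii) $S$ lies inside the subterm or subcommand left behind by $R$, and one contracts $S'$ first in $o$ and re-applies $R$ afterwards as a single $\Gc$-step, which is still enabled because $\lmuex$-reduction never enlarges the sets $\fv{\cdot}$ and $\fn{\cdot}$; and (iii) the residual of $R$ occurs inside the argument, the list $\slist$, or the context $\cxtt$/$\cxcc$ of $S$, in which case contracting $S'$ in $o$ may \emph{duplicate} $R$ (precisely when $S'$ is a $\cntrs$- or $\cntrr$-step), whence the trailing part is a $\Rewplus{\Gc}$ rather than a single step. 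Throughout, the $\alpha$-conversion convention is used crucially: the variable $x$ (resp.\ name $\al$) bound by the explicit substitution (resp.\ replacement) erased by $R$ is fresh, so it does not occur free in the argument, list or body involved in $S'$; this guarantees both that the side conditions $x \notin \fv{\cdot}$ and $\al \notin \fn{\cdot}$ of $\Gcs$ and $\Gcr$ still hold on every residual of $R$, and that the occurrence-count side conditions of $\cntrs/\ders$ and $\cntrr/\derr$ are preserved up to a harmless increase of counts — which at worst turns a $\ders$/$\derr$ step of $o_1$ into a $\cntrs$/$\cntrr$ step of $o$ producing one extra explicit substitution/replacement that is in turn garbage and removed by a further $\Gc$-step. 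A short sub-argument also shows that at least one residual of $R$ survives the $S'$-step (the node erased by $R$ can never be the very explicit substitution/replacement consumed by $S'$, since the latter requires a positive occurrence count), so the trailing $\Gc$-sequence is genuinely non-empty, matching the $\Rewplus{\Gc}$ in the statement.

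The main obstacle I anticipate is bookkeeping, not conceptual: handling the distance rules $\B$ and $\Mu$, whose redex pattern spans a list $\slist$ of explicit substitutions, and the rules $\cntrs/\ders$, $\cntrr/\derr$, whose pattern spans a context $\cxtt$ or $\cxcc$, forces one to track exactly \emph{where} inside the frame of $S$ the residual of the erased node can lie, and to verify in each configuration that after contracting $S'$ every copy of that node is still a legal $\Gcs$- or $\Gcr$-redex. Making the freshness argument watertight (so that contracting $S'$ cannot capture the variable/name bound by $R$) is the only genuinely delicate point; the remaining manipulations are routine context surgery, carried out by induction on $\slist$, $\cxtt$, $\cxcc$ exactly as in the analogous commutation arguments for the linear substitution calculus.
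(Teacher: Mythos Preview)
Your proposal is correct and follows essentially the same approach as the paper: establish the one-step postponement $o \Rew{\Gc} \Rew{\nonelmuex} o'$ implies $o \Rew{\nonelmuex} \Rewplus{\Gc} o'$ by a case analysis on the relative position of the two redexes, then lift it to $\Rewplus{\Gc}$ by induction on the length of the $\Gc$-prefix. Your elaboration of the case analysis (disjoint redexes, nesting in either direction, the possible duplication under $\cntrs/\cntrr$, the switch from $\ders/\derr$ to $\cntrs/\cntrr$ when the erased argument carried extra occurrences, and the freshness bookkeeping via the $\alpha$-convention) is exactly the content the paper leaves implicit behind the words ``by cases''.
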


\begin{proof} We first show by cases  $o \Rew{\Gc} \Rew{\nonelmuex} o'$
implies $o \Rew{\nonelmuex}  \Rewplus{\Gc} o'$.
Then, the statement holds  by induction on the
number of  $\Gc$-steps from $o$.
\end{proof}

\begin{lem}[From $\nonelmuex$ to $\lmuex$]
\label{l:sn-mk-sn-m}
Let $o \in \objects{\lmuex}$. If $o \in \SN{\nonelmuex}$, then $o \in \SN{\lmuex}$.
\end{lem}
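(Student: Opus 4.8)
The plan is to argue by contradiction, using only Postponement (Lemma~\ref{l:postponement}) together with the (immediate) fact that $\Rew{\Gc}$ is itself strongly normalizing. For the latter: each $\Gc$-step, whether $t[x/u]\Rew{\Gcs} t$ with $x\notin\fv{t}$ or $\Com\rempl{\al}{\al'}{u}\Rew{\Gcr}\Com$ with $\al\notin\fn{\Com}$, erases an ES resp.\ an ER (and the term $u$ inside it), hence strictly decreases the size $|o|$ of the object. Consequently $\Rew{\Gc}$ terminates, so no infinite $\lmuex$-reduction sequence can be eventually composed only of $\Gc$-steps; in fact, after every finite prefix of such a sequence the remaining (still infinite) tail must contain a $\nonelmuex$-step.

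The core of the argument is the following claim: if $p$ admits an infinite $\lmuex$-reduction sequence, then there is $p'$ with $p \Rew{\nonelmuex} p'$ such that $p'$ again admits an infinite $\lmuex$-reduction sequence. To prove it, fix an infinite $\lmuex$-sequence from $p$ and isolate its first $\nonelmuex$-step: by the previous remark it has the shape $p \Rewn{\Gc} q_0 \Rew{\nonelmuex} q \Rew{\lmuex}\cdots$, where the leading $\Gc$-phase has $n\geq 0$ steps and the tail starting at $q$ is infinite. If $n=0$, take $p':=q$. If $n\geq 1$, then $p \Rewplus{\Gc} q_0 \Rew{\nonelmuex} q$, and Postponement (Lemma~\ref{l:postponement}) supplies $p'$ with $p \Rew{\nonelmuex} p' \Rewplus{\Gc} q$; since the tail from $q$ is infinite, $p' \Rewplus{\Gc} q \Rew{\lmuex}\cdots$ is an infinite $\lmuex$-reduction from $p'$. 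In both cases the claim holds.

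Now assume, for contradiction, that $o\in\SN{\nonelmuex}$ but $o\notin\SN{\lmuex}$, i.e.\ $o$ admits an infinite $\lmuex$-reduction sequence. Iterating the claim starting from $o$ produces an infinite sequence $o \Rew{\nonelmuex} o_1 \Rew{\nonelmuex} o_2 \Rew{\nonelmuex}\cdots$, contradicting $o\in\SN{\nonelmuex}$. Hence $o\in\SN{\lmuex}$, as desired.

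Given Postponement and strong normalization of $\Rew{\Gc}$, this proof is routine; the only point needing a little care is the bookkeeping in the claim, namely that ``admits an infinite $\lmuex$-reduction'' is preserved by a single $\nonelmuex$-step, which is exactly where Postponement is used to push the leading $\Gc$-phase past the $\nonelmuex$-step. The genuine difficulty lies in Postponement itself (which we are allowed to assume): there one checks case by case that a $\Gc$-step immediately followed by a $\nonelmuex$-step can be permuted, the delicate case being when the $\nonelmuex$-step duplicates the object carrying the erasable ES/ER — this is why the permuted erasing phase is $\Rewplus{\Gc}$ rather than a single $\Gc$-step.
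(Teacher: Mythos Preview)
Your proof is correct and uses exactly the same two ingredients as the paper: termination of $\Rew{\Gc}$ and Postponement (Lemma~\ref{l:postponement}). The paper phrases the argument as a well-founded induction on the lexicographic pair $\langle o,n\rangle$ (where $n$ is the length of the leading $\Gc$-phase of the given reduction sequence), whereas you argue by contradiction, extracting an infinite $\nonelmuex$-sequence; these are dual presentations of the same idea.
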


\begin{proof} 
We show that any reduction sequence $\rho: o
\Rew{\lmuex} \ldots$ is finite  by induction on the pair
$\pair{o}{n}$, where $n$ is the maximal integer 
such that  $\rho$ can be decomposed as
$\rho: o \Rew{\Gc}^n o' \Rew{\nonelmuex} o'' \Rew{} \ldots$  (this is well-defined since $\Rew{\Gc}$ is trivially terminating).  We compare the pair $\pair{o}{n}$ using $\Rew{\nonelmuex}$ for the first component (this is well-founded since $o\in \SN{\nonelmuex}$ by hypothesis) and the standard order on natural numbers for the second one. When the reduction sequence starts with at least one $\Gc$-step we conclude by Lemma~\ref{l:postponement}.
All the other cases are straightforward.
\ignore{
\begin{itemize}
\item The base case is $\pair{t}{0}$ where $t$ is in  $\nonelmuex$-nf. 
  Then $\rho$ is empty so that it is finite. 
\item If $\rho$  only contains $\nonelmuex$-steps then it is finite by the hypothesis
      $t \in \SN{\nonelmuex}$.
\item If $\rho$ only contains $\Gc$-steps then it is trivially finite since
 $\Gc$-reduction is trivially a terminating.
\item If $\rho$ starts with at least one $\Gc$-step, \ie\  $t \Rewplus{\Gc}  t'   \Rew{\nonelmuex} u
  \ldots $, then  Lemma~\ref{l:weak-postponement} gives
   a sequence $t   \Rew{\nonelmuex} s \Rewplus{\Gc}  u \ldots$.  The 
   subsequence $\rho': s\Rewplus{\Gc}  u \ldots$ is
   then finite by the \ih\, so that the one starting at $u$ is finite too. We can thus conclude.
\item If $\rho$ starts with at least one $\nonelmuex$, \ie\ 
$t \Rew{\nonelmuex} s \Rew{}\ldots $, then
the subsequence $\rho': s \Rew{} \ldots \Rew{}$ is finite by the \ih\  We can thus conclude.
\end{itemize}
}
\end{proof}

We conclude with the main theorem of this section: 

\begin{thm}
\label{th:typable-sn}
Let $o \in \objects{\lmuex}$. Then $o \in \SN{\lmuex}$ iff $o$ is  typable.
\end{thm}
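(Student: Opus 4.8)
The plan is to reproduce, in the explicit setting, the two‑inclusion argument of Theorem~\ref{t:final}, but routed through the non‑erasing subcalculus $\nonelmuex$, exactly as announced at the beginning of this section: first prove that the $\Slmuex$‑typable objects are precisely those in $\SN{\nonelmuex}$, and then transfer this to $\SN{\lmuex}$ using the postponement of $\Gc$‑steps.

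\textbf{Typability $\Rightarrow \SN{\lmuex}$.} Suppose $\Phi\tri\muju{\Gam}{o:\Any}{\Del}$. By Weighted Subject Reduction for $\lmuex$ (Property~\ref{l:psr}) every $\nonelmuex$‑step strictly decreases $\sz{\_}$; since $\sz{\Phi}$ is a half‑integer bounded below by $1$, hence a well‑founded measure, no infinite $\nonelmuex$‑reduction starts at $o$, \ie\ $o\in\SN{\nonelmuex}$ (and Lemma~\ref{l:equalities-lmus} moreover yields an explicit bound on $\eta_{\nonelmuex}(o)$ in terms of $\sz{\Phi}$, as for $\lmu$). Then $o\in\SN{\lmuex}$ by Lemma~\ref{l:sn-mk-sn-m}, whose proof relies on the Postponement Lemma~\ref{l:postponement}.

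\textbf{$\SN{\lmuex}\Rightarrow$ typability.} As $\Rew{\nonelmuex}\subseteq\Rew{\lmuex}$ we have $\SN{\lmuex}\subseteq\SN{\nonelmuex}$, so it suffices to type every $o\in\SN{\nonelmuex}$, which I would do by induction on $\eta_{\nonelmuex}(o)$. If $o$ has a $\nonelmuex$‑redex, contract any one of them, obtaining $o'$ with $o\Rew{\nonelmuex}o'$ and $\eta_{\nonelmuex}(o')<\eta_{\nonelmuex}(o)$; the \ih\ gives a derivation of $o'$, and Subject Expansion for $\lmuex$ (Property~\ref{l:pse}), which applies precisely because the step is non‑erasing, transports it back to $o$. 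If $o$ has no $\nonelmuex$‑redex then it is a $\nonelmuex$‑normal form (possibly still carrying $\Gc$‑redexes), and the base case is to show every such object typable: this is a straightforward structural induction, using $(\ax)$, $(\introarrow)$, $(\mud)$ and $(\muu)$ for the pure constructs; for the garbage constructs $t[x/s]$ with $x\notin\fv t$ and $\Com\rempl{\al}{\al'}{s}$ with $\al\notin\fn\Com$, typing $t$ (resp.\ $\Com$) and $s$ by the inner \ih, invoking relevance (Lemma~\ref{l:relevance-bis}) to see the erased symbol is absent from the relevant assignment, and feeding the derivation of $s$ — which carries a non‑empty union type by Lemma~\ref{l:non-empty-union} — through $\choice{\emul}$ and rule $(\many)$ into rule $(\subs)$, resp.\ $(\repl)$; and for neutral normal forms (a variable applied to a list of normal arguments, possibly interleaved with garbage substitutions) strengthening the statement so that such a term can be assigned an arbitrary singleton union type, which makes room for the head variable's arrow type. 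Combining the two directions with $\SN{\lmuex}=\SN{\nonelmuex}$ yields the theorem.

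\textbf{Main obstacle.} Two points carry the weight. First, Subject Reduction and Subject Expansion genuinely fail on \emph{erasing} steps in $\Slmuex$ (\cf\ Fig.~\ref{fig:erasing-proof-reduction-beta} and the discussion in Section~\ref{s:forward}); this is why the typability characterisation is obtained through $\nonelmuex$ rather than $\lmuex$ directly, and why the $\Gc$‑redexes surviving in $\nonelmuex$‑normal forms must be absorbed by the direct recombination argument above (relevance plus non‑emptiness of union types) rather than by an anti‑reduction step. Second, the equivalence $\SN{\nonelmuex}=\SN{\lmuex}$ — on which the whole transfer depends — is not immediate, since although $\Gc$‑reduction terminates in isolation it could a priori create fresh $\nonelmuex$‑redexes; ruling this out is precisely the content of the Postponement Lemma~\ref{l:postponement}, and its proof, by a case analysis on how $\Gc$‑ and $\nonelmuex$‑redexes can overlap, is the technical core of this part of the development.
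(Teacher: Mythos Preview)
Your proposal is correct and takes essentially the paper's approach: both directions are routed through $\SN{\nonelmuex}$, using Property~\ref{l:psr} together with Lemma~\ref{l:sn-mk-sn-m} (hence Lemma~\ref{l:postponement}) for typability $\Rightarrow \SN{\lmuex}$, and Property~\ref{l:pse} for the converse. The only difference is organizational, in the $\SN{\nonelmuex}\Rightarrow$ typable direction: the paper runs a single induction on $\mrl{o}$ driven by the equalities of Lemma~\ref{l:equalities-lmus}, handling all syntactic shapes at once, whereas you separate an inductive step (contract any $\nonelmuex$-redex and apply SE) from an explicit base case (type $\nonelmuex$-normal forms by structural induction, absorbing the surviving garbage ES/ER through $(\subs)$/$(\repl)$ and the choice operator, with the expected strengthening for neutral heads). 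Your decomposition is arguably cleaner and sidesteps Lemma~\ref{l:equalities-lmus}, while the paper's is more uniform; but the key ingredients are identical. One small slip: your parenthetical about Lemma~\ref{l:equalities-lmus} providing the bound on $\eta_{\nonelmuex}(o)$ is misplaced --- that bound comes directly from iterating Property~\ref{l:psr}, not from the equalities lemma.
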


\begin{proof}
  Let $\tingD{\Phi}{\tyj{o}{\Gam}{\tau} \mid \Del}$. Assume 
  $o \notin \SN{\nonelmuex}$ so that \modifrefb{there exists an infinite} sequence 
$o = o_0 \Rew{\nonelmuex} o_1 \Rew{\nonelmuex} o_2 \Rew{\nonelmuex}\cdots$.
By Lemma~\ref{l:psr} $\tingD{\Phi_i}{\tyj{o_i}{\Gam}{\tau} \mid \Del}$ for every $i$, 
and there exists an infinite sequence $\sz{\Phi_{0}} > \sz{\Phi_{1}} > \sz{\Phi_{2}} > \ldots$, 
which leads to a contradiction because $\sz{\_}$ is a half-integer $\geqslant 1$. Therefore, $o \in \SN{\nonelmuex} \subseteq_{\mbox{Lemma}~\ref{l:sn-mk-sn-m}} \SN{\lmuex}$.

For the converse,  $o \in \SN{\lmuex} \subseteq \SN{\nonelmuex}$ because $\Rew{\nonelmuex} \subseteq \Rew{\lmuex}$. 
We then show that $o \in \SN{\nonelmuex}$ implies $o$ is typable.
For that, we use the equalities in Lemma~\ref{l:equalities-lmus} to reason
by induction on $\mrl{t}$. The cases (1)-(6) and (13) are straightforward
while the cases (7)-(12) use Lemma~\ref{l:pse} (Partial Subject Expansion). 
\end{proof}

It is worth noticing that the proof of Theorem~\ref{th:typable-sn} is
self-contained: we do not use at all the previous characterization of
strongly normalizing objects in the $\lmu$-calculus that we have
developed in Section~\ref{s:sn}. We remark however that an  alternative proof of this theorem can be given
in terms of the projection function defined in Section~\ref{s:operational-ex}, 
an appropriate \modifrefb{preservation of strong normalization}-like property~\cite{Kes09}, and Theorem~\ref{t:final}.


\section{Conclusion}

This paper provides non-idempotent type assignment systems $\Hlmu$  and $\Slmu$ for the $\lmu$-calculus, characterizing, respectively, head and strongly normalizing
terms. These systems feature intersection and union types and can be used to get quantitative
information of $\lmu$-reduction sequences in the following sense:
\begin{itemize}
\item Whenever  $o$ is typable in system $\Hlmu$, then
   its type derivation
   gives a measure providing an  upper bound to the length of  the head-reduction  strategy starting at $o$.
\item The same happens with system $\Slmu$ with respect to the maximal
  length of a reduction sequence starting at $o$.
\item Systems  $\Hlmu$ and $\Slmu$ have suggested the definition
of the calculus $\lmuex$, which implements   a  \modifref{small-step operational semantics} for classical natural deduction that is  an extension of the {\it substitution at a distance paradigm}
to the classical case.
\item  The calculus $\lmuex$ was endowed  with 
an extension of the typing system $\Slmu$ presented for the
$\lmu$-calculus. The resulting system does not only characterize
strong-normalization of small-step reduction but also gives quantitative information about
it.  
\end{itemize}

Following Chapter 3 of~\cite{Krivine93}
(resp.~\cite{BucciarelliKesnerVentura}) in the framework of idempotent
(resp. non-idempotent) intersection types for the $\l$-calculus, it is
also possible to use system $\Hlmu$ to characterize \textit{weak}
normalization of $\lmu$-terms. This can be done by considering a
restricted class of judgments based on positive/negative occurrences
of the empty type $\emul$. This characterization also gives a
certification of the fact that the leftmost-outermost strategy
  is complete for weak normalization in the $\lmu$-calculus.

This work suggests many perspectives in the close future, including: 
\begin{itemize}
\item Quantitative types are  a powerful tool to provide \textit{relational models} for $\lambda$-calculus~\cite{Carvalho07,AEtlca15}. The construction of such models for $\lmu$  should be investigated, particularly to understand in the classical case the collapse relation between quantitative and qualitative models~\cite{Ehrhard12}.
\item  We expect to be able to transfer the ideas in this paper to a  \textit{classical sequent calculus} system, as was already done for focused intuitionistic logic~\cite{KV15}.  \modifref{In particular,  the relational model proposed for the $\bar{\lambda}\mu$-calculus~\cite{Vaux07} could be useful for this purpose}.
\item  The fact that idempotent types were already used to  show \textit{observational equivalence} between call-by-name and call-by-need~\cite{Kesner16} in  intuitionistic logic suggests that  our typing system $\Slmuex$ could be used in the future to provide a type-theoretical view of the  fact that classical call-by-name and classical call-by-need are \textit{not} observationally equivalent~\cite{PedrotSaurin16}. 
\item Moreover, as in~\cite{bernadetleng11},  it should be possible to obtain {\it exact} bounds (and
    not only \textit{upper} bounds) for the lengths of the
    head-reduction and the maximal reduction sequences.
    Although this result remains as future work,
    we remark that the
    difficult and conceptual part of the technique relies on a
    decreasing measure for $\lmu$-reduction, which is precisely one of the
    contributions of this paper.
\item 
The \textit{inhabitation problem} for $\l$-calculus is known to be undecidable for idempotent intersection types~\cite{Urzyczyn99}, but decidable for the non-idempotent ones~\cite{BKRDR14}. We may conjecture that inhabitation is also decidable for $\Hlmu$. 
\end{itemize}


{\bf Acknowledgment:}
We would like to thank Vincent Guisse, who  started  a reflexion on quantitative types for the $\lmu$-calculus \modifref{during his M1 internship in Univ. Paris-Diderot}.

\renewcommand{\em}{\it}
\bibliographystyle{abbrv}
\bibliography{paper}

\begin{thebibliography}{10}

\bibitem{ABKL14}
B.~Accattoli, E.~Bonelli, D.~Kesner, and C.~Lombardi.
\newblock A nonstandard standardization theorem.
\newblock In P.~Sewell, editor, {\em Proceedings of the 41st Annual ACM
  Symposium on Principles of Programming Languages (POPL)}, pages 659--670.
  {ACM} Press, 2014.

\bibitem{AK10}
B.~Accattoli and D.~Kesner.
\newblock The structural lambda-calculus.
\newblock In A.~Dawar and H.~Veith, editors, {\em Proceedings of 24th EACSL
  Conference on Computer Science Logic}, volume 6247 of {\em Lecture Notes in
  Computer Science}, pages 381--395. Springer-Verlag, Aug. 2010.

\bibitem{AEtlca15}
S.~Amini and T.~Ehrhard.
\newblock {On Classical PCF, Linear Logic and the MIX Rule}.
\newblock In S.~Kreutzer, editor, {\em 24th EACSL Annual Conference on Computer
  Science Logic (CSL 2015)}, volume~41 of {\em Leibniz International
  Proceedings in Informatics (LIPIcs)}, pages 582--596. Schloss
  Dagstuhl--Leibniz-Zentrum fuer Informatik, 2015.

\bibitem{Andou03}
Y.~Andou.
\newblock {C}hurch-{R}osser property of a simple reduction for full first-order
  classical natural deduction.
\newblock {\em Annals of Pure Applied Logic}, 119(1-3):225--237, 2003.

\bibitem{AriolaHS11}
Z.~M. Ariola, H.~Herbelin, and A.~Saurin.
\newblock Classical call-by-need and duality.
\newblock In Ong \cite{tlca11}, pages 27--44.

\bibitem{BarendregtCoppoDezani83}
H.~Barendregt, M.~Coppo, and M.~Dezani-Ciancaglini.
\newblock A filter lambda model and the completeness of type assignment.
\newblock {\em Bulletin of Symbolic Logic}, 48:931--940, 1983.

\bibitem{BN18}
P.~Batty{\'{a}}nyi and K.~Nour.
\newblock An estimation for the lengths of reduction sequences of the
  $\lambda\mu\rho\theta$-calculus.
\newblock {\em Logical Methods in Computer Science}, 14(2), 2018.

\bibitem{bernadetleng11}
A.~Bernadet and S.~Lengrand.
\newblock Complexity of strongly normalising $\lambda$-terms via non-idempotent
  intersection types.
\newblock In M.~Hofmann, editor, {\em Foundations of Software Science and
  Computation Structures (FOSSACS)}, volume 6604 of {\em Lecture Notes in
  Computer Science}, pages 88--107. Springer-Verlag, 2011.

\bibitem{Bernadet-Lengrand2013}
A.~Bernadet and S.~Lengrand.
\newblock Non-idempotent intersection types and strong normalisation.
\newblock {\em Logical Methods in Computer Science}, 9(4), 2013.

\bibitem{BKRDR14}
A.~Bucciarelli, D.~Kesner, and S.~{Ronchi Della Rocca}.
\newblock The inhabitation problem for non-idempotent intersection types.
\newblock In D\'{\i}az et~al. \cite{tcs14}, pages 341--354.

\bibitem{BucciarelliKesnerVentura}
A.~Bucciarelli, D.~Kesner, and D.~Ventura.
\newblock Non-idempotent intersection types for the lambda-calculus.
\newblock {\em Logic Journal of the IGPL}, 2017.

\bibitem{CDC78}
M.~Coppo and M.~Dezani-Ciancaglini.
\newblock A new type assignment for lambda-terms.
\newblock {\em Archive for Mathematical Logic}, 19:139--156, 1978.

\bibitem{CDC80}
M.~Coppo and M.~Dezani-Ciancaglini.
\newblock An extension of the basic functionality theory for the
  $\lambda$-calculus.
\newblock {\em Notre Dame Journal of Formal Logic}, 4:685--693, 1980.

\bibitem{CH00}
P.~Curien and H.~Herbelin.
\newblock The duality of computation.
\newblock In M.~Odersky and P.~Wadler, editors, {\em Proceedings of the Fifth
  {ACM} {SIGPLAN} International Conference on Functional Programming {(ICFP}
  '00), Montreal, Canada, September 18-21, 2000.}, pages 233--243. {ACM}, 2000.

\bibitem{DavidNour2003}
R.~David and K.~Nour.
\newblock A short proof of the strong normalization of classical natural
  deduction with disjunction.
\newblock {\em J. Symb. Log.}, 68(4):1277--1288, 2003.

\bibitem{DebeneRonchiITRS12}
E.~De~Benedetti and S.~{Ronchi Della Rocca}.
\newblock Bounding normalization time through intersection types.
\newblock In Graham{-}Lengrand and Paolini \cite{itrs12}, pages 48--57.

\bibitem{Carvalho07}
D.~de~Carvalho.
\newblock {\em S\'emantiques de la logique lin\'eaire et temps de calcul}.
\newblock {T}hese de doctorat, Universit\'e Aix-Marseille II, 2007.

\bibitem{Carvalho18}
D.~de~Carvalho.
\newblock Execution time of $\lambda$-terms via denotational semantics and
  intersection types.
\newblock {\em Mathematical Structures in Computer Science}, 28(7):1169--1203,
  2018.

\bibitem{tcs14}
J.~D\'{\i}az, I.~Lanese, and D.~Sangiorgi, editors.
\newblock {\em Proceedings of the 8th International Conference on Theoretical
  Computer Science (TCS)}, volume 8705 of {\em Lecture Notes in Computer
  Science}. Springer-Verlag, 2014.

\bibitem{DGL08}
D.~J. Dougherty, S.~Ghilezan, and P.~Lescanne.
\newblock Characterizing strong normalization in the {C}urien-{H}erbelin
  symmetric lambda calculus: Extending the coppo-dezani heritage.
\newblock {\em Theoretical Computer Science}, 398(1-3):114--128, 2008.

\bibitem{Ehrhard12}
T.~Ehrhard.
\newblock Collapsing non-idempotent intersection types.
\newblock In P.~C\'egielski and A.~Durand, editors, {\em Proceedings of 26th
  EACSL Conference on Computer Science Logic}, volume~16 of {\em LIPIcs}, pages
  259--273. Schloss Dagstuhl - Leibniz-Zentrum fuer Informatik, 2012.

\bibitem{Gardner}
P.~Gardner.
\newblock Discovering needed reductions using type theory.
\newblock In M.~Hagiya and J.~C. Mitchell, editors, {\em Theoretical Aspects of
  Computer Software, International Conference {TACS} '94, Sendai, Japan, April
  19-22, 1994, Proceedings}, volume 789 of {\em Lecture Notes in Computer
  Science}, pages 555--574. Springer, 1994.

\bibitem{Girard87}
J.-Y. Girard.
\newblock Linear logic.
\newblock {\em Theoretical Computer Science}, 50:1--102, 1987.

\bibitem{GiraFont}
J.-Y. Girard, Y.~Lafont, and P.~Taylor.
\newblock {\em Proofs and Types}.
\newblock Cambridge University Press, 1990.

\bibitem{itrs12}
S.~Graham{-}Lengrand and L.~Paolini, editors.
\newblock {\em Proceedings of the Sixth Workshop on Intersection Types and
  Related Systems (ITRS), Dubrovnik, Croatia, 2012}, volume 121 of {\em
  Electronic Proceedings in Theoretical Computer Science}, 2013.

\bibitem{Griffin}
T.~Griffin.
\newblock A formulae-as-types notion of control.
\newblock In {\em 17th Annual ACM Symposium on Principles of Programming
  Languages (POPL)}, pages 47--58. {ACM} Press, 1990.

\bibitem{Kes09}
D.~Kesner.
\newblock A theory of explicit substitutions with safe and full composition.
\newblock {\em Logical Methods in Computer Science}, 5(3:1):1--29, 2009.

\bibitem{Kesner16}
D.~Kesner.
\newblock Reasoning about call-by-need by means of types.
\newblock In B.~Jacobs and C.~L{\"{o}}ding, editors, {\em Foundations of
  Software Science and Computation Structures - 19th International Conference,
  {FOSSACS} 2016, Held as Part of the European Joint Conferences on Theory and
  Practice of Software, {ETAPS} 2016, Eindhoven, The Netherlands, April 2-8,
  2016, Proceedings}, volume 9634 of {\em Lecture Notes in Computer Science},
  pages 424--441. Springer-Verlag, 2016.

\bibitem{KV14}
D.~Kesner and D.~Ventura.
\newblock Quantitative types for the linear substitution calculus.
\newblock In D\'{\i}az et~al. \cite{tcs14}, pages 296--310.

\bibitem{KV15}
D.~Kesner and D.~Ventura.
\newblock A resource aware computational interpretation for {H}erbelin's
  syntax.
\newblock In M.~Leucker, C.~Rueda, and F.~D. Valencia, editors, {\em
  Theoretical Aspects of Computing - {ICTAC} 2015 - 12th International
  Colloquium Cali, Colombia, October 29-31, 2015, Proceedings}, volume 9399 of
  {\em Lecture Notes in Computer Science}, pages 388--403. Springer-Verlag,
  2015.

\bibitem{KV17}
D.~Kesner and P.~Vial.
\newblock Types as resources for classical natural deduction.
\newblock In D.~Miller, editor, {\em Proceedings of the 2nd International
  Conference on Formal Structures for Computation and Deduction, {FSCD} 2017},
  volume~84 of {\em LIPIcs}, pages 24:1--24:17. Schloss Dagstuhl -
  Leibniz-Zentrum fuer Informatik, Sept. 2017.

\bibitem{kfoury96}
A.~Kfoury.
\newblock A linearization of the lambda-calculus and consequences.
\newblock Technical report, Boston Universsity, 1996.

\bibitem{KfouryWells2004}
A.~Kfoury and J.~Wells.
\newblock Principality and type inference for intersection types using
  expansion variables.
\newblock {\em Theoretical Computer Science}, 311(1-3):1--70, 2004.

\bibitem{KikuchiSakurai14}
K.~Kikuchi and T.~Sakurai.
\newblock A translation of intersection and union types for the
  $\lambda\mu$-calculus.
\newblock In J.~Garrigue, editor, {\em Programming Languages and Systems - 12th
  Asian Symposium, {APLAS} 2014, Singapore, November 17-19, 2014, Proceedings},
  volume 8858 of {\em Lecture Notes in Computer Science}, pages 120--139.
  Springer-Verlag, 2014.

\bibitem{Krivine93}
J.-L. Krivine.
\newblock {\em Lambda-calculus, types and models}.
\newblock Ellis Horwood, 1993.

\bibitem{Lau04}
O.~Laurent.
\newblock On the denotational semantics of the untyped lambda-mu calculus,
  2004.
\newblock Unpublished note.

\bibitem{CDV81}
B.~V. Mario~Coppo, Mariangiola Dezani-Ciancaglini.
\newblock Functional characters of solvable terms.
\newblock {\em Mathematical Logic Quarterly}, 27:45--58, 1981.

\bibitem{NeergaardM04}
P.~M. Neergaard and H.~G. Mairson.
\newblock Types, potency, and idempotency: why nonlinearity and amnesia make a
  type system work.
\newblock In C.~Okasaki and K.~Fisher, editors, {\em Proceedings of the Ninth
  {ACM} {SIGPLAN} International Conference on Functional Programming (ICFP)},
  pages 138--149. {ACM} Press, 2004.

\bibitem{tlca11}
L.~Ong, editor.
\newblock {\em Typed Lambda Calculi and Applications - 10th International
  Conference, {TLCA} 2011, Novi Sad, Serbia, June 1-3, 2011. Proceedings},
  volume 6690 of {\em Lecture Notes in Computer Science}. Springer-Verlag,
  2011.

\bibitem{OngRamsay11}
L.~Ong and S.~J. Ramsay.
\newblock Verifying higher-order functional programs with pattern matching
  algebraic data type\ s.
\newblock In T.~Ball and M.~Sagiv, editors, {\em Proceedings of the 38th Annual
  ACM Symposium on Principles of Programming Languages (POPL)}, pages 587--598.
  {ACM} Press, 2011.

\bibitem{PaganiRonchi10}
M.~Pagani and S.~{Ronchi Della Rocca}.
\newblock Solvability in resource lambda-calculus.
\newblock In L.~Ong, editor, {\em Foundations of Software Science and
  Computation Structures}, volume 6014 of {\em Lecture Notes in Computer
  Science}, pages 358--373. Springer-Verlag, 2010.

\bibitem{Parigot92}
M.~Parigot.
\newblock $\lambda\mu$-calculus: an algorithmic interpretation of classical
  natural deduction.
\newblock In A.~Voronkov, editor, {\em International Conference on Logic
  Programming and Automated Reasoning}, volume 624 of {\em Lecture Notes in
  Computer Science}, pages 190--201. Springer-Verlag, July 1992.

\bibitem{PedrotSaurin16}
P.~P\'{e}drot and A.~Saurin.
\newblock Classical by-need.
\newblock In P.~Thiemann, editor, {\em Programming Languages and Systems - 25th
  European Symposium on Programming, {ESOP} 2016, Held as Part of the European
  Joint Conferences on Theory and Practice of Software, {ETAPS} 2016,
  Eindhoven, The Netherlands, April 2-8, 2016, Proceedings}, volume 9632 of
  {\em Lecture Notes in Computer Science}, pages 616--643. Springer-Verlag,
  2016.

\bibitem{PhDPolonoski}
E.~Polonovski.
\newblock {\em Substitutions explicites, logique et normalisation}.
\newblock Th\`ese de doctorat, Universit\'e Paris 7, 2004.

\bibitem{Selinger01}
P.~Selinger.
\newblock Control categories and duality: on the categorical semantics of the
  lambda-mu calculus.
\newblock {\em Mathematical Structures in Computer Science}, 11(2):207--260,
  2001.

\bibitem{Urzyczyn99}
P.~Urzyczyn.
\newblock The emptiness problem for intersection types.
\newblock {\em Journal of Symbolic Logic}, 64(3):1195--1215, 1999.

\bibitem{vB11}
S.~{van Bakel}.
\newblock Sound and complete typing for lambda-mu.
\newblock In E.~Pimentel, B.~Venneri, and J.~B. Wells, editors, {\em
  Proceedings Fifth Workshop on Intersection Types and Related Systems, {ITRS}
  2010, Edinburgh, U.K., 9th July 2010.}, volume~45 of {\em {EPTCS}}, pages
  31--44, 2010.

\bibitem{BakelBarbaneradeLiguoro11}
S.~van Bakel, F.~Barbanera, and U.~de'Liguoro.
\newblock A filter model for the $\lambda\mu$-calculus - (extended abstract).
\newblock In Ong \cite{tlca11}, pages 213--228.

\bibitem{BakelBarbaneradeLiguoro13}
S.~van Bakel, F.~Barbanera, and U.~de'Liguoro.
\newblock Characterisation of strongly normalising lambda-mu-terms.
\newblock In Graham{-}Lengrand and Paolini \cite{itrs12}, pages 1--17.

\bibitem{RaamsdonkSSX99}
F.~van Raamsdonk, P.~Severi, M.~H. S{\o}rensen, and H.~Xi.
\newblock Perpetual reductions in lambda-calculus.
\newblock {\em Inf. Comput.}, 149(2):173--225, 1999.

\bibitem{Vaux07}
L.~Vaux.
\newblock Convolution lambda-bar-mu-calculus.
\newblock In S.~{Ronchi Della Rocca}, editor, {\em Typed Lambda Calculi and
  Applications, 8th International Conference, {TLCA} 2007, Paris, France, June
  26-28, 2007, Proceedings}, volume 4583 of {\em Lecture Notes in Computer
  Science}, pages 381--395. Springer-Verlag, 2007.

\end{thebibliography}

\section*{Appendix}

\noindent {\bf Lemma~\ref{l:substitution} ({\bf Substitution}).}
Let $\Theu \tri \Gamu  \Vdash u: \IM  \mid \Delu$.
If $\Phi_o \tri \Gam; x:\IM\vdash o:\Any \mid  \Del$, then 
   there is $\Phi_{o\isubs{x/u}}$ such that 
\begin{itemize}
\item  $\Phi_{o\isubs{x/u}}\rhd\Gam \inter \Gamu \vdash o\isubs{x/u}: \Any \mid \Del\union  \Delu $.
\item  $\sz{\Phi_{o\isubs{x/u}}}=\sz{\Phi_o} +   \sz{\Theu} - |\IM|$.
\end{itemize}

  \begin{proof}
We prove  a more general statement, namely:

Let 
$\Theu \tri \Gamu  \Vdash u: \IM  \mid \Delu$.
\begin{itemize}
 \item If $\Phi_o \tri \Gam_o ; x:\IM\vdash o:\Any \mid  \Del_o $, then 
   there is $\Phi_{o\isubs{x/u}}$ such that
   \begin{center}
     $\Phi_{o\isubs{x/u}}\rhd\Gam_o  \inter \Gamu \vdash o\isubs{x/u}: \Any \mid \Del_o \union  \Delu $
   \end{center}
\item If $\Phi_o \tri \Gam_o ; x:\IM\Vdash t:\JM \mid  \Del_o $, then 
  there is $\Phi_{o\isubs{x/u}}$ such that  
\begin{center}
  $\Phi_{o\isubs{x/u}}\rhd\Gam_o  \inter \Gamu \Vdash t\isubs{x/u}: \JM  \mid \Del_o \union  \Delu $
  \end{center}
   \end{itemize}
In both cases  $\sz{\Phi_{o\isubs{x/u}}}=\sz{\Phi_o} +   \sz{\Theu} -|\IM|$.

  We proceed by induction on the structure of $\Phi_o$.
  \begin{itemize}
    \item $(\ax)$: 
      \begin{itemize}
      \item If $o=x$, then $\IM=\mult{\UM}$ is a singleton, $\Any=\UM$, $\Gam_o =\Del_o  =\es$ and   $o\isubs{x/u}=u$. 
      The derivation $\Theu$ is necessarily of the following form
      $$\infer[(\many)]{\Phi'_u \tri \Gam_u  \vdash u:\UM \mid \Del_u}
                       {\Gam_u  \Vdash  u:\mult{\UM} \mid \Del_u}$$
      We then set $\Phi_{x\isubs{x/u}} = \Phi'_u$.
      Then $\sz{\Phi_{x\isubs{x/u}}}=\sz{\Phi_x}+\sz{\Theu}-|\IM|$, since $\sz{\Phi_x}=1= |\IM|$
      and $\sz{\Theu} = \sz{\Phi'_u}$.
      \item If $o=y\neq x$, then $\IM=\emul$ and $o\isubs{x/u}=y$. 
      Moreover,  $\Theu$ is necessarily :
       $$\infer[ (\many)]{ }
              {\es \Vdash u:\emul \mid \es}$$
      We set $\Phi_{y\isubs{x/u}}=\Phi_y$. 
      Then      $\sz{\Phi_{y\isubs{x/u}}}=\sz{\Phi_y} + \sz{\Theu} - |\IM|$ since $| \IM| =0$
      and $\sz{\Theu} = 0$. 
      \end{itemize}  
      \item $(\introarrow)$ : then $o=\lambda x.t$ and 
      the derivation $\Phi_o$ has the following form 
\begin{center}
      $\infer[(\introarrow)]{\Phi_t \rhd \Gam_o; x:\IM;
          y:\JM \vdash t:\UM_t \mid \Del_o} {\Gam_o;
    x:\IM\vdash \lambda y.t: \umult{\JM\ftype  \UM_t} \mid \Del_o}$
  \end{center}
      By the \ih\ we have $\Phi_{t\isubs{x/u}}\rhd (\Gam_o;y:\JM)
       \inter \Gamu \vdash t\isubs{x/u}:\UM \mid \Del_o \vee \Delu$ with
      $\sz{\Phi_{t\isubs{x/u}}}=\sz{\Phi_t}+ \sz{\Theu} - |\IM|$. By $\alpha$-conversion $y \notin \fv{u}$ so that
        $y \notin \dom{\Gamu}$ by Lemma~\ref{l:relevance}, thus  $(\Gam_o;y:\JM) \inter \Gamu =
        (\Gam_o \inter \Gamu); y:\JM $.  We  then set 
       $\Phi_{(\l y.t)\isubs{x/u}}$ equal to 
       \begin{center}
         $\infer[(\introarrow)]{\Phi_{t\isubs{x/u}}}
         {\Gam_o \inter \Gamu \vdash \lambda y.t\isubs{x/u}: \umult{\JM \ftype  \UM_t} \mid \Del_o \vee \Delu}$
       \end{center}

 We have $\sz{\Phi_{(\l y. t)\isubs{x/u}}} = \sz{\Phi_{t\isubs{x/u}}} +1 =_{\ih} 
      \sz{\Phi_t} + \sz{\Theu} - |\IM| + 1 = \sz{\Phi}
      + \sz{\Theu} - |\IM|$.

    \item $(\many)$: then $o$ is a term $t$ and $\Phi_o$ has the following form
      $$\infer[(\many)]{(\Gamk;x:\IMk \vdash t: \UMk \mid \Delk)_{\kK}}
              {\Gam_o=;x:\IM \Vdash t: \mult{\UMk}_{\kK} \mid \Del_o}$$
              where $\IM=\inter_{\kK} \IMk$, $\Gam_o = \inter_{\kK} \Gamk$ and $\Del_o = \union_{\kK} \Delk$. By Lemma~\ref{l:decomposition} there are auxiliary derivations
              $(\tri \Gamu^k \Vdash u: \IMk \mid \Delu^k)_{\kK}$
              such that $\Gamu = \inter_{\kK} \Gamu^k$ and
              $\Delu = \union_{\kK} \Delu^k$. The \ih\ gives
              derivations $(\tri \Gamk \inter \Gamu^k \vdash t\isubs{x/u}: \UMk \mid \Delk \inter \Delu^k)_{\kK}$ and we construct the following auxiliary derivation to conclude
              $$\infer[(\many)]{(\Gamk \inter \Gamu^k \vdash t\isubs{x/u}: \UMk \mid \Delk \inter \Delu^k)_{\kK} }
                      {\inter_{\kK} \Gamk \inter \Gamu^k \Vdash t\isubs{x/u}: \mult{\UMk}_{\kK} \mid \union_{\kK} \Delk \inter \Delu^k }$$
                      We have $\inter_{\kK} \Gamk \inter \Gamu^k  = \Gam_o \inter \Gamu$ and
                      $\union_{\kK} \Delk \inter \Delu^k  = \Del_o \union \Delu$ as desired.
              The size statement trivially holds by the \ih\ 
    \item $(\appet)$:
      then $o=tv$ and the derivation $\Phi_o$ has the following form 
{$$
      \infer[(\appet)]{ \Phit\rhd \Gamt;x: \IMt \vdash t:\umult{\IMk\ftype \VMk}_{\kK} \mid \Delt \hspace{0.7cm}
                   \Phiv \rhd \Gamv;x: \IMv   \Vdash v: \inter_{\kK} \choice{\IMk} \mid \Delv }
            {\Gam_o;x: \IM   \vdash tv:\uVMk \mid \Del_o} $$}
                  where $\Gam_o = \Gamt \inter  \Gamv$, $\Del_o = \Delt \union \Delv$ and $\IM = \IMt \inter \IMv$.

     Moreover,  by Lemma~\ref{l:decomposition}  we can split $\Theu$  in 
$\Thetu \tri \muJu{\Gamtu}{u:\IMt}{\Deltu} $ and $\Thevu\tri \muJu{\Gamvu}{u:\IMv}{\Delvu} $ s.t. $\sz{\Theu}=\sz{\Thetu}+\sz{\Thevu}$.

By the \ih\ there is $\Phi_{t\isubs{x/u}} \rhd \Gamt' \vdash
t\isubs{x/u}:\umult{\IMk\ftype  \VMk}_{\kK} \mid \Delt'$, where
$\Gamt'= \Gamt \inter \Gamtu$ and $\Delt' = \Delt \union \Deltu$ and $\sz{\Phi_{t\isubs{x/u}}}=\sz{\Phi_t} + \sz{\Theu} - |\IMt|$.

Also by the \ih\ there is $\Phi_{v\isubs{x/u}} \rhd \Gamv' \Vdash
v\isubs{x/u}:\inter_{\kK} \choice{\IMk} \mid \Delv'$, where $\Gamv'
= \Gamv \inter  \Gamvu$ and $\Delv' = \Delv
\union \Delvu$ and $\sz{\Phi_{v\isubs{x/u}}}=\sz{\Phiv} +
  \sz{\Thevu} - |\IMv|$.
  
        We set then 
$$\Phi_{o\isubs{x/u}}=
      \infer[(\appet)]{\Phi_{t\isubs{x/u}}   \sep  
             \Phi_{v\isubs{x/u}}}
            {\Gam' \vdash (tv)\isubs{x/u}:\uVMk \mid \Del'} $$
            where $\Gam ' = (\Gamt \inter \Gamtu) \inter (\Gamv \inter
            \Gamvu) = \Gam_o \inter \Gamu$
and $\Del' =  (\Delt \union \Deltu) \union (\Delv \union \Delvu) = \Del_o \union \Delu $ as desired. 
We conclude since 
\[ \begin{array}{l}
\sz{\Phi_{o\isubs{x/u}}} = \sz{\Phi_{t\isubs{x/u}}} + \sz{\Phi^{v\isubs{x/u}}} + |K| \\
=_{\ih}(\sz{\Phit} + \sz{\Thetu} - |  \IMt| ) + 
    (\sz{\Phiv}+ \sz{\Thevu} - |  \IMv| ) + |K| \\
= \sz{\Phi}  + \sz{\Theu} - | \IM |  
\end{array} \] 

    \item All the other cases are straightforward. 
  \end{itemize}
\end{proof}


\noindent {\bf Lemma~\ref{l:replacement} ({\bf Replacement}). } 
Let $\Theu \tri \Gam_u \Vdash u : \inter_{\kK}\ (\choice{\IMk}) \mid \Del_u $ where 
$\al \notin \fn{u}$. If $\Phi_o \tri \tyj{o}{\Gam_o}{\Any \mid \al: \umult{ \IMk \ftype 
      \VMk}_{\kK} ; \Del_o}$, then there is $\Phi_{o\ire{\al}{u}}$ such that :
\begin{itemize}
\item $\Phi_{o\ire{\al}{u}} \tri \tyj{o\ire{\al}{u}}{\Gam_o \inter \Gamu}
                           {\Any \mid \al: \union_{\kK} \VMk; \Del_o \union \Delu }$. 
\item $\sz{\Phi_{o\ire{\al}{u}}} =  \sz{\Phi_o} + \sz{\Theu}$.  
\end{itemize}

\begin{proof}   
We prove  a more general statement, namely:\\
Let $\Theu \tri \Gam_u \Vdash u : \inter_{\kK} \choice{\IMk} \mid \Del_u $ where $\al \notin \fn{u}$.
\begin{itemize}
\item If $\Phi_o \tri \tyj{o}{\Gam_o}{\Any \mid \al: \umult{ \IMk \ftype 
      \VMk}_{\kK} ; \Del_o}$, then there is $\Phi_{o\ire{\al}{u}}$ such that \\ $\Phi_{o\ire{\al}{u}} \tri \tyj{o\ire{\al}{u}}{\Gam_o \inter \Gamu}
                           {\Any \mid \al: \union_{\kK} \VMk; \Del_o \union \Delu }$. 
\item If $\Phi_o \tri \muJu{\Gam_o}{t:\JM}{ \al: \umult{ \IMk \ftype   \VMk}_{\kK} ; \Del_o}$,
      then there is $\Phi_{o\ire{\al}{u}}$ such that \\ $\Phi_{o\ire{\al}{u}} \tri
\muJu{\Gam_o\inter \Gamu}{t\ire{\al}{u}:\JM}{\al:\uVMk;\Del_o\union \Delu}$
\end{itemize}
In both cases, $\sz{\Phi_{o\ire{\al}{u}}} =  \sz{\Phi_o} + \sz{\Theu}$.  \\

We reason by induction on $\Phi_o$. 
Let us call $\UM_\al = \umult{\IMk \ftype  \VMk}_{\kK}$
and $\UM'_\al = \union_{\kK} \VMk$. 
\begin{itemize}
  \item $(\ax)$:  $o=x$, thus we have by construction 
  $$ \tingD{\Phi_o}{\infer[(\ax)]{}{\tyj{x}
                        { x:\mult{\UM} }
                        {\UM \mid \es }}}$$ 
  so that $K= \es$.  Thus, $\inter_{k \in K} \choice{\IMk}  =  \emul$ 
  and $\Gamu = \Delu = \es$, then $\Theu$ is :
 $$\infer[ (\many)]{ }
              {\es \Vdash u:\emul \mid \es}$$
  Thus $\sz{\Theu} = 0$.

We  set $\Phi_{o\ire{\al}{u}} = \Phi_o$ and the first result holds because the derivation 
 has the desired form. We conclude since 
  $\sz{\Phi_{o\ire{\al}{u}}}   =
    \sz{\Phi_o}  + \sz{\Theu}$ as desired. 
  \item $(\introarrow)$: 
 then $o=\l x.t$,  $o\ire{\al}{u}=\l x. (t\ire{\al}{u})$ and 
  by construction we have
  $$ \Phi_{\l x.t}=
           \infer[(\introarrow)]{\tingD{\Phi_{t}}
                         {\tyj{t}
                              {x:\IM;\Gam_o}
                              {\UM \mid \al :\UM_\al ;\Del_o}}}
       {\tyj{\lambda{x}.t}{\Gam_o}
       {\umult{\IM \ftype \UM}\mid \al :\UM_\al;
       \Del_o}}  $$
  By \ih\ it follows that
  $$ \tingD{\Phi_{t\ire{\al}{u}}}
           {\tyj{t\ire{\al}{u}}
                {(x:\IM;\Gam_o) \inter \Gamu }
                {\UM \mid \al :\UM'_\al;\Del_o \union \Delu}} $$
  with $\sz{\Phi_{t\ire{\al}{u}}} = \sz{\Phi_{t}}+\sz{\Theu}$.
  By $\alpha$-conversion we can assume  that $x \notin  \fv{u}$, thus by Lemma~\ref{l:relevance} 
  $x  \notin \dom{\Gamu}$,  so that 
  $(x:\IM;\Gam_o) \inter \Gamu    = x:\IM;\Gam_o \inter \Gamu$.
  
  We thus obtain $\Phi_{\l x.t\ire{\al}{u}}$ of the form:
  $$ 
  \infer[(\introarrow)]{\Phi_{t\ire{\al}{u}}}
       {\tyj{\lambda{x}.t\ire{\al}{u}}{\Gam_o \inter \Gamu}
       {\umult{\IM\ftype  \UM } \mid \al :\UM'_\al;
       \Del_o \union \Delu}} $$
  We conclude since
  \[ \begin{array}{l}
    \sz{\Phi_{\l x.t\ire{\al}{u}}} = \sz{\Phi_{t\ire{\al}{u}}} + 1 
   =_{\ih} \sz{\Phi_{t}}+ \sz{\Theu} +1     =\sz{\Phi_{\l x.t}}+\sz{\Theu}
  \end{array} \]

  \item $(\appet)$: then $o=tv$,  $o\ire{\al}{u}=t\ire{\al}{u}v\ire{\al}{u}$ and 
  by construction we have $\Phi_o= $
{ 
$$ 
       \infer[(\appet)]{\Phit \tri\muju{\Gamt}{t:\UM_t}{\al:\umult{ \IMk \ftype 
      \VMk}_{\kK_t};\Delt} \hspace{0.5cm}  \Phiv \tri \muJu{\Gamv}{v:\JM_v}{\al:\umult{ \IMk \ftype  \VMk}_{\kK_v};\Delv} }{
      \muju{\Gam_o}{o:\UM}{\al:\umult{ \IMk \ftype 
      \VMk}_{\kK}; \Del_o}
}$$    }      
where $\UM_t=\umult{\JMl \ftype \UMl}_{\lL},~ \JM_v=\inter_{\lL} \choice{\JMl},~ \UM=\union_{\lL} \UMl$ (those types are of no matter here, except they satisfy the typing constraint of $\appet$), $\Gam_o=\Gamt\inter \Gamv$, $\Del_o=\Delt \union \Delv,~ K=K_t \uplus K_v$.

     Moreover, by Lemma~\ref{l:decomposition}, we can split $\Theu$ in
$\Thetu \tri \muJu{\Gamtu}{u:\inter_{\kK_t} \choice{\IMk}
}{\Deltu} $ and
$\Thevu \tri \muJu{\Gamvu}{u:\inter_{\kK_v} \choice{\IMk} }{\Delvu} $
s.t. $\sz{\Theu}=\sz{\Thetu}+\sz{\Thevu}$.

  By \ih\ we have
$\Phi_{t\ire{\al}{u}} \rhd \muju{\Gamt \inter \Gamtu}{t\ire{\al}{u}:\UM_t}{\al:\union_{\kK_t} \VMk; \Delt\union \Deltu}$
(since $\al \notin \fn{u}$) with $\sz{\Phi_{t\ire{\al}{u}}}
= \sz{\Phit}+\sz{\Thetu}$.

Also by \ih\  we have
$\Phi_{v\ire{\al}{u}} \rhd \muJu {\Gamv \inter \Gamvu}{v\ire{\al}{u}:\JM_v}{\al:\union_{\kK_v} \VMk,\Delv\union \Delvu}$ with  $\sz{\Phi_{v\ire{\al}{u}}} = \sz{\Phi_v}+\sz{\Thevu}$.

We can now  construct the following derivation            
$$  \infer[(\appet)]{\Phi_{t\ire{\al}{u}}\sep  \Phi_{v\ire{\al}{u}}}{ \muju{\Gam'}{o\ire{\al}{u}:\UM}{\al:\union_{\kK} \VMk;\Del'} }
$$
where $\Gam'=(\Gamt\inter \Gamtu) \inter (\Gamv \inter \Gamvu) = (\Gamt \inter \Gamv) \inter (\Gamtu \inter \Gamvu) = \Gam_o \inter \Gamu$ and likewise, $\Del'=\Del_o \union \Delu$ as desired.
Moreover, 
  \[ \begin{array}{l}
    \sz{\Phi_{(tv)\ire{\al}{u}}}= \sz{\Phi_{t\ire{\al}{u}}}+\sz{\Phi_{v\ire{\al}{u}}} +|L|\\
    =_{\ih} (\sz{\Phit} +\sz{\Thetu})  
     +  (\sz{\Phiv} + \sz{\Thevu})+|L| \\
     = (\sz{\Phit}+\sz{\Phiv}+|L|)+(\sz{\Thetu}+\sz{\Thevu}) = \sz{\Phi_{tv}}   +\sz{\Theu} 
  \end{array} \]

  \item If $o  = \co{\al} t$, then $o\ire{\al}{u}=\co{\al} t\ire{\al}{u}u $ and by construction
  we have a derivation $\Phi_{\co{\al}t}$ of the form:
  $$    \infer[(\muu)]{
         \Phit \tri \muju{\Gam_o}{t:\umult{\IMk \ftype \VMk}_{\kK_t}}{\al : \umult{\IMk \ftype \VMk}_{\kK_\al} ;\Del_o}}{
       \muju{\Gam_o}{\co{\al} t:\TypCom}{\al : \umult{\IMk \ftype \VMk}_{\kK}; \Del_o}}$$
  where $K= K_t \uplus K_\al$.

     Moreover,  by Lemma~\ref{l:decomposition},  we can split $\Theu$  in 
$\Thetu \tri \muJu{\Gamtu}{u:\inter_{\kK_t} \choice{\IMk}}{\Deltu} $ and $\Thealu \tri \muJu{\Gamalu}{u:\inter_{\kK_\al} \choice{\IMk}}{\Delalu} $ s.t. $\sz{\Theu}=\sz{\Thetu}+\sz{\Thealu}$. 

  By the \ih\ we have $\Phi_{t\ire{\al}{u}} \tri \muju{\Gam_o \inter \Gamalu}{t\ire{\al}{u}:\umult{\IMk \ftype  \VMk}_{\kK_t}}{\al:\union_{\kK_\al} \VMk;\Del_o \union \Delalu}$ 
    with  $\sz{\Phi_{t\ire{\al}{u}}} = \sz{\Phit}+\sz{\Thealu}$.

We can then construct the following derivation $\Phi_{\co{\al}t\ire{\al}{u} u}$: 
 {
$$ \infer[(\muu)]{
   \infer[(\appet)]{\Phi_{t\ire{\al}{u}}  \sep  \Thetu}{\muju{\Gam'}{t\ire{\al}{u}u:\union_{\kK_t} \VMk}{\al:\union_{\kK_\al}\VMk;\Del'} } }{\muju{\Gam'}{\co{\al}t\ire{\al}{u}u:\TypCom}{\al:\union_{\kK}\VMk;\Del'} } 
$$
}
with $\Gam'=\Gam_o \inter \Gamalu \inter \Gamtu =\Gam_o \inter \Gamu$ and likewise $\Del'=\Del_o \union \Delu$ (since $\al \notin \fn{u}$) as expected.
  We conclude since 
  \[ \begin{array}{l}
    \sz{\Phi_{\co{\al}t\ire{\al}{u} u}}     = \sz{\Phi_{t\ire{\al}{u} u}} + \ar{\union_{\kK_t}\VMk} \\
     =  \sz{\Phi_{t\ire{\al}{u}}} +\sz{\Thetu} + |K_t| + \ar{\union_{\kK_t}\VMk} \\
     =_{\ih} (\sz{\Phit}+\sz{\Thealu}) 
     + \sz{\Thetu} +  \ar{\umult{\IMk \ftype \VMk}_{\kK_t}}  \\
= \sz{\Phit} + \sz{\Theu}+ \ar{\umult{\IMk \ftype  \VMk}_{\kK_t}} = \sz{\Phi_{\co{\al}t}}+\sz{\Theu}   \\
  \end{array} \] 
  
  \item All the other cases  are straightforward.  \qedhere

\end{itemize}
\end{proof}


\noindent {\bf Property~\ref{l:sr} (Weighted Subject Reduction for $\Slmu$).}
Let $\Phi \tri \tyj{o}{\Gam}{\Any\mid\Del}$. If $o \Rew{} o'$
is a non-erasing step, then
there exists a derivation  $\Phi' \tri \tyj{o'}{\Gam}{\Any\mid\Del}$ 
such that  $\sz{\Phi} > \sz{\Phi'}$. 

\begin{proof}
By induction on the  relation $\Rew{}$. We only show the main cases of 
reduction at the root, the other ones being straightforward. 

\begin{itemize}
\item If $o = (\lambda x.t)u$, then $o' = t\isubs{x/u}$ and $x\in \fv{t}$.
The derivation $\Phi$ has the following form:

$${   \infer[(\appet)]{
     \infer[(\introarrow)]{\Phi_t\rhd \Gam_t ;  x:\IM  \vdash t:\UM~|~ \Del_t}
            {\Gam_t \vdash \lambda x.t: \umult{\IM \ftype \UM}  \mid  \Del_t } \sep 
             {\Theu \rhd  \Gam_u \Vdash u: \choice{\IM } \mid    \Del_u}  }
   {\Gam \vdash o:\UM  \mid  \Del } }
$$  where $\Gam = \Gam_t \inter \Gam_u$, 
       $\Del=\Del_t\union  \Del_u$. Indeed,  $x \in \fv{t}$ implies by
   Lemma~\ref{l:relevance} that $\IM \neq \emul$ so that $\choice{\IM} =\IM = \mult{\UMk}_{\kK}$ 
for some  $K \neq \es$ and some $(\UMk)_{\kK}$.

   Lemma~\ref{l:substitution} yields a
derivation $\Phi'_{t\isubs{x/u}} \rhd \Gam_t \inter  \Gam_u  \vdash
t\isubs{x/u}: \UM \mid \Del_t \union \Del_u$ with
$\sz{\Phi'_{t\isubs{x/u}}}=\sz{\Phi_t}+  \sz{\Theu}-|K|$ ($| \IM |  = |K|$). We set $\Phi'
= \Phi'_{t\isubs{x/u}}$ so that $\sz{\Phi}
= \sz{\Phi_t} +1+ \sz{\Theu}+1 > \sz{\Phi'}$. \\


\item
If $o = (\mu \al. \Com)u$, then $o' = \mu \al . \Com\ire{\al}{u}$
and $\al \in \fn{\Com}$.

The derivation  $\Phi$ has the following form: 
$$
  \infer[(\appet)]
        {\infer[(\mud)]
               {\tingD{\PhiCom}
                      {\tyj{\Com}
                           {\GamCom}
                           {\TypCom\mid\al: \VMC}; \DelCom}}
                      {\tyj{\mu\al.\Com}
                           {\GamCom}
                           {\choice{\VMC} \mid \DelCom}} \quad 
                 \tingD{\Theu }
                       {\Gamu \Vdash u: \choice{\IMu} \mid \Delu}}
         {\tyj{(\mu\al.\Com)u}{\GamCom \inter \Gamu}
                              {\UM \mid \DelCom \union \Delu}}$$
where $\choice{\VMC} = \VMC= \umult{\IMk \ftype \VMk}_{\kK}$, 
$\choice{\IMu} = \IMu= \inter_{\kK}   \choice{\IMk}$, 
$\UM= \vee_{\kK} \VMk$,
$\Gam = \GamCom \inter \Gamu$ and $\Del = \DelCom \union \Delu$.
 Indeed, the hypothesis $\al \in \fn{\Com}$ implies $K \neq \es$ by
Lemma~\ref{l:relevance}, and thus $\choice{\VMC} = \VMC$ and $\choice{\IMu} = \IMu$. 
Lemma~\ref{l:replacement} then gives the derivation
$\tingD{\Phi_{\Com\ire{\al}{u}}}
                        {\tyj{\Com\ire{\al}{u}}
                             {\GamCom \inter  \Gamu}
                             {\TypCom\mid\al:\union_{\kK} \VMk ;\DelCom  \union  \Delu }}$.
We can then construct the following derivation $\Phi'$:
    $$ \infer[(\mud)]{\Phi_{\Com\ire{\al}{u}}
                        }
                 {\tyj{\mu\al.\Com\ire{\al}{u}}
                      {\GamCom \inter \Gamu }
                      {\union_{\kK} \VMk 
                             \mid \DelCom  \union \Delu  }}
    $$
    We conclude since 
    \[ \begin{array}{l} 
      \sz{\Phi'} = \sz{\phi_{\Com\ire{\al}{u}}} + 1  =_{Lemma\ref{l:replacement}} 
      \sz{\Phi_\Com} + \sz{\Theu}  + 1  <                 \\
    \sz{\Phi_{\Com}} + 1  + \sz{\Theu}+ |K| = \sz{\Phi_{\mu \al. \Com}} + \sz{\Theu}+ |K| = \sz{\Phi}
    \end{array} \] 

The step $<$ is justified by $K \neq \es$. \qedhere

\end{itemize}
\end{proof}

The reader should notice that the fact that the choice operator
produces a \textit{blind type} for union types is not used in the
proof of Property~\ref{l:sr}. Indeed, by Lemma\;\ref{l:relevance}, the
variable (resp. name) of a $\beta$-redex (resp. $\mu$-redex) has an
empty intersection (resp. union) type in system $\Slmu$ only when this
redex is erasing, a case that is not in the scope of
Property\;\ref{l:sr}. However, note that blind types are involved in
the proof of the subject reduction property in system $\Hlmu$,
which can easily be adapted from that of Property~\ref{l:sr}.\\


\noindent {\bf Lemma~\ref{l:reverse-substitution} (Reverse Substitution).}
Let $\Phi'\tri \muju{\Gam'}{ o\subxu:\A}{\Del'}$ Then there exist $\Gam_o, \Del_o, \IM, \Gamu, \Delu$
such that:
\begin{itemize}
     \item $\Gam' = \Gam \inter \Gamu$, 
     \item $\Del' = \Del \union \Delu$, 
     \item $\tri \muju{\Gam;x:\IM}{o:\A}{\Del}$
     \item $\tri \muJu{\Gamu}{u: \IM}{\Delu}$.
   \end{itemize}

  \begin{proof}
We prove a more general statement, namely:
\begin{itemize}
\item If $\Phi'\tri \muju{\Gam'}{ o\subxu:\A}{\Del'}$, then $\tri \muju{\Gam_o;x:\IM}{o:\A}{\Del_o}$,
$\tri \muJu{\Gamu}{u: \IM}{\Delu}$, where $\Gam'=\Gam_o \inter \Gamu,~ \Del'=\Del_o\union \Delu$ for some $\IM,~ \Gam_o,~ \Gamu,~ \Del_o,~\Delu$.
\item If $\Phi'\tri \muJu{\Gam'}{t\subxu:\JM}{\Del'}$, then $\tri \muJu{\Gam_o;x:\IM}{t:\JM}{\Del_o}$,
$\tri \muJu{\Gamu}{u: \IM}{\Delu}$, where $\Gam'=\Gam_o \inter \Gamu,~ \Del'=\Del_o\union \Delu$ for some $\IM,~ \Gam_o,~ \Gamu,~ \Del_o,~\Delu$.
\end{itemize}

We proceed by induction on the structure of $\Phi'$. 
\begin{itemize}

\item $(\ax)$

\begin{itemize}

\item If $o = y \neq x$, then $y\isubs{x/u} = y$. By
  construction  one has that $\Gam' =
  y:\mult{\UM}$ and $\Any = \UM$. The
  result thus holds for $\IM=\emul$, $\Gam_o =\Gam',~ \Del_o=\Del'$,
  $\Gamu = \es$ and $\Delu = \es$ as $\es  \Vdash u: \emul \mid \es$ is derivable by the $(\many)$ rule.

\item  If $o = x$, then $x\isubs{x/u} = u$. By
  construction  one has that $\Any = \UM$. We type $x$ with the axiom rule: 
  $$\infer[(\ax)]{\es}{ \tyj{x}{x:\mult{\UM}}{\UM \mid \es }}$$
  so that the property holds for 
  $\Gam_o = \Del_o =\es$, $\IM = \mult{\UM}$, $\Gamu = \Gam'$, 
   $\Delu = \Del'$, where 
   $\tri \muJu{\Gamu}{u:\IM}{\Delu}$ is obtained by the rule 
   $(\many)$ from $\muju{\Gam'}{u:\UM}{\Del'}$.

\end{itemize}

\item $(\introarrow)$   $o = \l y.t$ and $(\l {y}.t)\isubs{x/u} =
  \l {y}.t\isubs{x/u}$. Then $\Phi'$ is of the form
$$\infer[(\introarrow)]{\Phit'\tri \muju{\Gam';y:\JM}{t\subxu:\VM}{\Del'} }
        {\muju{\Gam'}{\l y.t\subxu:\umult{\JM\ftype \VM}}{\Del'}}$$
  where $\UM = \umult{\JM \ftype \VM} $. 

By the \ih\ $\Gam'; y: \IM = \Gamt \inter \Gamu$ and $\Del' = \Del_t
\union \Delu$, $\tri \muju{\Gamt;x:\IM}{t:\VM}{\Del_t}$ and
$\tri \muJu{\Gamu}{u:\IM}{\Delu}$.  By $\alpha$-conversion we can
assume that $y \notin \fv{u}$, so that $y \notin \dom{\Gamu}$ by Lemma~\ref{l:relevance} and thus
$\Gamt =\Gam'_t;y:\JM$ and $\Gam'= \Gam'_t \inter \Gamu$. Hence, we obtain
$\tyj{\l {y}.t}{\Gam'_t;x:\IM}{\UM \mid \Del_t}$ by the
rule $(\introarrow)$. We conclude by setting $\Gam_o = Gam'_t$ and $\Del_o = \Del_t$.

\item   $(\appet)$   $o   =   t v$   and   $(t v)\isubs{x/u}   =
  t\isubs{x/u} v\isubs{x/u}$.                                       By
  construction  we have  that  $\Gam'  = \Gamt' \inter \Gamv'$ and $\Del' = \Delt' \union \Delv'$ and $\tri \muju{\Gamt'}{t\subxu:\UM_t}{\Delt'},~ \tri \muJu{\Gamv'}{v: \JM_v}{\Delv'}$ with $\UM_t = \umult{\JMk\ftype \UMk}_{\kK},~ \JM_v = \inter_{\kK} \choice{\JMk}$  (those types are of no matter here, except they satisfy the typing constraint of $(\appet)$).
  By the \ih\ there are:
  \begin{itemize}
  \item $\Gamt,~ \IMt,~\Delt,~ \Gamtu,~ \Deltu$ s.t. $\Gamt'=\Gamt\inter \Gamtu,~ \Delt'=\Delt \union \Deltu,~ \tri \muju{\Gamt;x:\IMt}{t:\UM_t}{\Delt}$ and $\tri \muJu{\Gamtu}{u:\IMt}{\Deltu}$.
  \item $\Gamv,~ \IMv,~\Delv,~ \Gamvu,~ \Delvu$ s.t. $\Gamv'=\Gamv\inter \Gamvu,~ \Delv'=\Delv \union \Delvu,~ \tri \muJu{\Gamv;x:\IMv}{v:\JM_v}{\Delv}$ and $\tri \muJu{\Gamvu}{u:\IMv}{\Delvu}$.
  \end{itemize}
  Thus, we can type $tv$ with :
  $$
  \infer[(\appet)]{\tri \muju{\Gamt;x:\IMt}{t:\UM_t}{\Delt}\\ \tri \muJu{\Gamv;x:\IMv}{v:\JM_v}{\Delv}}{\muju{\Gam_o;x:\IM}{tv:\UM}{\Del_o}}
  $$
  where $\Gam_o = \Gamt\inter \Gamu,~ \Del_o = \Delt\union \Delu,~ \IM=\IMt\union \IMv$.

  We obtain $\tri \muJu{\Gamu}{u:\IM}{\Delu}$ with $\Gamu=\Gamtu\inter \Gamvu,~ \Delu=\Deltu \union \Delvu$ by Lemma~\ref{l:decomposition}.
  
\item The other cases are similar.  \qedhere
\end{itemize}

\end{proof}

\noindent {\bf Lemma~\ref{l:reverse-replacement} (Reverse Replacement). }
Let $\Phi'\rhd \muju{\Gam'}{o\alu:\A}{\al:\VM;\Del'} $, where
$\al \notin \fn{u}$. 
Then there exist $\Gam_o, \Del_o, \Gamu,  \Delu, (\IMk)_{\kK}, (\VMk)_{\kK}$
       such that:
\begin{itemize}
     \item $\Gam' = \Gam \inter \Gamu$, 
     \item $\Del' =\Del \union \Delu$, 
     \item $\VM = \uVMk$,
     \item $\tri \muju{\Gam}{o:\A}{\al:\uIVMk;\Del}$, and
     \item $\tri \muJu{\Gamu}{u:\iIMsk}{\Delu}$
   \end{itemize}

  \begin{proof}
    We prove a more general statement, namely :
    \begin{itemize}
    \item If $\Phi'\rhd \muju{\Gam'}{o\alu:\A}{\al:\VM;\Del'} $, then \[\tri \muju{\Gam_o}{o:\A}{\al:\uIVMk;\Del_o},~ \tri \muJu{\Gamu}{u:\iIMsk}{\Delu}\] where $\Gam'=\Gam_o \inter \Gamu,~ \Del'=\Del_o \union \Delu,~ \VM=\uVMk$ for some $\Gam_o,~\Gamu,~ \Del_o,~ \Delu, (\VMk)_{\kK},~ (\IMk)_{\kK}$.
    \item If $\Phi'\rhd \muJu{\Gam'}{t\alu:\JM}{\al:\VM;\Del'} $, then \[\tri \muJu{\Gam_o}{t:\JM}{\al:\uIVMk;\Del_o},~ \tri \muJu{\Gamu}{u:\iIMsk}{\Delu}\] where $\Gam'=\Gam_o \inter \Gamu,~ \Del'=\Del_o \union \Delu,~ \VM=\uVMk$ for some $\Gam_o,~\Gamu,~ \Del_o,~ \Delu, (\VMk)_{\kK},~ (\IMk)_{\kK}$.
      \end{itemize}

    We proceed by induction on the structure of $\Phi$'.
\begin{itemize}
\item $(\ax)$ $o = x$ and $o\ire{\al}{u} = x$. Then $\Phi'$ is of the form
  $x: \mult{\UM} \vdash x: \UM \mid \es $ and we have
  $\VM=\umult{\phantom{.}}$ so that we set $\Gam_o = x: \mult{\UM}$,
  $\Gamu=\Delu=\Del_o = \es$, $K=\es$.  Notice that $\es  \Vdash
  u: \emul \mid \es$ always holds. \\
  
\item $(\introarrow  ) $  $o =  \l y.t$ and $(\l y.t)\ire{\al}{u} = \l y.t\ire{\al}{u}$. 
Then  $\Phi'$ is of the form
$$
\infer[(\introarrow)]{\muju{\Gam'; y: \JM}{ t\alu: \UM_t}{\al:\VM;\Del'}}
      {\muju{\Gam'}{ \l y.t\alu: \umult{\IM \ftype \UM_t}}{\al:\VM;\Del'}}
$$

      The \ih\ gives $\Gam';y:\JM = \Gamt \inter \Gamu,~ \VM= \uVMk,~ \Del'=\Del_t \union \Delu,~ \tri \muju{\Gamt}{t:\UM_t}{\al:\uIVMk;\Del_t}$ and $\tri \muJu{\Gamu}{u:\iIMsk}{\Delu}$. 
      By $\alpha$-conversion we can 
assume that $y \notin \fv{u}$, so that $y \notin \dom{\Gamu}$ holds by 
Lemma~\ref{l:relevance} and thus $\Gamt = \Gam'_t; y:\JM$. Hence, we obtain 
$$\infer{ \tri \muju{\Gam'_t;y:\JM}{t:\UM_t}{\al:\uIVMk;\Del_t}}{
 \tri \muju{\Gam'_t}{\l y.t:\umult{\JM\ftype \UM_t}}{\al:\uIVMk;\Del_t}}$$
From that, the desired conclusion is straightforward by setting
$\Gam_o = \Gam'_t$ and $\Del_o = \Del_t$.

\item $o = \co{\al} t$ and $o \ire{\al}{u} = \co{\al} t\ire{\al}{u} u$. Then $\Phi'$ has the
following form 
\[
\infer[(\mud)]{\infer[(\appet)]{\muju{\Gamt'}{t\alu:\uIVMkt }{\al:\VM_\al;\Delt'} \hspace{0.7cm}
              \muJu{\Gamtu}{u:\inter_{\kK_t} \choice{\IMk} }{\al:\VM_u; \Deltu}}
             {\muju{\Gamt'\inter \Gamtu}{t\alu u: \uVMkt}{\al:\VM_\al; \Del'}}}
      {\muju{\Gam'}{\co{\al}t\alu u: \TypCom}{\al:\uVMkt \union \VM_\al;\Del' }} 
\]
      where $\Gam'=\Gamt' \inter \Gamtu,~ \Del' = \Delt \vee \Deltu$ and $\VM = \uVMkt\union \VM_\al \union \VM_{u}$. Moreover, the hypothesis $\al \notin \fn{u}$
        implies $\VM_u = \umult{\, }$ by Lemma~\ref{l:relevance}. 

      The \ih\ gives $\tri \muju{\Gam_t}{t:\uIVMkt}{\al:\uIVMkal;\Del_t},~ \tri \muJu{\Gamalu}{u:\inter_{\kK_\al}\choice{\IMk}}{\Delalu}$ where $\Gamt'=\Gam_t\inter \Gamalu,~ \Delt' = \Del_t\union \Delalu$, and $\VM_\al = \uVMkal$.
W.l.o.g we can assume $K_\al \cap K_t = \es$.
            We then set $K=K_{\al} \uplus K_t$ and we  define :
      $$ \infer[(\mud)]{\tri \muju{\Gam_t}{t:\uIVMkt}{\al:\uIVMkal;\Del_t}}
      {\muju{\Gam_t}{\co{\al}t:\TypCom}{\al:\uIVMk;\Del_t}}$$

      By Lemma~\ref{l:decomposition}, we also have
      $\tri\muJu{\Gamu}{u:\inter_{\kK} \choice{\IMk}}{\Delu}$ with $\Gamu=\Gamtu \inter \Gamalu,~ \Delu=\Deltu \union \Delalu$. We
      can then conclude 
by setting $\Gam_o = \Gam_t$ and $\Del_o = \Del_t$
since  $\Gam_t \inter \Gamu = \Gam_t \inter (\Gamalu \inter \Gamtu)=\Gamt'\inter \Gamtu = \Gam'$ and 
likewise $\Del_t\union \Delu=\Del'$.\\

\item $o = tv$ so that $o \ire{\al}{u} =  t\ire{\al}{u} v \ire{\al}{u}$. Then $\Phi$
has the following form: 
$$\infer[(\appet)]{\tri \muju{\Gamt'}{t\alu: \UM_t}{\al:\VM_t;\Delt'} \hspace{0.7cm} 
         \tri \muJu{\Gamv'}{v\alu : \JM_v}{\al:\VM_v; \Delv'} }
        {\muju{\Gam'}{t\alu v\alu: \UM}{\al:\VM;\Del'}}$$
        where  $\VM=\VM_t\union \VM_\al$, $\Gam'  = \Gamt' \inter \Gamv'$,
        $\Del' = \Delt' \union \Delv'$, $\UM_t = \umult{\JMk\ftype \UMk}_{\kK}$
        and $\JM_v = \inter_{\kK} \choice{\JMk}$  (those types are of no matter here, except they satisfy the typing constraint of $(\appet)$).

         The property then trivially holds by the \ih\ (we proceed as in the complete proof of Lemma~\ref{l:reverse-substitution}, case $(\appet)$).\\

\item The other cases are similar.  \qedhere
\end{itemize}

\end{proof}


\noindent {\bf Property~\ref{l:se-s} (Subject Expansion for $\Slmu$).}
Assume $\Phi' \rhd \muju{\Gam'}{o':\A}{\Del'}$. If $o\Rew{} o'$
is a non-erasing step, then there is $\Phi \rhd
\muju{\Gam'}{o:\A}{\Del'}$.

\begin{proof} By induction on the reduction relation. 
We only show the main cases of reduction at the root,
the other ones being straightforward by induction. 
We can then assume $\Any = \UM$ for some union type $\UM$.

\begin{itemize}
\item If $o = (\l x.t)u$, then $o' = t\isubs{x/u}$ with $x\in \fv{t}$.
The Reverse Substitution Lemma~\ref{l:reverse-substitution} yields 
\begin{itemize}
\item $\Gam' = \Gam_o \inter \Gamu$,
\item $\Del' = \Del_o \inter \Delu$,
\item $\tri \muju{\Gam_o; x:\IM}{t:\UM}{\Del_o}$, and 
\item $\tri \muJu{\Gamu}{ u: \IM}{\Delu}$.
\end{itemize}

Moreover, $x \in \fv{t}$ implies by Lemma~\ref{l:relevance} that $\IM\neq \emul$, so that $\IM^*=\IM$.
We can then set :
$$ \Phi =
\infer[(\appet)]{\infer[(\introarrow)]{ \tri \muju{\Gam_o; x:\IM}{t:\UM}{\Del_o}}
              {\muju{\Gam_o}{\l x.t: \umult{\IM \ftype \UM}}{\Del_o}} \\
              \tri \muJu{\Gamu}{u: \IM}{\Delu}}
{\muju{\Gam'}{(\lambda x.t)u:\UM}{\Del'}
}$$

\item If $o = (\mu \al. \Com)u$, then $o' = \mu \al . \Com\ire{\al}{u}$ with $\al \in \fn{\Com}$. Moreover, $\al \in \fn{\Com \alu}$ and $\Phi'$ has the following form: 
  
$$\infer[(\mud)]{ \muju{\Gam'}{\Com\alu: \TypCom}{ \al: \UM ; \Del'}}
        {\muju{\Gam'}{\mu \al. \Com\ire{\al}{u}:  \UM}{\Del'} }$$
where $\UM \neq \umult{\,}$ holds by Lemma~\ref{l:relevance}, since $\al \in\fn{\Com\ire{\al}{u}}$, so that the $\mud$ rule is correctly applied.
Then the  Reverse Replacement Lemma~\ref{l:reverse-replacement} yields:
 \begin{itemize}
     \item $\Gam' = \GamCom \inter \Gamu$, 
     \item $\Del'=\DelCom \union \Delu$,
     \item $\UM=\uVMk$,
     \item $\tri \muju{\GamCom}{\Com:\TypCom}{\al:\uIVMk;\DelCom}$, and 
     \item $\tri \muJu{\Gamu}{u: \iIMsk}{\Delu}$.
   \end{itemize}
 Moreover, $\UM \neq \umult{\,}$ implies $K \neq \es$, thus
   $\choice{\uIVMk}=\uIVMk$ and 
 we conclude by  constructing the following derivation:  
$$\infer[(\appet)]{\infer[(\mud)]{\tri \muju{\GamCom}{\Com:\TypCom}{\al:\VMC;\DelCom}}{\muJu{\GamCom}{\mu \al.\Com:\VMC}{\DelCom}} \\ 
  \tri \muJu{\Gamu}{u: \IMu}{\Delu}  }{\muJu{\Gam'}{(\mu \al.\Com)u:\UM}{\Del'}}$$
where $\VMC=\uIVMk,\ \IMu=\iIMsk$ \qedhere
\end{itemize}
\end{proof}


Weighted Subject reduction for the $\lmuex$-calculus
(Lemma~\ref{l:psr}) is based on the fact that linear
substitution (Lemma~\ref{l:partial-substitution}) and linear replacement
 (Lemma~\ref{l:partial-replacement}) preserve types.

\begin{lem}[\textbf{Linear  Substitution}]
\label{l:partial-substitution}
Let $\Theu \tri \Gamu  \Vdash u: \IM \mid \Delu$. If 
$\Phi_{\cxot\cwc{x}} \tri \Gam; x: \IM \vdash \cxot\cwc{x}: \Any \mid
\Del$, then there exist $\IM_1, \IM_2, \Gamuu, \Gamud, \Deluu, \Delud$  s.t. 
\begin{itemize}
\item $\IM = \IM_1 \inter \IM_2$, where $\IM_1 \neq \emul$, 
\item $\Gamu = \Gamuu \inter \Gamud$ and $\Delu = \Deluu \union \Delud$, 
\item $\Theuu \tri \Gamuu \Vdash u: \IM_1 \mid \Deluu$,
\item $\Theud \tri \Gamud \Vdash u: \IM_2 \mid \Delud$,
\item $\Phi_{\ctx\cwc{u}} \tri \Gam \inter \Gamuu; x: \IM_2 \vdash
\cxot\cwc{u}: \Any \mid \Del \union \Deluu$, and 
\item $\sz{\Phi_{\cxot\cwc{u}}} = \sz{\Phi_{\cxot\cwc{x}}} + \sz{\Theuu}  - |  \IM_1|$.
\end{itemize}
\end{lem}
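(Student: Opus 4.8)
The plan is to proceed by induction on the structure of the context $\cxot$, simultaneously proving the analogous statement for auxiliary derivations typing non-empty intersection types (this is standard practice, since regular and auxiliary derivations are mutually intertwined through the $(\many)$ rule, and the context may descend into subterms typed by auxiliary judgments). Before starting the induction, I would record the crucial observation that $\IM \neq \emul$ by Lemma~\ref{l:relevance-bis}: indeed $x$ occurs free in $\cxot\cwc{x}$ (the hole of a term context $\cxot^x$ is not under a binder capturing $x$), so $x \in \dom{\Gam;x:\IM}$ forces $\IM \neq \emul$. This guarantees that the decomposition $\IM = \IM_1 \inter \IM_2$ with $\IM_1 \neq \emul$ can actually be produced.

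First I would handle the base case $\cxot = \Box$. Here $\cxot\cwc{x} = x$, the derivation $\Phi_x$ is a single $(\ax)$ rule, so $\IM = \mult{\UM}$ is a singleton, $\Any = \UM$, $\Gam = \Del = \es$, and $\sz{\Phi_x}=1$. We set $\IM_1 := \mult{\UM} = \IM$, $\IM_2 := \emul$, $\Theuu := \Theu$, and $\Theud$ the trivial empty auxiliary derivation $\es \Vdash u : \emul \mid \es$, so $\Gamuu := \Gamu$, $\Gamud := \es$, etc. Then $\cxot\cwc{u} = u$ and $\Phi_{\cxot\cwc{u}} := \Theuu = \Theu$ works, and the size equation reads $\sz{\Theu} = 1 + \sz{\Theu} - |\IM_1|$, which holds since $|\IM_1| = |\mult{\UM}| = 1 = \sz{\Phi_x}$. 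For the inductive cases, each constructor of $\cxot$ (namely $\l y.\cxtt$, $\cxtt\, t$, $t\,\cxtt$, $\mu\al.\cxct$, $\cxtt[y/t]$, $t[y/\cxtt]$, $\co{\al}\cxtt$, $\cxct\rempl{\al}{\beta}{t}$, $\Com\rempl{\al}{\beta}{\cxtt}$) determines the last rule of $\Phi_{\cxot\cwc{x}}$; I would invert that rule, apply Lemma~\ref{l:decomposition} to split any auxiliary premise typing $u$ along the way, invoke the induction hypothesis on the immediate subcontext, and then rebuild the derivation for $\cxot\cwc{u}$ with the same final rule, carefully tracking the assignment unions and the size bookkeeping. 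Alpha-conversion is used to keep $y$ (resp. bound names) out of $\fv{u}$ so that assignments combine cleanly, as in the proof of Lemma~\ref{l:substitution}.

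The main obstacle I anticipate is the bookkeeping in the cases where the hole sits inside a subterm that is typed by an auxiliary (multiset) judgment or where several copies of $u$ must be distributed — concretely the application cases $\cxtt\, t$ / $t\,\cxtt$ and the explicit-substitution case $t[y/\cxtt]$, where the rule $(\appet)$ (or $(\subs)$) splits the context $\Gamu$ assigned to $u$ into pieces via Lemma~\ref{l:decomposition}, and the induction hypothesis further splits one of those pieces into $\IM_1\inter\IM_2$. Keeping straight which portion of $\IM$ is "consumed" by the substituted occurrence (that is $\IM_1$, always non-empty) and which portion $\IM_2$ remains available for the still-present hole requires threading the partition through two nested decompositions. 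The arithmetic identity $\sz{\Phi_{\cxot\cwc{u}}} = \sz{\Phi_{\cxot\cwc{x}}} + \sz{\Theuu} - |\IM_1|$ then follows by combining the inductive size equations with the additivity of $\sz{\_}$ over the premises of each rule and the additivity of $|\cdot|$ over $\inter$; the $-\frac12$ offset in the $(\repl)$ rule cancels on both sides and causes no difficulty. Everything else is routine.
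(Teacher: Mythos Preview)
Your proposal is correct and matches the paper's proof essentially verbatim: induction on the context $\cxot$, simultaneous treatment of the non-empty auxiliary case, the relevance observation $\IM\neq\emul$ via Lemma~\ref{l:relevance-bis}, and the base-case split $\IM_1=\IM$, $\IM_2=\emul$, $\Theuu=\Theu$. One tiny slip: in the base case $\Phi_{\cxot\cwc{u}}$ must be the \emph{regular} premise extracted from $\Theu$ (which exists and is unique since $\IM=\mult{\UM}$ is a singleton), not the auxiliary derivation $\Theu$ itself; the sizes coincide, so your arithmetic is unaffected.
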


  \begin{proof}
    The proof is by induction on the context $\cxot$
so we need to prove the statement of the lemma 
for regular derivations simultaneously with the following one
for  \textit{non-empty} auxiliary derivations:
if $\Phi_{\cxtt\cwc{x}} \tri \Gam; x: \IM \Vdash \cxtt\cwc{x}: \JM \mid
\Del$ and $\JM \neq \emul$, then there exist $\IM_1, \IM_2, \Gamuu, \Gamud, \Deluu, \Delud$  s.t. 
\begin{itemize}
\item $\IM = \IM_1 \inter \IM_2$, where $\IM_1 \neq \emul$, 
\item $\Gamu = \Gamuu \inter \Gamud$ and $\Delu = \Deluu \union \Delud$, 
\item $\Theuu \tri \Gamuu \Vdash u: \IM_1 \mid \Deluu$,
\item $\Theud \tri \Gamud \Vdash u: \IM_2 \mid \Delud$,
\item $\Phi_{\cxtt\cwc{u}} \tri \Gam \inter \Gamuu; x: \IM_2 \Vdash
\cxtt\cwc{u}: \JM \mid \Del \union \Deluu$, and 
\item $\sz{\Phi_{\cxtt\cwc{u}}} = \sz{\Phi_{\cxtt \cwc{x}}} + \sz{\Theuu}  - | \IM_1|$.\\
\end{itemize}

Now, we can start the proof. 
Notice that 
$\IM\neq\emul$ by Lemma~\ref{l:relevance-bis}, since 
$x \in \fv{\cxot\cwc{x}} $ (resp. $x \in \fv{\cxtt\cwc{x}})$.
We only show the case $\cxot = \Box$ since all the other ones are straightforward. 
So assume $\cxot=\Box$. Then $\IM=\mult{\UM}$ for some $\UM$
and  the derivation $\Phi_{x}$ has the following form :
\begin{center}
$\Phi_x = \infer[(\ax)]{ }
  {\muju{x:\mult{\UM}}{x:\UM}{\es} }$
  \end{center}

   Thus, $\sz {\Phi_{x }} = 1$. We set then $\Theuu=\Theu$ and $\Theud=\infer[(\many)]{}{\muJu{\phdot }{u:\emul}{\phdot}}$\\
   We have 
$\sz{\Phi_{u}} =    \sz{\Theuu} = \sz{\Phi_x} + \sz{\Theu} - 
| \IM_1|  $
since $|\IM_1|=1$.
  \end{proof}
\medskip

\begin{lem}[\textbf{Linear Replacement}]
\label{l:partial-replacement}
Let $\Theu \tri \muJu{\Gamu}{u:\inter_{\lL} \choice{\IMl}}{\Delu}$ s.t. $\al \notin \fv{u}$. 
If $\Phi_{\cxoc{\cwc{\coal t}}} \tri \muju{\Gam}{\cxoc\cwc{\co{\al}t}:\Any}{\al: \umult{ \IMl \rew \VMl}_{\lL};\Del}$, 
then there exist $L_1, L_2, \Gamuu, \Gamud,  \Deluu,  \Delud, \Phi_{\cxoc\cwc{\co{\al'} tu}} $  s.t.
\begin{itemize}
\item $L=L_1\uplus L_2$, where $L_1\neq \es$.
\item $\Gamu = \Gamuu \inter \Gamud$ and $\Delu = \Deluu \union \Delud$, 
\item $\Theuu \tri \muJu{\Gamuu}{u: \inter_{\lL_1} \choice{\IMl}}{\Deluu}$,
\item $\Theud \tri \muJu{\Gamud}{u: \inter_{\lL_2} \choice{\IMl} }{\Delud}$,
\item $\Phi_{\cxoc\cwc{\co{\al'} tu}} \tri \muju{\Gam \inter \Gamuu}{\cxoc{\cwc{\co{\al'} tu}}:\Any}{\al: \umult{ \IMl \rew \VMl}_{\lL_2}; \al':\union_{\lL_1} \VMl \union \Del \union \Deluu}$, and 
\item $\sz{\Phi_{\cxoc\cwc{\co{\al'} tu}}} = \sz{\Phi_{\cxoc\cwc{\coal t}}} + \sz{\Theuu}$.
\end{itemize}
\end{lem}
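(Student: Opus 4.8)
The plan is to prove the Linear Replacement Lemma by induction on the structure of the command context $\cxoc$. As is standard for this calculus, the induction does not go through for regular derivations alone, because the hole of $\cxoc$ may occur inside an argument subterm (of an application, an explicit substitution, or an explicit replacement), which is typed by an auxiliary derivation introduced by rule $(\many)$. I would therefore strengthen the statement and prove it simultaneously with the following variant for \emph{non-empty} auxiliary derivations: if $\Phi_{\cxtc\cwc{\co{\al}t}} \tri \muJu{\Gam}{\cxtc\cwc{\co{\al}t}:\JM}{\al:\umult{\IMl\rew\VMl}_{\ell\in L};\Del}$ with $\JM\neq\emul$, then one obtains the analogous decomposition: $L=L_1\uplus L_2$ with $L_1\neq\es$, a split of $\Theu$ into $\Theuu,\Theud$ via Lemma~\ref{l:decomposition}, and a derivation $\Phi_{\cxtc\cwc{\co{\al'}tu}}$ of the same shape, with $\sz{\Phi_{\cxtc\cwc{\co{\al'}tu}}}=\sz{\Phi_{\cxtc\cwc{\co{\al}t}}}+\sz{\Theuu}$. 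Throughout, Relevance (Lemma~\ref{l:relevance-bis}) guarantees $L\neq\es$ (since $\al$ occurs free in $\cxoc\cwc{\co{\al}t}$) and is used repeatedly to locate $\al$, the fresh name $\al'$, and the substituted/abstracted variables in the various assignment domains.

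For the base case $\cxoc=\boxdot$ we have $\cxoc\cwc{\co{\al}t}=\co{\al}t$, so $\Phi$ ends with rule $(\muu)$ applied to a subderivation $\Phi_t$ typing $t$ with a union type $\umult{\IMk\rew\VMk}_{\kK}$; by Lemma~\ref{l:non-empty-union}, $K\neq\es$, and the name assignment of $\Phi_t$ types $\al$ with $\umult{\IMl\rew\VMl}_{\ell\in L\setminus K}$. I set $L_1:=K$ and $L_2:=L\setminus K$; writing $\inter_{\ell\in L}\choice{\IMl}$ as $(\inter_{\ell\in L_1}\choice{\IMl})\inter(\inter_{\ell\in L_2}\choice{\IMl})$, Lemma~\ref{l:decomposition} splits $\Theu$ into $\Theuu\tri\Gamuu\Vdash u:\inter_{\ell\in L_1}\choice{\IMl}\mid\Deluu$ and $\Theud\tri\Gamud\Vdash u:\inter_{\ell\in L_2}\choice{\IMl}\mid\Delud$ with $\sz{\Theu}=\sz{\Theuu}+\sz{\Theud}$. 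Applying $(\appet)$ to $\Phi_t$ and $\Theuu$, then $(\muu)$ for the fresh name $\al'$, yields $\Phi_{\co{\al'}tu}$ of the required form. The size bookkeeping is exact: since $\ar{\umult{\IMk\rew\VMk}_{\kK}}=|K|+\ar{\union_{\kK}\VMk}$, we get $\sz{\Phi_{\co{\al}t}}=\sz{\Phi_t}+|K|+\ar{\union_{\kK}\VMk}$, while $\sz{\Phi_{\co{\al'}tu}}=(\sz{\Phi_t}+\sz{\Theuu}+|K|)+\ar{\union_{\kK}\VMk}=\sz{\Phi_{\co{\al}t}}+\sz{\Theuu}$.

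For the inductive step, for each context former I would invert the last typing rule, identify the unique premise whose subderivation contains the distinguished occurrence $\co{\al}t$, apply the appropriate induction hypothesis to it, and recombine. When that premise is a regular subderivation (e.g.\ $\cxoc=\co{\beta}\cxtc$, or the body of $\l x.\cxtc$, or $\mu\beta.\cxcc$) the reassembly is immediate; when it is an auxiliary subderivation inside an argument of $(\appet)$, $(\subs)$ or $(\repl)$, or the argument of an explicit replacement $\Com'\rempl{\beta}{\gamma}{\cxtc}$, I first use Lemma~\ref{l:decomposition} to extract the single non-empty component typing the subterm containing $\co{\al}t$ (non-empty because that subterm has a free occurrence of $\al$), apply the strengthened auxiliary hypothesis, then merge back with Lemma~\ref{l:decomposition}. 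In all cases the multiset $\umult{\IMl\rew\VMl}_{\ell\in L}$ stored for $\al$ is partitioned among the premises exactly as the free occurrences of $\al$ are; only the block handled by the induction hypothesis is further split into $L_1\uplus L_2$, the other blocks being carried through unchanged, so the $\uplus$-structure and the exact size equation propagate additively. Relevance again ensures the $\choice{\_}$ operators in $(\appet)$/$(\subs)$/$(\repl)$ act as the identity on the (non-empty) multisets in play, and that $\al'$ clashes with no name in the relevant assignments.

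The main obstacle is precisely this partition bookkeeping in the multi-premise cases ($(\appet)$, $(\subs)$, $(\repl)$, and the argument position of an explicit replacement): one must feed the induction hypothesis the correct subfamily of indices, track which part of $\al$'s stored type is \emph{consumed} at the replaced occurrence versus \emph{carried over} unchanged, and check that the resulting (half-integer) size equation holds on the nose rather than merely as an inequality. Everything else is a routine case analysis mirroring the proof of the big-step Replacement Lemma~\ref{l:replacement}, specialized to a single occurrence.
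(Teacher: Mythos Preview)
Your proposal is correct and follows essentially the same approach as the paper: both strengthen the statement to a simultaneous induction on $\cxoc$ covering regular derivations and non-empty auxiliary derivations, both use Lemma~\ref{l:relevance-bis} to ensure $L\neq\es$, and the base case $\cxoc=\boxdot$ is handled identically (Lemma~\ref{l:non-empty-union} gives $K\neq\es$, set $L_1:=K$, $L_2:=L\setminus K$, split $\Theu$ via Lemma~\ref{l:decomposition}, then rebuild with $(\appet)$ followed by $(\muu)$, with the same size computation). The paper dismisses all non-base cases as ``straightforward'', whereas you sketch the partition bookkeeping for the multi-premise rules more explicitly; this extra detail is sound and in the spirit of the big-step Replacement Lemma~\ref{l:replacement}.
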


\begin{proof}
The proof is by induction on the context $\cxoc$ 
so we need to prove the statement of the lemma 
for regular derivations simultaneously with the following one
for  \textit{non-empty} auxiliary derivations:
 if $\Phi_{\cxtc{\cwc{\coal t}}} \tri \muJu{\Gam}{\cxtc\cwc{\co{\al}t}:\JM}{\al: \umult{ \IMl \rew \VMl}_{\lL};\Del}$
and $\JM \neq \emul$, then there exist $ L_1, L_2,  \Gamuu,  \Gamud,  \Deluu,  \Delud, 
\Phi_{\cxtc\cwc{\co{al'} tu}} $  s.t.
\begin{itemize}
\item $L=L_1\uplus L_2$, where $L_1\neq \es$.
\item $\Gamu = \Gamuu \inter \Gamud$ and $\Delu = \Deluu \union \Delud$, 
\item $\Theuu \tri \muJu{\Gamuu}{u: \inter_{\lL_1} \choice{\IMl}}{\Deluu}$,
\item $\Theud \tri \muJu{\Gamud}{u: \inter_{\lL_2} \choice{\IMl} }{\Delud}$,
\item $\Phi_{\cxtc\cwc{\co{\al'} tu}} \tri \muJu{\Gam \inter \Gamuu}{\cxtc{\cwc{\co{\al'} tu}}:\JM}{\al: \umult{ \IMl \rew \VMl}_{\lL_2}; \al':\union_{\lL_1} \VMl \union \Del \union \Deluu}$, and 
\item $\sz{\Phi_{\cxtc\cwc{\co{\al'} tu}}} = \sz{\Phi_{\cxtc\cwc{\coal t}}} + \sz{\Theuu}$.\\
\end{itemize}

Now, we can start the proof.
Notice that $L \neq \es$ by Lemma~\ref{l:relevance-bis}, since $\alpha \in \fn{\cxoc\cwc{\co{\al}t}} $ (resp. $\alpha \in \fn{\cxtc\cwc{\co{\al}t}}$). 
We only show the case $\cxoc = \boxdot$ since all the other ones are straightforward.

So assume $\cxoc=\boxdot$. Then the derivation $\Phi_{\co{\al}t}$ has the
following form, where $K \neq \es$ holds by Lemma~\ref{l:non-empty-union}: 
\begin{center}
$
\infer[(\muu)]{\Phi_t \tri \tyj{t}{\Gam}{ \umult{ \IMk \rew \VMk}_{\kK} \mid \al: \umult{ \IMl \rew \VMl}_{\lL \sm K} ; \Del}}
   {\tyj{\co{\al}t}{\Gam}{\TypCom \mid \al: \umult{ \IMl \rew \VMl}_{\lL} ; \Del}}$
   \end{center}

Thus, $\sz {\Phi_{\co{\al}t }} =\sz{\Phi_t} + \ar{\umult{ \IMk \rew \VMk}_{\kK} } = \sz{\Phi_t} + |K| + \ar{\union_{\kK} \VMk}$.
We set $L_1=K$ and $L_2=L\setminus K$ and we write
$\inter_{\lL} \choice{\IMl}$ as $(\inter_{\lL_1} \choice{\IMl})\inter (\inter_{\lL_2} \choice{\IMl})$. Then by Lemma~\ref{l:decomposition} 
 there are $\Theuu \tri \muJu{\Gamuu}{u:\inter_{\lL_1} \choice{\IMl}}{\Deluu},~ \Theud \tri \muJu{\Gamud}{u:\inter_{\lL_2} \choice{\IMl}}{\Delud}$ s.t. $\Gamuu\inter \Gamud =\Gamu,~ \Deluu \union \Delud =\Delu$. 
We set $\VM=\union_{\lL_1} \VMl$
and then construct the following derivation $\Phi_{\co{\al'}tu}$: 
\begin{center}$ \infer[(\muu)]{\infer[(\appet)]{\Phi_t \hspace{1.5cm} \Theuu}
              { \muju{\Gam \inter \Gamuu}{ t\,u: \union_{\lL_1} \VMl}{\al: \umult{ \IMl \rew \VMl}_{\lL_2} ; \Del \union  \Deluu}}}
       { \muju{\Gam \inter \Gamuu}{\co{\al'}tu: \TypCom}{\al: \umult{ \IMl \rew \VMl}_{\lL_2}; \al': \VM \vee  \Del\union  \Deluu}}
$       \end{center}

We have: 
\begin{align*}
   &\sz{\Phi_{\co{\al'}tu}} =    \sz{\Phi_{tu}} + \ar{\union_{\kK} \VMk} \\
   &= \sz{\Phi_{t}} + \sz{\Theuu} + |K| + \ar{\union_{\kK} \VMk}   \\
   &= \sz{\Phi_{t}} + \sz{\Theuu} +  \ar{ \umult{ \IMk \rew \VMk}_{\kK} }   \\
   &=  \sz{\Phi_{\co{\al}t}} + \sz{\Theuu}  & \tag*{\qedhere}
   \end{align*}
\end{proof}

\noindent {\bf Property~\ref{l:psr} ({Weighted Subject Reduction for $\lmuex$}).}
Let $\Phi \tri \tyj{o}{\Gam}{\Any\mid\Del}$. If $o \Rew{} o'$ is a non-erasing step, then $\Phi' \tri \tyj{o'}{\Gam}{\Any\mid\Del}$ and  $\sz{\Phi} > \sz{\Phi'}$. \\

\begin{proof}
By induction on the reduction relation $\rew$. We only show the main cases of reduction at the root, the other ones being straightforward.

\begin{itemize}
\item If $o = \slist[(\lambda x.t)] u \Rew{} \slist[t[x/u]]= o'$: we proceed by induction on $\slist$,
by detailing only the case $\slist = \Box$ as the other one is straightforward. 

The derivation $\Phi$ has the following form:
$$\Phi = \infer[(\appet)]{ \infer[(\introarrow)]{\Phi_t\rhd \Gam_t ; x:\IM \vdash t:\UM~|~
    \Del_t} {\Gam_t \vdash \lambda x.t: \umult{\IM \rew \UM} \mid
    \Del_t } \hspace{0.7cm} \Theu \rhd \Gamu
    \Vdash u:  \choice{\IM} \mid  \Del}  {\Gam \vdash
  (\l x. t)u:\UM \mid \Del } $$ where $\Gam = \Gam_t \inter
\Gamu$, $\Del=\Del_t\union  \Delu$ and $\Any = \UM$. 
We then construct the following derivation $\Phi'$:
$$
\infer[(\subs)]{\Phit\rhd \Gam_t ;  x:\IM  \vdash t:\UM \mid \Delt \hspace{1cm}
       \Theu \rhd \Gam_u \Vdash u:\choice{\IM}   \mid  \Delu}
      {\Gamt \inter \Gamu  \vdash t[x/u]:\UM  \mid \Del_t\union \Delu  }$$
We  conclude since  $ 
   \sz{\Phi} =    \sz{\Phi_{t}} +  \sz{\Theu} + 2  
   > \sz{\Phi_t} +  \sz{\Theu}  =    \sz{\Phi'}$. 

\item If $o = \slist [\mu \al. \Com] u \Rew{}  \slist[\mu \al'. \Com\rempl{\al}{\al'}{u}] = o'$: we proceed by induction on $\slist$,
by detailing only the case $\slist = \Box$ as the other one is straightforward.
The  derivation $\Phi$ has the following form:
$${ \Phi= 
 {\infer[(\appet)]{ \infer[(\mud)]{\PhiCom \tri 
       \muju{\GamCom}{\Com:\TypCom}{\al:\VMC;\DelCom
     }}
                  {\muju{\GamCom}{\mu\al.\Com: \choice{\VMC}}{\DelCom}} \hspace{0.7cm}
                   \Theu \tri \muJu{\Gamu}{u:\choice{\IMu}}{\Delu}
                          }
         { \muju{\GamCom\inter \Gamu}{(\mu\al.\Com)u:\UM}{\Delu}}
  }}
$$
 where $\choice{ \VMC}= \umult{\IMl \rew \VMl}_{\lL}$, $\choice{\IMu}= \inter_{\lL}   \choice{\IMl}$, $\UM= \vee_{\lL} \VMl$, $\Gam = \GamCom \inter \Gamu$ and $\Del = \DelCom \union \Delu$.
 
Moreover, Lemma~\ref{l:non-empty-union} implies  $L \neq \es$,
so that $\inter_{\lL} \choice{\IMl} = \choice{(\inter_{\lL} \choice{\IMl})}$.

We then construct the following derivation $\Phi'$:
$$
\infer[(\mud)]{\infer[(\repl)]{\PhiCom \hspace{1.5cm} \Theu }
             {\Gam' \inter \Gamu \vdash \Com\rempl{\al}{\al'}{u}: \TypCom \mid \Del' \union  \Delu; \al': \UM }}
      {\Gam' \inter \Gamu  \vdash  \mu \al'. \Com\rempl{\al}{\al'}{u}:\UM  \mid \Del' \union  \Delu  }
$$
We conclude since $|L|\geq 1$ in the following equation:
\[ \begin{array}{l} 
   \sz{\Phi'}  
   = \sz{\Phi_{\Com\rempl{\al}{\al'}{u}} } + 1 \\
   = \sz{\Phi_{\Com}} + \sz{\Theu} +  |L| - \frac{1}{2} + 1  \\ 
   = \sz{\Phi_{\mu \al. \Com}} + \sz{\Theu} + |L| - \frac{1}{2}  \\
   < \sz{\Phi_{\mu \al. \Com}} + \sz{\Theu } + |L|    = \sz{\Phi}
         \end{array} \]

\item If $o = \cxtt\cwc{x}[x/u] \Rew{} \cxtt\cwc{u}[x/u] = o'$, with $|\cxtt\cwc{x}|_x>1$.
The derivation $\Phi$ has the following form:
{ 
  $$\infer[(\subs)]{\Phi'_{\cxtt\cwc{x}}
    \tri \Gam'; x:  \IM \vdash \cxtt\cwc{x}: \UM  \mid \Del'  \hspace{0.7cm} 
         \Theu   \tri  \Gamu  \vdash u :  \choice{\IM}  \mid \Delu } 
        {\Gam' \inter \Gamu \vdash \cxtt\cwc{x}[x/u]: \UM\mid \Del' 
       \union  \Delu}$$}
Moreover, $|\cxtt\cwc{x}|_x>1$ so that 
Lemma~\ref{l:relevance-bis} applied to $\Phi_{\cxtt\cwc{x}}$ gives $\IM \neq \emul$ and thus   $\choice{\IM} = \IM$.
We can then apply Lemma~\ref{l:partial-substitution} which gives  a derivation
\[\Phi_{\cxtt\cwc{u}} \tri \Gam' \inter \Gamuu; x: \IM_2   \vdash
\cxtt\cwc{u}: \UM \mid \Del' \union \Deluu \] 
where $\IM = \IM_1 \inter \IM_2$ and $\IM_1 \neq \emul$
and $\Gamu = \Gamuu \inter \Gamud$ and $\Delu = \Deluu \union \Delud$. 
Moreover $\Theuu \tri \Gamuu \Vdash u: \IM_1 \mid \Deluu$,
$\Theud \tri \Gamud \Vdash u: \IM_2 \mid \Delud$, 
and $\sz{\Phi_{\cxtt\cwc{u}}} = \sz{\Phi_{\cxtt\cwc{x}}} + \sz{\Theuu}  - | \IM_1 | $.

The  hypothesis $|\ctx\cwc{x}|_x>1$ implies  $|\ctx\cwc{u}|_x>0$, then 
$ \IM_2 \neq \emul$ by Lemma~\ref{l:relevance-bis} applied to $\Phi_{\cxtt\cwc{u}}$ so that $\choice{\IM_2} = \IM_2 $.
We can then  construct the derivation $\Phi'$ as follows:
$$\infer[(\subs)]{\Phi_{\cxtt\cwc{u}} \hspace{1cm}
         \Theud}
        { \Gam' \inter \Gam \vdash \cxtt\cwc{u}[x/u]: \UM \mid \Del' \inter \Del }
$$
We conclude since
$\sz{\Phi'} = \sz{\Phi_{\cxtt\cwc{u}}} +   \sz{\Theud} =_{Lemma~\ref{l:partial-substitution}}
\sz{\Phi_{\cxtt\cwc{x}}} + \sz{\Theuu}  - |  \IM_1 |  +   \sz{\Theud} = \sz{\Phi_{\cxtt\cwc{x}}} + 
\sz{\Theu} - | \IM_1 |  < \sz{\Phi}$. 

The step $<$ is justified by $\IM_1 \neq \emul$.
\item If $o = \cxtt\cwc{x}[x/u] \Rew{} \cxtt\cwc{u} = o'$, with $|\cxtt\cwc{x}|_x = 1$.
The derivation $\Phi$ has the following form:
{
$$\infer[(\subs)]{\Phi_{\cxtt\cwc{x}} \tri \Gam'; x:  \IM \vdash \cxtt\cwc{x}: \UM  \mid \Del'  \hspace{0.7cm} 
        \Theu  \tri \Gamu   \Vdash u : \choice{\IM}  \mid \Delu } 
        {\Gam' \inter \Gamu \vdash \cxtt\cwc{x}[x/u]: \UM\mid \Del'   \union \Delu}$$}
Lemma~\ref{l:relevance-bis} applied to $\Phi_{\cxtt\cwc{x}}$ gives $\IM \neq \emul$ and thus   $\choice{\IM} = \IM$.
We can then apply Lemma~\ref{l:partial-substitution} which gives  a derivation
\[\Phi_{\cxtt\cwc{u}} \tri \Gam' \inter \Gamuu; x: \IM_2   \vdash
\cxtt\cwc{u}: \UM \mid \Del' \union \Deluu \] 
where $\IM = \IM_1 \inter \IM_2$ and $\IM_1 \neq \emul$
and $\Gamu = \Gamuu \inter \Gamud$ and $\Delu = \Deluu \union \Delud$. 
Moreover $\Theuu \tri \Gamuu \Vdash u: \IM_1 \mid \Deluu$,
$\Theud \tri \Gamud \Vdash u: \IM_2 \mid \Delud$, 
and $\sz{\Phi_{\cxtt\cwc{u}}} = \sz{\Phi_{\cxtt\cwc{x}}} + \sz{\Theuu}  - |  \IM_1|$.
By hypothesis $|\cxtt\cwc{x}|_x=1$ so that $|\cxtt\cwc{u}|_x=0$, then 
$\IM_2= \es$ by Lemma~\ref{l:relevance-bis} applied to $\Phi_{\cxtt\cwc{u}}$. Thus $\IM = \IM_1$. 
We then set $\Phi' = \Phi_{\cxtt\cwc{u}}$ and conclude since
\[ \begin{array}{l}
   \sz{\Phi'} = \sz{\Phi_{\cxtt\cwc{u}}}  
   =_{Lemma\ref{l:partial-substitution}} \sz{\Phi_{\cxtt\cwc{x}}} +  \sz{\Theuu}  - | \IM_1| 
   = \sz{\Phi_{\cxtt\cwc{x}}} +  \sz{\Theu}    - |  \IM|   < \\
   = \sz{\Phi_{\cxtt\cwc{x}}} +  \sz{\Theu}      = \sz{\Phi}
   \end{array} \]

The step $<$ is justified by $\IM = \IM_1 \neq \emul$.

\item If $o = \cxcc\cwc{\co{\al}t} \rempl{\al}{ \al'}{u}   \Rew{} 
 \cxcc\cwc{\co{\al'}tu} \rempl{\al}{ \al'}{u} = o'$, with 
$|\cxcc\cwc{\co{\al}t}_\al > 1$. Then $\Phi$ has the following form

{
$$
  \infer[(\repl)]{\Phi_{\Com}
    \tri \GamCom  \vdash  \cxcc\cwc{\co{\al}t}: \TypCom \mid  \DelCom; \al: \VM' \hspace{0.7cm} \Theu \tri \Gamu  \Vdash  u: \IMu  \mid \Delu}    {\GamCom \inter \Gamu \vdash \cxcc\cwc{\co{\al}t} \rempl{\al}{ \al'}{u}: \TypCom \mid \DelCom \union \Delu \vee \al': \union_{\lL} \VMl}  $$}
where $\Com = \cxcc\cwc{\co{\al}t} $, $ \VM'= \umult{\IMl \rew
  \VMl}_{\lL}$, $\IMu= \choice{ (\inter_{\lL} \choice{\IMl})}$,
$\AM = \TypCom$, $\Gam = \GamCom \inter \Gamu$ and
$\Del = \DelCom \union \Delu\vee \al': \union_{\lL} \VMl$.
Since $|\cxcc\cwc{\co{\al}t}|_\al > 1$ implies $L \neq
\es$ by Lemma~\ref{l:relevance-bis}, we have that  $\IMu=\inter_{\lL}
\choice{\IMl}$.  By Lemma~\ref{l:partial-replacement} there are $L_1,\,
L_2,\ \Gamuu,\, \Gamud,\ \Deluu,\, \Delud,\ \Phi_{\cxcc\cwc{\co{\al'}
    tu}} $ s.t.
\begin{itemize}
\item $L=L_1\uplus L_2$, where $L_1\neq \es$.
\item $\Gamu = \Gamuu \inter \Gamud$ and $\Delu = \Deluu \union \Delud$, 
\item $\Theuu \tri \muJu{\Gamuu}{u: \inter_{\lL_1} \choice{\IMl}}{\Deluu}$,
\item $\Theud \tri \muJu{\Gamud}{u: \inter_{\lL_2} \choice{\IMl} }{\Delud}$,
\item $\Phi_{\cxcc\cwc{\co{\al'} tu}} \tri \muju{\GamCom \inter \Gamuu}{\cxcc{\cwc{\co{\al'} tu}}:\AM}{\al: \umult{ \IMl \rew \VMl}_{\lL_2}; \al':\union_{\lL_1} \VMl \union \DelCom \union \Deluu}$, and 
\item $\sz{\Phi_{\cxcc\cwc{\co{\al'} tu}}} = \sz{\Phi_{\cxcc\cwc{\coal t}}} + \sz{\Theuu}$.
\end{itemize}

   Moreover, $|\cxcc\cwc{\co{\al}t}|_\al > 1$ implies $|\cxcc\cwc{\co{\al'}tu}|_\al > 0$ so that
   $L_2 \neq \es$ holds by Lemma~\ref{l:relevance-bis} and thus
   $\inter_{\lL_2} \choice{\IMl}= \choice{(\inter_{\lL_2} \choice{\IMl})}$. Then we can build the
   following derivation $\Phi'$:

   $${ \infer[(\repl)]{\Phi_{\cxcc\cwc{\co{\al'}tu} } \hspace{1.5cm}
            \Theud }
     {\Gam' \vdash \cxcc\cwc{\co{\al'}tu} \rempl{\al}{ \al'}{u}: \TypCom \mid \Del' }
   }$$
   where $\Gam'=(\GamCom \inter \Gamuu)\inter \Gamud =\Gam$, $\Del'=(\al':\union_{\lL_1} \VMl \union \DelCom \union \Deluu) \union \Delud \union (\al':\union_{\lL_2} \VMl)=\Del$.

   We conclude since
 \[ \begin{array}{l}
       \sz{\Phi'} = \sz{\Phi_{\cxcc\cwc{\co{\al'}tu} }} + \sz{\Theud} + | L_2| - \frac{1}{2}  \\
       =_{Lemma~\ref{l:partial-replacement}}    \sz{\Phi_{\cxcc\cwc{\co{\al}t}}} + \sz{\Theuu} + \sz{\Theud} + |L_2| - \frac{1}{2}   \\
       = \sz{\Phi_{\cxcc\cwc{\co{\al}t}}} + \sz{\Theu} + |L_2| - \frac{1}{2}    \\
       < \sz{\Phi_{\cxcc\cwc{\co{\al}t}}} + \sz{\Theu} + | L | - \frac{1}{2}         = \sz{\Phi}      
 \end{array} \]

The step $<$ is justified because $L_1 \neq \es$ and thus $|L_2 | < |L|$.

\item If $o = \cxcc\cwc{\co{\al}t} \rempl{\al}{ \al'}{u}   \Rew{} 
 \cxcc\cwc{\co{\al'}tu}  = o'$, with 
$|\cxcc\cwc{\co{\al}t}|_\al =  1$. The derivation $\Phi$ has the following form

$${
  \infer[(\repl)]{\Phi_{\Com}
    \tri \GamCom  \vdash  \cxcc\cwc{\co{\al}t}: \TypCom \mid  \DelCom; \al: \VM' \hspace{0.7cm} \Theu \tri \Gamu  \Vdash  u: \IMu  \mid \Delu}    {\GamCom \inter \Gamu \vdash \cxcc\cwc{\co{\al}t} \rempl{\al}{ \al'}{u}: \TypCom \mid \DelCom \union \Delu \vee \al': \union_{\lL} \VMl}  }$$
 
 where $\Com = \cxcc\cwc{\co{\al}t} $, $ \VM'= \umult{\IMl \rew \VMl}_{\lL}$, $\IMu= \choice{ (\inter_{\lL}   \choice{\IMl})}$,
 $\AM = \TypCom$, $\Gam = \GamCom \inter \Gamu$ and $\Del = \DelCom \union \Delu\union \al': \union_{\lL} \VMl$. 
Since $|\cxcc\cwc{\co{\al}t}|_\al = 1$ implies $L \neq \es$ by Lemma~\ref{l:relevance-bis}, we have that $\IMu=\inter_{\lL}   \choice{\IMl}$.
By Lemma~\ref{l:partial-replacement} there are
$L_1,\, L_2,\ \Gamuu,\, \Gamud,\ \Deluu,\, \Delud,\ \Phi_{\cxcc\cwc{\co{\al'} tu}} $  s.t.
\begin{itemize}
\item $L=L_1\uplus L_2$, where $L_1\neq \es$.
\item $\Gamu = \Gamuu \inter \Gamud$ and $\Delu = \Deluu \union \Delud$, 
\item $\Theuu \tri \muJu{\Gamuu}{u: \inter_{\lL_1} \choice{\IMl}}{\Deluu}$,
\item $\Theud \tri \muJu{\Gamud}{u: \inter_{\lL_2} \choice{\IMl} }{\Delud}$,
\item $\Phi_{\cxcc\cwc{\co{\al'} tu}} \tri \muju{\GamCom \inter \Gamuu}{\cxcc{\cwc{\co{\al'} tu}}:\AM}{\al: \umult{ \IMl \rew \VMl}_{\lL_2}; \al':\union_{\lL_1} \VMl \union \DelCom \union \Deluu}$, and 
\item $\sz{\Phi_{\cxcc\cwc{\co{\al'} tu}}} = \sz{\Phi_{\cxcc\cwc{\coal t}}} + \sz{\Theuu}$.
\end{itemize}

   Moreover, $|\otx\cwc{\co{\al}t}|_\al = 1$ implies $|\otx\cwc{\co{\al'}tu}|_\al = 0$ so that
   $L_2 = \es$ and $L=L_1$ holds by Lemma~\ref{l:relevance-bis}. Thus, $\Theuu=\Theu$ and so on. We then set $\Phi' = \Phi_{\otx\cwc{\co{\al'}tu} }$
   and  conclude since

 \[ \begin{array}{l}
       \sz{\Phi'} 
       = \sz{\Phi_{\otx\cwc{\co{\al'}tu} }}  \\
       =_{Lemma~\ref{l:partial-replacement}}    \sz{\Phi_{\otx\cwc{\co{\al}t}}} + \sz{\Theu}   \\
       <  \sz{\Phi_{\otx\cwc{\co{\al}t}}} + \sz{\Theu}  + | L | - \frac{1}{2}   
       = \sz{\Phi} \\ 
   \end{array} \] 

 The step $<$ is justified because $L\neq \es$, so that $|L|\geqslant 1$
 implies  $|L|- \frac{1}{2}>0$. \qedhere

\end{itemize}
\end{proof}


\begin{lem}[Reverse Partial Substitution]
\label{l:reverse-partial-substitution} 
Let $\Phi \tri \Gam \vdash \cxot\cwc{u}: \Any \mid \Del$, where $x\notin \fv{u}$.
Then, there exist $\Gam_0, \Del_0, \IM_0\neq \emul, \Gamu, \Delu$ such that 
\begin{itemize}
   \item $\Gam = \Gam_0 \inter \Delu$, 
     \item $\Del = \Del_0 \union \Delu$, 
     \item $\Phi_{\cxot\cwc{x}} \tri \muju{\Gam_0 \inter x:\IM_0}{\cxot\cwc{x}:\Any }{\Del_0}$       
     \item $\tri \muJu{\Gamu}{u:\IM_0}{\Delu}$.
\end{itemize}
\end{lem}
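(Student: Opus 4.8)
The plan is to argue by induction on the structure of the context $\cxot$ (a term context when $\Any$ is a union type, a command context when $\Any=\TypCom$). As is standard for an inversion lemma of this shape, the induction only closes if I strengthen the statement to cover \emph{auxiliary} derivations as well: if $\Phi \tri \Gam \Vdash \cxtt\cwc{u}: \JM \mid \Del$ with $\JM \neq \emul$ and $x \notin \fv{u}$, then there are $\Gam_0,\Del_0,\IM_0\neq\emul,\Gamu,\Delu$ with $\Gam = \Gam_0\inter\Gamu$, $\Del = \Del_0\union\Delu$, $\tri \Gam_0;x:\IM_0 \Vdash \cxtt\cwc{x}:\JM\mid\Del_0$, and $\tri \Gamu \Vdash u:\IM_0\mid\Delu$. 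The two claims are then proved simultaneously by the same context induction.

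In the base case $\cxot = \Box$ we have $\cxot\cwc{u} = u$ and $\cxot\cwc{x} = x$. In the regular case $\Phi$ concludes with $\Gam \vdash u : \UM \mid \Del$, so I set $\Gam_0 = \Del_0 = \emptyset$, $\IM_0 = \mult{\UM}\neq\emul$, $\Gamu = \Gam$, $\Delu = \Del$; then $\Phi_{\cxot\cwc{x}}$ is the single $(\ax)$ instance $x:\mult{\UM} \vdash x:\UM \mid \emptyset$, and rule $(\many)$ with a singleton index set turns $\Phi$ into the required derivation $\Gamu \Vdash u:\mult{\UM}\mid\Delu$. In the auxiliary case $\Phi$ concludes with $\Gam \Vdash u:\mult{\UMk}_{\kK}\mid\Del$ with $K\neq\es$; I take $\IM_0 = \mult{\UMk}_{\kK}$, apply $(\ax)$ to each $\UMk$ (legitimate, since each $\UMk$ is non-empty by Lemma~\ref{l:non-empty-union}) and then $(\many)$ to obtain $x:\IM_0 \Vdash x:\IM_0\mid\emptyset$, while $\Phi$ itself witnesses $\Gamu\Vdash u:\IM_0\mid\Delu$.

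For the inductive step, the top constructor of $\cxot\cwc{u}$ is the same as that of $\cxot\cwc{x}$ — the hole is not at the root of a compound context — so the last rule of $\Phi$ is forced by that constructor. I invert it, apply the induction hypothesis (regular or auxiliary, as appropriate) to the unique premise whose subject contains the hole, and re-apply the same rule to rebuild a derivation of $\cxot\cwc{x}$. Two standard devices discharge the routine obligations: first, $\alpha$-conversion of the bound variable/name of a $\l$- or $\mu$-abstraction, an explicit substitution, or an explicit replacement, so that it does not occur in $u$, whence by Relevance (Lemma~\ref{l:relevance-bis}) it is absent from $\Gamu$ and $\Delu$, and the decomposition $\Gam = \Gam_0\inter\Gamu$, $\Del = \Del_0\union\Delu$ survives when the binder is re-introduced; second, Lemma~\ref{l:decomposition}, used to split the auxiliary derivation of $u$ returned by the induction hypothesis along the $\inter$/$\union$ partition that the binary rules $(\appet)$, $(\subs)$ and $(\repl)$ impose on the ambient assignments. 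In each of these binary rules the argument premise is a \emph{non-empty} auxiliary judgment — its type has the form $\inter_{\kK}\choice{\IMk}$ or $\choice{(\cdots)}$, which is non-empty because $K\neq\es$ (Lemma~\ref{l:non-empty-union}) and each $\choice{\IMk}\neq\emul$ — so the auxiliary induction hypothesis genuinely applies when the hole lies there. Finally, the condition $\IM_0\neq\emul$ is produced in the base case and never weakened, hence holds throughout.

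The conceptual content of the proof is entirely in the base case; the only real obstacle is the bookkeeping just described — tracking, rule by rule, how $\Gam_0,\Gamu,\Del_0,\Delu$ distribute over $\inter$ and $\union$ so that the reassembled judgment for $\cxot\cwc{x}$ has conclusion exactly $\Gam_0;x:\IM_0 \vdash \cxot\cwc{x}:\Any\mid\Del_0$ with $\Gam = \Gam_0\inter\Gamu$ and $\Del = \Del_0\union\Delu$. No ingredient beyond Lemmas~\ref{l:relevance-bis}, \ref{l:decomposition} and~\ref{l:non-empty-union} is needed, and the cases for abstraction, application, naming, explicit substitution and explicit replacement all go through in this uniform manner.
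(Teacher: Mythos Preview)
Your approach is exactly the paper's: induction on the context $\cxot$, simultaneous strengthening to non-empty auxiliary judgments, and the base case handled by $(\ax)$ plus a one-premise $(\many)$. One small correction: in the binary rules $(\appet)$, $(\subs)$, $(\repl)$ the hole sits in exactly one premise, so the single $u$-derivation returned by the induction hypothesis is carried through untouched---no splitting is needed there; Lemma~\ref{l:decomposition} is actually used when establishing the auxiliary companion, to \emph{merge} the $|K|$ many $u$-derivations produced by applying the regular statement to each premise of the $(\many)$ rule.
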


\begin{proof}  The proof is by induction on the context $\cxot$. 
For this induction to work, we need as usual to adapt the statement for auxiliary derivations.
We only  show the  case $\cxot =  \Box$ since  all the other  ones are
straightforward and rely on suitable partitions of the contexts in the
premises. So assume $\cxot=\Box$,  then $\Any=\UM$ for some $\UM$.  We
set $\Gam_0 = \Del_0=\es$,  $\IM_0=\mult{\UM}$, $\Gamu = \Gam$, $\Delu
= \Del$  (so that  $\tri \muJu{\Gamu}{u:\IM_0}{\Delu}$ holds  by using
the $(\many)$ rule), and
$$ \Phi_{x} = \infer[(\ax)]{ }{\muju{x:\mult{\UM}}{x:\UM}{\es} }$$ 
  
The claimed set and context equalities trivially hold.   
\end{proof}

\begin{lem}[Reverse Partial Replacement]
\label{l:reverse-partial-replacement} 
Let $\muju{\Gam}{\cxoc\cwc{\co{\al'}tu}:\Any}{\al':\VM;\Del}$, where  $\al,\al' \notin \fn{u}$.  Then there exist $\Gam_0, \Del_0,  \VM_0, K \neq \es, (\IMk)_{\kK}, (\VMk)_{\kK}, \Gamu, \Delu$ such that 
\begin{itemize}
\item $\Gam = \Gam_0 \inter \Gamu$,
\item $\Del=\Del_0 \union \Delu$,
\item $\VM = \VM_0 \union_{\kK} \VMk$, 
\item $\tri \muju{\Gam_0}{\cxoc\cwc{\co{\al}t}: \Any}{\al':\VM_0;\al:\umult{\IMk\rew \VMk}_{\kK} \union   \Del_0}$,  and 
\item     $\tri \muJu{\Gamu}{u :  \inter_{\kK}\choice{\IMk} }{\Delu}$
\end{itemize}
\end{lem}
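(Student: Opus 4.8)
The plan is to prove a strengthened statement by structural induction on the context $\cxoc$, exactly in the style of the proofs of Lemmas~\ref{l:reverse-substitution} and~\ref{l:reverse-replacement}. The naive statement is not directly amenable to induction because $\cxoc$ may be plugged under a rule whose relevant premise is an \emph{auxiliary} judgment (the right premise of $(\appet)$, $(\subs)$, $(\repl)$, or a premise of $(\many)$). I would therefore prove simultaneously: (i) the displayed statement, where $\cxoc$ ranges over command contexts $\cxcc$ producing commands \emph{and} term contexts $\cxtc$ producing terms typed by a regular (union-type) judgment; and (ii) the analogous statement for term contexts $\cxtc$ producing a term typed by a \emph{non-empty} auxiliary judgment $\muJu{\Gam}{\cxtc\cwc{\co{\al'}tu}:\JM}{\al':\VM;\Del}$ with $\JM\neq\emul$, concluding $\tri\muJu{\Gam_0}{\cxtc\cwc{\co{\al}t}:\JM}{\al':\VM_0;\al:\umult{\IMk\rew\VMk}_{\kK}\union\Del_0}$ and $\tri\muJu{\Gamu}{u:\inter_{\kK}\choice{\IMk}}{\Delu}$ under the same side conditions and decompositions. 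The external ingredients are just Lemma~\ref{l:relevance-bis} (relevance: since $\co{\al'}$ occurs in the subject, $\al'$ lies in the domain of the name assignment, hence $\VM\neq\eumul$, and non-empty auxiliary judgments never have empty assignments), Lemma~\ref{l:decomposition} (splitting and merging auxiliary derivations), and Lemma~\ref{l:non-empty-union} (the union type of any term is non-empty, which still holds for $\Slmuex$).

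The base case is $\cxoc=\boxdot$, so the subject is the command $\co{\al'}(tu)$. Since the system is syntax directed, its derivation ends with $(\muu)$, whose premise derives $tu:\UM$ for some union type $\UM$, and that premise ends with $(\appet)$, exhibiting derivations of $t:\umult{\IMk\rew\VMk}_{\kK}$ and of $u:\inter_{\kK}\choice{\IMk}$ with $\UM=\union_{\kK}\VMk$; here $K\neq\es$ by Lemma~\ref{l:non-empty-union}, since $\umult{\IMk\rew\VMk}_{\kK}$ is the union type assigned to the term $t$. Applying $(\muu)$ to the derivation of $t$ yields $\tri\muju{\Gam_0}{\co{\al}t:\TypCom}{\al':\VM_0;\al:\umult{\IMk\rew\VMk}_{\kK}\union\Del_0}$, and the witness decompositions $\Gam=\Gam_0\inter\Gamu$, $\Del=\Del_0\union\Delu$, $\VM=\VM_0\union_{\kK}\VMk$ follow from the $(\appet)$-premises (using $\al'\notin\fn{u}$ and Lemma~\ref{l:relevance-bis} to get $\al'\notin\dom{\Delu}$).

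The inductive step is a routine case analysis on the head constructor of $\cxoc$: $\l x.\cxtc$, $\cxtc\,t$, $t\,\cxtc$, $\mu\al''.\cxcc$, $\cxtc[x/t]$, $t[x/\cxtc]$, $\co{\al''}\cxtc$, $\cxcc\rempl{\al''}{\beta}{u'}$, and $\Com\rempl{\al''}{\beta}{\cxtc}$. In each case one inverts the corresponding typing rule, applies the induction hypothesis (the regular variant when the hole sits in a regular term-output or command position, the non-empty auxiliary variant when it sits in an auxiliary premise, non-emptiness of that premise being guaranteed by relevance because $\co{\al'}$ occurs there), and reassembles the derivation, routing the recovered auxiliary derivation of $u$ into the rebuilt term via Lemma~\ref{l:decomposition} whenever the hole lies inside a multi-premise rule whose premises must be partitioned ($(\appet)$, $(\subs)$, $(\repl)$, $(\many)$). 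Throughout, $\alpha$-conversion together with Lemma~\ref{l:relevance-bis} ensures that binders of $\cxoc$ avoid $\dom{\Gamu}$ and $\dom{\Delu}$, so the assignment equalities $\Gam=\Gam_0\inter\Gamu$ and $\Del=\Del_0\union\Delu$ are preserved at each step.

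The main obstacle is purely bookkeeping: one must maintain the three simultaneous decompositions ($\Gam$, $\Del$, and the $\al'$-type $\VM=\VM_0\union_{\kK}\VMk$), keep the index set $K$ nonempty at every stage (which is automatic from relevance, since $\co{\al}$ occurs in $\cxoc\cwc{\co{\al}t}$, plus Lemma~\ref{l:non-empty-union}), and correctly split the auxiliary derivation typing $u$ across sub-derivations whenever the context passes through a branching rule. No new conceptual difficulty arises beyond Lemmas~\ref{l:reverse-substitution} and~\ref{l:reverse-replacement}; the point specific to the $\lmuex$ setting is that the hole of $\cxoc$ carries a \emph{command}, so it may occur both under $(\muu)$ (producing a regular term judgment) and under the auxiliary premises, which is precisely why the two-part strengthening is needed.
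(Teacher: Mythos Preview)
Your proposal is correct and follows essentially the same approach as the paper: induction on the context $\cxoc$, with the statement simultaneously strengthened to non-empty auxiliary judgments, using Lemmas~\ref{l:relevance-bis}, \ref{l:decomposition} and \ref{l:non-empty-union}; the paper in fact only spells out the base case $\cxoc=\boxdot$ and declares the remaining cases straightforward, exactly as you outline.
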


\begin{proof} 
  The proof is by induction on the context $\cxoc$. For this
    induction to work, we need as usual to adapt the statement for auxilary
    derivations.  Notice that $\VM \neq \umult{ \,}$ by
  Lemma~\ref{l:relevance-bis}, since $\al' \in
  \fn{\cxoc\cwc{\co{\al'}tu}} $.  We only show the case $\cxoc =
  \boxdot$ since all the other ones are straightforward.  So assume
  $\cxoc=\boxdot$. Then the derivation of $\co{\al'}tu$ has the
  following form, where $K\neq \es$: {
$$
\infer[(\muu)]{\infer[(\appet)]{\Phit  \tri  \Gam_0 \vdash t:\umult{\IMk\rew \VMk}_{\kK} \mid
         \al':\VM_0; \Del_0
    \hspace{0.7cm}
   \tri \Gamu \Vdash  u: \inter_{\kK}\choice{\IMk} \mid \Delu  }
  {\Gam_0 \inter \Gamu \vdash tu: \union_{\kK} \VMk \mid \al':\VM_0; \Del_0 \union \Delu }}
      {\Gam_0 \inter \Gamu \vdash \co{\al'}tu: \TypCom \mid \al': \VM_0\union_{\kK}\VMk; \Del_0 \union \Delu}
      $$}
where $\Gam =\Gam_0 \inter \Gamu$,
and $\Del=\Del_0 \union \Delu$ and $\VM = \VM_0 \union_{\kK}\VMk $. 

We then construct the following derivation :
$$ \infer[(\muu)]{\Phit  \tri  \Gam_0 \vdash t:\umult{\IMk\rew \VMk}_{\kK} \mid
         \al':\VM_0; \Del_0 }{
 \muju{\Gam}{\co{\al}t:\TypCom}{\al':\VM_0;\al:\umult{\IMk\rew \VMk}_{\kK}\union \Del_0}}
$$
Thus, we have all the claimed set and context equalities.
\end{proof}

\noindent {\bf Property~\ref{l:pse} (Subject Expansion for $\lmuex$).}
Let $\tingD{\Phi'}{\tyj{o'}{\Gam}{\Any\mid\Del}}$. If $o \Rew{\nonelmuex} o'$
(\ie\  a non-erasing $\lmuex$-step), 
then 
$\tingD{\Phi}{\tyj{o}{\Gam}{\Any\mid\Del}}$.

\begin{proof}
By induction on the non erasing reduction relation $\Rew{\nonelmuex}$. We only show the main cases of 
non-erasing reduction at the root, the other ones being straightforward. 

\begin{itemize}
\item If $o = (\slist [\lambda x.t]) u \Rew{} \slist[t[x/u]] = o'$, we proceed by induction on $\slist$, by detailing only the case $\slist = \Box$ as the other one is straightforward. 

The derivation $\Phi'$ has the following form :
$$\infer[(\subs)]{\Phit \tri  \muju{\Gamt;x:\IM}{t:\UM}{\Delt}
   \hspace{0.7cm} \Theu \tri \muJu{\Gamu}{u:\choice{\IM}}{\Delu}   }{\muju{\Gam}{t[x/u]:\UM}{\Del}}$$

We then construct the following derivation $\Phi$: 
$$\infer[(\appet)]{ \infer[(\introarrow)]{\Phit \tri \muju{\Gamt;x:\IM}{t:\UM}{\Delt}}{
    \muju{\Gamt}{\l x. t:\umult{\IM\rew \UM}}{\Delt}}
\hspace{0.7cm}\Theu \tri \muJu{\Gamu}{u:\choice{\IM}}{\Delu}
}{\muju{\Gam}{(\l x.t)u:\UM}{\Del}}
$$

\item If $o = (\slist[\mu \al. \Com]) u \Rew{} \slist[\mu
  \al'. \Com\rempl{\al}{\al'}{u}] = o'$, where $\al'$ is fresh,
  then we proceed by induction on $\slist$, by detailing only the
  case $\slist = \Box$ as the other one is straightforward.  Then
  $\Phi'$ has the following form :
$${
\infer[(\mud)]{\infer[(\repl)]{\PhiCom \tri \muju{\GamCom}{\Com:\TypCom }{\al:\VMal;\DelCom} \hspace{1.5cm} \Theu \tri \muJu{\Gamu}{u:\IMu}{\Delu}
    }{\muju{\GamCom \inter \Gamu}{\Com\rempl{\al}{\al'}{u}: \TypCom}{\DelCom \union \Delu; \al': \VMalp}
  }}{\muju{\GamCom \inter \Gamu}{\mu \al'. \Com\rempl{\al}{\al'}{u}:\choice{(\VMalp)}}{ \DelCom \union \Delu}}
}
$$
where $ \VMal= \umult{\IMl \rew \VMl}_{\lL}$, $\IMu= \choice{(\inter_{\lL}   \choice{\IMl})}$, $\VMalp= \vee_{\lL} \VMl$,
$\Any=\choice{(\VMalp)} = \choice{(\vee_{\lL} \VMl)}$, $\Gam = \GamCom \inter \Gamu$ and $\Del = \DelCom \union \Delu$.
Notice that the  name assignment of the judgment typing
$\Com\rempl{\al}{\al'}{u}$ has the form  $\DelCom \union \Delu; \al': \VMalp$
since $\al'$ is a fresh name  by hypothesis,  so that $\al' \notin \dom{\DelCom \union \Delu}$ 
holds by Lemma~\ref{l:relevance-bis}.
We now consider two cases:

\begin{itemize}
\item If $L \neq \es$, then  $\choice{\umult{\IMl\rew \VMl}_{\lL}} = \umult{\IMl\rew \VMl}_{\lL}$, 
$\choice{(\inter_{\lL}   \choice{\IMl})} = \inter_{\lL}   \choice{\IMl}$, 
$\Any = \choice{(\vee_{\lL} \VMl)} = \vee_{\lL} \VMl$, so that 
we construct the following derivation $\Phi$:
 $$
  {
   \infer[(\appet)]{\infer[(\mud)]{\PhiCom \tri \muju{\GamCom}{\Com:\TypCom }{\al:\VMal;\DelCom}}
                 {\muju{\Gam_\Com }
                       {\mu \al.\Com:\umult{\IMl\rew \VMl}_{\lL}}{\Del_\Com}  } \hspace{0.7cm}
          \Theu \tri \muJu{\Gamu}{u:\inter_{\lL}   \choice{\IMl} }{\Delu}}
         {\muju{\GamCom \inter \Gamu}
               { (\mu \al.\Com)u : \vee_{\lL} \VMl}
               {\DelCom\union \Delu}}
  } $$

  \item If $L = \es$, then, in the derivation above,
  $\choice{(\VMalp)}=\choice{(\vee_{\lL} \VMl)}=\umult{\xi}$ for some blind type $\xi$.
Then we choose  $\choice{\umult{\IMl\rew \VMl}_{\lL}}$ to be
$ \umult{\emul \rew \umult{\xi}}$, which is a blind type. We then construct the following derivation $\Phi$:
 $$
  { 
   \infer[(\appet)]{\infer[(\mud)]{\PhiCom\tri \muju{\GamCom}{\Com:\TypCom }{\al:\VMal;\DelCom}}
                 {\muju{\Gam_\Com }
                       {\mu \al.\Com:\umult{\emul \rew \umult{\xi}}}{\Del_\Com}  } \hspace{0.7cm}
          \Theu \tri \muJu{\Gamu}{u: \choice{\emul}}{\Delu}}
         {\muju{\GamCom \inter \Gamu}
               { (\mu \al.\Com)u : \umult{\xi}}
               {\DelCom\union \Delu}}
  } $$
  We conclude since $\Any = \umult{\xi}$.

\end{itemize}

\item If $o = \cxtt\cwc{x}[x/u] \Rew{} \cxtt\cwc{u}[x/u] = o'$, with $|\cxtt\cwc{x}|_x>1$.
The derivation $\Phi'$ has the following form:

$$
\infer[(\subs)]{\Phi_{\cxtt\cwc{u}}\tri \Gamstt;x:\IM \vdash \cxtt\cwc{u}:\Any \mid \Delstt \hspace{0.7cm}
         \Theu\tri \Gamu \Vdash u:\choice{\IM} \mid \Delu}
        { \muju{\Gamstt \inter \Gamu}{\cxtt\cwc{u}[x/u]: \Any}{\Delstt \union \Delu} }
$$
where $x \in \fv{\cxtt\cwc{u}}$ implies  $\IM\neq \emul$ by Lemma~\ref{l:relevance-bis}, so that $\choice{\IM}=\IM$.

        By Lemma~\ref{l:reverse-partial-substitution} applied to
        $\Phi_{\cxtt\cwc{u}}$, we have
$\Gam'_0,\, \Del_0,\, \IM_0\neq \emul,\, \Gamu', \, \Delu'$ such that 
\begin{itemize}
   \item $\Gamstt;x:\IM = \Gam'_0 \inter \Gamu'$, 
     \item $\Delstt = \Del_0 \union \Delu'$, 
     \item $\Phi_{\cxtt\cwc{x}} \tri \muju{\Gam'_0 \inter x:\IM_0}{\cxtt\cwc{x}:\Any }{\Del_0}$       
     \item $\tri \muJu{\Gamu'}{u:\IM_0}{\Delu'}$.
\end{itemize}
We set $\IM^{''}=\IM\inter \IM_0,\, \Gamu^{''}=\Gamu \inter \Gamu',\,
\Delu^{''}=\Delu \union \Delu'$. Thus, in particular, $\choice{({\IM^{''}})} =
\IM^{''}$.  By Lemma~\ref{l:relevance-bis}, $x\notin \dom{\Gamu'}$, so
that $\Gamo'=\Gamo;x:\IM$ for some $\Gamo$ and thus $\Gamo'\inter
x:\IM_0=\Gamo;x:\IM^{''}$.  By Lemma~\ref{l:decomposition}, there is a
derivation $\Theu^{''}\tri \muJu{\Gamu^{''}}{u:\IM^{''}}{\Delu^{''}}$. We then
construct the following derivation $\Phi$ :
$$\infer[(\subs)]{\Phi_{\cxtt\cwc{x}}   \hspace{0.7cm} 
         \Theu^{''} \tri \Gamu^{''}  \vdash u : \IM^{''}  \mid \Delu^{''} }
        {\Gam_0 \inter  \Gamu^{''} \vdash \cxtt\cwc{x}[x/u]: \Any\mid \Del_0
          \union \Delu^{''}}$$
        We conclude since $\Gam_0\inter \Gamu^{''}=\Gam_0\inter \Gamu' \inter \Gamu=\Gamstt\inter \Gamu=\Gam$ and $\Del_0 \union \Delu^{''}=\Del_0\union \Delu' \union \Delu=\Delstt\union \Delu=\Del$.

\item If $o = \cxtt\cwc{x}[x/u] \Rew{} \cxtt\cwc{u} = o'$, with $|\cxtt\cwc{x}|_x = 1$.
The derivation $\Phi'$ ends with $\muju{\Gam}{\cxtt\cwc{u}:\Any}{\Del}$ where  $x\notin \dom{\Gam}$ by Lemma~\ref{l:relevance-bis}.
        By Lemma~\ref{l:reverse-partial-substitution} applied to $\Phi'$, we have
$\Gam_0,\, \Del_0,\, \IM_0\neq \emul,\, \Gamu, \, \Delu$ such that 
\begin{itemize}
   \item $\Gam = \Gam_0 \inter \Gamu$, 
     \item $\Del = \Del_0 \union \Delu$, 
     \item $\Phi_{\cxtt\cwc{x}} \tri \muju{\Gam_0 \inter x:\IM_0}{\cxtt\cwc{x}:\Any }{\Del_0}$       
     \item $\tri \muJu{\Gamu}{u:\IM_0}{\Delu}$.
\end{itemize}
Thus in particular $\choice{\IM_0}=\IM_0$. Since $x\notin \dom{\Gam}$, $x\notin \dom{\Gam_0}$,
so that $\Gam_0 \inter x:\IM_0=\Gam_0; x:\IM_0$.
We then construct the following derivation $\Phi$ :
$$\infer[(\subs)]{\Phi_{\cxtt\cwc{x}}   \\ 
         \tri \muJu{\Gamu}{u:\IM_0}{\Delu} }
{ \muju{\Gamo\inter \Gamu}{\cxtt\cwc{x}[x/u]: \UM}{\Del_0 \union \Delu}}$$
We conclude since $\Gam = \Gam_0 \inter \Gamu$ and $\Del = \Del_0 \union \Delu$.\\

\item If $o = \cxcc\cwc{\co{\al}t} \rempl{\al}{ \al'}{u}   \Rew{} 
 \cxcc\cwc{\co{\al'}t u} \rempl{\al}{ \al'}{u} = o'$, with 
 $|\cxcc\cwc{\co{\al}t}_\al > 1$.
Then  $\Phi'$ has the following form :
 
   $${ \infer[(\repl)]{
    \Phio'\tri \Gamscc \vdash  \cxcc\cwc{\co{\al'}tu} :\Any \mid 
                \Delscc ;   \al': \VM_{\al'}; \al:\VM_{\al} \hspace{1cm}
              \Theu \tri \muJu{\Gamu}{u:\IMu }{\Delu}}
             {\muju{\Gamscc \inter \Gamu}
                   {\cxcc\cwc{\co{\al'}tu} \rempl{\al}{ \al'}{u}:\Any}
                   { (\Delscc; \al' : \VM_{\al'} ) \union \Delu\union \al':\VM}}
 } $$
where $\VMal= \umult{\IMl \rew \VMl}_{\lL}$, $\IMu= \choice{(\inter_{\lL}   \choice{\IMl})}$,
$\VM=  \vee_{\lL} \VMl$,
$\Gam = \Gamscc \inter \Gamu$ and
$\Del = (\Delscc; \al' : \VM_{\al'} ) \union \Delu \union \al':\VM =
(\Delscc \union \Delu; \al' : \VM_{\al'} \union \VM)$
since $\al'\notin \fn{u}$ implies $\al' \notin \dom{\Delu}$. Since $\al\in \fn{\cxcc\cwc{\co{\al'}tu}}$, then $L\neq \es$ by Lemma~\ref{l:relevance-bis}, so that $\IMu= \inter_{\lL}   \choice{\IMl}$.

By Lemma~\ref{l:reverse-partial-replacement} applied to $\Phio'$, we have
$ \Gamo,\, \Delo'$,$\, \Phio,\, \VM_0,\, K \neq \es, \, (\IMk)_{\kK},\, (\VMk)_{\kK},\, \Gamu',\Delu'$, and $\Theu'$ such that 
\begin{itemize}
\item $\Gamscc = \Gam_0 \inter \Gamu'$,
\item $\Delscc;\al:\VM_\al=\Delo' \union \Delu'$,
\item $\VMalp = \VM_0 \union_{\kK} \VMk$, 
\item $\Phio\tri \muju{\Gam_0}{\cxcc\cwc{\co{\al}t}: \Any}{\al':\VM_0;\al:\umult{\IMk\rew \VMk}_{\kK} \union  \Delo'}$  and 
\item     $\tri \muJu{\Gamu'}{u :  \inter_{\kK}\choice{\IMk} }{\Delu'}$
\end{itemize}
We set $L^{''}=L \uplus K,\, \Gamu^{''}=\Gamu \inter \Gamu',\, \Delu^{''}=\Delu
\union \Delu'$ and $\choice{(\IMu^{''})}=\IMu^{''}=\inter_{\lL^{''}} \choice{\IMl}$ (indeed, $L^{''}\supseteq K\neq \emptyset$). By
Lemma~\ref{l:relevance-bis}, $\al\notin \dom{\Delu'}$, so that
$\Delo'=\Delo;\al:\VM_{\al}$ for some $\Delo$ and
$\al':\VM_0;\al:\umult{\IMk\rew \VMk}_{\kK} \union
\Delo'=\al':\VM_0;\al:\umult{\IMl\rew \VMl}_{\lL^{''}};\Delo$ since
$\al'\notin \dom{\Delo'}$.  By Lemma~\ref{l:decomposition}, there is
$\Theu^{''}\tri \muJu{\Gamu^{''}}{u:\IMu^{''}}{\Delu^{''}}$.  We then construct
the following derivation $\Phi$:
$${
  \infer[(\repl)]
        {\Phio \tri \muju{\Gam_0}
                         {\cxcc\cwc{\co{\al}t}: \Any}
                         {\al':\VM_0;\al:\umult{\IMk\rew \VMk}_{\kK} \union  \Delo'} \quad
         \Theu^{''} \tri \muJu{\Gamu^{''}}{u:\IMu^{''}}{\Delu^{''}}}{\muju{\Gamo \inter  \Gamu^{''}}
  {\cxcc\cwc{\co{\al}t}\rempl{\al}{ \al'}{u}:\Any }{
    \al':\VM_0\union_{\lL^{''}}\VMl; \Delo\union \Delu^{''}}  }
}$$
We conclude since
$\Gamo \inter \Gamu^{''}=\Gamo \inter\Gamu'  \inter  \Gamu =\Gamscc \inter \Gamu =\Gam$,
$\Delo\union \Delu^{''}=\Delo \union \Delu' \union \Delu=\Delscc \union \Delu$ and
$\VM_0\union_{\lL^{''}} \VMl=\VM_0 \union_{\kK} \VMk \union_{\lL} \VMl=
\VMalp \union_{\lL} \VMl = \VMalp \union \VM$.

\item If $o = \cxcc\cwc{\co{\al}t} \rempl{\al}{ \al'}{u}   \Rew{} 
 \cxcc\cwc{\co{\al'}t u}  = o'$, with 
$|\cxcc\cwc{\co{\al}t}|_\al =  1$, then the derivation $\Phi'$ necessarily ends 
with the  judgment
 $\muju{\Gam}{\cxcc\cwc{\co{\al'}tu}:\Any}{\Delscc;\al':\VM}$, where $\Del=\Delscc;\al':\VM$.

By Lemma~\ref{l:reverse-partial-replacement} applied to  $\Phi'$,  
 we have $ \Gamo,\, \Delo,\, \VM_0,\, K \neq \es, \, (\IMk)_{\kK},\, (\VMk)_{\kK},\, \Gamu,\, \Delu$, and $\Theu$ such that 
\begin{itemize}
\item $\Gam = \Gamo \inter \Gamu$,
\item $\Delscc=\Delo \union \Delu$,
\item $\VM= \VMo \union_{\kK} \VMk$, 
\item $\Phio\tri \muju{\Gamo}{\cxcc\cwc{\co{\al}t}: \Any}{\al':\VM_0;\al:\umult{\IMk\rew \VMk}_{\kK} \union  \Delo} $  and 
\item $\Theu \tri \muJu{\Gamu}{u :  \inter_{\kK}\choice{\IMk} }{\Delu}$
\end{itemize}
Notice that  $K \neq \es$ implies $\choice{(  \inter_{\kK}\choice{\IMk})} =  \inter_{\kK}\choice{\IMk} $. 
Moreover, by Lemma~\ref{l:relevance-bis}, since $\al \notin  \fn{\cxcc\cwc{\co{\al'}t u}}$, then $\al \notin \dom{\Del}$, thus $\al \notin \dom{\Del_0}$
and $\al:\umult{\IMk\rew \VMk}_{\kK} \union \Delo =\al:\umult{\IMk\rew \VMk}_{\kK} ; \Delo$. 
We then construct $\Phi$ :
{
$$\infer[(\repl)]{
  \Phio\tri \muju{\Gamo}{\cxcc\cwc{\co{\al}t}: \Any}{\al':\VM_0;\al:\umult{\IMk\rew \VMk}_{\kK} \union  \Delo} \quad \Theu \tri \muJu{\Gamu}{u :  \inter_{\kK}\choice{\IMk} }{\Delu}}{
  \muju{\Gam}{ \cxcc\cwc{\co{\al}t} \rempl{\al}{ \al'}{u}:\Any}{(\Delo;\al':\VMo)\union \Delu \union \al':\union_{\kK}\VMk } }
$$
}

We conclude since $\al' \notin \fn{u}$ implies $\al' \notin \dom{\Delu}$
by Lemma~\ref{l:relevance-bis} so that  $(\Delo;\al':\VMo )\union \Delu\union \al': \union_{\kK} \VMk = \Delo\union \Delu; \al':\VMo\union_{\kK} \VMk=\Delscc;\al':\VM=\Del$
  as desired. \qedhere
\end{itemize}
\end{proof}





\end{document}